\patchcmd{\chapter}{\thispagestyle{plain}}{\thispagestyle{empty}}{}{}   
\let\forallsymb\forall
\pgfplotsset{compat=1.17}
\tikzset{every mark/.append style={scale=1.75, solid}, font=\small}  
\pgfplotsset{
    width=1\textwidth,
    legend style={
        font=\small ,  
        inner xsep=1pt,
        inner ysep=1pt,
        nodes={inner sep=1pt}},
    legend cell align=left,
    every axis/.append style={line width=0.5pt},
 	every axis plot/.append style={line width=1.5pt},
 	every axis y label/.append style={yshift=-4pt}
}
\titleformat{\chapter}[display] {\bfseries\Large}
                                {\filleft \Huge Chapter \thechapter}
                                {4ex}
                                {\titlerule \vspace{2ex} \filright}
                                [\vspace{2ex} \titlerule]
\titleformat{\section}[block] {\normalfont\bfseries}{\thesection}{1em}{}
\titleformat{\subsection}[block] {\normalfont\bfseries}{\thesubsection}{1em}{}
\titleformat{\subsubsection}[block] {\normalfont\bfseries}{\thesubsubsection}{1em}{}
\newacronym{gsp}{GSP}{graph signal processing}
\newacronym{gso}{GSO}{graph-shift operator}
\newacronym{gft}{GFT}{graph Fourier Transform}
\newacronym{gfs}{GFs}{graph filters}
\newacronym{nns}{NNs}{neural networks}
\newacronym{gnns}{GNNs}{graph neural networks}
\newacronym{iid}{i.i.d.}{independent identically distributed}
\newacronym{sem}{SEM}{structural equation model}
\newacronym{tls}{TLS}{total least squares}
\newacronym{agss}{AGSS}{aggregation sampling scheme}
\newacronym{dsgs}{DSGS}{diffused sparse graph signal}
\newacronym{bgs}{BGS}{bandlimited graph signals}
\newacronym{gcnn}{GCNN}{graph convolutional NN}
\newacronym{ls}{LS}{least-squares}
\newacronym{gmrf}{GMRF}{Gaussian Markov random field}
\newacronym{ml}{ML}{maximum likelihood}
\newacronym{er}{ER}{Erd\H{o}s-Rényi}
\newacronym{sbm}{SBM}{stochastic block model}
\newacronym{lv}{$\mathrm{LV}$}{local variation}
\newacronym{sgd}{SGD}{stochastic gradient descent}
\newacronym{gcg}{GCG}{graph-convolutional generator}
\newacronym{gdec}{GDec}{graph decoder}
\newacronym{nmse}{NMSE}{normalized mean square error}
\newacronym{sw}{SW}{small-world}
\newacronym{mse}{MSE}{mean square error}
\newacronym{zmuvg}{ZMUVG}{zero-mean unit variance Gaussian}
\newacronym{ss}{SS}{selection sampling}
\newacronym{ar}{AR}{autoregressive}
\newacronym{scp}{SCP}{sequential convex programming}
\newacronym{mm}{MM}{majorization-minimization}
\newacronym{ma}{MA}{moving-average}
\newacronym{gl}{GL}{graphical Lasso}
\newacronym{lvgl}{LVGL}{latent variable graphical Lasso}
\newacronym{rbf}{RBF}{radial basis function}
\def \sqJacob {\boldsymbol{\mathcal{X}}}
\def \barsqJacob {\bar{\boldsymbol{\mathcal{X}}}}
\def \ccalbA {\boldsymbol{\mathcal{A}}}
\def \ccaltbA {\boldsymbol{\mathcal{\tilde{A}}}}
\def \ccalbH {\boldsymbol{\mathcal{H}}}
\def \ccaltbH {\boldsymbol{\mathcal{\tilde{H}}}}
\def \ccalbU {\boldsymbol{\mathcal{U}}}
\def \ccalbC {\boldsymbol{\mathcal{C}}}
\def \ccaltbC {\boldsymbol{\mathcal{\tilde{C}}}}
\def \relu {\mathrm{ReLU}}
\def \vvec {\mathrm{vec}}
\def \EE{{\mathbb E}}
\def \bchkH{{\ensuremath{\mathbf{\check H}}}}
\def \bchks{{\ensuremath{\mathbf{\check s}}}}
\def \bbCo {\bbC_{{\scriptscriptstyle \ccalO}}}
\def \bbCoh {\bbC_{{\scriptscriptstyle \ccalO\ccalH}}}
\def \bbCho {\bbC_{{\scriptscriptstyle \ccalH\ccalO}}}
\def \hbCo {\hbC_{{\scriptscriptstyle \ccalO}}}
\def \bbXo {\bbX_{{\scriptscriptstyle \ccalO}}}
\def \bbXh {\bbX_{{\scriptscriptstyle \ccalH}}}
\def \bbSo {\bbS_{{\scriptscriptstyle \ccalO}}}
\def \bbSh {\bbS_{{\scriptscriptstyle \ccalH}}}
\def \bbSoh {\bbS_{{\scriptscriptstyle \ccalO\ccalH}}}
\def \bbSho {\bbS_{{\scriptscriptstyle \ccalH\ccalO}}}
\def \hbSo {\hbS_{{\scriptscriptstyle \ccalO}}}
\newcommand{\norm}[1]{\left\lVert#1\right\rVert}
\def \mytitle{Robust Network Topology Inference and Processing of Graph Signals}
\def \author{Samuel Rey Escudero}
\def \director{Antonio García Marqués}
\def \yearthesis{2022}
\begin{document}
\frontmatter  
\hypersetup{pageanchor=false}
\begin{titlepage}
\begin{center}
    \includegraphics[width=0.5\columnwidth]{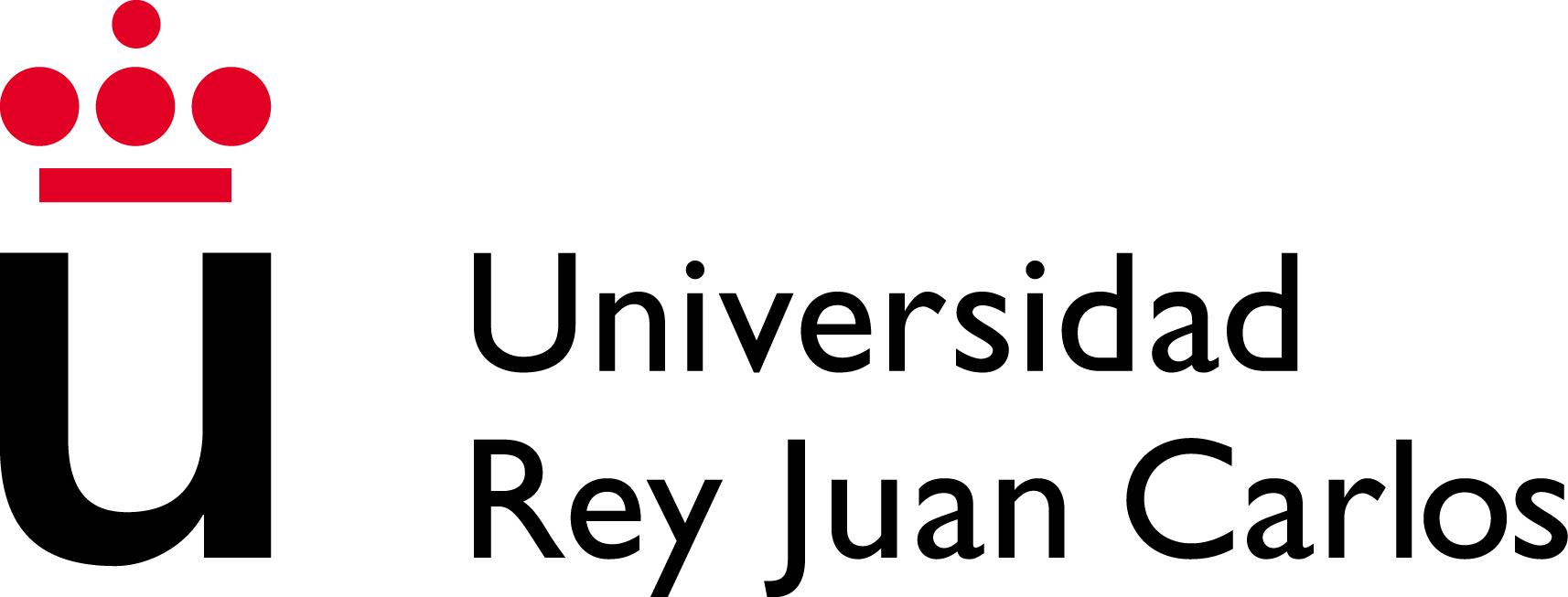}
    
    \vspace{0.08\textheight}
    {\Huge \bf TESIS DOCTORAL}
    \vspace{0.08\textheight}
    
    \begin{minipage}[c]{\textwidth}
        \centering  
        \setstretch{1.2}
        \Huge{\textit{\mytitle}} 
    \end{minipage}
    
    \vspace{0.08\textheight}
     \begin{minipage}[c]{\textwidth}
     \bf
        \centering  
        \setstretch{1.2}
        \Large{Autor: \\ {\sc \author} \\ 
        \vspace{0.05\textheight}
        Director: \\ {\sc \director} \vspace{0.01\textheight}}
    \end{minipage}

    \vfill
    \bf
    \Large{Programa de Doctorado Interuniversitario \\en Multimedia y Comunicaciones} \\
    \vspace{0.03\textheight}
    \Large{Escuela Internacional de Doctorado}\\
    \vspace{0.03\textheight}
    \LARGE{\yearthesis}

    \vspace{0.04\textheight}

\end{center}
\end{titlepage}

\hypersetup{pageanchor=true}

\tableofcontents
\clearpage

\chapter{Resumen}
En los últimos años, la creciente presencia de sistemas vastos y heterogéneos está causando que los datos sean cada vez más abundantes y con estructuras más complejas, motivando así el rápido desarrollo de nuevos modelos y herramientas capaces de lidiar con el dominio irregular (no Euclídeo) donde están definidas las señales.
Un método particularmente interesante consiste en representar la estructura subyacente de las señales mediante un \emph{grafo} e interpretar las señales como \emph{señales definidas en el grafo}.
Precisamente, este es el mecanismo empleado en el procesado de señales definidas en grafos o \emph{graph signal processing (GSP)}, un área relativamente nueva que encapsula la topología del grafo en una matriz conocida como el \emph{graph-shift operator (GSO)}, y aprovecha la relación entre las señales y el grafo para desarrollar herramientas que lidien eficazmente con el dominio irregular de las señales.

Además de la estructura irregular de las señales, otra limitación fundamental es que los datos observados son propensos a presentar imperfecciones o perturbaciones, muchas veces inherentes al propio proceso de recolección de los mismos.
La naturaleza de estas perturbaciones es muy variada y, si se ignoran, pueden perjudicar significativamente el rendimiento de los algoritmos que utilicen los datos perturbados.
En GSP, las perturbaciones pueden clasificarse atendiendo a si afectan a las señales observadas o a la topología del grafo, y la atención recibida en trabajos previos depende, entre otros factores, del tipo de perturbación y de la complejidad resultante de su modelado.

\section*{Antecedentes}
El desarrollo de métodos robustos a perturbaciones en las señales suele desembocar en problemas tratables que, además, han sido estudiados en otras áreas relacionadas con GSP, propulsando la aparición de una considerable cantidad de trabajos.
En este ámbito, la \emph{presencia de ruido} en las señales constituye un problema omnipresente y ampliamente tratado en la literatura que puede combatirse eficazmente mediante métodos que separen la señal del ruido, una tarea conocida como \emph{signal denoising}.
Existe una gran variedad de métodos tradicionales basados en problemas de optimización convexa~\cite{chen2014signal,wang2015trend,pang2017graph,onuki2016graph}, y también alternativas más recientes enfocadas en modelos no lineales~\cite{tay2020time,do2020graph,chen2021graph}. 
Otro tipo de perturbación relevante en las señales es la presencia de \emph{valores perdidos} o \emph{missing values}, lo que implica que solo un subconjunto de los valores de la señal son observados.
Esto hace que sea necesario reconstruir la señal original, un problema ampliamente tratado en trabajos de muestreo e interpolación bajo el supuesto de que las observaciones pertenecen a nodos  distintos~\cite{anis2014towards,chen2015discrete,anis2016efficient,puy2018random}, o se corresponden a agregaciones sucesivas de los valores de la señal en nodos vecinos~\cite{marques2015sampling,chen2016signal}.

Por otro lado, las perturbaciones en la topología del grafo originan problemas particularmente desafiantes y cuentan con menos trabajos previos en otras áreas, siendo estos algunos de los motivos por los que han sido menos estudiadas en la literatura de GSP.
En esta clase de perturbaciones destacan los errores en los enlaces del grafo, lo que se refleja en errores en el GSO y perjudica gravemente a la mayoría de las herramientas dentro de GSP basadas en el espectro o en polinomios del GSO.
Inicialmente,~\cite{ceci2020graph,miettinen2019modelling} modelaron la influencia de estas perturbaciones bajo algunas suposiciones simplificadoras y, posteriormente,~\cite{ceci2020_semtls,natali2020topology} introdujeron métodos que tenían en cuenta las imperfecciones en los enlaces para la estimación de filtros definidos en grafos o \emph{graph filters (GFs)}.
Por último, en el contexto de inferencia de topología, la presencia de \emph{nodos ocultos} supone una perturbación crítica.
En este escenario solo se dispone de observaciones de un subconjunto de los nodos y, aunque típicamente se busca inferir el subgrafo correspondiente a los nodos observados, es imperativo tener en cuenta la influencia ejercida por los nodos ocultos.
Este relevante problema se ha comenzado a abordar en el contexto de selección de modelos gráficos Gaussianos~\cite{chandrasekaran2012latent,yang2020network}, inferencia lineal de redes Bayesianas~\cite{anandkumar2013learning}, o regresión no lineal~\cite{anandkumar2013learning}.

\section*{Objetivos}
El objetivo final de esta tesis es avanzar hacia un modelado robusto dentro de GSP en el que los algoritmos sean cuidadosamente diseñados para incorporar la influencia de las perturbaciones en los datos. 
Para conseguirlo, nos proponemos analizar y comprender el impacto de diferentes tipos de perturbaciones en problemas fundamentales dentro del marco de GSP y, después, diseñar una formulación robusta capaz de aliviar los efectos perjudiciales producidos por estas perturbaciones.
Como este objetivo es bastante amplio, para ayudarnos a delimitar la extensión de la tesis planteamos los siguientes objetivos específicos. 

\vspace{3mm}\noindent
\textbf{(O1) Eliminación de ruido en señales definidas en grafos.}
Cuando las señales observadas presenten ruido, nuestro objetivo será diseñar arquitecturas no lineales capaces de separar el ruido de la señal.
Investigaremos distintas alternativas para incorporar la información codificada en la topología del grafo en la arquitectura y desarrollaremos garantías teóricas que evidencien la capacidad de eliminar ruido de las arquitecturas propuestas.  

\vspace{3mm}\noindent
\textbf{(O2) Interpolación de señales.}
Cuando las señales observadas contengan valores perdidos, nos centraremos en reconstruir la señal a partir de las muestras observadas.
Para esto, asumiremos que las observaciones pueden representarse como muestras tomadas mediante un esquema de muestreo de agregación local sucesiva y que la señal original es una señal \emph{sparse} difundida en el grafo.

\vspace{3mm}\noindent
\textbf{(O3) Identificación robusta de GFs.}
Nuestro objetivo será proponer un algoritmo capaz de estimar un filtro a partir de pares de señales de entrada y salida que tenga en cuenta la presencia de perturbaciones en los enlaces del grafo.
Además, trataremos con un escenario relacionado donde el objetivo será estimar simultáneamente varios filtros definidos sobre el mismo grafo y propondremos una implementación eficiente del algoritmo resultante.

\vspace{3mm}\noindent
\textbf{(O4) Inferencia de topología robusta y conjunta.}
Enfocaremos el problema clásico de inferencia de topología de una red mediante un enfoque robusto a la presencia de nodos ocultos y considerando que hay varias redes relacionadas que deben ser estimadas.
En primer lugar, asumiremos que las señales observadas son estacionarias en el grafo y, después, platearemos un modelo de inferencia conjunta que aproveche la similitud de distintos grafos para inferirlos simultáneamente, mejorando así la calidad de la estimación.  

\vspace{2mm}
Aunque el carácter de esta tesis es principalmente teórico, además de estos objetivos dedicados a un tipo de perturbación específica, también consideramos como objetivo transversal la evaluación de los algoritmos desarrollados mediante datos reales para ilustrar su potencial aplicabilidad.

\section*{Metodología}
En el desarrollo de esta tesis hemos seguido un planteamiento sistemático que busca la obtención de soluciones óptimas.
Para cada problema específico, nos hemos centrado en la obtención de modelos matemáticos capaces de capturar toda la estructura inherente al problema para posteriormente aprovecharlos en el desarrollo de problemas de optimización rigurosamente formulados.
Debido a la complejidad de las tareas abordadas, los problemas de optimización resultantes suelen ser no convexos, por lo que el siguiente paso se centra en proponer relajaciones convexas y/o algoritmos iterativos capaces de encontrar una solución (sub)óptima, demostrando matemáticamente la convergencia de los algoritmos iterativos a un punto estacionario cuando sea necesario. 

Tras el desarrollo de los algoritmos pertinentes, es imprescindible evaluar su rendimiento numéricamente y compararlos con distintas alternativas existentes.
En este aspecto, además de la evaluación mediante datos sintéticos, es una parte fundamental la aplicación de los algoritmos a datos reales para estimar la potencial aplicación de las herramientas desarrolladas en escenarios prácticos. 
Finalmente, todo el código desarrollado se ha subido a repositorios online en GitHub para incrementar la visibilidad y difusión de los resultados obtenidos.

\section*{Resultados}
El trabajo realizado en esta tesis se ha visto reflejado en la escritura de \textbf{4 artículos de revista JCR} (3 de ellos actualmente en proceso de revisión), y \textbf{7 publicaciones en conferencias internacionales}.

La tarea de eliminación de ruido descrita en el objetivo \textbf{(O1)} se ha llevado a cabo en~\cite{rey2019underparametrized,rey2021untrained}.
Hemos desarrollado dos arquitecturas sobreparametrizadas y no entrenables que incorporan la topología del grafo de maneras distintas.
La primera se basa en GFs no entrenables que generalizan la operación de convolución, y la segunda se basa unos operadores de sobremuestreo construidos mediante esquemas de clustering jerárquico.
Por otro lado, se ha realizado un análisis matemático de ambas arquitecturas obteniendo garantías teóricas sobre su rendimiento, mejorando así nuestro entendimiento sobre arquitecturas no lineales y la influencia de incorporar la topología del grafo.
Finalmente, las arquitecturas propuestas han servido como elemento clave en otros problemas ajenos a la eliminación del ruido~\cite{rey2019deep,rey2021overparametrized}.

Los resultados derivados del objetivo \textbf{(O2)} se reflejan en~\cite{rey2019sampling}.
En este trabajo hemos generalizado los resultados del esquema de muestreo de agregación local sucesiva a escenarios donde la señal original es una señal sparse difundida a través de la red en lugar de ser una señal de banda limitada.
Después de definir el modelo de observaciones para las señales perturbadas, hemos propuesto un algoritmo de interpolación definido en el dominio espectral, y hemos generalizamos resultados existentes sobre deconvolución ciega a este esquema de muestreo de agregación local sucesiva con señales sparse difundidas en el grafo.

La solución propuesta al objetivo \textbf{(O3)} se ha traducido en las publicaciones~\cite{rey2021robust,rey2022robust}.
El método desarrollado está formulado en el dominio de los vértices, evitando problemas de inestabilidad numérica y las dificultades asociadas con la influencia de perturbaciones en el espectro del grafo.  
La identificación robusta del filtro se ha reformulado como un problema de optimización conjunto en el que el objetivo de identificación del filtro ha sido aumentado con un regularizador que remueve el ruido de la topología del grafo.
De esta forma, además de estimar el filtro deseado también se proporciona una estimación mejorada del grafo subyacente.
Por otro lado, hemos generalizado este problema a escenarios donde el objetivo es estimar simultáneamente múltiples GFs, todos definidos sobre el mismo grafo.
También en relación con este tipo de perturbaciones, en~\cite{tenorio2021robust} hemos desarrollamos una definición alternativa de GFs menos sensible a los errores en la topología.

Finalmente, la inferencia de topología en presencia de nodos ocultos planteada en \textbf{(O4)} se ha abordado en~\cite{buciulea2019network,rey2022joint}.
Inicialmente, hemos revisitado la definición clásica de estacionariedad para que refleje la influencia de los nodos ocultos y la hemos empleado en la formulación de un problema de optimización con restricciones adicionales que aprovechan la estructura resultante de la presencia de nodos ocultos.
Después, hemos presentado un método de inferencia de topología conjunta que estimaba la topología de varios grafos simultáneamente para explotar la similitud entre los distintos grafos.
Clave para este método robusto fue emplear la estructura por bloques resultante de la presencia de variables ocultas.
Adicionalmente, íntimamente relacionado con este objetivo está el trabajo desarrollado en~\cite{rey2022enhanced}, donde proponemos un algoritmo de inferencia de topología basado en información previa sobre la densidad de \emph{motivos} (o \emph{motifs}) del grafo objetivo.
Esta novedosa prior sobre la estructura del grafo tiene un carácter local y puede emplearse, por ejemplo, para medir la distancia entre grafos de distinto tamaño, un problema no trivial que es clave en los modelos de inferencia conjunta.


\section*{Conclusiones}
En esta tesis se ha contribuido a construir los cimientos de un paradigma robusto en el que abordar problemas clásicos de GSP mientras se modela la influencia de perturbaciones en los datos observados.
Con esta finalidad, se han considerado varios tipos de perturbaciones clasificados en dos amplias clases: (i)~perturbaciones en las señales; y (ii)~perturbaciones en la topología del grafo.
La primera clase de perturbaciones está asociada con los objetivos \textbf{(O1)} y \textbf{(O2)} y suele originar problemas tratables que han sido estudiados en mayor profundidad.
Por otro lado, la segunda clase de perturbaciones aparece en los objetivos \textbf{(O3)} y \textbf{(O4)} y origina problemas más desafiantes, por lo que cuenta con un menor número de trabajos previos.

En primer lugar, en el capítulo~\ref{chap:denoising} se han presentado distintas redes neuronales no lineales y no entrenables que incluyen la topología del grafo mediante dos estrategias distintas y, además, se ha caracterizado matemáticamente su capacidad para separar las señales del ruido asumiendo algunas simplificaciones, avanzando así en la comprensión sobre este tipo de arquitecturas.  
El capítulo~\ref{chap:interpolation} ha lidiado con la presencia de valores perdidos en las señales mediante la interpretación de las observaciones como muestras recogidas a través de un esquema de muestreo de agregaciones locales para, posteriormente, proponer un método espectral para su interpolación asumiendo se trataba de señales sparse difundidas en el grafo. 
Después, en el capítulo~\ref{chap:robust_filter_id} hemos presentado un algoritmo robusto a perturbaciones en los enlaces del grafo capaz de identificar uno o varios GFs a partir de un conjunto de observaciones de entrada y salida.
Como el problema de optimización era no convexo, hemos desarrollado un algoritmo iterativo basado en la resolución secuencial de varios problemas convexos, hemos demostrado su convergencia a un punto estacionario y, además de estimar los filtros de interés, hemos comprobado que elimina las perturbaciones existentes en la topología del grafo.  
Finalmente, en el capítulo~\ref{chap:nti_hidden} se ha presentado un método de inferencia de topología conjunta que tiene en cuenta la influencia de los nodos ocultos.
Este algoritmo lidia con la inferencia de varios grafos a partir de señales estacionarias y tiene en cuenta la particular estructura del problema para aprovechar la similitud entre nodos no observados.

Adicionalmente, los algoritmos desarrollados en cada capítulo han sido evaluados mediante una extensiva batería de experimentos empleando datos sintéticos y reales.
En estos experimentos se han comparado los métodos propuestos en esta tesis con otras alternativas del estado del arte.
\clearpage
\chapter{Agradecimientos}
Antes de comenzar a presentar el contenido de esta tesis, me gustaría agradecer la financiación recibida\footnote{Este trabajo ha sido parcialmente financiado por las ayudas FPU17/04520, EST21/00420, y por el proyecto de investigación SPGRAPH (PID2019-105032GB-I00).} y, aún más importante, dedicar unas líneas en agradecimiento a una serie de personas que me han acompañado durante este viaje.

Sin lugar a dudas, la primera persona a la que debo darle las gracias es a mi director de tesis, Antonio.
Muchas gracias por tu esfuerzo y dedicación, por aquellas largas tutorías al inicio del doctorado dónde pasabas horas explicándome conceptos que, aunque ahora parecen triviales, al principio parecían magia, y por transmitirme la importancia de utilizar un vocabulario claro y preciso (aunque reconozco que esto último sigue siendo un trabajo en desarrollo).
En resumen, una gran parte de mi evolución como investigador en estos cuatro años te la debo a ti, y espero seguir trabajando y aprendiendo contigo en el futuro.
También quiero dar las gracias a Santiago y a su fantástico grupo en la Universidad de Rice.
Gracias por acogerme como a uno más del grupo y por ofrecerme un entorno tan motivador. Habéis conseguido que los 6 meses que pasé en Houston hayan sido una experiencia maravillosa.

Cambiando ahora al ámbito familiar, quiero darles las gracias a mis padres.
Ellos siempre han estado a mi lado para darme apoyo y han hecho de mi educación y la de mi hermano una de sus prioridades.
Si he sido capaz de llegar hasta aquí, ha sido gracias a vosotros.
Y ya que menciono a mi hermano, no puedo pasar la oportunidad de agradecerle esos viajes a Salamanca tan necesarios para desconectar y recargar baterías de cuando en cuando.

Por supuesto, no puedo olvidarme de Celia, una de las personas más importantes en mi vida.
Pero que voy a decirte a estas alturas que tu no sepas ya...
Simplemente, muchas gracias por haberme hecho feliz todos estos años y por embarcarte conmigo en una nueva aventura ahora que este viaje de 4 años llega a su fin.

Por último, pero no menos importante, quiero mencionar a mis compañeros del doctorado.
Gracias por contribuir a un entorno de trabajo maravilloso en el laboratorio, con esa combinación de discusiones científicas, dudas e inquietudes, pero también con desayunos y alguna que otra cerveza.
Sin duda alguna, habéis hecho que estos años se pasen volando.
Y en especial, gracias a Andrei y a Víctor por ayudarme con este último empujón para rematar la tesis.
\clearpage
\chapter{Abstract}
The abundance of large and heterogeneous systems is rendering contemporary data more pervasive, intricate, and with a non-regular structure.
With classical techniques facing troubles to deal with the irregular (non-Euclidean) domain where the signals are defined, a popular approach at the heart of graph signal processing (GSP) is to: (i) represent the underlying support via a graph and (ii) exploit the topology of this graph to process the signals at hand.
In addition to the irregular structure of the signals, another critical limitation is that the observed data is prone to the presence of perturbations, which, in the context of GSP, may affect not only the observed signals but also the topology of the supporting graph.
Ignoring the presence of perturbations, along with the couplings between the errors in the signal and the errors in their support, can  drastically hinder estimation performance. While many GSP works have looked at the presence of perturbations in the signals, much fewer have looked at the presence of perturbations in the graph, and almost none at their joint effect.  While this is not surprising (GSP is a relatively new field), we expect this to change in the upcoming years.  Motivated by the previous discussion, the goal of this thesis is to advance toward a robust GSP paradigm where the algorithms are carefully designed to incorporate the influence of perturbations in the graph signals, the graph support, and both.
To do so, we consider different types of perturbations, evaluate their disruptive impact on fundamental GSP tasks, and design robust algorithms to address them.

The first part of the thesis addresses the presence of perturbations in the graph signals, which typically lead to more tractable problems.
When the observed signals are corrupted by additive noise, we introduce two untrained \textit{nonlinear} graph neural network architectures to remove the noise from the observations, develop theoretical guarantees for their denoising capabilities in a simple setup, and provide empirical evidence in more general scenarios.
Each of the architectures incorporates the information encoded by the graph in a different manner: one relying on graph convolutions, and the other employing graph upsampling operators based on hierarchical clustering. 
Intuitively, each architecture implements a different prior over the targeted signals.
Then, we move on to a setting where perturbations appear in the form of missing values.
In this case, we assume that the original signal is a \textit{diffused sparse graph signal}, interpret the missing values as samples gathered through a \textit{successive aggregation} sample scheme, and study the recovery (interpolation) of the original signal.
Depending on the particular application, the goal is to use the \textit{local} observations to recover the diffused signal or (the location and values of) the seeds.
Different sampling configurations are investigated, including those of known and unknown locations of the sources as well as that of the diffusing filter being unknown.

The second part of the thesis deals with perturbations in the topology of the graph, which give rise to more challenging formulations.
In this sense, we propose a novel approach for handling perturbations in the links of the graph and apply it to the problem of robust graph filter (GF) identification from input-output observations.
Different from existing works, we formulate a non-convex optimization problem that operates in the vertex domain and jointly performs GF identification and graph denoising, and hence, on top of learning the desired GF, an estimate of the graph is obtained as a byproduct.
To handle the resulting bi-convex problem, we design an algorithm that blends techniques from alternating optimization and majorization minimization, showing its convergence to a stationary point.
Then, moving on to the last type of perturbation, we investigate the problem of learning a graph from nodal observations for setups where only a subset of the nodes are observed, with the others remaining unobserved or hidden.
Our schemes assume the number of observed nodes is considerably larger than the number of hidden nodes, and build on recent GSP models to relate the signals and the underlying graph.
Specifically, we go beyond classical correlation and partial correlation approaches and assume that the signals are \emph{stationary} in the sought graph, and moreover, we propose a joint network topology inference framework where several related graphs are estimated together.
The underlying idea is to exploit the similarity of the different graphs to enhance the quality of the estimation.
Since the resulting problems are ill-conditioned and non-convex, the block matrix structure of the proposed formulations is leveraged and suitable convex-regularized relaxations are presented.

Although the methodology and focus of this thesis are more theoretical (defining an estimation problem, stating the considered assumptions, obtaining the estimates as solutions to rigorously formulated optimization problems, designing computationally efficient provably convergent algorithms and, whenever possible, characterizing the performance of those), the experimental results will also play an important role. 
To that end, we evaluate the performance of our algorithms over synthetic and real-world datasets and compare their results with state-of-the-art alternatives.
These experiments reflect the impact of ignoring the presence of perturbations, show the strengths and weaknesses of the proposed methods, demonstrate that in a number of settings our methods outperform current alternatives, and assess the applicability of our schemes to real-world problems. 
\clearpage

\mainmatter
\setlength{\parskip}{10pt}
\chapter{Introduction}\label{chap:introduccion}
We begin by providing a short overview of the research environment surrounding this thesis, motivating the relevance of graph-based methods and highlighting the impact of the presence of perturbations in the data.
After that, the chapter: (i) states the main objectives sought by this work; (ii) lists the resulting contributions; and (iii) presents a brief outline of the remaining chapters.

\section{Motivation and context}\label{sec:motivation}
In the last two decades, we have been experiencing a data deluge largely propelled by the pervasive deployment of networks of sensing devices, the massive use of online social media, and the unstoppable digitalization of our daily tasks.
At the same time, as contemporary interconnected systems grow in size and importance, the data generated by such systems becomes more complex and heterogeneous, motivating the fast development of new methods and techniques to process datasets defined over irregular (non-Euclidean) domains~\cite{easley2010networks,newman2003structure,castro2004network,leskovec2020mining}.
Among the novel approaches that emerged to handle contemporary data, one particularly tractable and fruitful consists in modeling the underlying irregular structure by means of a \emph{graph}, and then, interpreting the data as signals defined on the graph, which are commonly referred to as \emph{graph signals}.
This graph-based perspective has rapidly grown in popularity and it has been successfully applied to data obtained from power, communication, social, geographical, financial, or biological networks, to name a few~\cite{nodop1998field,kolaczyk2009book,sporns2012book}.
Moreover, it has attracted the attention of researchers from different areas, including statistics, machine learning, and signal processing.

%

Precisely, interpreting signals with irregular support as graph signals and then exploiting the topology of the underlying graph to process the signals is at the core of \acrfull{gsp}, a relatively new field that is developing swiftly~\cite{shuman2013emerging,djuric2018cooperative,ortega2018graph,marques2020editorial}.
GSP is devoted to developing new models and algorithms for processing graph signals, oftentimes by generalizing classical tools originally conceived to process signals with regular support (time or space).
Based on the fundamental assumption that there exists a close relation between the properties of the signals and the topology of the graph where they are supported, the key to the success of GSP is to effectively exploit the relation between the graph and the signals.
To that end, a considerable proportion of the efforts in GSP are directed at analyzing how the algebraic and spectral characteristics of the graph impact the properties of the graph signals.
In this analysis, the so-called \acrfull{gso} plays a fundamental role.
The GSO is a sparse matrix whose sparsity pattern encodes the topology of the graph, rendering it a cornerstone element within the GSP framework~\cite{sandryhaila2013discrete,shuman2013emerging}.
For example, employing the GSO as a building block enables the definition of different spectral tools such as the \acrfull{gft}~\cite{sandryhaila2014discrete,isufi2016autoregressive,sardellitti2017graph}, or more general graph-signal operators such as \acrfull{gfs}, which may be expressed as polynomials of the GSO~\cite{sandryhaila2013discrete,segarra2016blind,segarra2017optimal,iglesias2018demixing}.

A wide range of graph-related problems have been addressed under the GSP umbrella, and even though a variety of goals and assumptions are considered for the different problems, the key idea of harnessing the relation between the graph and the signals remains a constant.
A popular problem consists in modeling an arbitrary linear transformation between some input and output graph signals through a GF.
This task is commonly referred to as GF design or {GF identification}, and the inferred GF may be interpreted as the dynamics driving a network-diffusion process of interest~\cite{segarra2016blind,segarra2017optimal,liu2018filter,tremblay2018design}.
The {sampling and reconstruction} of graph signals is also an interesting problem~\cite{chen2015discrete,anis2014towards,marques2015sampling,puy2018random}, with meaningful connections to semisupervised learning.
Note that while sampling signals defined over regular domains is relatively straightforward (regular sampling schemes are prudent and give rise to sample signals that are regular as well), this is not the case for graph signals, which are inherently irregular. Hence, the efforts when approaching this task focus on designing sampling schemes that exploit the graph structure allowing to effectively recover the whole signal from its sampled version.
The reconstruction of the signal is also known as {graph signal interpolation}, and it is related to solving an inverse problem that involves both the signal observations and the supporting graph~\cite{chen2014signal,chen2014inpainting,onuki2016graph,chen2015signal}.
Depending on the actual relation between the observations and the original signal, the problem has been addressed from a point of view of {signal denoising, inpainting, or signal super-resolution}, to name a few.
Another fundamental but considerably different problem is that of {network topology inference}, also known as {graph learning}~\cite{kalofolias2016learn,pavez2016generalized,dong2016learning,segarra2017network,segarra2018network,mateos2019connecting}.
In contrast with previous GSP problems, in network topology inference the focus is placed on the topology of the graph, which is unknown, and therefore, the goal is to infer the graph from a set of nodal observations.
Finally, in the context of deep learning, another line of research that has attracted attention is the development of non-linear architectures that exploit the relation between graphs and signals by incorporating the topology of the graph into their design.  
This popular family of \acrfull{nns} is known as \acrfull{gnns}, and it encompasses a gamut of different graph-based architectures that have been applied to a wide range of problems~\cite{kipf2016semi,gama2018convolutional,sakhavi2018learning,cui2019traffic,wang2018graphgan}.

All the aforementioned GSP applications use as input the observed data (graph signals) and the observed/inferred support (the graph). Unfortunately, imperfect knowledge due to the presence of noise, missing values, or outliers is pervasive in contemporary data science applications. In this sense, we will use the generic term \textit{perturbation} to refer to any imperfection in the observed data, encompassing a variety of defects whose particularities will depend on the application at hand and the features of the data. To further illustrate the diverse nature of perturbations in a GSP context, consider the example of a network of sensors measuring some quantity of interest.
The process of acquiring the measurement will introduce a certain amount of noise, and furthermore, if any sensor is damaged then its measures will be completely lost. Equally important, the information about the connections between the sensors may not be fully accurate. This example clearly illustrates that, when dealing with GSP, one needs to account for (i)perturbations in the graph signals; (ii)~perturbations in the topology of the graph; and (iii)~the joint effects and interactions between these two.

We start describing the presence of \emph{perturbations in the graph signals}.
Clearly, signal perturbations have been extensively investigated in signal processing, statistics, and data science, so that many of the classical results can also be leveraged in the GSP setup. From this point of view, the two key (distinctive) questions when dealing with perturbed graph signals are: (i)~how does the graph influence the perturbation? and (ii)~how can the graph be exploited to design the schemes that mitigate/eliminate the perturbations?
Due to the combination of practical relevance, tractability, and the existence of related works from classical signal processing, accounting for perturbations in the observed graph signals has attracted a considerable amount of attention.
Accounting for this, we focus on studying two specific types of perturbations that have been thoroughly analyzed (the presence of noise in the signals and the presence of missing values), considering graph-aware processing and acquisition architectures that had not been investigated before. 
\begin{itemize}
    \item \textbf{(P1) Noise in the graph signals.}
    This simple case assumes that the observed signals are corrupted by additive noise, typically modeled as an \acrfull{iid} random variable drawn from a particular distribution. Noisy graph signals arise in a gamut of graph-related applications such as measurements in electric, social and transportation networks, or monitoring biological signals\cite{kolaczyk2009book,chen2014signal,gimenez2017matrix,ortega2018graph}.
    While denoising schemes have been thoroughly investigated in classical signal processing, the noise perturbating graph signals is oftentimes related to the topology of the graph (e.g., the noise can be independent across nodes while its variance is proportional to the node degree or any other node centrality measure). Even more importantly, GSP denoising schemes must be particularized to exploit the graph when mitigating (eliminating) the noise. This process is known as \emph{graph-based signal denoising}, and traditional approaches include minimizing the graph total variation to push the signal values at neighboring nodes to be close~\cite{chen2014signal,wang2015trend}, promoting a notion of signal smoothness by adding a regularization parameter based on the quadratic form of the graph Laplacian~\cite{pang2017graph}, or encouraging the recovery of signals with a smooth gradient~\cite{onuki2016graph}.
    More recently, non-linear solutions for denoising graph signals have been proposed, with relevant examples based on median graph filters~\cite{tay2020time}, graph autoencoders~\cite{do2020graph}, or graph unrolling architectures~\cite{chen2021graph}.

    \item \textbf{(P2) Missing values.}
    We use this term to refer to setups where only a subset of the entries of the graph signal are available. This accounts for cases where the values are missing / totally corrupted, as well as for sampling setups where the remaining entries were purposely unobserved. Practical graph scenarios that can lead to missing values include damaged sensors in a sensor network, wrong or incomplete answers when the data is gathered through online forms, or just because sampling the signal values at every node is not feasible in large networks~\cite{kolaczyk2009book,acuna2004treatment,tsitsvero2016signals}.  A number of alternatives arise to deal with this type of perturbed signals, with naive alternatives including filling the missing values with zeros or using the mean value within the observed values in the one-hop neighborhood. To fill the missing values, a reasonable and rigorous approach is to look at the problem from the sampling perspective and design methods to perform \emph{graph signal interpolation}. Two critical aspects in this regard are the postulation of a parsimonious model for the graph signal (bandlimited, diffused, smooth...) and the impact of the scheme collecting the samples. Several works have investigated different instances of this problem, with a strong bias towards assuming that the original signal is graph bandlimited and that the observed values proceed from observations taken at a fixed subset of nodes~\cite{anis2014towards,chen2015discrete,anis2016efficient,puy2018random}. Alternatively, other works have postulated that the observed values correspond to successively aggregating the values of the signal from neighboring nodes~\cite{marques2015sampling,chen2016signal}, and designed the associated optimal interpolation schemes.
\end{itemize}

We shift focus now to the second class of imperfections studied in this work: \emph{perturbations in the graph topology}.
In this case, recall that GSP builds upon exploiting the relation between the signals and the graph, and hence, it is not surprising that methods within the GSP framework are particularly sensitive to this type of imperfections.
More precisely, a fair amount of GSP methods rely on either the spectrum of the GSO or in polynomials of the GSO, and because the GSO captures the topology of the graph, the perturbations in the observed topology translate into perturbations in the GSO.
However, even when assuming additive models to represent the perturbations, which are the easiest type of models to deal with, measuring the impact of the perturbations in the topology on the spectrum of the GSO and the polynomials of the GSO is a challenging endeavor.  
Not only that, but when perturbations affect the number of observed nodes, even the size of the GSO will be uncertain.
In conclusion, the perturbations in the topology of the graph will not only hinder the performance of GSP algorithms but, furthermore, developing robust alternatives that model the presence of perturbations is a non-trivial and ill-posed problem, which has been barely studied in the GSP literature. In this work, we focus on two types of graph perturbations: uncertainty (imperfections) in the edges of the graph, and unobserved nodes (whose effects will be analyzed in the context of graph learning).
These two types of perturbations, which are both theoretically and practically relevant, are described in more detail next.
\begin{itemize}
    \item \textbf{(P3) Uncertainty in the edges.} 
    Here, we assume that the set of nodes is perfectly known and that imperfect information about the existence and strength of the links (graph topology) is available. The perturbations in the observed topology may encompass observing edges that do not exist in the true graph, missing/unobserved edges that exist in the true graph, noise present in the weights of the observed edges, or any combination of the previous options.
    These perturbations appear in a gamut of practical situations.
    On the one hand, when networks are given explicitly, perturbations may be due to observational noise and errors (e.g., link failures in power or wireless networks~\cite{isufi2017filtering}).
    On the other hand, when in lieu of physical entities, the graphs model (statistical) pairwise relationships among the observed variables, they need to be inferred from the data~\cite{friedman2008sparse,dong2016learning,segarra2017network}.
    While this type of perturbation is critical for many GSP methods, modeling the influence of the imperfect topological information and developing robust alternatives is a challenging task, and hence, there is a limited number of works approaching this problem.
    In the frequency domain, \cite{ceci2020graph} employs a small perturbation analysis to study the impact of perturbations in the spectrum of the graph Laplacian.
    In the vertex domain,~\cite{miettinen2018graph,miettinen2019modelling} postulates a graphon-based perturbation model applied to GFs of order one.
    Then, in more recent approaches, \cite{ceci2020_semtls} combines \acrfull{sem} with \acrfull{tls} to jointly infer the GF and the perturbations in the GSO, and~\cite{natali2020topology} proposes a robust GF identification alternative where the support of the graph is assumed to be known, so the perturbations are constrained to be noise in the observed edges. 
    
    \item \textbf{(P4) Hidden nodes.}
        Finally, we consider a perturbation where some elements of the nodal set are not known, which is a problem particularly acute in the context of network topology inference. Clearly, when hidden nodes are present, one has only access to signals (measurements) from the remaining (observed) nodes. However, this should not be confused with the (missing values) perturbations introduced in \textbf{(P2)}. There, the signals at some nodes were not observed, but the existence of the node and its connections to other nodes were known. In contrast, here not only the underlying graph is unknown, but even the number of hidden nodes is oftentimes not  known. A number of estimation goals arise in the context of {\textbf{(P4)}}: inferring the number of hidden nodes, the connections among them, or the values of the signals, to name a few. In the specific context of network topology inference under the presence of hidden nodes, the problem is more challenging because the links among observed nodes are unknown as well. As a result, the main goal is usually to estimate the links between the observed nodes (also known as learning the \emph{observed subgraph}). More ambitious approaches aim also at estimating the links between observed and unobserved nodes. Clearly, all these problems are related and challenging (highly correlated values from two observed nodes may be explained not only by an edge between the two nodes but by a third hidden node connected to them).
    Some network-inference methods have started to look at the problem of hidden nodes with examples in the context of Gaussian graphical model selection~\cite{chandrasekaran2012latent,yang2020network}, inference of linear Bayesian networks~\cite{anandkumar2013learning}, non-linear regression~\cite{mei2018silvar}, or brain connectivity~\cite{chang2019graphical}, to name a few. 
    Nonetheless, there are still many network-inference methods (including most in the context of GSP) that have not considered this type of perturbations.
\end{itemize}

To summarize,  the presence of perturbations in GSP setups leads to having imperfect knowledge about the graph signals, the graph, or both.
Clearly, if these perturbations are ignored, the performance of naive algorithms that use as input the corrupted data will be drastically hindered. The solution is to design robust schemes that model and incorporate into their formulation the presence and effects of the perturbations. While relevant and timely, this is a challenging problem, especially when the perturbations affect the graph topology.
Bearing all this in mind, our goal, which is presented in detail in the next section, is to develop a set of GSP methods robust to different types of perturbations, understand their differences and similarities, and discuss how they can be combined in a meaningful and tractable way.

\section{Objectives}\label{sec:objectives}
As discussed in the previous section, the presence of perturbations in the observed data constitutes a relevant and ubiquitous problem.
Motivated by this, the prevailing objective of this thesis is to advance towards a \emph{robust GSP} paradigm where the algorithms are carefully designed to deal with the presence of perturbations in the graphs and the signals.
To that end, we aim to analyze and understand the impact of the different types of perturbations in several fundamental GSP problems and then, design a robust formulation capable of: (i) recovering the original data from the perturbed observations; and/or (ii) approaching the desired task while taking into account the presence of perturbations in the data to minimize their disruptive influence.
To render these generic (and relatively ambitious) goals more reachable, we focus our research efforts on four specific objectives. Each of them considers a specific GSP problem and addresses one of the perturbations introduced in \cref{sec:motivation}. 

\vspace{3mm}\noindent
\textbf{(O1) Non-linear denoising of graph signals.}
Consider a setting where the observed graph signals are corrupted with noise as described in the perturbation type \textbf{(P1)}.
Our goal is to design non-linear architectures to denoise the observed graph signals.
Because dealing with noisy signals is a problem that has been studied to a considerable extent in the GSP literature, here we focus on the design of untrained \textit{non-linear} architectures and on their theoretical characterization. 
First, we will explore different ways of incorporating the information encoded in the graph and propose novel graph-aware NN architectures to denoise graph signals.
Second, we will provide theoretical guarantees for the denoising performance of the proposed architectures, and we will show that the denoising capability is directly influenced by the topology of the underlying graph.

\vspace{3mm}\noindent
\textbf{(O2) Signal interpolation of diffused sparse graph signals.}
Consider a setting where the observed signals have missing values as described in the perturbation type \textbf{(P2)}.
Our goal is to reconstruct the observed signal when the observations are taken at a particular node according to a successive local \textit{\acrfull{agss}}.
The signal to be reconstructed is assumed to be a \textit{\acrfull{dsgs}}, a class of signals that can be modeled as a signal with zeros everywhere except in a few seeding nodes, which is then diffused through the network via a GF.
Ultimately, we will develop a reconstruction algorithm to recover both the original (diffused) signal and the seeds from the perturbed observation.

\vspace{3mm}\noindent
\textbf{(O3) Robust GF identification.}
Consider a setting where the observed graph is perturbed as described in the perturbation type \textbf{(P3)}.
Our goal is to estimate a GF from some input-output signal pairs while accounting for the presence of errors in the supporting graph and in the observed signals.
The proposed approach needs to bypass the challenges associated with robust \emph{spectral} graph theory and avoid the numerical instability from computing high-order polynomials.
We will also address the scenario where several GFs need to be estimated, with all the GFs being defined as polynomials of the same GSO.

\vspace{3mm}\noindent
\textbf{(O4) Robust network topology inference.}
Consider a network topology inference problem where a subset of the nodes remain hidden as described in the perturbation type \textbf{(P4)}.
Our goal is to estimate the joint topology of the observed subgraph while taking into account the presence of the hidden nodes.
The observed signals will be assumed to be stationary in the unknown graph and our main focus will be considering the case where several related graphs, all defined over the same set of nodes, need to be learned.
Here, we will develop a \emph{joint network topology inference} algorithm that exploits the graph similarity while accounting for the presence of hidden nodes.
Note that leveraging the graph similarity between hidden nodes (which are not observed) is an interesting but non-trivial problem.

Clearly, objectives \textbf{(O1)} and \textbf{(O2)} are concerned with perturbations involving the graph signals and they aim to design schemes to clean/infer the graph signal of interest.
In contrast, objectives \textbf{(O3)} and \textbf{(O4)} are concerned with perturbations involving the topology of the graph and they aim to solve higher order GSP tasks while mitigating (bypassing) the presence of the perturbations. On top of these four specific objectives, two transversal objectives are also considered. First, since many of the findings and challenges faced when addressing \textbf{(O1)}-\textbf{(O4)} are also present in related robust GSP setups, our aim is that the models and tools put forward in this thesis improve our understanding of the influence of perturbations in a general way, hoping that these robust approaches may be extended/generalized to other important GSP applications where the presence of perturbations is critical. Secondly, even though the scope of this thesis is markedly theoretical, showcasing the potential applicability of the algorithms developed herein is certainly important. Therefore, we define another transversal objective, which consists in assessing the practical value of our algorithms using real-world datasets.

\section{Summary of contributions}\label{sec:contributions}
This section provides the list of publications organized around the (results and contributions) of the four objectives presented in the previous section. 

The graph signal denoising task \textbf{(O1)} is addressed in \cite{rey2019underparametrized,rey2021untrained}.
The preliminary work in \cite{rey2019underparametrized} proposed an underparametrized deep decoder NN capable of learning non-linear representation for graph signals, which were used for compression and denoising. 
Later on,~\cite{rey2021untrained} presented two \emph{untrained} and overparametrized GNNs to address the graph signal denoising problem. 
To incorporate the topology of the graph, the first architecture employs a fixed (non-learnable) GF to generalize the convolutional layer in~\cite{kipf2016semi}.
The second architecture performs graph upsampling operations that, starting from a low-dimensional latent space, progressively increase the size of the input until it matches the size of the signal to denoise.
Furthermore, a mathematical analysis was conducted for each architecture offering bounds for their performance, improving our understanding of nonlinear architectures and the influence of incorporating the graph structure into NNs.
Interestingly, the decoder architecture introduced in~\cite{rey2021untrained} has proven useful for other problems than signal denoising.
The decoder was employed to design a graph deep decoder capable of learning the mapping between input-output signal pairs defined on different graphs~\cite{rey2019deep,rey2021overparametrized}.
The key idea is that the encoder uses the input graph to map the input signal onto a latent space, and then, the decoder uses the output graph to reconstruct the output signal from the latent representation.
The publications related to \textbf{(O1)} are listed below.
\begin{itemize}
    \item [\cite{rey2019underparametrized}] S. Rey, A. G. Marques, and S. Segarra, “An underparametrized deep decoder architecture for graph signals,” in IEEE Intl. Wrksp. Computat. Advances Multi-Sensor Adaptive Process. (CAMSAP). IEEE, 2019, pp. 231–235.
    
    \item [\cite{rey2021untrained}] S. Rey, S. Segarra, R. Heckel, and A. G. Marques, “Untrained graph neural networks for denoising,” arXiv preprint arXiv:2109.11700, 2021  (submitted to IEEE Trans. Signal Process.).
    
    \item [\cite{rey2019deep}] S. Rey, V. M. Tenorio, S. Rozada, L. Martino, and A. G. Marques, “Deep encoder-decoder neural network architectures for graph output signals,” in Conf. Signals, Syst., Computers (Asilomar). IEEE, 2019, pp. 225–229.
    
    \item [\cite{rey2021overparametrized}] S. Rey, V. M. Tenorio, S. Rozada, L. Martino, and A. G. Marques, “Overparametrized deep encoder-decoder schemes for inputs and outputs defined over graphs,” in European Signal Process. Conf. (EUSIPCO). IEEE, 2021, pp. 855–859.
\end{itemize}

The goal of graph signal interpolation \textbf{(O2)} is pursued in~\cite{rey2019sampling}, where the observed (non-missing) values of the perturbed signals are assumed to be taken at a particular node according to an AGSS.
In the AGSS, which was originally proposed for \acrfull{bgs}, the nodes successively aggregate the values of the signal in their neighborhood, and moreover, the recovery of the original signals can be guaranteed even if observations are gathered at a single node.
Firstly,~\cite{rey2019sampling} applied AGSS to the case where the observed signals are DSGS in lieu of BGS. Secondly, after defining the observational model for the perturbed signals, the paper proposed an interpolation algorithm defined in the spectral domain.
Finally, existing results for support identification and blind deconvolution were generalized to deal with AGSS and DSGS.
The publication related to \textbf{(O2)} is listed below.
\begin{enumerate}
    \item [\cite{rey2019sampling}] S. Rey, F. J. I. Garcia, C. Cabrera, and A. G. Marques, “Sampling and reconstruction of diffused sparse graph signals from successive local aggregations”, IEEE Signal Process. Lett., vol. 26, no. 8, pp. 1142–1146, 2019.
\end{enumerate}

The robust robust GF identification problem \textbf{(O3)} is approached in \cite{rey2021robust,rey2022robust}.
In those works, the proposed solution was formulated in the vertex domain, avoiding the numerical instability of computing large polynomials and, at the same time, bypassing the challenges associated with robust spectral graph theory.  
The robust GF identification was recast as a joint optimization problem where the GF identification objective was augmented with a graph-denoising regularizer so that, on top of the desired GF, the proposed algorithm also provided an enhanced estimate of the supporting graph.
The joint formulation led  to a non-convex bi-convex optimization algorithm, for which a provably-convergent efficient algorithm able to find an approximate solution was developed.
Furthermore, to address scenarios where multiple GFs are present, the paper generalized the robust framework so that multiple GFs, all defined  over the same graph, were jointly identified.
Also related to \textbf{(O3)},~\cite{tenorio2021robust} introduced the neighborhood GF, a new type of GF that is numerically stable and robust to perturbations in the observed topology.
The definition of neighborhood GF, which replaced the powers of the GSO with $k$-hop adjacency matrices, was exploited to provide an alternative design of \acrfull{gcnn} that was employed in graph signal denoising and node classification problems.
The publications involved with \textbf{(O3)} are listed below.
\begin{enumerate}
    \item [\cite{rey2021robust}] S. Rey and A. G. Marques, “Robust graph-filter identification with graph denoising regularization”, in IEEE Int. Conf. Acoustics, Speech Signal Process. (ICASSP). IEEE, 2021, pp. 5300–5304.
    
    \item [\cite{rey2022robust}] S. Rey, V. M. Tenorio, and A. G. Marques, “Robust graph filter identification and graph denoising from signal observations,” arXiv preprint arXiv:2210.08488, 2022  (submitted to IEEE Trans. Signal Process.).
    
    \item [\cite{tenorio2021robust}] V. M. Tenorio, S. Rey, F. Gama, S. Segarra, and A. G. Marques, “A robust alternative for graph convolutional neural networks via graph neighborhood filters,” in Conf. Signals, Syst., Computers (Asilomar). IEEE, 2021, pp. 1573–1578.
\end{enumerate}

Lastly, the network topology inference in the presence of hidden nodes \textbf{(O4)} is addressed in \cite{buciulea2019network,rey2022joint}.
Initially,~\cite{buciulea2019network} investigated how the presence of hidden variables impacts the classical definition of graph stationarity.
Key to the proposed formulation was the consideration of a block matrix factorization approach and harnessing the low rankness and the sparsity pattern present in the blocks related to hidden variables.
Then, we exploited this block matrix factorization in~\cite{rey2022joint} to propose a topology inference method that, assuming that the observed signals are graph-stationary,  jointly learns multiple graphs while accounting for the presence of hidden variables.
To fully benefit from the joint inference formulation and successfully exploit the graph similarity among hidden nodes, the paper carefully exploited the structure inherent to the presence of latent variables with a regularization inspired by
group Lasso~\cite{simon2013sparse}.
An additional work that is closely related to the objective \textbf{(O4)} is presented in~\cite{rey2022enhanced}.
The paper presented a graph-learning algorithm that assumes that a reference graph with a density of motifs similar to that of the sought graph was known. Then, this similarity was harnessed to reveal a connection between the spectra of both graphs, which was exploited in the formulation of the inference problem and the associated algorithm.
The prior information about the density of motifs of the unknown graph is local and robust, in the sense that it enables the comparison of graphs of different sizes, an issue that was non-trivial.
Moreover, leveraging this prior to boost the performance of graph learning algorithms in the presence of hidden nodes arises as an interesting research problem, which is left as a future research direction.
The publications related to \textbf{(O4)} are listed below.
\begin{enumerate}
    \item [\cite{buciulea2019network}] A. Buciulea, S. Rey, C. Cabrera, and A. G. Marques, “Network reconstruction from graph-stationary signals with hidden variables”, in Conf. Signals, Syst., Computers (Asilomar). IEEE, 2019, pp. 56–60.
    
    \item [\cite{rey2022joint}] S. Rey, A. Buciulea, M. Navarro, S. Segarra, and A. G. Marques, “Joint inference of multiple graphs with hidden variables from stationary graph signals”, in IEEE Int. Conf. Acoustics, Speech Signal Process. (ICASSP). IEEE, 2022, pp. 5817–05821.
    
    \item [\cite{rey2022enhanced}] S. Rey, T. M. Roddenberry, S. Segarra, and A. G. Marques, “Enhanced graph-learning schemes driven by similar distributions of motifs”, arXiv preprintarXiv:2207.04747, 2022 (submitted to IEEE Trans. Signal Process.).
\end{enumerate}

\section{Outline of the dissertation}\label{sec:outline}
The remainder of this document is organized as follows.
First, \cref{chap:preliminaries} introduces fundamental definitions and concepts that will be employed during the thesis.
Then, \cref{chap:denoising} considers the presence of noise in the signals and proposes non-linear architectures to perform graph signal denoising. \cref{chap:interpolation} addresses the presence of missing values in the observed signals and introduces an interpolation method for DSGS.
Regarding the perturbations in the topology, \cref{chap:robust_filter_id} addresses the problem of GF identification assuming imperfect knowledge of the observed topology, and \cref{chap:nti_hidden} approaches the task of network topology inference while accounting for the presence of hidden nodes.
Finally, \cref{chap:conclusions} provides some concluding remarks and identifies some interesting future research directions.
\chapter{Fundamentals of graph signal processing}\label{chap:preliminaries}
This chapter introduces the main concepts and tools from GSP, which are the foundations to the research carried out in this thesis.
To that end, we begin by defining the basics of GSP, then introduce some fundamental tools and methods, and close the section describing a couple of more advanced GSP concepts directly related to this thesis.

\section{Graphs, graph signals, and the GSO}\label{sec:basic_elements}
A \emph{graph} is a mathematical structure formally defined as $\ccalG := (\ccalV, \ccalE)$, where $\ccalV$ and $\ccalE$ are, respectively,
the sets containing the \emph{nodes} and \emph{edges} conforming the graph, which are also commonly known as \emph{vertices} and \emph{links}.
The nodes collected in $\ccalV$ are typically labeled using integers so $\ccalV := \{1,2,...,N\}$, with $N$ denoting the number of nodes in the graph.
Then, the edges collected in $\ccalE$ are represented by pairs of nodes $(i,j)$ with $i,j \in \ccalV$ and $(i,j) \in \ccalE$ if and only if the node $i$ is connected to node $j$.
If none of the edges in the graph are directed, that is, if edges are agnostic to which node is the origin and which is the destiny, then the graph is called \emph{undirected}, and hence, $(i,j) \in \ccalE$ implies that $(j,i) \in \ccalE$.
In contrast, when the graph captures the direction of the edges it is called \emph{directed}, and we might encounter that $(i,j) \in \ccalE$ but $(j,i) \not\in \ccalE$.
\cref{fig:graph_and_signal} represents an undirected graph where the nodes in $\ccalV$ are represented in blue and the edges in $\ccalE$ are represented as gray lines.
Intuitively, a graph encodes pairwise relations between the nodes in $\ccalV$, with these relations being represented by the edges.
Then, an \emph{unweighted} graph captures whether an edge exists or not but it does not provide any information about the strength (closeness, similarity,...) of the connection.
On the other hand, this additional information about the distance or closeness between connected nodes is provided by \emph{weighted} graphs.
The distinction between weighted and unweighted graphs is apparent when looking at the \emph{adjacency matrix}, a widely used matrix representation of the topology of $\ccalG$. 
The adjacency matrix $\bbA$ is a sparse $N \times N$ matrix encoding the connectivity of the graph whose entry $A_{ij} \neq 0$ if and only if $(i,j) \in \ccalE$.
When the graph is unweighted the entries of $\bbA$ are binary, i.e., $\bbA \in \{0,1\}^{N \times N}$.
On the contrary, if the graph is weighted then $\bbA \in \reals^{N \times N}$ and the non-zero entries $A_{ij}$ capture the weight of the edge between the nodes $i$ and $j$.
Similarly, when the graph is undirected the matrix $\bbA$ is symmetric.
Another concept involving the connectivity of the graph is that of the neighborhood of a node.
For any node $i$, its \emph{neighborhood} $\ccalN_i := \{j \in \ccalV | (i,j) \in \ccalE\}$ is the set of nodes that are connected to $i$.
Furthermore, the \emph{degree} of a node $i$ is given by the number of neighbors, so it is formally defined as $d_i := |\ccalN_i| = [\bbA\mathbf{1}]_i$, where $\mathbf{1}$ denotes the vector of all ones.
In other words, $d_i$ can be computed by adding the entries of the $i$-th row of $\bbA$. If graphs are directed, one must account for defining incoming (outcoming) neighborhoods as well as incoming (outcoming) degrees.

We move on to the definition of graph signals, which represent the other fundamental piece within the GSP framework and constitute the subject of study in most GSP problems.
Formally, a \emph{graph signal} can be modeled as a function from the node set to the real field\footnote{For simplicity, we focus our discussion on scalar, real-valued graph signals, but the values associated with each node could be discrete, complex, or even
vectors (e.g., when multiple features per node are observed).} $x: \ccalV \to \reals$ or, equivalently, as an $N$-dimensional real-valued vector $\bbx = [x_1,...,x_N]^\top \in \reals^N$, with $x_i$ denoting the value of the signal at the node $i$.
An example of a graph signal is given in \cref{fig:graph_and_signal}, where the height of the vertical bars represents the value of the signal at each node.
Then, because the graph signal $\bbx$ is defined on $\ccalG$, the core assumption of GSP is that either the values or the properties of $\bbx$ depend on the topology of $\ccalG$~\cite{sandryhaila2013discrete}.
For instance, consider a graph that encodes similarity.
If the value of $A_{ij}$ is high, then one will expect the signal values $x_i$ and $x_j$ to be akin to each other.
This rationale helps to explain the advantages of leveraging the topology of the graph when processing graph signals.
Some relevant examples of graph signal models that reflect the influence of the graph topology on $\bbx$ are provided in \cref{sec:models_signals}.

\begin{figure}[tb]
    \centering
    \includegraphics[width=.35\textwidth]{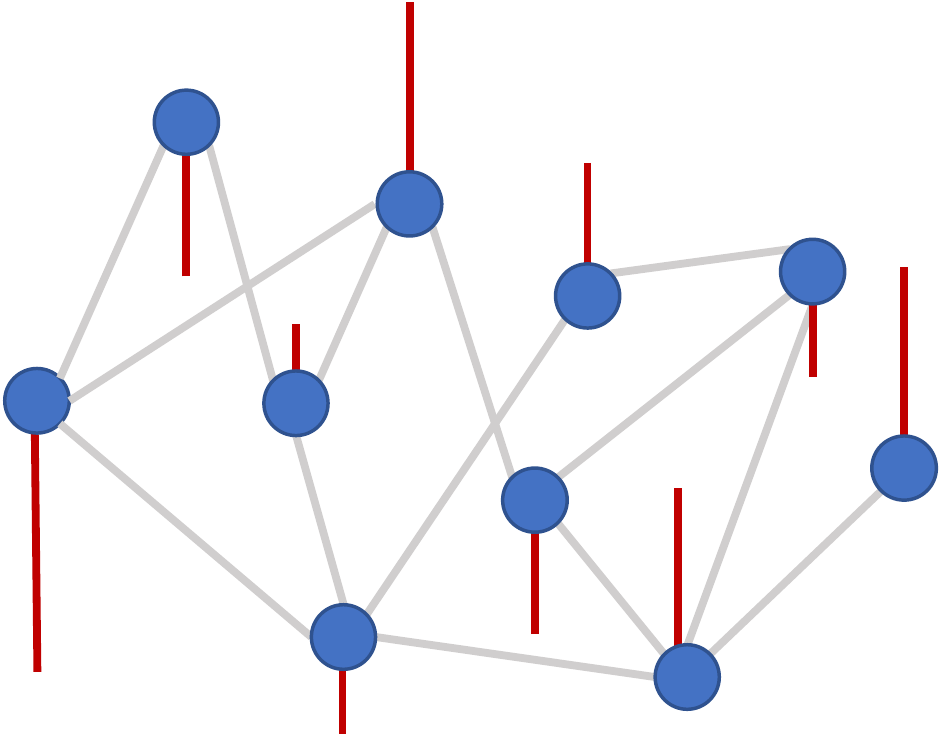}
    \caption{Depiction of a graph signal $\bbx$ and the underlying graph $\ccalG$. Nodes are represented in blue, the edges are the connections in gray, and the height of the red vertical bars represents the values of $\bbx$ at each node.}
    \label{fig:graph_and_signal}
\end{figure}

The last key element in the GSP framework is the so-called \emph{graph-shift operator} (GSO), a square matrix that captures the topology of the underlying graph 
$\ccalG$~\cite{sandryhaila2013discrete}.
The GSO is denoted by $\bbS \in \reals^{N \times N}$ and its entry $S_{ij}$ is allowed to be non-zero if and only if $i=j$ or $(i,j) \in \ccalE$.
Intuitively, $\bbS$ can be understood as a topology-aware local operator that can be applied to process graph signals.
There exist several options for selecting the GSO, with typical choices including the adjacency matrix $\bbA$, the graph combinatorial Laplacian $\bbL:=\diag(\bbA\mathbf{1}) - \bbA$, and its normalized variants~\cite{shuman2013emerging,sandryhaila2013discrete}.
Note that $\diag(\cdot)$ denotes the diagonal operator that transforms a vector into a diagonal matrix.
In this sense, using $\bbS$ instead of a specific choice for the GSO is particularly useful since it provides a higher level of abstraction and results in algorithms that may be applied to a wider range of scenarios.
When $\ccalG$ is assumed to be undirected, it follows that $\bbS$ is symmetric and it can be diagonalized as $\bbS = \bbV\bbLambda\bbV^\top$, where the orthonormal matrix $\bbV\in\reals^{N \times N}$ collects the eigenvectors of $\bbS$, and the diagonal matrix $\bbLambda=\diag(\bblambda)$ collects the eigenvalues $\bblambda\in\reals^N$.
On the other hand, when $\ccalG$ represents a directed graph we will assume that $\bbS$ is still diagonalizable and its decomposition will be given by $\bbS = \bbV\bbLambda\bbV^{-1}$.
Note that, in the directed case, the eigenvalues collected in $\bbLambda$ are likely to be complex numbers.

\section{Graph filters and filter identification}\label{sec:graph_filters}
Graph filters (GFs) are topology-aware linear operators whose inputs and outputs are graph signals.
More specifically, graph filters implement a linear transformation that can be expressed as a polynomial of the GSO of the form
\begin{equation}\label{eq:graph_filter}
    \bbH:=\sum_{r=0}^{R-1}h_r\bbS^r =\bbV\diag(\bbPsi\bbh)\bbV^{-1} = \bbV\diag(\tbh)\bbV^{-1},
\end{equation}
where $\bbH$ is the graph filter, and $\bbh := [h_0,...,h_{R-1}]^\top$ is the vector collecting the filter coefficients $h_i$.
The $N \times R$ Vandermonde matrix $\bbPsi$ defined as $\Psi_{ij} := \Lambda_{ii}^{j-1}$ represents the GFT for GFs, and thus, $\tbh := \bbPsi\bbh$ is the vector of size $N$ representing the frequency response of $\bbH$~\cite{sandryhaila2014discrete,segarra2017optimal}.
Since $\bbS^r$ encodes the $r$-hop neighborhood of the graph, graph filters can be used to diffuse input graph signals $\bbx$ across the graph as $\bby = \sum_{r=0}^{R-1} h_r\bbS^r\bbx = \bbH\bbx$, where $\bby$ is the result of diffusing the signal $\bbx$ across $R\!-\!1$ neighborhoods with $h_r$ being the coefficients of the linear combination.

A relevant problem in the context of GFs is that of GF identification.
Consider that we observe $M$ input and output pairs $\bbX := [\bbx_1,...,\bbx_M]$ and $\bbY := [\bby_1,...,\bby_M]$ whose relation is given by 
\begin{equation}\label{eq:rfi_observ_model}
    \bbY = \bbH\bbX + \bbW,
\end{equation}
with $\bbW$ being a zero-mean random matrix (typically assumed to have i.i.d. entries) that accounts for noisy measurements and model inaccuracies.
Leveraging \eqref{eq:rfi_observ_model}, the GF identification task amounts to using the input-output pairs to estimate $\bbH$ under the model in \eqref{eq:graph_filter}, which, if the GSO $\bbS$ is known, boils down to estimating the GF coefficients collected in $\bbh \in \reals^R$.
Hence, we can approach the GF identification task in the node domain by solving the convex problem
\begin{equation}\label{eq:fi_opt}
    \min_\bbh \left\|\bbY - \sum_{r=0}^{R-1}h_r\bbS^r\bbX \right\|_F^2.
\end{equation}
Leveraging the frequency definition of GFs in \eqref{eq:graph_filter}, we rewrite the \acrfull{ls} cost in \eqref{eq:fi_opt} and obtain its (closed-form) solution as
\begin{equation}\label{eq:solving_fi}
    \hbh = \argmin_\bbh \left\|\vvec(\bbY) - \big((\bbV^{-1}\bbX)^\top \odot \bbV\big)\bbPsi\bbh \right\|_2^2 = \bbXi^{\dagger}\vvec(\bbY),
\end{equation}
where $\vvec(\cdot)$ denotes the vectorization operation, $\bbV^{-1}\bbX$ is the frequency representation of the input signals (see \cref{sec:models_signals}), $\odot$ denotes the Khatri–Rao product, $\bbPsi$ is the GFT Vandermonde matrix, $\bbXi:=((\bbV^{-1}\bbX)^\top \odot \bbV)\bbPsi$, and $^{\dagger}$ is the pseudoinverse operator.

From \eqref{eq:solving_fi} we observe that estimating $\bbH$ is straightforward under the assumptions of: i) $\bbXi$ being full rank (i.e., the inputs are sufficiently rich); and ii) $\bbS$ being perfectly known.
However, as discussed in the first chapter of this thesis, the assumption in ii) does not hold true in many practical settings. New formulations of \eqref{eq:solving_fi} that account for imperfect GSOs are addressed in \cref{chap:robust_filter_id}.

\section{Models for graph signals}\label{sec:models_signals}
There is a great diversity of models capturing different relations between the signals and the underlying graph.
Here, we introduce some popular models for graph signals, which will be leveraged in subsequent chapters. 

\vspace{3mm}\noindent\textbf{Bandlimited graph signals.}
The notion of bandlimited graph signals links the properties of a signal to those of the spectrum of the supporting graph.
To be specific, the frequency representation of the signal $\bbx$ is given by the $N$-dimensional vector $\tbx := \bbV^{-1}\bbx$, with $\bbV^{-1}$ acting as the GFT~\cite{sandryhaila2014discrete}. 
Then, a graph signal is said to be \emph{low-pass} bandlimited if $\tbx$ satisfies that $\tilde{x}_k=0$ for $k>K$, where $K\leq N$ is referred to as the bandwidth of $\bbx$.
If $\bbx$ is bandlimited with bandwidth $K$, it holds that
\begin{equation}\label{eq:bl_signals}
    \bbx=\bbV_K\tbx_K,
\end{equation}
with $\tbx_K=[\tilde{x}_1,\cdots,\tilde{x}_K]$ collecting the active frequency components, and $\bbV_K$ collecting the corresponding $K$ eigenvectors.
In other words, $\bbx$ lives in a subspace of dimension $K$ spanned by the eigenvectors $\bbV_K$.
Nonetheless, even though bandlimited graph signals are typically associated with low-pass signals, the non-zero elements in $\tbx$ are not constrained to its low-frequency components.
We might encounter high-pass bandlimited signals or signals whose active frequency components are scattered throughout the spectrum.
Furthermore, in relevant cases we might ignore the specific frequency components that are active, further challenging the solution of inverse problems dealing with bandlimited graph signals.

Interestingly, comparing the definition of $\tbx$ with the definition of $\tbh$ from \eqref{eq:graph_filter}, it follows that, in contrast with classical signal processing, the GFT for graph signals is different from the GFT for GFs.
Nonetheless, exploiting the frequency representations $\tbx$ and $\tbh$, we have that the output $\bby = \bbH\bbx$ in the frequency domain is given by
\begin{equation}\label{eq:output_freq}
    \tby = \diag(\bbPsi\bbh)\bbV^{-1}\bbx = \diag(\tbh)\tbx = \tbh \circ \tbx,
\end{equation}
with $\circ$ denoting the Hadamard (entry-wise) product.
Note that equation \eqref{eq:output_freq} is the counterpart of the convolution theorem for time signals~\cite{segarra2017optimal}.

\vspace{3mm}\noindent \textbf{Diffused sparse graph signals (DSGS).}
A graph signal is called a DSGS when it can be modeled as a signal with only a few non-zero entries which is then diffused through the graph.
Mathematically, given a GSO $\bbS$, a DSGS $\bbx$ with $S$ non-zero seeds can be written as
\begin{equation}\label{eq:diff_sparse_signals_def}
    \bbx = \bbH \bbs,\;\; \text{where}\;\; {\textstyle \bbH=\sum_{r=0}^{R-1}h_r\bbS^r}\;\;\text{and}\;\; \|\bbs\|_0 \leq S,
\end{equation} 
where $\bbs$ denotes the original sparse signal whose non-zero entries are referred to as \emph{seeding nodes}.
Clearly, signals in \eqref{eq:diff_sparse_signals_def} can be viewed as the state reached after the diffusion process modeled by $\bbH$ is over, and the sparse input $\bbs \in \reals^N$ has been spread throughout the graph.

It is worth noticing that bandlimited signals and DSGS are two generative models with a similar goal: providing a simpler representation of $\bbx$.
In this sense, the bandlimited model offers an alternative representation of $\bbx$ that is sparse in the \emph{frequency domain} while the DSGS model offers an alternative representation of $\bbx$ that is sparse in the \emph{node domain}.
As happened with the frequency components, the support of the seeding nodes may be known a priori or we may need to learn it through deconvolution schemes.

\vspace{3mm}\noindent\textbf{Smooth graph signals.}
A graph signal is considered smooth on $\ccalG$ if the signal value at two connected nodes is ``close'' or, equivalently, if the difference between the signal value at neighboring nodes is small.
A common approach to quantify the smoothness of a graph signal relies on the quadratic form~\cite{kalofolias2016learn}
\begin{equation}\label{eq:smoothness}
    \sum_{(i,j)\in\ccalE}{A_{ij}(x_{i}-x_{j})^2} = \bbx^\top\bbL\bbx,
\end{equation}
which quantifies how much the signal $\bbx$ changes with respect to the notion of similarity encoded in the weights of $\bbA$.
This measure will be referred to as \acrfull{lv} of $\bbx$.
Note that, if the goal is to obtain the mean $\mathrm{LV}$ of $M$ graph signals collected in the $N \times M$ matrix $\bbX=[\bbx_1,...,\bbx_M]$, this can be achieved by computing
\begin{equation}\label{E:localvariation_tracecovariance}
\frac{1}{M}\sum_{m=1}^M \bbx_m^\top\bbL\bbx_m=\frac{1}{M}\sum_{m=1}^M \tr(\bbx_m\bbx_m^\top\bbL)=\tr(\hbC_\bbx\bbL), 
\end{equation} 
where $\hbC_\bbx:=\frac{1}{M}\sum_{m=1}^M \bbx_m\bbx_m^\top=\frac{1}{M}\bbX\bbX^\top$ denotes the sample estimate of the covariance of $\bbX$.

When compared with the previous models for graph signals, it is clear that smoothness is a more lenient assumption.
For example, consider that we impose a maximum LV on $\bbx$ with the constraint $\bbx^\top\bbL\bbx \leq c$ for some $c > 0$.
Then, while bandlimited signals are constrained to live in the subspace spanned by $\bbV_K$, smooth signals are only constrained to lie within an ellipsoid.
Last but not least, more advanced notions of smoothness can be defined by considering $\|\bbx-\bbH\bbx\|_2^2$, where $\bbH$ represents a pre-specified low-pass GF whose filter taps/frequency response can be tailored to fit the notion of smoothness at hand. 

\vspace{3mm}\noindent\textbf{Stationary graph signals.}
The definition of graph stationarity connects the statistical properties of \emph{random} graph signals with the underlying graph.  
Formally, a zero-mean random graph signal $\bbx$ is said to be stationary on $\ccalG$ if its covariance matrix $\bbC_\bbx = \mathbb{E}[\bbx\bbx^\top]$ is a positive-semidefinite polynomial of the GSO \cite{marques2017stationary}.
When $\ccalG$ is undirected so that $\bbV\bbV^\top = \mathbf{I}$, a common example of stationary graph signals arises when $\bbx$ is the output of a linear graph-diffusion process whose input is a zero mean white signal $\bbw \in \reals^N$, i.e., when the covariance of $\bbw$ is $\mathbb{E}[\bbw\bbw^\top] = \mathbf{I}$ and $\bbx = \bbH\bbw$.
In this particular case, we have that the covariance of $\bbx$ is given by
\begin{equation}\label{eq:cov_x_st}
    \bbC_\bbx= \mathbb{E}[\bbx\bbx^\top]=\bbH\mathbb{E}[\bbw\bbw^\top]\bbH^\top=\bbH\bbH^\top=\bbH^2.
\end{equation}
Since the graph filter $\bbH$ is by definition a polynomial of the GSO $\bbS$, from the last equality in \eqref{eq:cov_x_st}, it readily follows that $\bbC_\bbx$ is a polynomial of $\bbS$ as well.
As a result, in the spectral domain, it holds that $\bbS$ and $\bbC_\bbx$ share the same eigenvectors, and moreover, we have that the matrices $\bbS$ and $\bbC_\bbx$ commute, i.e., $\bbC_\bbx\bbS = \bbS\bbC_\bbx$.
Finally, we emphasize that graph stationarity does not impose a deterministic condition on $\bbx$ but, instead, it imposes a condition on the covariance of the signal.

\section{Graph inverse problems: denoising and interpolation}\label{sec:denoising_and_interpolation}
The models for graph signals discussed previously in \cref{sec:models_signals} have been shown to bear practical relevance in real-world datasets and are widely employed in inverse problems.
Here, we briefly describe traditional approaches to leverage the properties of some of those models when the observed signals are perturbed.

In the context of \emph{graph signal denoising}, when the perturbation consists in the presence of an additive noise in the signal of interest, we are given the noisy observation $\bbx = \bbx_0 + \bbn$ and the goal is to recover the original signal $\bbx_0 \in \reals^N$.
If $\bbx_0$ is known to be bandlimited and the graph is undirected, an estimate of $\bbx_0$ is readily given by
\begin{equation}
    \hbx_0 = \bbV_K\bbV_K^\top\bbx,
\end{equation}
where, by projecting $\bbx$ onto the subspace spanned by $\bbV_K^\top$, we remove the components of the noise $\bbn$ orthogonal to $\bbV_K^\top$ while retaining all the energy of $\bbx_0$.
Differently, if $\bbx_0$ is known to be smooth and the noise is white and Gaussian, a popular approach is to solve an optimization problem of the form of
\begin{alignat}{2}
    \!\!&\! \hbx_0 = \mathrm{arg min}_{\check\bbx_0} \
    && \|\bbx-\check\bbx_0\|_2^2 + \alpha \check\bbx_0^\top\bbL\check\bbx_0.
\end{alignat}
Here, the estimate $\hbx_0$ is a smooth representation of the noisy observation $\bbx$, with the weight $\alpha$ controlling the trade-off between minimizing the similarity of $\hbx_0$ and $\bbx$ and the LV of $\hbx_0$.
Note that, if the noise is drawn from a different distribution, then a similarity metric other than the $\ell_2$ norm may be preferred. Along the same lines, if additional statistical information about the noise were available (e.g., the covariance of the noise), this can also be incorporated into the minimization problem.

Now, let us consider the problem of \emph{graph signal interpolation}. This is a relevant problem in setups where the graph signal has been corrupted with missing values or, alternatively, when only a sampled version of the signal is available due to the fact that only a subset of nodes has been observed.
To be specific, consider the sampling set $\ccalQ \subseteq \ccalV$ with cardinality $Q \leq N$ that collects the set of nodes that have been observed, and define the (fat) selection matrix $\bbPi_\ccalQ \in \{0,1\}^{Q \times N}$ whose elements satisfy: (i) $\sum_j \Pi_{\ccalQ,ij}=1$ for all $i$; and (ii) $\sum_i \Pi_{\ccalQ,ij}=1$ if $j\in \ccalQ$ and $\sum_i \Pi_{\ccalQ,ij}=0$ otherwise.
Then, if the original signal $\bbx$ is bandlimited, and assuming that the observed values correspond to observations at different nodes, we denote the perturbed signal with missing values as $\bbx_\ccalQ := \bbPi_\ccalQ\bbx = \bbPi_\ccalQ\bbV_K\tbx_K$.
Under these conditions, the original signal $\bbx$ can be readily recovered via
\begin{equation}
    \hbx = \bbV_K\tbx_K = \bbV_K(\bbPi_\ccalQ\bbV_K)^\dagger\bbx_\ccalQ,
\end{equation}
provided that the rank of the $Q \times K$ submatrix $\bbPi_\ccalQ\bbV_K$ is $K$. Nonetheless, in some settings, the missing values may be represented more accurately with alternative sampling schemes.
An equally valid, but less intuitive approach to sampling a graph signal, is to fix some node $i$, and consider the sampling of the signal seen by this node as the GSO is applied recursively.
In other words, consider that the signal has been locally diffused according to $\bbS$, as encoded in the matrix
\begin{equation}\label{eq:all_shifts_matrix}
    \bbZ := [\bbz^{(0)},\bbz^{(1)},...,\bbz^{(N-1)}] = [\bbx,\bbS\bbx,...,\bbS^{N-1}\bbx].
\end{equation}
Then, using the matrix $\bbZ$  and with $\bbe_i$ denoting the $i$-th canonical vector, the \emph{successively aggregated} signal at node $i$ is the $i$-th row of $\bbZ$, that is $\bbz_i:=(\bbe_i^T\bbZ)^T=\bbZ^T\bbe_i$.
Sampling is now reduced to the selection of $Q$ out of the $N$ elements of $\bbz_i$, that is $ \bbz_{\ccalQ,i} \ :=\ \bbPi_\ccalQ\bbz_i\ =\ \bbPi_\ccalQ \left(\bbZ^T\bbe_i\right)$.
Leveraging the results in \cite{marques2015sampling}, the signal $\bbx$ can be recovered from $\bbz_{\ccalQ,i}$ as
\begin{equation}
    \bbx = \bbV_K\big(\bbPi_\ccalQ\bbPsi^\top\diag(\bbupsilon_i)\big)^\dagger\bbz_{\ccalQ,i}, \;\;\;\mathrm{with}\;\;\;\bbupsilon_i:=[V_{i,1},...,V_{i,N}]^\top.
\end{equation}

\section{Graph learning}\label{sec:graph_learning}
Graph learning, also known as network topology inference, has developed swiftly in the last years and, currently, is among the most active research areas within GSP.
Given a set of graph signals (nodal observations) collected in the matrix $\bbX = [\bbx_1,\bbx_2,...,\bbx_M] \in \reals^{N \times M}$, which are typically assumed to be independent realizations of a random network process, the goal is to discover the topology of the graph encoded in the GSO by assuming that the observed signals $\bbX$ and the unknown graph are intimately connected.
\cref{fig:nti_example} illustrates the case where a graph learning algorithm is employed to learn the connection between the different regions of the brain based on the signals measured at each region.
Intuitively, the relation between $\bbX$ and $\ccalG$ will depend on the application at hand, with different relations between the observations and the unknown topology leading to different graph learning algorithms.
Here, we will provide a succinct summary of the most relevant approaches based on \cite{mateos2019connecting}.
The interested reader is referred there for additional details.

\begin{figure}[b]
    \centering
    \includegraphics[width=\textwidth]{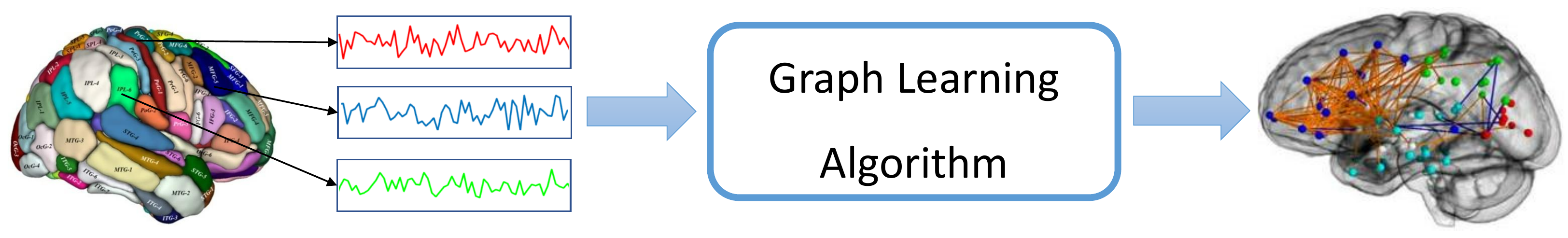}
    \caption{Application of a generic graph learning scheme. The input, represented on the left, is given by signals measured in the different regions of the brain (the nodes of the graph). Then, the output of the graph learning algorithm, on the right, are the inferred connections between the different regions, i.e., the estimated topology of the network.}
    \label{fig:nti_example}
\end{figure}

One of the first methods to estimate the topology $\ccalG$ is given by \emph{correlation networks}, where the topology is obtained from the Pearson correlation of the i.i.d. random vectors collected in $\bbX$.
The Pearson correlation coefficient between variables $x_i$ and $x_j$ is denoted as $\rho_{ij}$ and can be computed from the entries of the covariance matrix $[\bbC_\bbx]_{ij}$.
Therefore, in the context of GSP, the GSO for correlation networks is usually set to the sampled covariance $\hbC_\bbx$, or a thresholded version to ensure a sparse matrix $\bbS$.

While correlation networks are a simple alternative, high correlations may be due to latent network effects.
For example, the random variables $x_i$ and $x_j$ may be highly correlated not because the nodes $i$ and $j$ are connected but because of a third node $k$ that is influencing both of them.
In principle, such a confounding can be resolved by considering the \emph{partial correlation} coefficients
\begin{equation}
    \rho_{ij|\ccalV \setminus ij} := \frac{\mathrm{cov}(x_i,x_j | \ccalV \!\setminus\! ij)}{\sqrt{\mathrm{var}(x_i | \ccalV \!\setminus\! ij)\mathrm{var}(x_j | \ccalV \!\setminus\! ij)}}.
\end{equation}
Here, $\ccalV \!\setminus\! ij$ denotes the set of random variables except for those indexed by nodes $i$ and $j$.
The edge set in partial correlation networks is then defined analogously to their (unconditional) correlation network counterpart.  

Of particular interest is the case when each column of $\bbX$ is sampled independently from the same Gaussian distribution.
Under such an assumption, $ \rho_{ij|\ccalV \setminus ij} = 0$ implies that $x_i$ and $x_j$ are conditionally independent given the remaining variables in $\ccalV \setminus ij$.
The resulting partial correlation network is known as \emph{\acrfull{gmrf}} or \emph{Gaussian graphical model}~\cite{yuan2007model}.
Then, the key realization is that the partial correlation coefficients $\ccalV \setminus ij$, which capture the topology of the graph, can be expressed as the normalized entries of $\bbC_\bbx^{-1}$.
In the context of GMRFs, this important matrix is known as the \emph{precision} matrix. From the GSP perspective, the previous discussion implies that $\bbS = \bbC_\bbx^{-1}$.
In other words, the topology of the graph is encoded in the inverse covariance matrix.
Leveraging this observation, the notorious \emph{graphical Lasso} algorithm~\cite{friedman2008sparse} estimates $\bbS$ through the regularized \acrfull{ml} estimator
\begin{equation}
    \hbS = \argmax_{\bbS \succeq 0} \;\; \log\det (\bbS) - \tr(\hbC_\bbx\bbS) - \lambda\|\bbS\|_1,
\end{equation}
where the $\|\bbS\|_1$ denotes the $\ell_1$ norm of the vectorization of $\bbS$.

Finally, we consider a network topology inference approach that builds upon the more lenient assumption of stationary graph signals.
First, in correlation networks, it was assumed that $\bbS = \bbC_\bbx$, and later on, in the graphical Lasso algorithm and other GMRF approaches the relation between the GSO and the covariance of the observed signals is constrained to $\bbS = \bbC_\bbx^{-1}$.  
In contrast, the assumption of stationary graph signals only implies that the mapping $\bbS \to \bbC_\bbx$ is given by a generic polynomial, hence including the previous scenarios as particular cases.
While there are different formulations for this graph learning approach, one particularly interesting for this thesis is given by
\begin{equation}
    \!\! \hbS = \argmin_{\bbS}  \; \|\bbS\|_1 \;\;\;\;\mathrm{s. \;t. } \;\;\; \|\bbS\hbC_\bbx - \hbC_\bbx\bbS \|_F^2 \leq \epsilon, \;\;\; \bbS \in \ccalS,
\end{equation}
which is formulated solely in the node domain thanks to the commutativity constraint $\|\bbS\hbC_\bbx - \hbC_\bbx\bbS \|_F^2$.
The optimization problem finds the sparsest GSO that commutes with $\hbC_\bbx$, with $\epsilon$ being a small positive parameter controlling the quality of the estimate $\hbC_\bbx$, and with $\ccalS$ collecting the requirements for $\bbS$ to be a specific type of GSO.
A typical example is the set of adjacency matrices
\begin{equation}\label{eq:A_set}
    \ccalS_\bbA := \{ A_{ij} \geq 0; \ \bbA = \bbA^\top; \ A_{ii} = 0; \ \bbA\mathbf{1}\geq\mathbf{1}\},
\end{equation}
where we require the GSO to have non-negative weights, be symmetric, and have no self-loops, and the last constraint rules out the trivial $0$ solution by imposing that every node has at least one neighbor.
Analogously, the set of combinatorial Laplacian matrices is
\begin{equation}\label{eq:L_set}
    \ccalS_\bbL := \{ L_{ij} \leq 0 \;\mathrm{for} \; i \neq j; \; \bbL=\bbL^\top; \;   \bbL \textbf{1} = \bb0; \; \bbL  \succeq 0 \},
\end{equation}
where we require the GSO to be a positive semidefinite matrix, have non-positive off-diagonal values, have positive entries on its diagonal, and have the constant vector as an eigenvector (i.e, the sum of the entries of each row to be zero).

So far the section has been focused on a graph learning setting that, in the network science parlance, is known as the \emph{network association} problem~\cite{kolaczyk2009book}.
While network association is the most widely considered approach in the context of graph learning, two relevant variations are:  the link prediction problem and the network tomography problem. \emph{Link prediction} is a simpler problem in which a subset of the edges of the graph is observed along with the signals.
This additional information can be incorporated into the previous framework by modifying the constraint set $\ccalS$.
In contrast, \emph{network tomography} is a more challenging task where the observed signals are perturbed and observations from only a subset of the nodes are available. 
Precisely, developing robust algorithms that address the latter problem leveraging several GSP assumptions is the subject of \cref{chap:nti_hidden}.

\section{Graph neural networks}\label{sec:gnn}
Generically, we represent a GNN using a parametric nonlinear function $f_{\bbTheta}(\bbZ|\ccalG):\reals^{N^{(0)} \times F^{(0)}} \rightarrow \reals^N$ that depends on the graph $\ccalG$. 
The parameters of the architecture are collected in $\bbTheta$, and the matrix $\bbZ \in \reals^{N^{(0)} \times F^{(0)}}$ represents the input of the network.
Despite the many possibilities for defining a GNN, a broad range of such architectures recursively applies a graph-aware linear transformation followed by an entry-wise nonlinearity.
Then, a generic deep graph-based architecture $f_{\bbTheta}(\bbZ|\ccalG)$ with $L$ layers can be described as
\begin{align}
    \hbY^{(\ell)}&=\ccalT_{\bbTheta^{(\ell)}}^{(\ell)}\left\{ \bbY^{(\ell-1)}|\ccalG\right\}, \;\; 1 \leq\ell\leq L,  \label{eq:graph_aware_lin_trans_generic_GNN} \\
    Y^{(\ell)}_{ij}&=g^{(\ell)}\left( \hat{Y}^{(\ell)}_{ij} \right),  \;\; 1 \leq\ell\leq L,  \label{eq:nonlinear_trans} 
\end{align}
where $\ccalT^{(\ell)}_{\bbTheta^{(\ell)}}\{\cdot|\ccalG\}\colon \reals^{N^{(\ell-1)} \times F^{(\ell-1)}} \rightarrow \reals^{N^{(\ell)}\times F^{(\ell)}}$ is a \textit{graph-aware} linear transformation, $\bbY^{(0)}=\bbZ$ and $\bby=\bbY^{(L)}$ denote the input and output of the architecture, $\bbTheta^{(\ell)} \in \reals^{F^{(\ell-1)} \times F^{(\ell)}}$ are the parameters that define such a transformation, and $g^{(\ell)}\colon \reals\rightarrow\reals$ is a \textit{scalar} nonlinear transformation (e.g., the ReLU function), which is oftentimes omitted in the last layer.
Moreover, $N^{(\ell)}$ and $F^{(\ell)}$ represent the number of nodes and features at layer $\ell$, and $\bbTheta=\{\bbTheta^{(\ell)}\}_{\ell=1}^L$ collects all the parameters of the architecture.
The structure of the generic GNN specified by the recursion in \eqref{eq:graph_aware_lin_trans_generic_GNN}-\eqref{eq:nonlinear_trans} is depicted in \cref{fig:GNN_diagram}.
Finally, even though the output of $f_{\bbTheta}(\bbZ|\ccalG)$ are graph signals defined in $\reals^N$, which is the case of interest for \cref{chap:denoising}, it can be easily adapted to output graph signals with more than one feature.

\begin{figure}[tb]
    \centering
    \includegraphics[width=.96\textwidth]{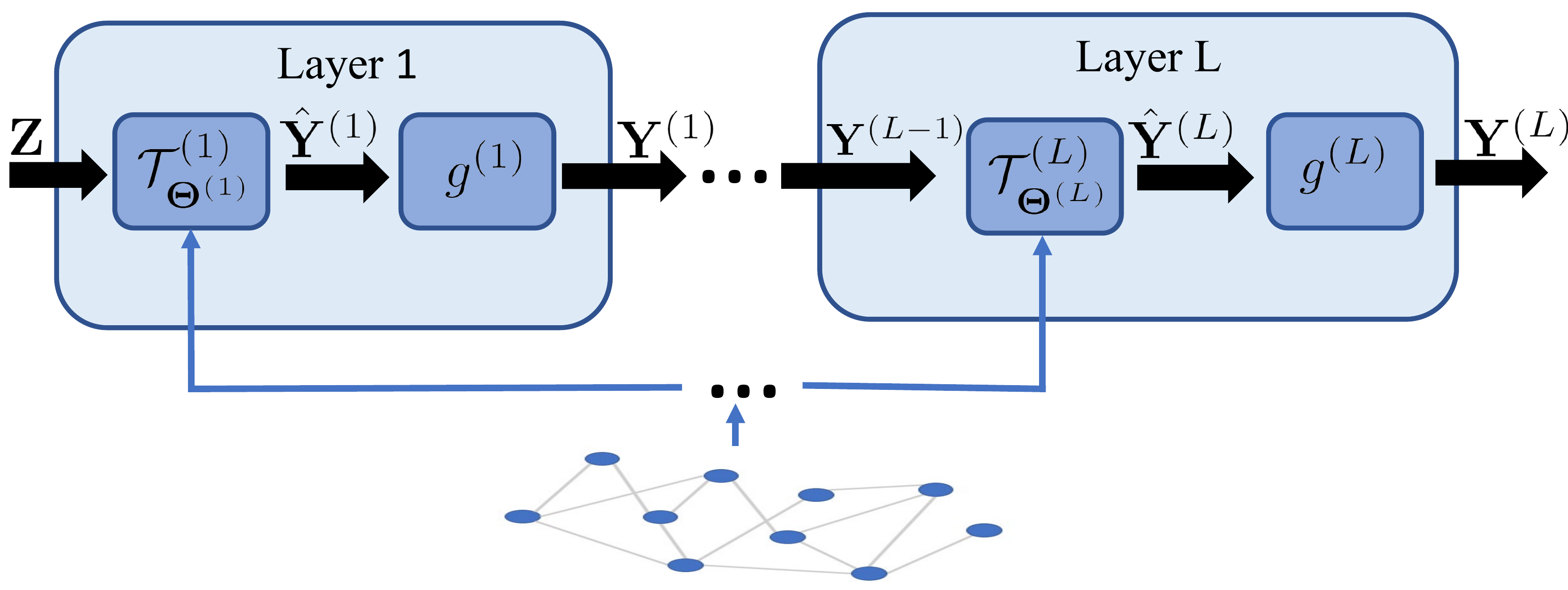}
    \caption{Block diagram of a generic GNN with $L$ layers. The inputs of the architecture are the matrix $\bbZ$ and the topology of the graph. As specified in \eqref{eq:graph_aware_lin_trans_generic_GNN}-\eqref{eq:nonlinear_trans}, each layer is composed of a learnable graph-aware linear transformation $\ccalT_{\bbTheta^{(\ell)}}^{(\ell)}\left\{ \bbY^{(\ell-1)}|\ccalG\right\}$ followed by a entry-wise non-linear transformation $g^{(\ell)}\left( \hat{Y}^{(\ell)}_{ij} \right)$.}
    \label{fig:GNN_diagram}
\end{figure}

\section{Graph perturbations in GSP}\label{sec:perturbations_gsp}
The presence of noise in the observed topology represents a relevant but challenging problem that is yet to be studied in more depth by the GSP community.
Current works addressing this issue customarily represent the influence that these perturbations exert in the GSO via an additive term because of its tractability, but even then, the resulting models are non-trivial.
The source of this difficulty lies in the main tools used in GSP, which are  based either on the GFT (eigenvectors of the GSO) or on graph filters (polynomials of the GSO).
The challenges are then twofold: (i)~characterizing the impact that an additive matrix perturbation has on the eigenvectors and/or a polynomial of that matrix is highly nontrivial; and (ii)~even small perturbations on $\bbS$ may lead to great discrepancies in both the eigenvectors and the associated polynomials, as we show in~\cref{chap:robust_filter_id}.
Here, we present a succinct overview of relevant works considering the influence of noise in the observed topology to provide some context.

We start with the work presented in~\cite{ceci2020graph}, which analyzes how perturbations in the edges affect the spectrum of the combinatorial graph Laplacian $\bbL$.
The authors assume an additive perturbation model and define the perturbed Laplacian  as $\barbL := \bbL + \Delta\bbL$, with $\Delta\bbL$ denoting the perturbation matrix.
Assuming that all the eigenvalues of $\bbL$ have multiplicity one and that $\|\Delta\bbL\|_F \ll \|\bbL\|_F$, they perform a small perturbation analysis to quantify the influence of the perturbations in the eigenvalues and eigenvectors of $\bbL$.
Based on this result, and assuming that the perturbation of each edge is modeled as a random event characterized by a certain probability, a statistical analysis is carried out to characterize the mean and the variance of the perturbation of the eigenvalues.
Lastly,~\cite{ceci2020graph} studies the influence of the perturbations on the spectrum of BGS when the eigenvectors of $\bbL$ are used as the GFT.

Differently,~\cite{miettinen2019modelling} investigated the influence of perturbations in the adjacency matrix leveraging results from the graphon theory.
The perturbed adjacency matrix is also defined based on an additive perturbation model as $\barbA = \bbA + \bbDelta_\epsilon \circ (\mathbf{1}_{N \times N} - 2\bbA)$, where $\mathbf{1}_{N \times N}$ is the matrix of all ones of size $N \times N$.
The perturbations in $\bbDelta_\epsilon$ are modeled as a random graph drawn from either an \acrfull{er} or a \acrfull{sbm}, and then, generalizations considering different probabilities for creating and destroying edges and dealing with weighted graphs are also proposed.
Finally, the model put forth in~\cite{miettinen2019modelling} is employed to analyze the influence of the perturbations in a polynomial of the GSO of order 2.

\vspace{3mm}
To close this chapter, we stress that, for ease of exposition, we presented a taxonomy where the fundamental concepts, tools, and problems in the GSP framework were clearly segregated.
Nonetheless, in practical settings we can encounter different combinations of the above problems and generative models giving rise to a rich gamut of GSP tasks.
This can be seen in \emph{blind deconvolution}, which is a graph filter identification problem when $\bbx$ is a DSGS and the seeding nodes in $\bbs$ are unknown, or in the problem approached in \cref{chap:robust_filter_id}, where we simultaneously estimate a graph filter and denoise the observed topology of the graph.

\chapter{Non-linear denoising of graph signals}\label{chap:denoising}
The first problem considered in the robust GSP framework proposed in this thesis involves the presence of noise in the observed graph signals.
As discussed in previous chapters, the presence of noise represents a pervasive type of perturbation capable of rendering the observed data useless when the signal-to-noise ratio is low.
As a result, (pre-)processing schemes that remove the noise from the observed signals are required.
It is worth recalling that, because the presence of noise in the signals does not affect the topology of the graph and results in tractable problems, there are several works addressing the denoising of graph signals.
In this sense, the approach described in this chapter, which encapsulates our work from~\cite{rey2019underparametrized,rey2021untrained}, is primarily concerned with incorporating the information encoded in the graph topology into non-linear architectures and, furthermore, providing a mathematical characterization of the denoising capabilities of the proposed architectures. 

Bearing the previous comments in mind, the chapter is organized as follows.
\cref{sec:denoise_intro} gives a brief overview of the architectures developed and summarizes the main contributions.
\cref{sec:GNN_for_inverse_problems} formally introduces the problem at hand and presents our general approach.
\cref{sec:conv_dec} and \cref{sec:ups_dec} detail the proposed architectures and provide the mathematical analysis for each of them.
Numerical experiments are presented in \cref{sec:experiments} and concluding remarks are provided in \cref{sec:conclusion}.

\section{Introduction}\label{sec:denoise_intro}
In order to develop a \emph{graph-aware non-linear architecture} capable of removing the noise from the observed signals, the goal of this chapter is twofold.
First, we explore different ways of incorporating the information encoded in the graph and propose new graph-based NN architectures to denoise graph signals.
Second, we provide theoretical guarantees for the denoising capabilities of this approach and show that such guarantees are directly influenced by the properties of the graph.
The mathematical analysis, performed on particular instances of these architectures, characterizes their denoising performance under specific assumptions for the original signal and its underlying graph.
In addition, we provide empirical evidence about the denoising performance of our method for scenarios more general  than those strictly covered by our theory, further illustrating the value of our graph-aware untrained architectures to denoise graph signals.

The presented architectures are \textit{untrained} NNs, meaning that the parameters of the network are optimized using only the signal observation that we want to denoise, avoiding the dependency on a training set with multiple observed graph signals.
The underlying assumption behind this \textit{untrained} denoising architecture is that, due to the graph-specific structure incorporated into the different layers, when tuning the network parameters using stochastic gradient steps, the NNs are capable of learning (matching) the structure of the signal faster than that of the noise.
Hence, the denoising process is carried out separately for each individual observation by fitting the weights of the NN and stopping the updates after a few iterations.
This same phenomenon has been observed to hold true in non-graph deep learning architectures\cite{ulyanov2018deep,heckel2018deep}.
In the context of signal denoising, the consideration of an overparametrized graph-aware architecture along with early stopping avoids overfitting to the noise.

To incorporate the topology of the graph, the first architecture multiplies the input at each layer by a fixed (non-learnable) graph filter \cite{segarra2017optimal}, which can be seen as a generalization of the convolutional layer
in~\cite{kipf2016semi}.
The second architecture performs graph upsampling operations that, starting from a low-dimensional latent space, progressively increase the size of the input until it matches the size of the signal to denoise.
The sequence of upsampling operators are designed based on hierarchical clustering algorithms~\cite{jain1988algorithms, carlsson2018hierarchical, carlsson2013axiomatic, rey2019deep} so that, in contrast to~\cite{do2020graph}, matrix inversions are not required, avoiding the related numerical issues.

\vspace{3mm}
\noindent
\textbf{Contributions.}
In summary, the contributions of this chapter are the following: 
\begin{itemize}
    \item[(i)] We introduce two new overparametrized and untrained GNNs for solving graph-signal denoising problems.
    \item[(ii)] We characterize theoretically the denoising performance of each of the two architectures, improving our understanding of nonlinear architectures and the influence of incorporating graph structure into NNs.
    \item[(iii)] The proposed architectures are evaluated and compared to other denoising alternatives through numerical experiments carried out with synthetic and real-world data.
\end{itemize}

\section{GNNs for graph-signal denoising}\label{sec:GNN_for_inverse_problems}
We now formally introduce the problem of graph-signal denoising within the GSP framework, and present our approach to tackle it using untrained GNN architectures.
Given the graph $\ccalG$, let us consider the observed graph signal $\bbx \in \reals^N$, which is a noisy version of the original graph signal $\bbx_0$. With $\bbn\in \reals^N$ being a noise vector, the relation between $\bbx$ and $\bbx_0$ is
\begin{equation}\label{eq:noise_model}
    \bbx = \bbx_0 + \bbn.
\end{equation}
Then, the goal of graph-signal denoising is to remove as much noise as possible from the observed signal $\bbx$ to estimate the original signal $\bbx_0$, which is performed by exploiting the information encoded in $\ccalG$.

Recall that a traditional approach for the graph-signal denoising task is to solve an optimization problem of the form
\begin{alignat}{2}\label{eq:denoising_reg}
    \!\!&\! \hbx_0 = \mathrm{arg min}_{\check\bbx_0} \
    && \|\bbx-\check\bbx_0\|_2^2 + \alpha R(\check\bbx_0|\ccalG).
\end{alignat}
The first term promotes fidelity to the signal observations, the regularizer $R(\cdot|\ccalG)$ promotes denoised signals with desirable properties over the given graph $\ccalG$, and $\alpha>0$ controls the influence of the regularization.
Common choices for the regularizer include the quadratic Laplacian $R(\bbx|\ccalG)=\bbx^\top\bbL\bbx$~\cite{pang2017graph},
or regularizers involving high-pass graph filters $R(\bbx|\ccalG)=\|\bbH\bbx\|_2^2$ that foster smoothness on the estimated signal\cite{sandryhaila2014discrete,chen2014signal}.

\begin{algorithm}[tb]
\SetKwInOut{Input}{Inputs}
\SetKwInOut{Output}{Outputs}
\Input{$\bbx$ and $\ccalG$}
\Output{$\hbx_0$ and $\hbTheta(\bbx)$}
\SetAlgoLined
Set $f_{\bbTheta}(\bbZ|\ccalG)$ as explained in \cref{sec:conv_dec} or \cref{sec:ups_dec} \\
Generate $\bbZ$ from iid zero-mean Gaussian distribution \\
Initialize $\bbTheta_{(0)}$ from iid zero-mean Gaussian \\
\For{$t=1$ \KwTo $T$}{
 Update $\bbTheta_{(t)}$ minimizing \eqref{eq:nonlinear_denoising} with SGD
}
$\hbTheta(\bbx)=\bbTheta_{(T)}$ \\
$\hbx_0=f_{\hbTheta(\bbx)}(\bbZ|\ccalG)$ \\
\caption{Proposed graph-signal denoising method}
\label{A:denoising}
\end{algorithm}

While those traditional approaches exhibit a number of advantages (including interpretability, mathematical tractability, and convexity),
they may fail to capture more complex relations between $\ccalG$ and $\bbx_0$, motivating the development of nonlinear graph-denoising approaches. 

As summarized in Algorithm~\ref{A:denoising}, in this chapter we advocate handling the graph-signal denoising task by employing an overparametrized GNN (denoted by $f_{\bbTheta}(\bbZ|\ccalG)$) as described in \eqref{eq:graph_aware_lin_trans_generic_GNN}-\eqref{eq:nonlinear_trans}.
The weights of the architecture, collected in $\bbTheta$, are learned by minimizing the loss function
\begin{align}
\label{eq:nonlinear_denoising}
\ccalL(\bbx,\bbTheta) = \frac{1}{2}\|\bbx-f_{\bbTheta}(\bbZ|\ccalG)\|_2^2,
\end{align}
applying \acrfull{sgd} in combination with early stopping to avoid overfitting the noise.
The entries of the parameters $\bbTheta$ and the input matrix $\bbZ$ are initialized at random using an i.i.d. zero-mean Gaussian distributions, and the weights learned after a few iterations of denoising the observation $\bbx$ are denoted as $\hbTheta(\bbx)$.
Note that $\bbZ$ is fixed to its random initialization.
Finally, the denoised graph signal estimate is computed as 
\begin{equation}
    \hbx_0=f_{\hbTheta(\bbx)}(\bbZ|\ccalG).    
\end{equation}

The intuition behind this approach is as follows: since the architecture is overparametrized it can in principle fit any signal, including noise.
However, as shown formally later, both empirically and theoretically, the proposed architectures fit graph signals faster than the noise and, therefore, with early stopping they fit most of the signal and little of the noise, enabling signal denoising.

{
\vspace{1mm}\noindent
\textbf{Remark 1.} 
    The proposed architectures are described as \textit{untrained} NNs because, when minimizing \eqref{eq:nonlinear_denoising}, the weights in $\bbTheta$ are learned to fit \emph{each observation} $\bbx$, with the denoised signal $\hbx_0$ being the output for those particular weights.
    This implies that each noisy-denoised signal pair $(\bbx,\hbx_0)$ is associated with a particular value of the weights $\bbTheta$, in contrasts with trainable NNs, where the weights $\bbTheta$ are first learned by fitting the signals in a \emph{training set} and later used (unchanged) to denoise signals that were not in the training set. }

Regarding the specific implementation of the untrained network $f_{\bbTheta}(\bbZ|\ccalG)$, there are multiple possibilities for selecting the linear and nonlinear transformations $\ccalT^{(\ell)}_{\bbTheta^{(\ell)}}$ and $g^{(\ell)}$ defined in equations~\eqref{eq:graph_aware_lin_trans_generic_GNN} and~\eqref{eq:nonlinear_trans}, respectively.
As customary in NNs dealing with signals defined in $\reals^N$, we select the $\relu$ operator, defined as $\relu(x)=\max(0,x)$, to be the entrywise nonlinearity $g^{(\ell)}$.
Then, we focus on the design of the linear transformation, which is responsible for incorporating the structure of the graph.
The two following sections postulate the implementation of two particular linear transformations $\ccalT^{(\ell)}_{\bbTheta^{(\ell)}}$ (each giving rise to a different GNN) and analyze the resulting architectures.

\section{Graph convolutional generator}\label{sec:conv_dec}
Our first architecture to address the graph-signal denoising task is a \acrfull{gcg} network that incorporates the topology of the graph into the NN pipeline via vertex-based graph convolutions.
Then, leveraging the fact that convolutions of a graph signal on the vertex domain can be represented by a graph filter $\bbH \in \reals^{N \times N}$ \cite{segarra2017optimal}, we define the linear transformation for the convolutional generator as 
\begin{equation}\label{eq:linear_trans_gcg}
      \ccalT^{(\ell)}_{\bbTheta^{(\ell)}}\{\bbY^{(\ell-1)}|\ccalG\} = \bbH\bbY^{(\ell-1)}\bbTheta^{(\ell)}.
\end{equation}
Remember that the $F^{(\ell-1)} \times F^{(\ell)}$ matrix $\bbTheta^{(\ell)}$ collects the learnable weights of the $\ell$-th layer, and the graph filter $\bbH$ is given by \eqref{eq:graph_filter}. {The coefficients $\{h_r\}_{r=0}^{R-1}$ are fixed a priori so that $\bbH$ promotes desired properties on the estimated signal.}
Using the linear transformation defined in \eqref{eq:linear_trans_gcg}, the output of the GCG with $L$ layers is given by the recursion
\begin{align}
      \bbY^{(\ell)} &=\relu(\bbH\bbY^{(\ell-1)}\bbTheta^{(\ell)}),\;\; \mathrm{for}\;\; \ell=1,...,L-1, \label{eq:gcg1}\\ 
      \bby^{(L)} &= \bbH\bbY^{(L-1)}\bbTheta^{(L)}, \label{eq:gcg2}
\end{align}
where $\bbY^{(0)}=\bbZ$ {denotes the random input} and the $\relu$ is not applied in the last layer of the architecture.
With the proposed linear transformation, the GCG learns to combine the features within each node by fitting the weights of the matrices $\bbTheta^{(\ell)}$ while the graph filter $\bbH$ interpolates the signal by mixing features from $R-1$ neighborhoods.

Even though the proposed GCG exploits graph convolutions to incorporate the graph topology into the architecture, it is intrinsically different from other GCNNs.
The linear transformation proposed in \cite{kipf2016semi}, arguably one of the most popular implementations of GCNNs, is given by
\begin{equation}\label{eq:gcnn_kipf}
    \ccalT^{(\ell)}_{\bbTheta^{(\ell)}}\{\bbY^{(\ell-1)}|\ccalG\} = (\bbA+\bbI)\bbY^{(\ell-1)}\bbTheta^{(\ell)}.
\end{equation}
Recalling the definition of graph filters in \eqref{eq:graph_filter}, it is evident that \eqref{eq:gcnn_kipf} is a particular case of our proposed linear transformation, obtained by setting the generative graph filter to $\bbH=\bbA+\bbI$, a low-pass graph filter of degree one.
In addition to representing a more general scenario, \eqref{eq:gcg1} endows the GCG with two main advantages.
First, the graph filter $\bbH$ allows us to incorporate prior information on the signals to denoise, making our GCG architecture more suitable to denoise a (high-) low-frequency signal by employing a (high-) low-pass filter.
Second, in \eqref{eq:gcnn_kipf} there is an equivalence between the depth of the network and the radius of the considered neighborhood, so that gathering information from nodes that are $R$ hops apart requires a GNN with $R$ layers. In contrast, with the architecture considered in \eqref{eq:gcg1}, the same can be achieved by considering a GCG with $L$ layers and a graph filter $\bbH$
of degree $R/L$~\cite{segarra2017optimal}, reducing the number of learnable parameters and bypassing some of the well-known over-smoothing problems associated with \eqref{eq:gcnn_kipf}~\cite{chen2020measuring}.

Next, we adopt some simplifying assumptions to provide theoretical guarantees on the denoising capability of the GCG (\cref{sec:analysis_GCG}). Then, we rely on numerical evaluations to demonstrate that the results also hold in more general settings (\cref{sec:analyze_deep_gcg}).

\subsection{Guaranteed denoising with the GCG}\label{sec:analysis_GCG}
To formally prove that the proposed architecture can successfully denoise the observed graph signal $\bbx$, we consider a two-layer GCG given by
\begin{equation}\label{eq:2layer_gen}
    f_{\bbTheta}(\bbZ|\ccalG) = \relu(\bbH\bbZ\bbTheta^{(1)})\bbtheta^{(2)},
\end{equation}
where $\bbTheta^{(1)}\in\reals^{F\times F}$ and $\bbtheta^{(2)}\in\reals^{F}$ are the learnable coefficients. With $F$ denoting the number of features, we consider the overparametrized regime where $F \geq 2N$, and analyze the behavior and performance of denoising with the untrained network defined in~\eqref{eq:2layer_gen}. 

We start by noting that scaling the $i$-th entry of $\bbtheta^{(2)}$ is equivalent to scaling the $i$-th column of $\bbTheta^{(1)}$, so that, without loss of generality, we can set the weights to $\bbtheta^{(2)}=\bbb$, where $\bbb$ is a vector of size $F$ with half of its entries set to $1/\sqrt{F}$ and the other half to $-1/\sqrt{F}$.
Furthermore, since $\bbZ$ is a random matrix of dimension $N \times F$, the column space of $\bbZ$ spans $ \reals^N$, and hence, minimizing over $\bbZ\bbTheta^{(1)}$ is equivalent to minimizing over $\bbTheta\in\reals^{N\times F}$.
With these considerations in place, the optimization over \eqref{eq:nonlinear_denoising} can be performed replacing the two-layer GCG described in \eqref{eq:2layer_gen} by its simplified form
\begin{equation}\label{eq:2layer_gcg_simp}
    f_{\bbTheta}(\bbH) = f_{\bbTheta}(\bbZ|\ccalG) = \relu(\bbH\bbTheta)\bbb.
\end{equation}
Note that we replaced $f_{\bbTheta}(\bbZ|\ccalG)$ with  $f_{\bbTheta}(\bbH)$ since the graph influence is modeled by the graph filter $\bbH$, and the influence of the matrix $\bbZ$ is absorbed by the learnable weights $\bbTheta$.

The denoising capability of the two-layer architecture is related to the eigendecomposition of its expected squared Jacobian\cite{heckel2019denoising}.
However, to understand which signals can be effectively denoised with the proposed architecture, we need to connect the spectral domain of the expected squared Jacobian with the spectrum of the graph, given by the eigenvectors of the adjacency matrix.

To that end, we next compute the expected squared Jacobian of the two-layer architecture in \eqref{eq:2layer_gcg_simp}.
Denote as $\ccalJ_{\bbTheta}(\bbH)\in~\reals^{N \times NF}$ the Jacobian matrix of $f_{\bbTheta}(\bbH)$  with respect to $\bbTheta$, which is given by 
\begin{align}\label{eq:jacobian_gcg}
    \ccalJ^\top_{\bbTheta}(\bbH)
    =
    \begin{bmatrix}
    \mathrm{b}_1 \bbH^\top \diag( \relu'(\bbH \bbtheta_1) ) \\
    \vdots \\
    \mathrm{b}_F \bbH^\top \diag( \relu'(\bbH \bbtheta_F) ) \\
    \end{bmatrix}
    \in \reals^{NF \times N},
\end{align}
where $\bbtheta_i$ represents the $i$-th column of $\bbTheta$, and $\relu'$ is the derivative of the $\relu$, which is the Heaviside step function.
Then, define the $N\times N$ expected squared Jacobian matrix as
\begin{equation}\label{eq:sqJac_step1}
    \sqJacob := \mathbb{E}_{\bbTheta}[\ccalJ_{\bbTheta}(\bbH)\ccalJ^\top_{\bbTheta}(\bbH)] = {\sum_{i=1}^F b_i^2 \mathbb{E}\left[\relu'(\bbH \bbtheta_i)\relu'(\bbH\bbtheta_i)^\top\right] \circ \bbH\bbH^\top. }
\end{equation}
Moreover, from the work in \cite[Sec. 3.2]{daniely2016toward}, we note that $\mathbb{E}\left[\relu'(\bbH \bbtheta_i)\relu'(\bbH\bbtheta_i)^\top\right]$ is in fact the so-called dual activation of the step function.
Therefore, combining the expression for the dual activation of the step function from \cite[Table~1]{daniely2016toward} with \eqref{eq:sqJac_step1}, we obtain that
\begin{equation}\label{eq:expected_jac}
    \sqJacob = 0.5 \left( \mathbf{1} \mathbf{1}^\top - {\pi}^{-1} \arccos(\bbC^{-1} \bbH^2 \bbC^{-1})\right) \circ \bbH \bbH^\top,
\end{equation}
where $\circ$ represents the Hadamard (entry-wise) product, $\arccos(\cdot)$ is computed entry-wise, $\bbh_i$ represents the $i$-th column (row) of $\bbH$, $\bbC=\diag([\|\bbh_1\|_2,...,\|\bbh_N\|_2])$ is a normalization term so that $\bbC^{-1} \bbH^2 \bbC^{-1}$ is the autocorrelation of the graph filter $\bbH$.

Since $\sqJacob$ is symmetric and positive (semi) definite, it has an eigendecomposition $\sqJacob=\bbW\bbSigma\bbW^\top$.
Here, the columns of the orthonormal matrix $\bbW=[\bbw_1,\ldots,\bbw_N]$ are the $N$ eigenvectors, and the nonnegative eigenvalues in the diagonal matrix $\bbSigma$ are assumed to be ordered as $\sigma_1 \geq \sigma_2 \geq ... \geq \sigma_N$.

After defining the two-layer GCG  $f_{\bbTheta}(\bbH)$ and its expected square Jacobian $\sqJacob$, we formally analyze its performance when denoising bandlimited graph signals.
This is particularly relevant given the importance of (approximate) bandlimited graph signals both from analytical and practical points of view~\cite{djuric2018cooperative}.
For the sake of clarity, we first introduce the main result (\cref{theorem_denoising_gcg}) and then we detail a key intermediate result (Lemma~\ref{lemma_eigs_gcg}) that provides additional insight.

Formally, consider the $K$-bandlimited graph signal $\bbx_0$ as described in \eqref{eq:bl_signals}, and let the architecture $f_{\bbTheta}(\bbH)$ have a sufficiently large number of features $F$:
\begin{equation}\label{bound_on_F}
    F \geq \left( \frac{\sigma_1^2}{\sigma_N^2}\right)^{26} \xi^{-8}N,\;\;\text{with}\;\xi \in (0,(2\log (2N/\phi))^{-\frac{1}{2}})
\end{equation}
being an error tolerance parameter for some prespecified $\phi$.
Then, for a specific set of graphs {with minimum number of nodes $N_{\epsilon,\delta}$} that is introduced later in the section (cf. Ass.~\ref{A:sbm}), if we solve \eqref{eq:nonlinear_denoising} running gradient descent with a step size $\eta\!\leq\frac{1}{\sigma_1^2}$, the following result holds (see Appendix~\ref{proof_theorem_denoising_gcg}).

\begin{theorem}\label{theorem_denoising_gcg}
    Let $f_{\bbTheta}(\bbH)$ be the network defined in equation~\eqref{eq:2layer_gcg_simp}, and assume it is sufficiently wide, i.e., it satisfies condition \eqref{bound_on_F} for some error tolerance parameter $\xi$. 
    Let $\bbx_0$ be a $K$-bandlimited graph signal spanned by the eigenvectors $\bbV_K$, and let $\bbw_i$ and $\sigma_i$ be the $i$-th eigenvector and eigenvalue of $\sqJacob$. 
    Let $\bbn$ be the noise present in $\bbx$, set $\phi$ and $\epsilon$ to small positive numbers, and let the conditions from Ass.~\ref{A:sbm} hold.
    Then, for any $\epsilon$, $\delta$, there exists some $N_{\epsilon,\delta}$ such that if $N>N_{\epsilon,\delta}$,
    the error for each iteration $t$ of gradient descent with stepsize $\eta$ used to fit the architecture is bounded as
    \begin{align}\label{eq_bound_theorem_fitting_eigs_Jacobian_for_t}
        & \|\bbx_0-f_{\bbTheta_{(t)}}(\bbH)\|_2 \leq  \left((1-\eta\sigma_K^2)^t+\delta(1-\eta\sigma_N^2)^t\right)\|\bbx_0\|_2 \nonumber \\
        & +\xi\|\bbx\|_2 +\sqrt{\textstyle\sum_{i=1}^N((1-\eta\sigma_i^2)^t-1)^2(\bbw_i^\top\bbn)^2},
    \end{align}
    with probability at least $1-e^{-F^2}-\phi-\epsilon$.
\end{theorem}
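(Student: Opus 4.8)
The plan is to treat the wide two-layer generator in its simplified form \eqref{eq:2layer_gcg_simp} via a neural-tangent-kernel linearization, in the spirit of the denoising analysis of \cite{heckel2019denoising}, and then to translate the resulting spectral bound into graph-spectral language using the explicit structure of $\sqJacob$ in \eqref{eq:expected_jac}. Concretely, I would split the error $\|\bbx_0 - f_{\bbTheta_{(t)}}(\bbH)\|_2$ into three pieces matching the three summands on the right-hand side of \eqref{eq_bound_theorem_fitting_eigs_Jacobian_for_t}: a signal-fitting term, a linearization-mismatch term, and a noise-fitting term.

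First I would establish the linearization. Using the overparametrization $F \geq 2N$ together with the width bound \eqref{bound_on_F}, I would invoke a Jacobian-stability argument: with probability at least $1 - e^{-F^2}$ over the random initialization of $\bbTheta$ and $\bbZ$, the Jacobian $\ccalJ_{\bbTheta_{(t)}}(\bbH)$ from \eqref{eq:jacobian_gcg} stays within an operator-norm distance controlled by $\xi$ of its initial value along the \emph{entire} gradient-descent trajectory, and $\ccalJ_{\bbTheta}(\bbH)\ccalJ_{\bbTheta}^\top(\bbH)$ concentrates around the expected squared Jacobian $\sqJacob = \bbW\bbSigma\bbW^\top$. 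Since the antisymmetric choice of $\bbb$ makes $f_{\bbTheta_{(0)}}(\bbH)\approx\bb0$ at initialization, this reduces the nonlinear dynamics to the linear least-squares recursion $\bbr_{t+1} = (\bbI - \eta\,\ccalJ\ccalJ^\top)\bbr_t$ for the residual $\bbr_t = \bbx - f_{\bbTheta_{(t)}}(\bbH)$, up to an additive error of size $\xi\|\bbx\|_2$; the step-size restriction $\eta \leq 1/\sigma_1^2$ guarantees the contraction $\bbI - \eta\,\ccalJ\ccalJ^\top \succeq 0$.

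Next I would solve the linear dynamics in the eigenbasis $\{\bbw_i\}_{i=1}^N$ of $\sqJacob$. Writing $\bbx = \bbx_0 + \bbn$ and projecting onto each $\bbw_i$, the residual along $\bbw_i$ after $t$ steps is $(1-\eta\sigma_i^2)^t\,\bbw_i^\top\bbx_0 + (1-\eta\sigma_i^2)^t\,\bbw_i^\top\bbn$, so that regrouping $\bbx_0 - f_{\bbTheta_{(t)}}(\bbH)$ yields a noise contribution of exactly $\sqrt{\sum_{i=1}^N((1-\eta\sigma_i^2)^t - 1)^2(\bbw_i^\top\bbn)^2}$ and a signal contribution $\|\sum_i (1-\eta\sigma_i^2)^t (\bbw_i^\top\bbx_0)\bbw_i\|_2$. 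The crucial graph-dependent step is to bound this signal term using the alignment between the bandlimited subspace $\mathrm{span}(\bbV_K)$ from \eqref{eq:bl_signals} and the leading eigenvectors of $\sqJacob$. Here I would apply the intermediate Lemma~\ref{lemma_eigs_gcg}, which reads off the spectrum of $\sqJacob$ from \eqref{eq:expected_jac}: under Assumption~\ref{A:sbm}, for $N > N_{\epsilon,\delta}$ the top-$K$ eigenspace of $\sqJacob$ lies within $\delta$ of $\mathrm{span}(\bbV_K)$ with probability at least $1 - \epsilon$. Decomposing $\bbx_0$ into its component in $\mathrm{span}(\bbw_1,\dots,\bbw_K)$ (where $\sigma_i^2 \geq \sigma_K^2$, hence decay at rate $(1-\eta\sigma_K^2)^t$) and a $\delta$-fraction leaking into the slow directions (worst rate $(1-\eta\sigma_N^2)^t$) gives the bound $\big((1-\eta\sigma_K^2)^t + \delta(1-\eta\sigma_N^2)^t\big)\|\bbx_0\|_2$. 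Collecting the three terms and the three independent failure events — $e^{-F^2}$ from Jacobian concentration, $\phi$ from the sub-Gaussian tolerance defining $\xi$, and $\epsilon$ from the graph-spectrum concentration — produces \eqref{eq_bound_theorem_fitting_eigs_Jacobian_for_t} with probability at least $1 - e^{-F^2} - \phi - \epsilon$, as detailed in Appendix~\ref{proof_theorem_denoising_gcg}.

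The main obstacle I anticipate is the spectral-alignment step inside Lemma~\ref{lemma_eigs_gcg}. Because $\sqJacob$ in \eqref{eq:expected_jac} is a genuinely nonlinear function of $\bbH\bbH^\top$ — an entrywise $\arccos$ of the normalized autocorrelation followed by a Hadamard product with $\bbH\bbH^\top$ — its eigenvectors are \emph{not} exactly those of $\bbH$, and hence not exactly the graph eigenvectors $\bbV$. Showing that the leading eigenspace nonetheless concentrates around $\mathrm{span}(\bbV_K)$ is precisely what forces the random-graph Assumption~\ref{A:sbm} and the threshold $N_{\epsilon,\delta}$: one needs the entrywise nonlinearity to act nearly isotropically so that, asymptotically, $\sqJacob$ inherits the eigenstructure of a polynomial in $\bbS$, which requires a delicate concentration argument for the random adjacency matrix. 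A secondary difficulty is making the linearization uniform over the whole trajectory rather than only at initialization, which is exactly what the strong, condition-number-dependent width bound \eqref{bound_on_F} is engineered to supply.
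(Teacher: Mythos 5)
Your proposal is correct and follows essentially the same route as the paper: the same three-term decomposition of the error, the same use of Lemma~\ref{lemma_eigs_gcg} to control the misalignment between $\bbV_K$ and $\bbW_K$ (splitting $\bbx_0$ into a component fitted at rate $(1-\eta\sigma_K^2)^t$ and a $\delta$-sized remainder decaying at rate $(1-\eta\sigma_N^2)^t$), and the same accounting of the three failure events. The only difference is presentational: the paper imports the linearization, the spectral dynamics, and the resulting three-term bound wholesale from \cite[Th.~3]{heckel2019denoising}, devoting its appendix solely to the additional step of bounding $\|\bbXi\bbx_0\|_2$ when $\bbx_0$ lies in $\mathrm{span}(\bbV_K)$ rather than $\mathrm{span}(\bbW_K)$, whereas you propose to re-derive that NTK machinery from scratch.
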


As explained next, the fitting (denoising) bound provided by the theorem first decreases and then increases with the number of iterations $t$. To be more precise, let us analyze separately each of the three terms in the right hand side of \eqref{eq_bound_theorem_fitting_eigs_Jacobian_for_t}. 
The first term captures the part of the signal $\bbx_0$ that is fitted after $t$ iterations while accounting for the misalignment of the eigenvectors $\bbV_K$ and $\bbW_K$.
This term decreases with $t$ and, since $\delta$ can be made arbitrary small (cf. Lemma~\ref{lemma_eigs_gcg}), vanishes for moderately low values of $t$.
The second term is an error term that is negligible if the network is sufficiently wide.
{Therefore,  $\xi$ can be chosen to be sufficiently small by designing the architecture according to the condition in~\eqref{bound_on_F}.}
Finally, the third term, which depends on the noise present in each of the spectral components of the squared Jacobian $(\bbw_i^\top\bbn)^2$, grows with $t$. More specifically, if the $\sigma_i$ associated with a spectral component is very small, the term $(1-\eta\sigma_i^2)$ is close to $1$ and, hence, the noise power in the $i$-th frequency will be small. Only when $t$ grows very large the coefficient $(1-\eta\sigma_i^2)^t$ vanishes and the $i$-th frequency component of the noise is fitted. As a result, if the filter $\bbH$ is designed such that eigenvalues of the squared Jacobian satisfy that $\sigma_K \gg \sigma_{K+1}$, then there will be a range of moderate-to-high values of $t$ for which: i) the first term is zero and ii) only the $K$ strongest components of the noise have been fitted, so that the third term can be approximated as $\sqrt{\sum_{i=1}^K (\bbw_i^\top\bbn)^2 }$. Clearly, as $t$ grows larger, the coefficient $((1-\eta\sigma_i^2)^t-1)$ will also be close to one for $i>K$, meaning that additional components of the noise will be fitted as well, deteriorating the performance of the denoising architecture. This implies that if the optimization algorithm is stopped before $t$ grows too large, the original signal is fitted along with the noise that aligns with the signal, but not the noise present in other components.

In other words, \cref{theorem_denoising_gcg} not only characterizes the performance of the two-layer GNN, but also illustrates that, if early stopping is adopted, our overparametrized architecture is able to effectively denoise the bandlimited graph signal.
This result is related to the error bound for denoising images presented in~\cite{heckel2019denoising}, where $\bbx_0$ is assumed to lie in the span of $\bbW_K$.
However, when dealing with graphs, it is unclear which signals would satisfy this requirement.
Motivated by this, we assume that $\bbx_0$ is a bandlimited signal (i.e., lies in the span of $\bbV_K$), which is a natural condition employed in many applications.

As a consequence, a critical step to attain \cref{theorem_denoising_gcg} is to relate the eigenvectors of $\sqJacob$ with those of the adjacency matrix $\bbA$, denoted as $\bbV$.
To achieve this, we assume that $\bbA$ is random and provide high-probability bounds between the leading eigenvectors of $\bbA$ and $\sqJacob$.
More specifically, consider a graph $\ccalG$ drawn from a SBM~\cite{newman2018networks} with $K$ communities.
Also, denote by $\ccalM(\ccalbA)$ the SBM with expected adjacency matrix ${\ccalbA}=\mathbb{E}[\bbA]$, and by $\beta_{min}$ the minimum expected degree $\beta_{min}:=\mathrm{min}_i[\ccalbA\textbf{1}]_i$.
Given some $\rho > 0$, we define as $\ccalM_N(\beta_{min},\rho)$ the class of SBMs $\ccalM(\ccalbA)$ with $N$ nodes for which 
$\beta_{min}=\omega (\mathrm{ln}(N/\rho))$, where $\omega(\cdot)$ denotes the (conventional) asymptotic dominance.
Then, the condition of $\ccalG$ being drawn from this SBM whose expected minimum degree increases with $N$ is formally expressed in the following assumption.

\begin{assumption}\label{A:sbm}
    The model $\ccalM(\ccalbA)$ from which $\bbA$ is drawn satisfies $\ccalM(\ccalbA)\in\ccalM_N(\beta_{min},\rho)$, with $\beta_{min} = \omega (\mathrm{ln}(N/\rho))$.
\end{assumption}
We note that, as discussed in~\cite{schaub2020blind}, the minimal degree condition considered in Ass.~1 ensures that nodes belonging to the same community also belong to the same connected component with high probability, which is required to relate $\bbA$ and $\ccalbA$.
Under these conditions, the following result holds.
\begin{lemma}\label{lemma_eigs_gcg}
    Let the matrix $\sqJacob$ be defined as in \eqref{eq:expected_jac}, set $\epsilon$ and $\delta$ to small positive numbers, and denote by $\bbV_K$ and $\bbW_K$ the $K$ leading eigenvectors in the respective eigendecompositions of $\bbA$ and $\sqJacob$. Under Ass.~\ref{A:sbm}, there exists an orthonormal matrix $\bbQ$ and an integer $N_{\epsilon,\delta}$ such that, for $N > N_{\epsilon,\delta}$, the bound
$$\| \bbV_K - \bbW_K \bbQ \|_{\text{F}} \leq \delta,$$
holds with probability at least $1-\epsilon$.
\end{lemma}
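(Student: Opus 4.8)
The plan is to prove the eigenspace alignment through a \emph{population bridge}: compare both random matrices to their deterministic (expected) counterparts and exploit the block structure of the SBM. Write $\ccalbA = \mathbb{E}[\bbA] = \bbM\bbB\bbM^\top$, where $\bbM \in \{0,1\}^{N \times K}$ is the community-membership matrix and $\bbB \in \reals^{K \times K}$ the block-connectivity matrix, and let $\barsqJacob$ denote the population squared Jacobian obtained by replacing $\bbA$ (hence $\bbH$) in \eqref{eq:expected_jac} by $\ccalbA$ (hence by $\bar{\bbH}:=\sum_{r=0}^{R-1} h_r \ccalbA^r$). The argument has three stages: (i) show that the top-$K$ eigenspaces of $\ccalbA$ and $\barsqJacob$ coincide exactly; (ii) transfer each population eigenspace to its random counterpart via concentration and a Davis--Kahan bound; and (iii) combine the three subspace comparisons, folding all intervening orthogonal rotations into a single $\bbQ$.

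For stage (i), the key observation is that every matrix in the population version of \eqref{eq:expected_jac} is \emph{blockwise constant}, i.e. constant on the index blocks defined by community membership. Indeed $\ccalbA = \bbM\bbB\bbM^\top$ is blockwise constant, and since $\bbM^\top\bbM$ is diagonal, each power $\ccalbA^r = \bbM(\bbB\bbM^\top\bbM)^{r-1}\bbB\,\bbM^\top$ is again of this form, so $\bar{\bbH}$ is blockwise constant. Consequently $\bar{\bbH}\bar{\bbH}^\top$ is blockwise constant, the normalization $\bbC=\diag([\|\bar{\bbh}_1\|_2,\dots,\|\bar{\bbh}_N\|_2])$ is constant within each community, and the entrywise $\arccos(\bbC^{-1}\bar{\bbH}^2\bbC^{-1})$ together with the Hadamard product preserve this structure. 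Hence $\barsqJacob$ is blockwise constant, so its range lies in the $K$-dimensional indicator subspace $\ccalU:=\mathrm{range}(\bbM)$, which is exactly the range of $\ccalbA$. Since $\ccalbA$ has rank $K$ and $\barsqJacob$ has rank at most $K$ with $K$ nonzero leading eigenvalues on $\ccalU$, the top-$K$ eigenspaces of both equal $\ccalU$ and therefore coincide.

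For stage (ii), Assumption~\ref{A:sbm} provides exactly the regime in which the SBM concentrates: the minimum expected degree growing like $\omega(\ln(N/\rho))$ guarantees (cf.~\cite{schaub2020blind}) that $\|\bbA-\ccalbA\|$ is small relative to the spectral gap of $\ccalbA$ with probability at least $1-\epsilon/2$ once $N$ exceeds some $N_1$. A first application of the Davis--Kahan $\sin\Theta$ theorem then places $\bbV_K$ within Frobenius distance $\delta/3$ of an orthonormal eigenbasis of $\ccalbA$ (up to rotation). Next I would propagate this concentration to the squared Jacobian: controlling $\|\bbA\|$ and $\|\ccalbA\|$ bounds $\|\bbH-\bar{\bbH}\|$ through the finite sum of powers, and pushing this through the locally Lipschitz entrywise and Hadamard operations in \eqref{eq:expected_jac} yields $\|\sqJacob-\barsqJacob\|$ small with high probability (say at least $1-\epsilon/2$ for $N>N_2$). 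Using the gap of $\barsqJacob$ (rank $K$ on $\ccalU$, so $\sigma_{K+1}(\barsqJacob)=0$ while $\sigma_K(\barsqJacob)>0$), a second Davis--Kahan bound places $\bbW_K$ within $\delta/3$ of an eigenbasis of $\barsqJacob$. Combining the two transfers with the exact population coincidence from stage (i) and a triangle inequality in the subspace metric produces a single orthonormal $\bbQ$ with $\|\bbV_K-\bbW_K\bbQ\|_{\mathrm{F}}\le\delta$; a union bound over the two events gives probability at least $1-\epsilon$ for $N>N_{\epsilon,\delta}:=\max\{N_1,N_2\}$.

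The main obstacle I anticipate is the second concentration step: transferring the spectral control of $\bbA-\ccalbA$ through the map defining $\sqJacob$, which is neither unitarily invariant nor globally Lipschitz. The entrywise $\arccos$ has unbounded derivative at $\pm1$, and the correlation matrix $\bbC^{-1}\bbH^2\bbC^{-1}$ has unit diagonal, so a naive Lipschitz estimate fails precisely on the sensitive entries; the argument must be localized, exploiting that the Hadamard factor $\bbH\bbH^\top$ and the bounded off-diagonal correlations tame these terms and that $\bbC^{-1}$ is well conditioned under Assumption~\ref{A:sbm}. A secondary delicate point is verifying that the population gap $\sigma_K(\barsqJacob)$ is of larger order than $\|\sqJacob-\barsqJacob\|$ (which is what the designed separation $\sigma_K\gg\sigma_{K+1}$ of the random $\sqJacob$ reflects via Weyl's inequality), so that the second Davis--Kahan bound indeed delivers the claimed $\delta$.
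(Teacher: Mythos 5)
Your proposal follows essentially the same route as the paper's proof: a triangle inequality through the population quantities, the blockwise-constant SBM structure to show the population eigenspaces of $\ccalbA$ and $\barsqJacob$ coincide exactly, and two concentration-plus-Davis--Kahan steps to transfer each population eigenspace to its sample counterpart. The one obstacle you flag -- the non-Lipschitz $\arccos$ near $\pm 1$ -- is dispatched in the paper simply by uniform continuity of $\arccos$ on $[-1,1]$ (only vanishing of the difference is needed, not a rate), so no localized argument is required.
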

%
The proof is provided in Appendix~\ref{proof_lemma_eigs_gcg}, and it leverages Ass.~1 to relate the eigenvectors $\bbV_K$ and $\bbW_K$ based on the eigenvectors of the expected values of $\bbA$ and $\sqJacob$.

For a given $K$, Lemma~\ref{lemma_eigs_gcg} bounds the difference between the subspaces spanned by the $K$ leading eigenvectors of $\bbA$ and $\sqJacob$ when graphs are big enough, a result that is key in obtaining Theorem~\ref{theorem_denoising_gcg}. 
Moreover, the lemma shows that if the lower bound $N_{\epsilon,\delta}$ increases, then the error encoded $\delta$ becomes arbitrary small.
Also note that, if a larger value of $K$ is considered, then the minimum required  graph size $N_{\epsilon,\delta}$ will also be larger.
An inspection of~\eqref{eq:expected_jac} reveals that the result in Lemma~\ref{lemma_eigs_gcg} is not entirely unexpected.
Indeed, since $\bbH$ is a polynomial in $\bbA$, so is $\bbH^2$.
This implies that $\bbV$ are also the eigenvectors of $\bbH^2$, and because $\bbH^2$ appears twice on the right hand side of~\eqref{eq:expected_jac}, a relationship between the eigenvectors of $\sqJacob$ and $\bbV$ can be anticipated.
However, the presence of the Hadamard product and the (non Lipschitz continuous) nonlinearity $\arccos$ renders the exact analysis of the eigenvectors a challenging task.
Consequently, we resorted to a stochastic framework in deriving Lemma~\ref{lemma_eigs_gcg}.

\begin{figure*}[!t]
	\centering
	\begin{subfigure}{0.45\textwidth}
		\centering
		    \includegraphics[width=1\textwidth]{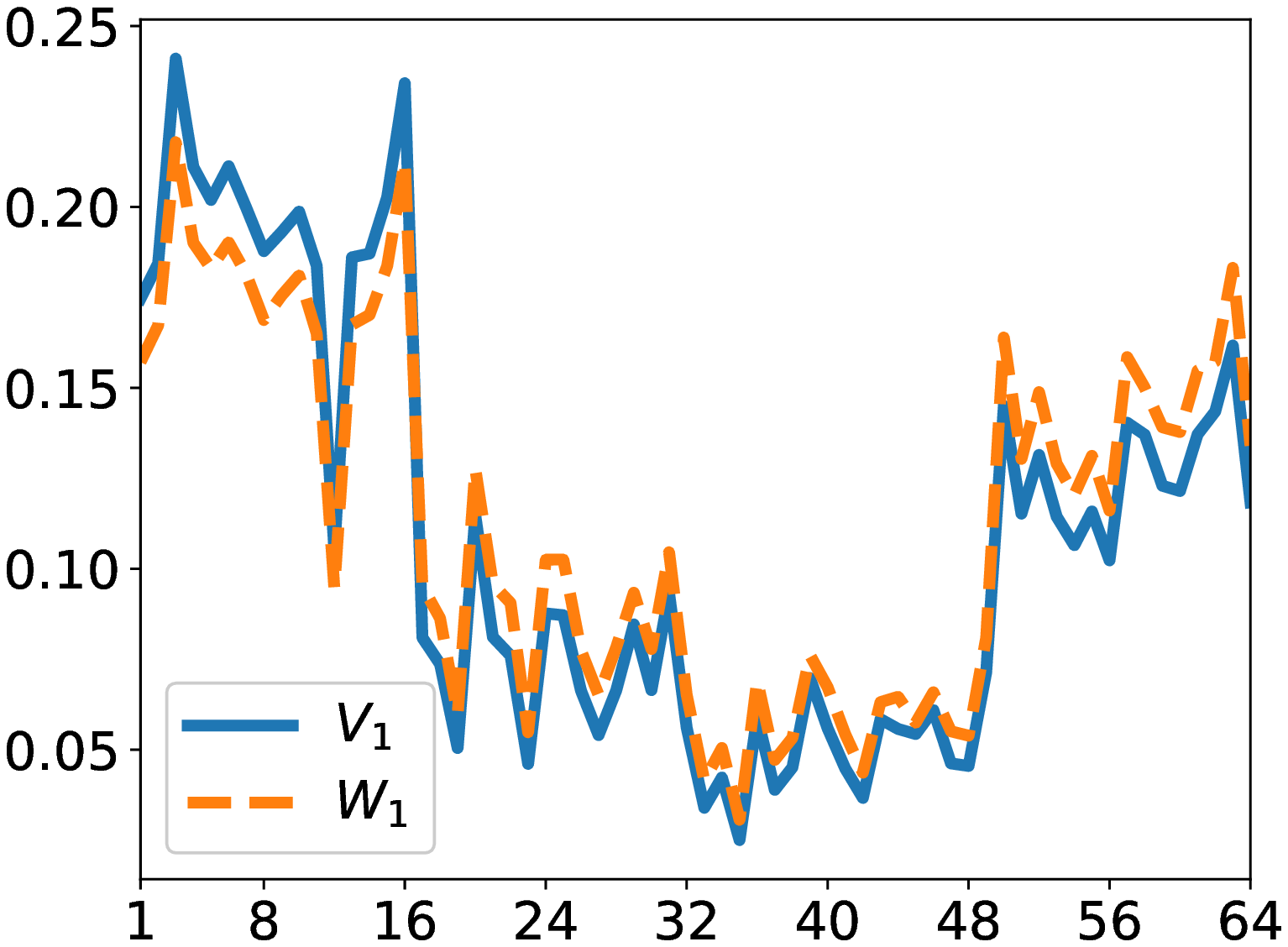}
	\end{subfigure}
	\begin{subfigure}{0.45\textwidth}
		\centering
		    \includegraphics[width=1\textwidth]{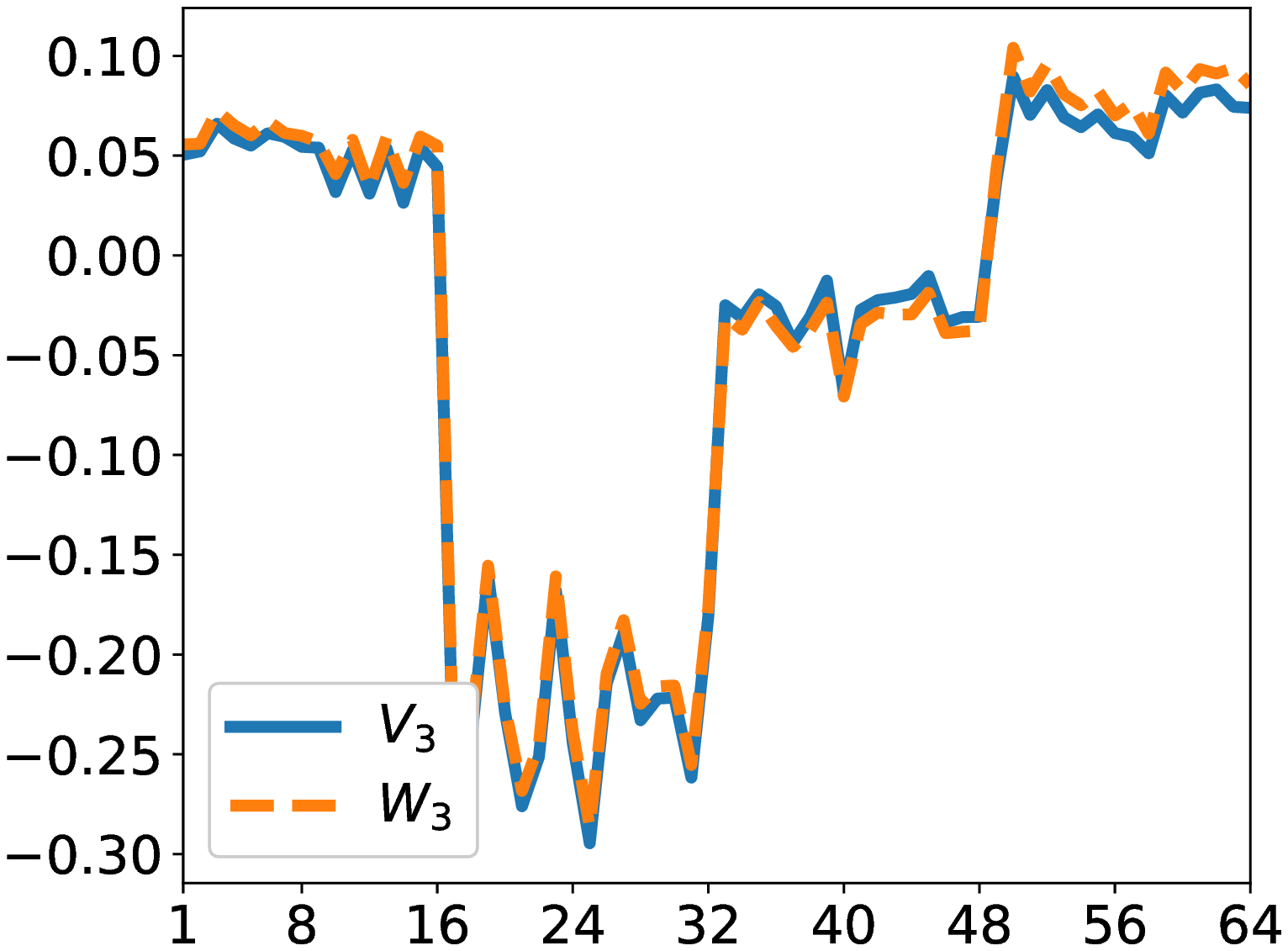}
	\end{subfigure}
	\begin{subfigure}{0.45\textwidth}
		\centering
		    \includegraphics[width=1\textwidth]{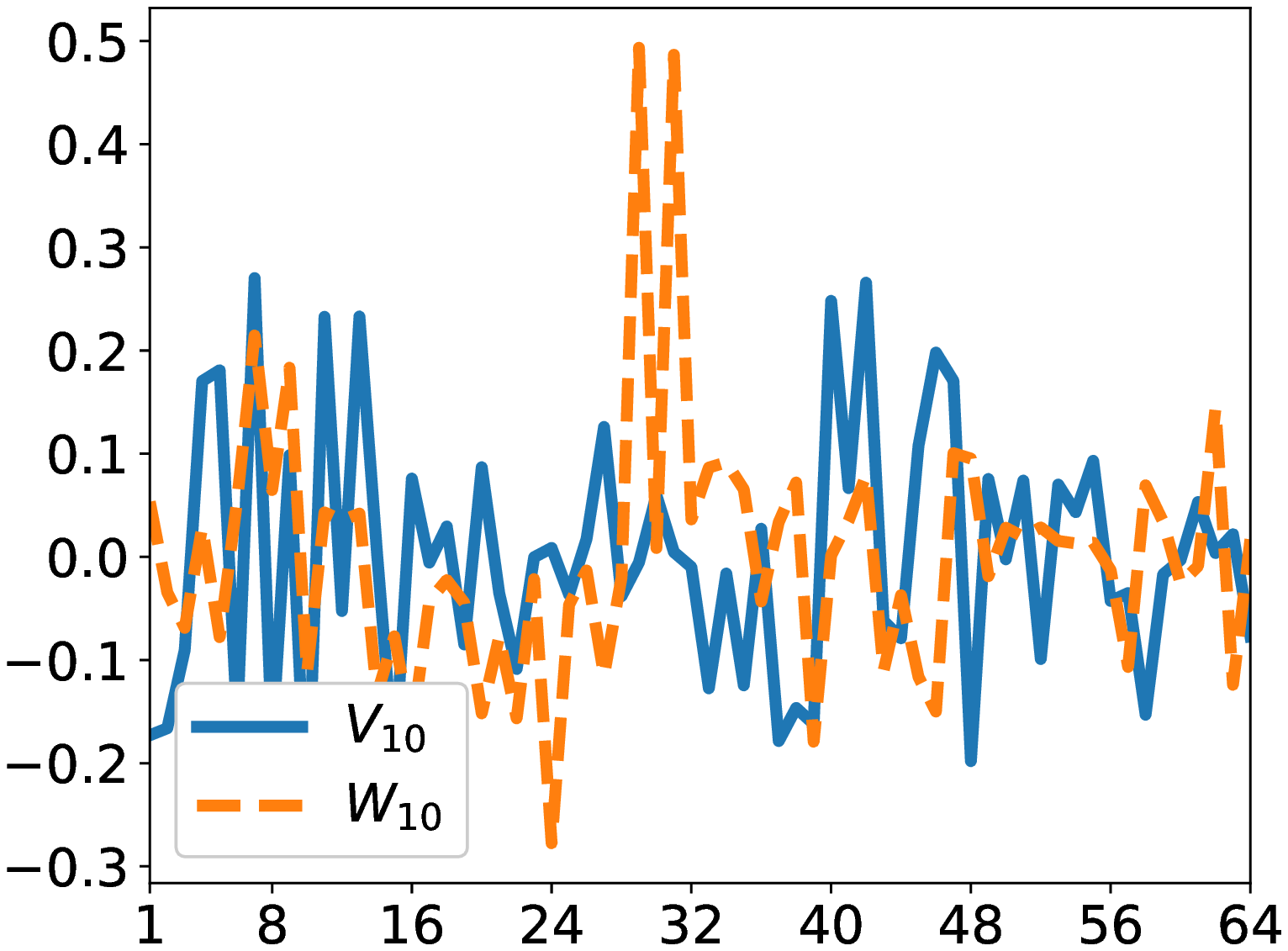}
	\end{subfigure}
	\begin{subfigure}{0.45\textwidth}
		\centering
		    \includegraphics[width=1\textwidth]{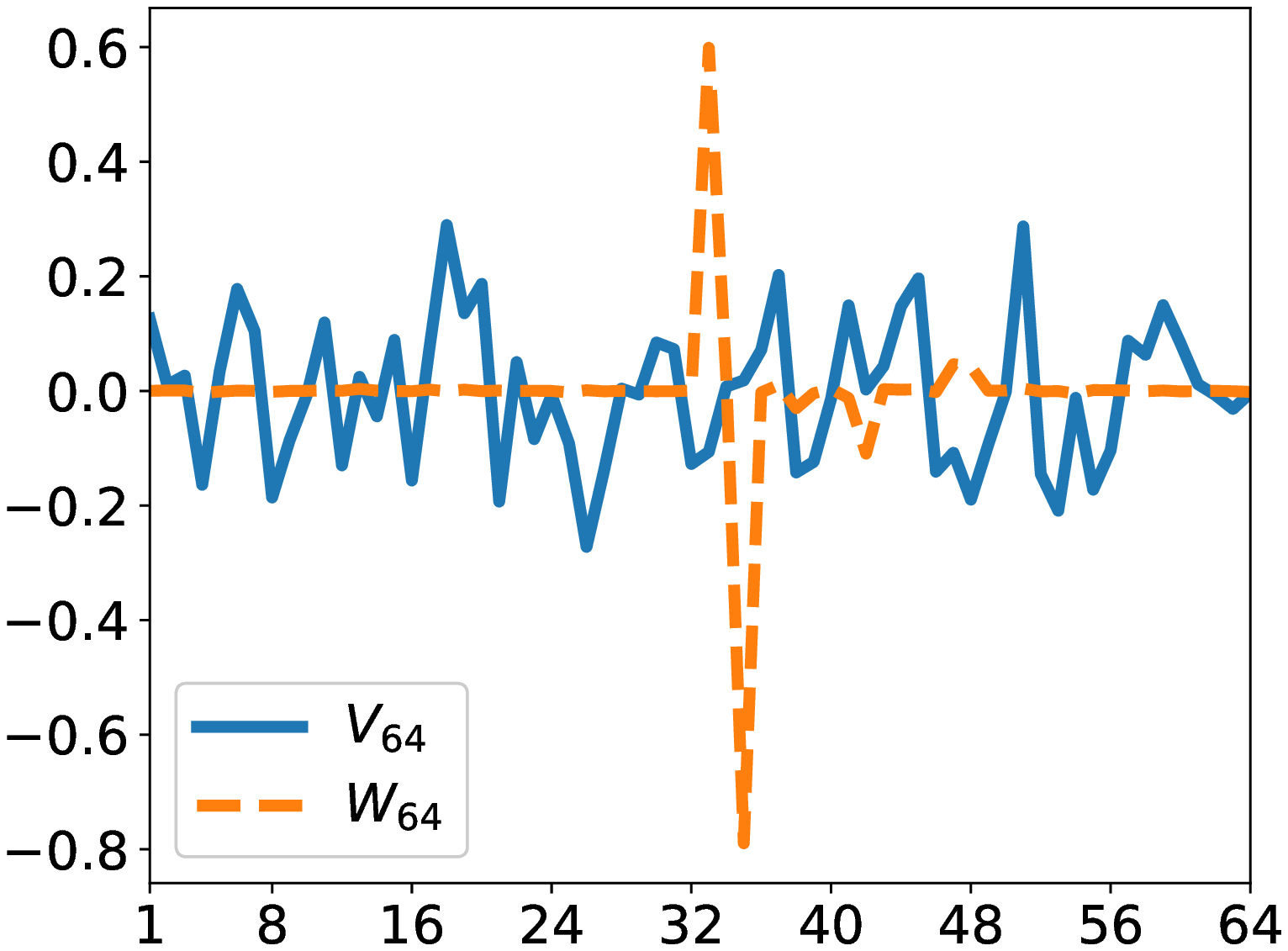}
	\end{subfigure}
	\caption{Comparison between the eigenvectors of the matrices $\bbA$ and $\sqJacob$ for an SBM graph with $N=64$ nodes and $K=4$ communities, and for a GCG of $L=5$ layers. From left to right, the figures represent the first, third, tenth, and last eigenvectors.} \label{fig:generalizing_deep}
\end{figure*}

\begin{figure}
    \centering
    \includegraphics[width=0.45\textwidth]{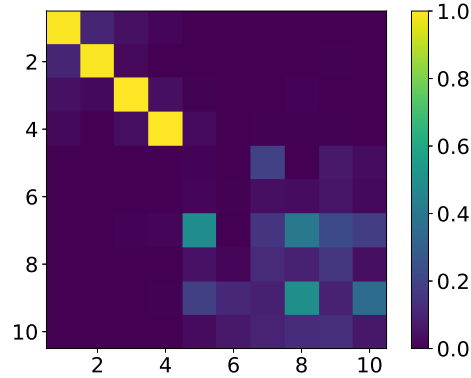}
    \caption{Heatmap representation of the matrix product $\bbV_K^\top\bbW_K$. The low values of the off-diagonal entries illustrate the orthogonality between both sets of eigenvectors. These eigenvectors are the same as those depicted in \cref{fig:generalizing_deep}.}
    \label{fig:orthogonality_WV}
\end{figure}

\begin{figure*}[!t]
	\centering
	\begin{subfigure}{0.32\textwidth}
		\centering
		    \includegraphics[width=1.04\textwidth]{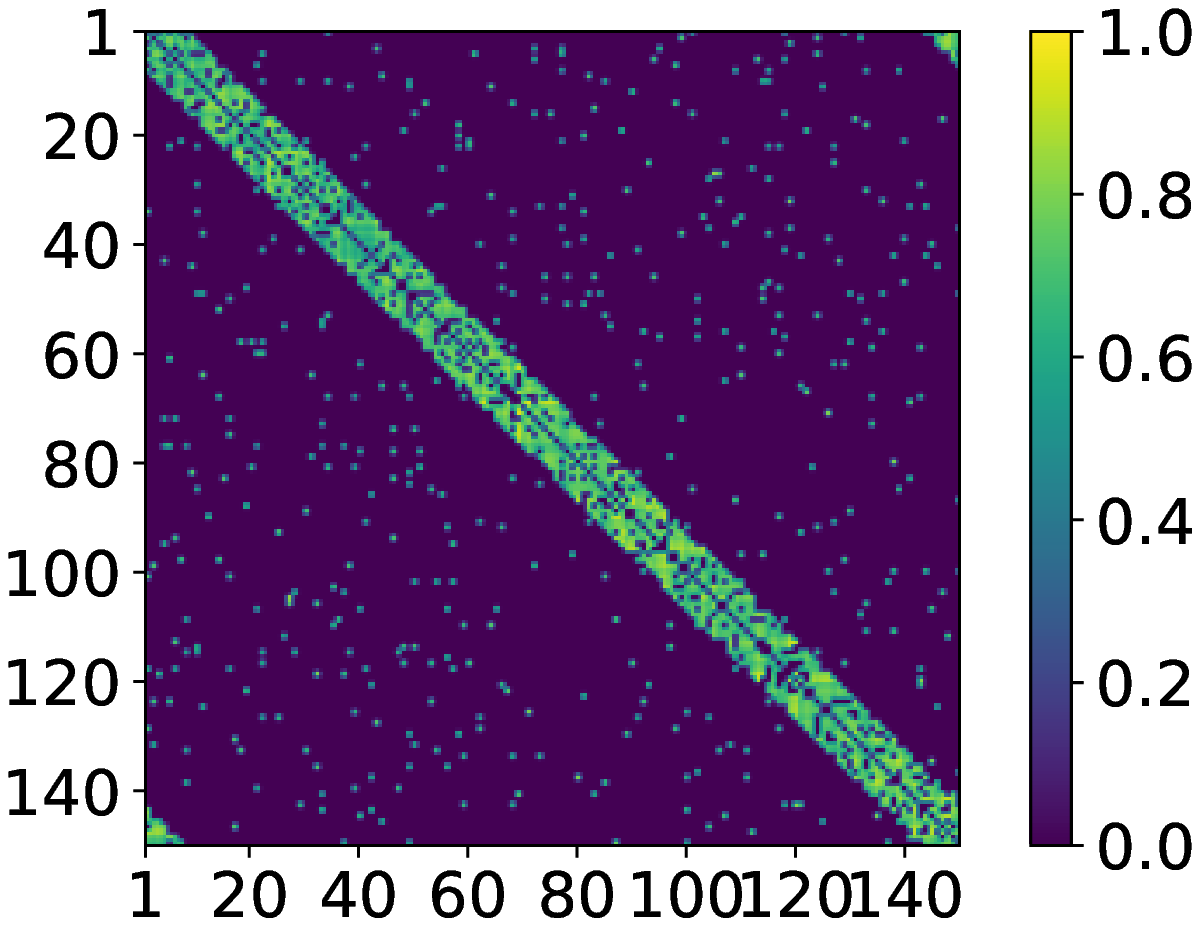}
	\end{subfigure}
	\begin{subfigure}{0.32\textwidth}
		\centering
		    \includegraphics[width=1.04\textwidth]{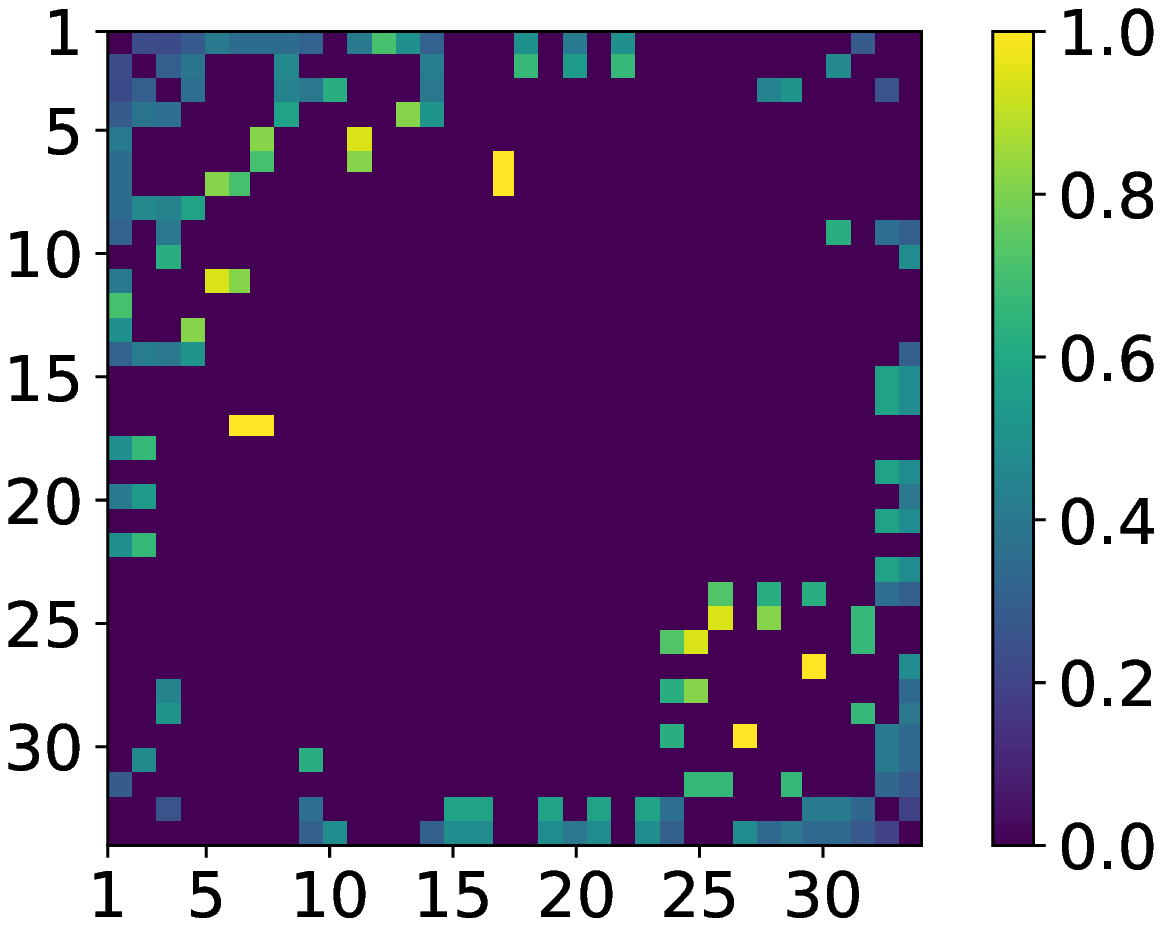}
	\end{subfigure}
	\begin{subfigure}{0.32\textwidth}
		\centering
		    \includegraphics[width=1.04\textwidth]{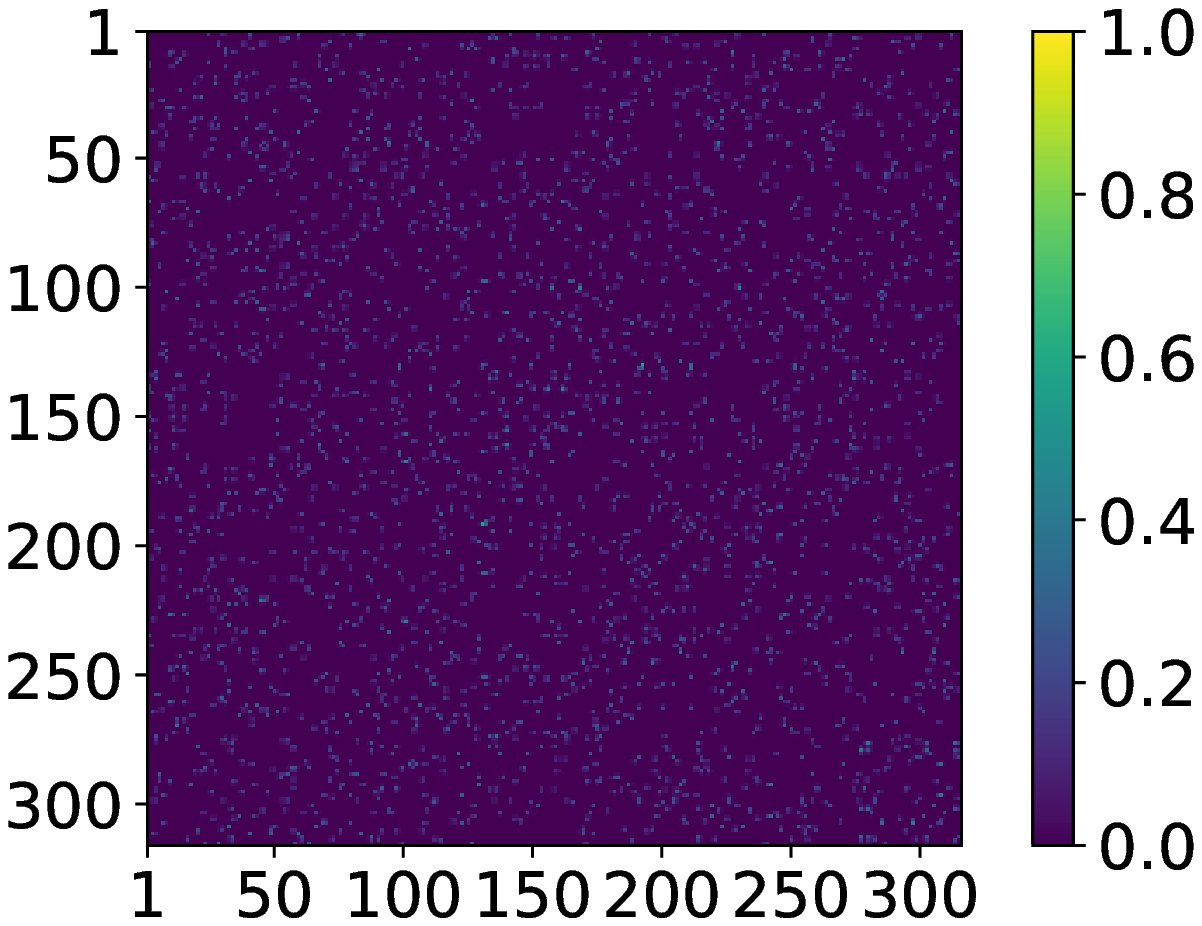}
	\end{subfigure}
	
	\begin{subfigure}{0.32\textwidth}
		\centering
		    \includegraphics[width=1.04\textwidth]{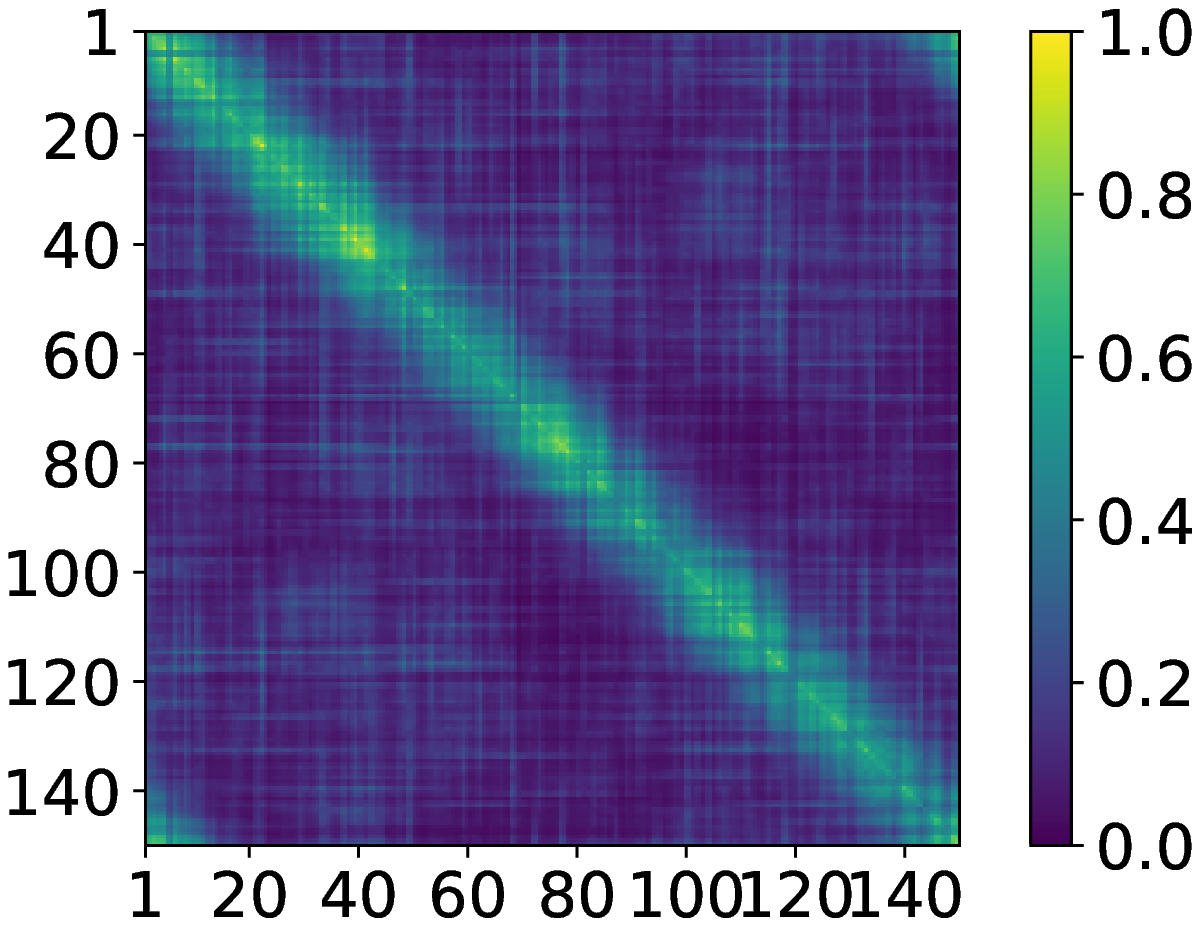}
	\end{subfigure}
	\begin{subfigure}{0.32\textwidth}
		\centering
		    \includegraphics[width=1.04\textwidth]{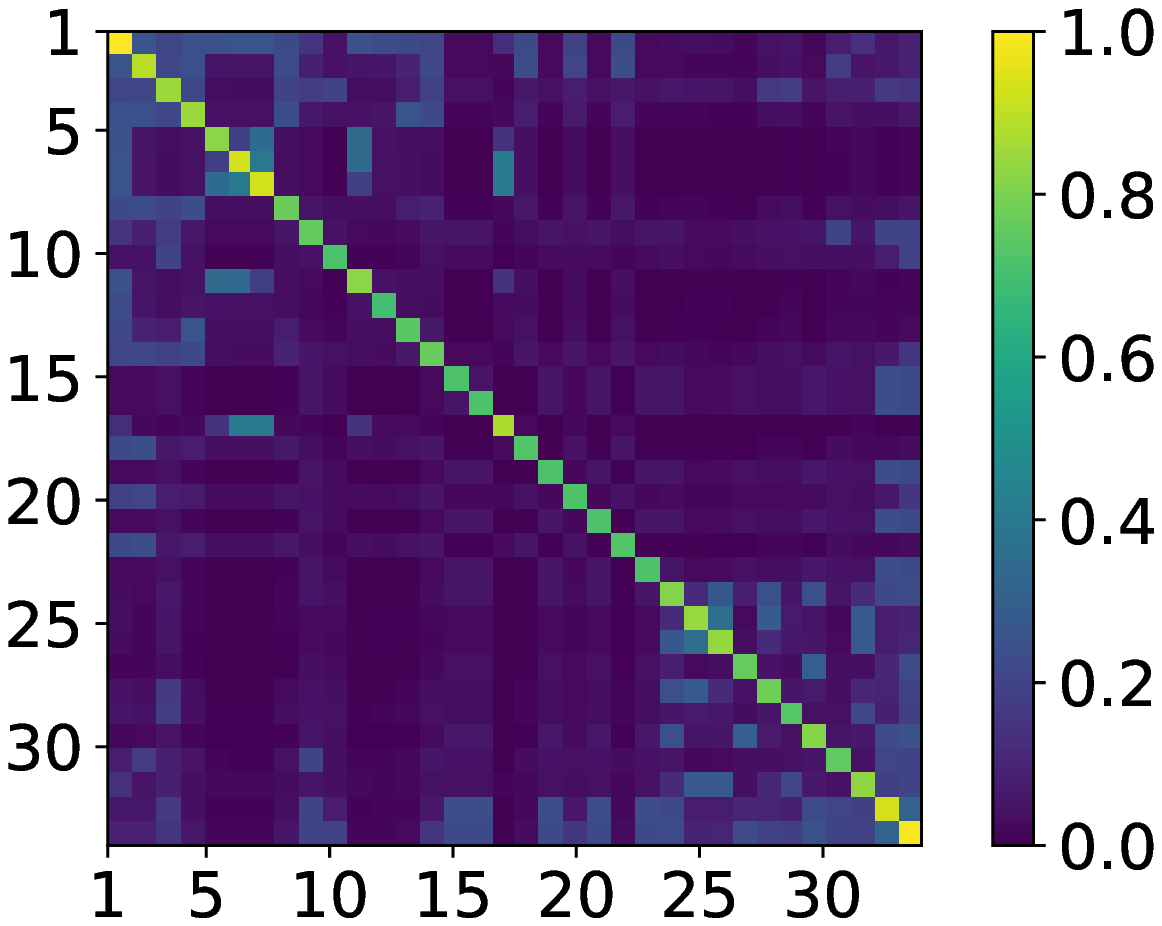}
	\end{subfigure}
	\begin{subfigure}{0.32\textwidth}
		\centering
		    \includegraphics[width=1.04\textwidth]{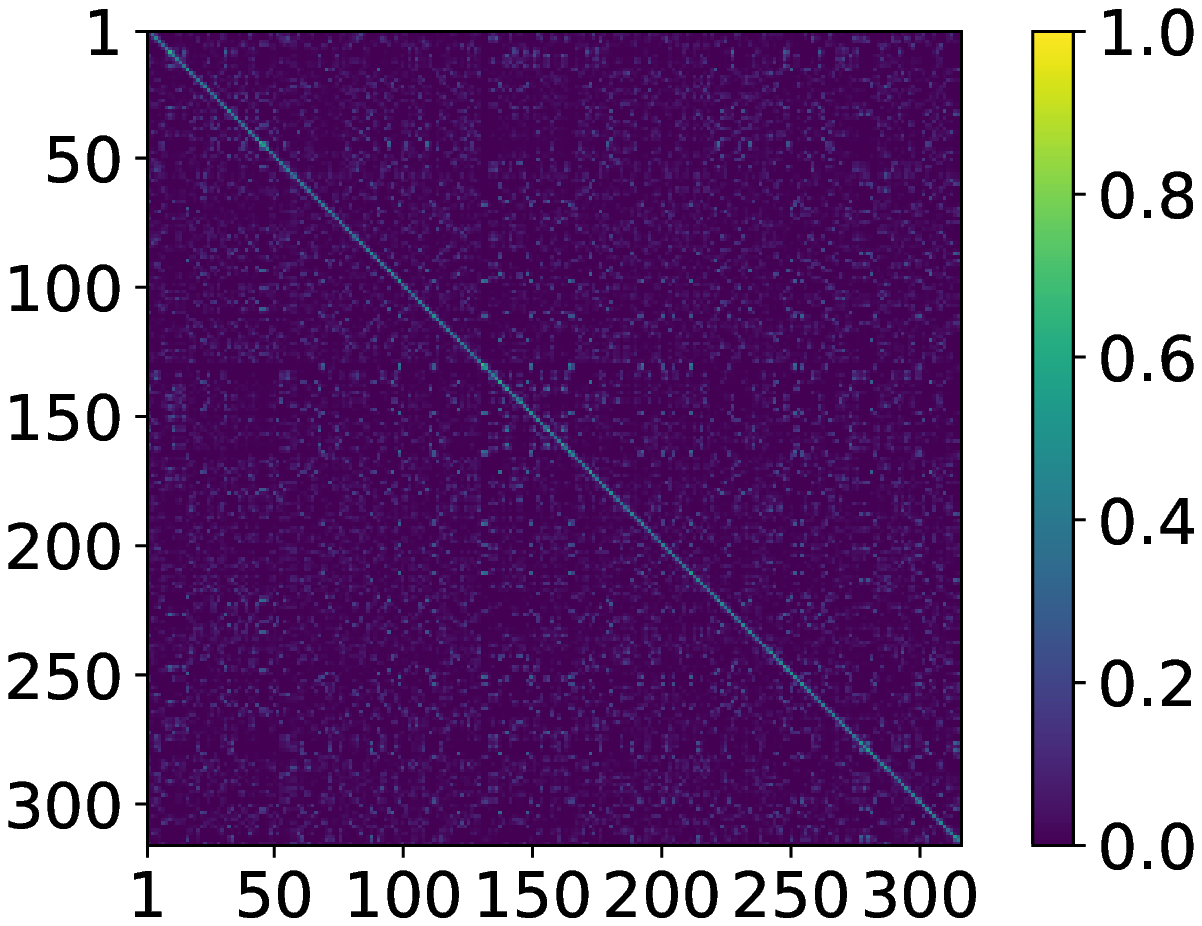}
	\end{subfigure}
	
	\begin{subfigure}{0.32\textwidth}
		\centering
		    \includegraphics[width=1.04\textwidth]{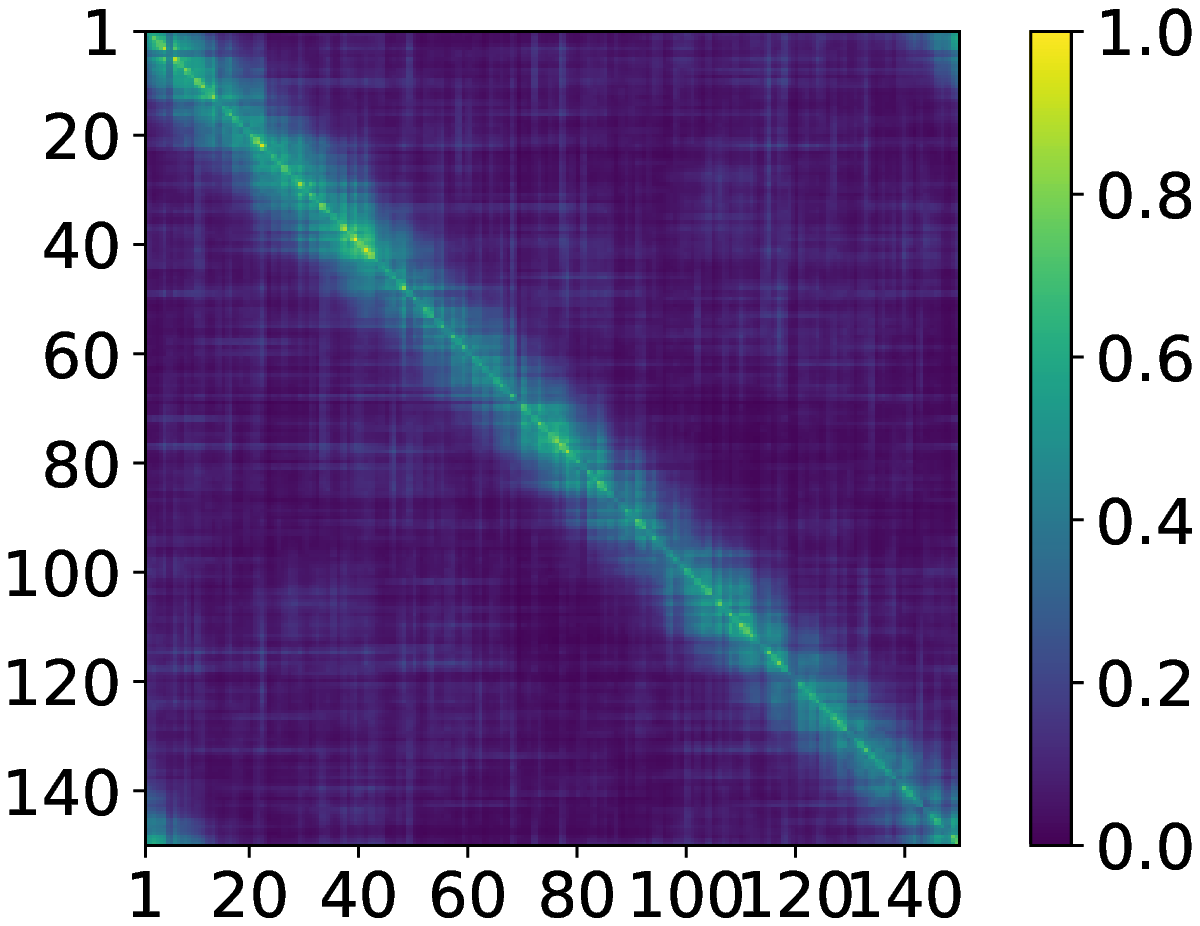}
	\end{subfigure}
	\begin{subfigure}{0.32\textwidth}
		\centering
		    \includegraphics[width=1.04\textwidth]{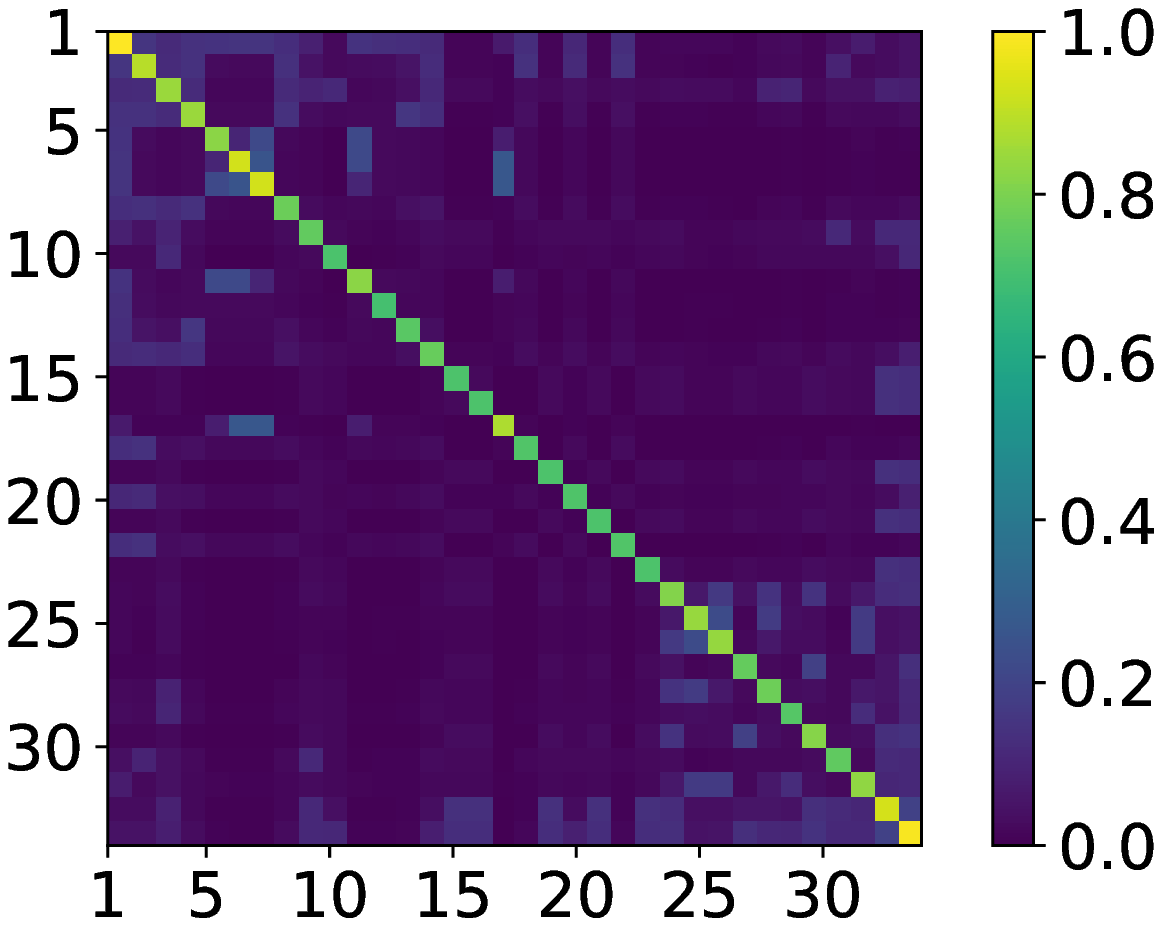}
	\end{subfigure}
	\begin{subfigure}{0.32\textwidth}
		\centering
		    \includegraphics[width=1.04\textwidth]{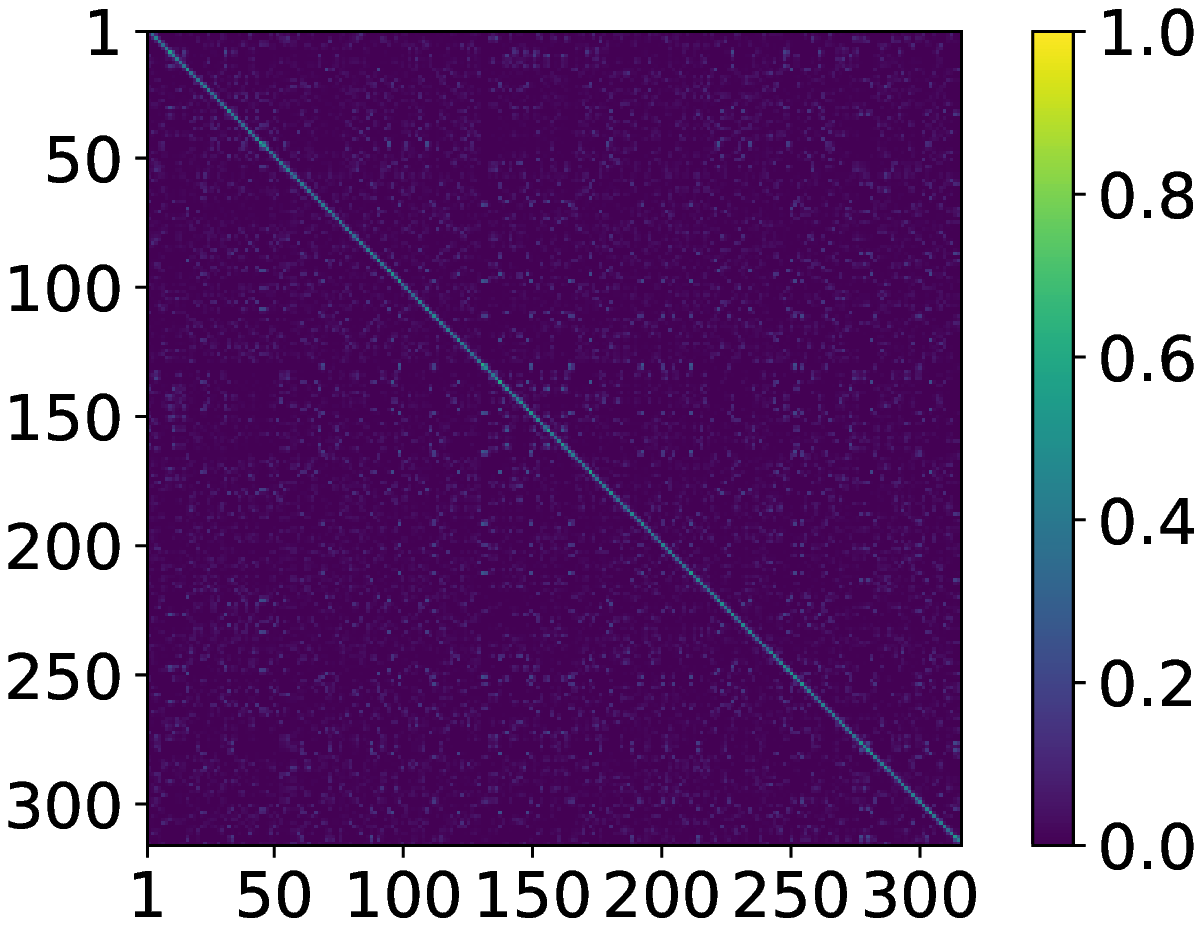}
	\end{subfigure}
	
	\begin{subfigure}{0.32\textwidth}
		\centering
		    \includegraphics[width=1.04\textwidth]{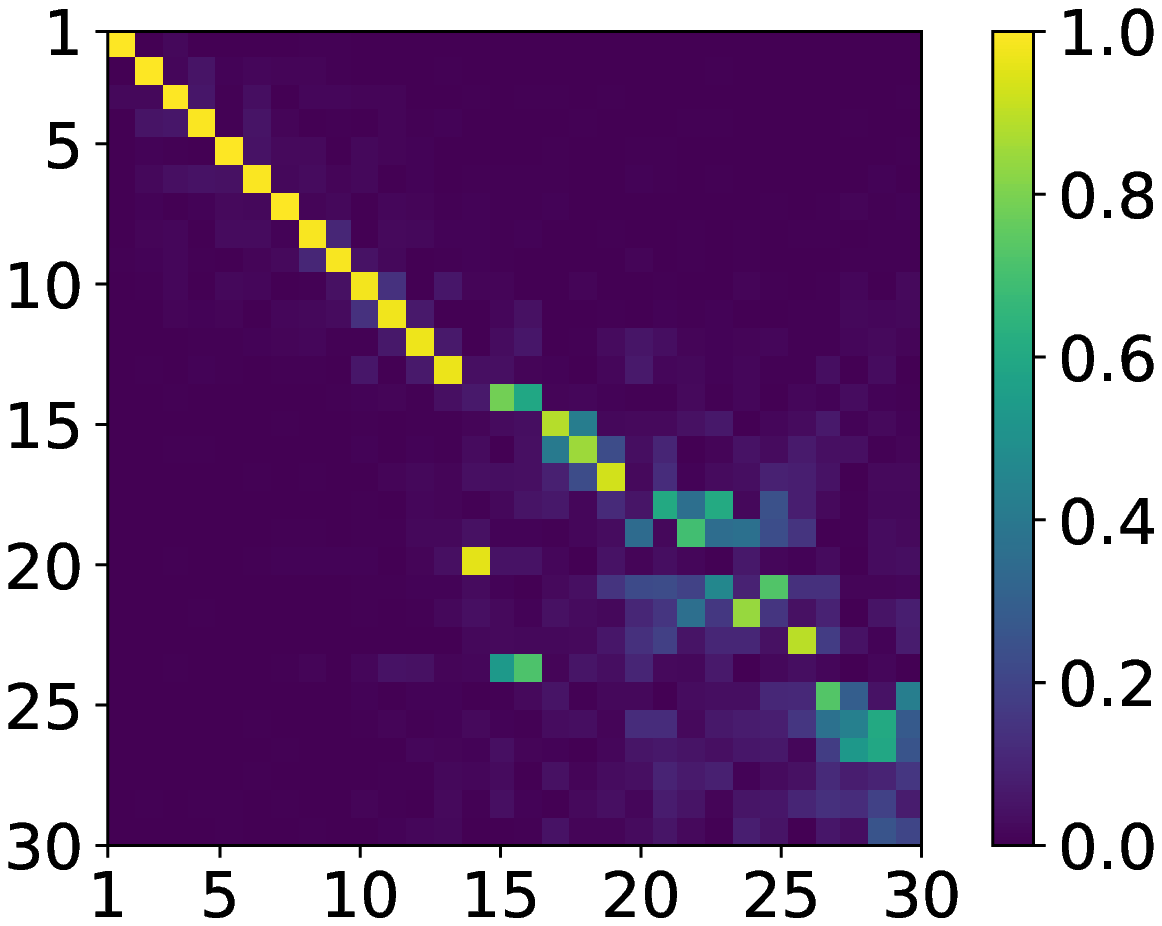}
	\end{subfigure}
	\begin{subfigure}{0.32\textwidth}
		\centering
		    \includegraphics[width=1.04\textwidth]{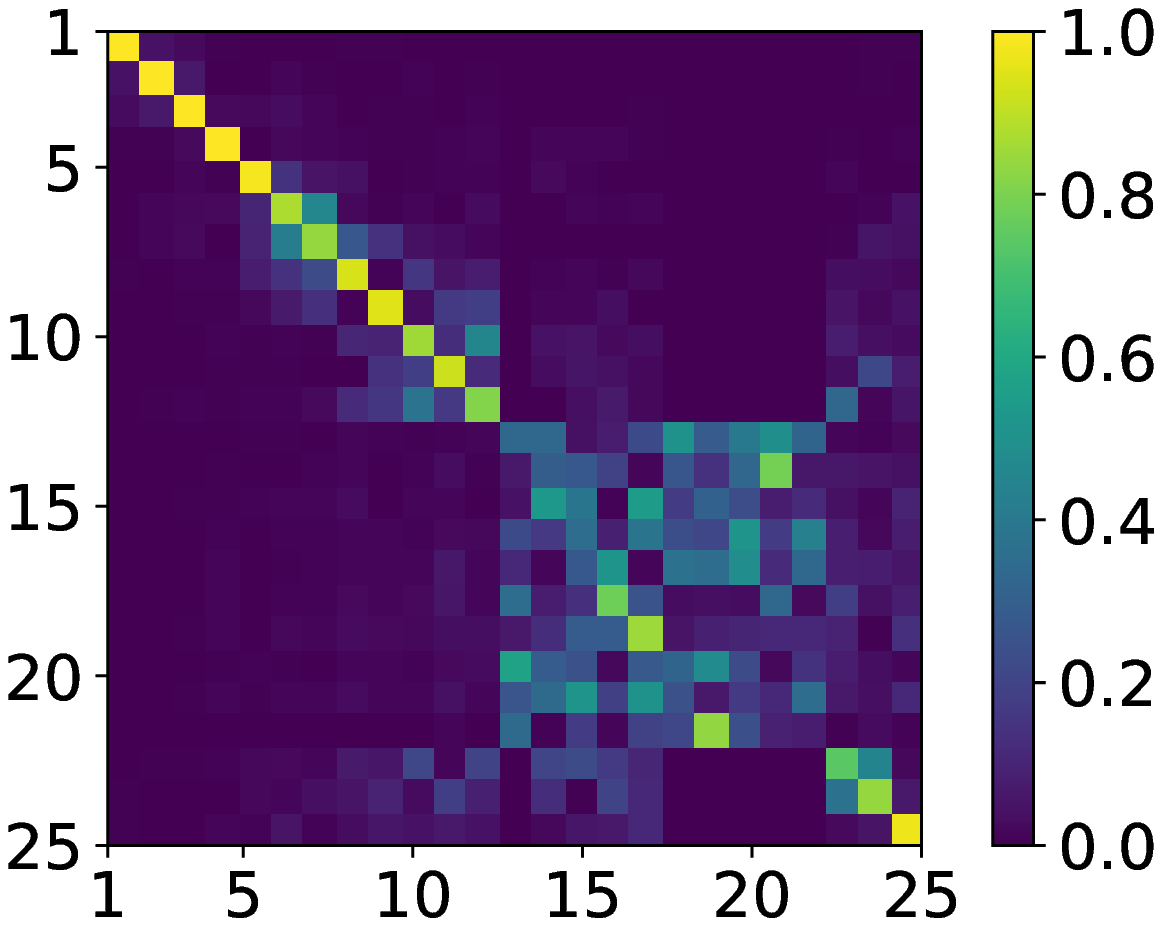}
	\end{subfigure}
	\begin{subfigure}{0.32\textwidth}
		\centering
		    \includegraphics[width=1.04\textwidth]{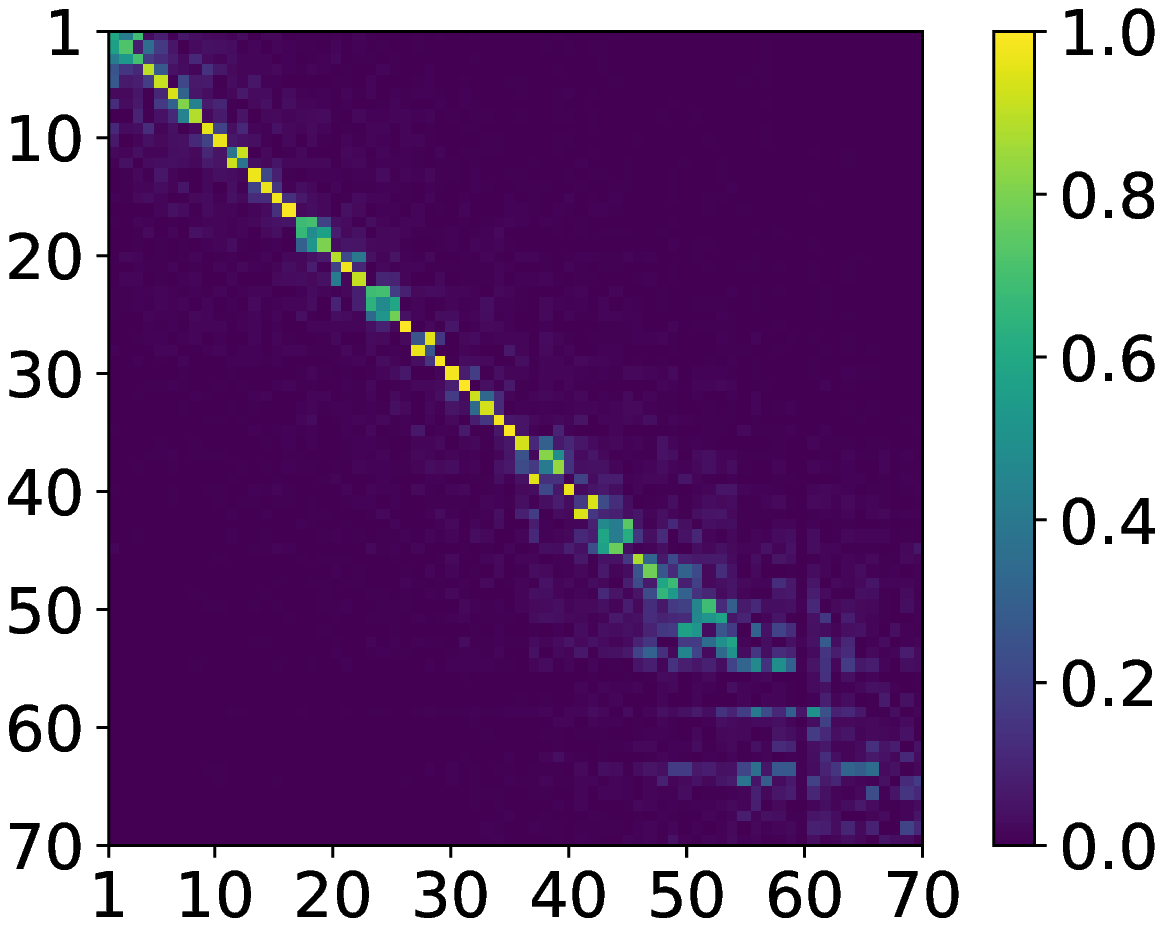}
	\end{subfigure}
	\caption{Illustrating the matrices $\bbA$, $\bbH^2$, $\sqJacob$, and $\bbV_K^\top\bbW_K$, shown in rows 1, 2, 3, and 4, respectively, for different types of graphs. The column 1, 2, and 3 present a SW graph, the Zachary's Karate graph, and the weather stations graph. The graph filter $\bbH^2$ is created as a square graph filter with coefficients drawn from a uniform distribution and set to unit $\ell_1$ norm. For each graph (column), it can be seen that the matrices $\bbA$, $\bbH^2$, and $\sqJacob$ are related, and that the leading eigenvectors $\bbV_K$ and $\bbW_K$ are close to orthogonal.} \label{fig:geralizing_graphs}
\end{figure*}

\subsection{Numerical inspection of the deep GCG spectrum}\label{sec:analyze_deep_gcg}
While for convenience, the previous section focused on analyzing the GCG architecture with $L = 2$ layers, in practice we often work with a larger number of layers.
In this section, we provide numerical evidence showing that the relation between matrices $\bbA$ and $\sqJacob$ described in Lemma~\ref{lemma_eigs_gcg} also holds when $L > 2$.

To that end, \cref{fig:generalizing_deep} shows the pairs of eigenvectors $\bbv_i$ and $\bbw_i$ for the indexes $i=\{1,3,10,64\}$, for a given graph $\ccalG$ drawn from an SBM with $N=64$ nodes and 4 communities.
The GCG is composed of $L=5$ layers and, to obtain the eigenvectors of the squared Jacobian matrix, the Jacobian is computed using the \textit{autograd} functionality of PyTorch.  
The nodes of the graph are sorted by communities, i.e., the first $N_1$ nodes belong to the first community and so on.
It can be clearly seen that, even for moderately small graphs, the leading eigenvectors of $\bbA$ and $\sqJacob$ are almost identical, becoming more dissimilar as the eigenvectors are associated with smaller eigenvalues.
It can also be observed how leading eigenvectors have similar values for entries associated with nodes within the same community.
Moreover, \cref{fig:orthogonality_WV} depicts the matrix product $\bbV^\top\bbW$, where it is observed that the $K=4$ leading eigenvectors of both matrices are orthonormal.
The presented numerical results strengthen the argument that the analytical results obtained for the two-layer case can be extrapolated to deeper architectures. 

Another key assumption of Lemma~\ref{lemma_eigs_gcg} is that $\ccalG$ is drawn from the SBM described in $\ccalM_N(\beta_{min},\rho)$.
This assumption facilitates the derivation of a bound relating the spectra of $\bbA$ and $\sqJacob$ (i.e., the subspaces spanned by the eigenvectors $\bbV_K$ and $\bbW_K$). 
However, the results reported in \cref{fig:geralizing_graphs} suggest that such a relation exists for other type of graphs, even though its analytical characterization is more challenging.
The figure has 12 panels (3 columns and 4 rows).
Each of the columns corresponds to a different graph, namely: 1)~a realization of a \acrfull{sw} graph~\cite{watts1998collective} with $N=150$ nodes, 2)~the Zachary's Karate graph~\cite{zachary1977information} with $N=34$ nodes, and 3)~a graph of $N=316$ weather stations across the United States\footnote{Data extracted from the National Centers for Environmental Information. Available at https://www.ncei.noaa.gov/data/global-summary-of-the-day}.
Each of the three first rows correspond to an $N\times N$ matrix, namely: 1) the normalized adjacency matrix $\bbA$, 2) $\bbH^2$, the squared version of a low pass graph filter and whose coefficients are drawn from a uniform distribution and set to unit $\ell_1$ norm, and 3) the squared Jacobian matrix $\sqJacob$. 
Although we may observe some similarity between $\bbA$ and $\sqJacob$, the relation between $\sqJacob$ and the graph $\ccalG$ becomes apparent when comparing the matrices $\bbH^2$ and $\sqJacob$.
The matrix $\bbH$ is a random graph filter used in the linear transformation of the convolutional generator $f_{\bbTheta}(\bbH)$, and it is clear that the vertex connectivity pattern of $\sqJacob$ is related to that of $\bbH^2$.
Since $\sqJacob$ and $\bbH^2$ are closely related and we know that the eigenvectors of $\bbH^2$ and those of $\bbA$ are the same, we expect $\bbW$ (the eigenvectors of $\sqJacob$) and $\bbV$ (the eigenvectors of $\bbA$) to be related as well. To verify this, the fourth row of \cref{fig:geralizing_graphs} represents $\bbV_K^\top\bbW_K$, i.e., the pairwise inner products of the $K$ leading eigenvectors of $\bbA$ and those of $\sqJacob$. It can be observed that the $K$ leading eigenvectors are close to orthogonal, which means that the relation observed in the vertex domain carries over to the spectral domain and $\bbV_K$ and $\bbW_K$ expand the same subspace.
These results suggest that a deep GCG could be able to denoising signals living in the subspace spanned by $\bbV_K$.
However, because the bound in \cref{theorem_denoising_gcg} assumed a 2-layer GCG, we address this hypothesis numerically in \cref{sec:experiments}.

To summarize, the presented results illustrate that the analytical characterization provided in \cref{sec:analysis_GCG}, which considered a 2-layer GCG operating over SBM graphs, carries over to more general setups.

\section{Graph upsampling decoder}\label{sec:ups_dec}
The GCG architecture presented in \cref{sec:conv_dec} incorporated the topology of $\ccalG$ via the vertex-based convolutions implemented by the graph filter $\bbH$.
In this section, we introduce the \acrfull{gdec} architecture.
{In contrast to the GCG and other GCNNs, this novel graph-aware denoising NN incorporates the topology of $\ccalG$ via a (nested) collection of graph upsampling operators~\cite{rey2019underparametrized}.} 
Specifically, we propose the linear transformation for the GDec denoiser to be given by
\begin{equation}\label{eq:graph_decoder}
      \ccalT_{\bbTheta^{(\ell)}}^{(\ell)}\{\bbY^{(\ell-1)}|\ccalG\} = \bbU^{(\ell)}\bbY^{(\ell-1)}\bbTheta^{(\ell)},
\end{equation}
where $\bbU^{(\ell)} \in \reals^{N^{(\ell)} \times N^{(\ell-1)}}$, with $N^{(\ell)}\geq N^{(\ell-1)}$, are graph upsampling matrices to be defined soon.
Note that, compared to \eqref{eq:linear_trans_gcg}, the graph filter $\bbH$ is replaced with the upsampling operator $\bbU^{(\ell)}$ that \emph{depends} on $\ell$.
Adopting the proposed linear transformation, the output of the GDec with $L$ layers is given by the recursion
\begin{align}
      \bbY^{(\ell)}\! &=\relu(\bbU^{(\ell)}\bbY^{(\ell-1)}\bbTheta^{(\ell)}),\;\; \mathrm{for}\; \ell=1,...,L\!-\!1, \label{eq:gd1}\\ 
      \bby^{(L)} \!&= \bbU^{(L)}\bbY^{(L-1)}\bbTheta^{(L)}, \label{eq:gd2}
\end{align}
where the $\relu$ is also removed from the last layer.

Similar to the GCG, the proposed GDec learns to combine the features within each node.
{However, the interpolation of the signals in this case is determined by the graph upsampling operators $\{\bbU^{(\ell)}\}_{\ell=1}^L$, rather than by employing convolutions.}
The size of the input $N^{(0)}$ is now a design parameter that will determine the implicit degrees of freedom of the architecture.
Note that, from the GSP perspective, the input feature matrix $\bbY^{(\ell-1)} \in \reals^{N^{(\ell-1)} \times F^{(\ell-1)}}$ represents $F^{(\ell-1)}$ graph signals, each of them defined over a graph $\ccalG^{(\ell-1)}$ with $N^{(\ell-1)}$ nodes.
Therefore, even though the input $\bbY^{(0)}=\bbZ$ is still a random white matrix across rows and columns, since $N^{(\ell)} \geq N^{(\ell-1)}$, the dimensionality of the input is progressively increasing. 

{A closer comparison with the GCG reveals that the smaller dimensionality of the input $\bbZ$ endows the GDec architecture with fewer degrees of freedom, rendering the architecture more robust to noise.
Not only that, but the graph information is now included via the graph upsampling operators $\bbU^{(\ell)}$ instead of relying on graph filters.}
Clearly, the method used to design the graph upsampling matrices, which is the subject of the next section, will have an impact on the type of graph signals that can be efficiently denoised using the GDec architecture.

\subsection{Graph upsampling operator from hierarchical clustering}\label{sec:upsampling_operator}

\begin{figure}
    \centering
    \includegraphics[width=0.6\textwidth]{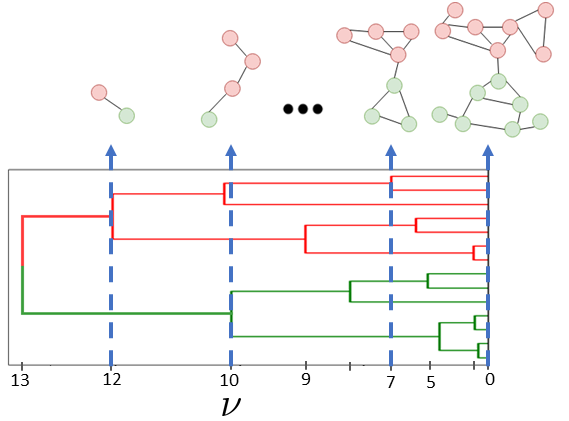}
    \caption{Dendrogram of an agglomerative hierarchical clustering algorithm and the resulting graphs with 2, 4, 7 and 14 nodes.}
    \label{fig:dendrogram}
\end{figure}

Regular upsampling operators have been successfully used in NN architectures to denoise signals defined on regular domains \cite{heckel2019denoising}. While the design of upsampling operators in regular grids is straightforward, when the signals at hand are defined on irregular domains the problem becomes substantially more challenging. The approach that we put forth in this chapter is to use agglomerative hierarchical clustering methods~\cite{jain1988algorithms,carlsson2013axiomatic,carlsson2018hierarchical} to design a graph upsampling operator that leverages the graph topology.
These methods take a graph as an input and return a dendrogram; see \cref{fig:dendrogram}.
A dendrogram can be interpreted as a rooted-tree structure that shows different clusters at the different levels of resolution $\nu$. 
At the finest resolution ($\nu=0$) each node forms a cluster of its own.
Then, as $\nu$ increases, nodes start to group together (agglomerate) in bigger clusters and, when the resolution becomes large (coarse) enough, all nodes end up being grouped in the same cluster.

By cutting the dendrogram at $L+1$ resolutions, including $\nu=0$, we obtain a collection of node sets with parent-child relationships inherited by the refinement of clusters.
Since we are interested in performing graph upsampling, note that the dendrogram is interpreted from left to right.
This can be observed in the example shown in \cref{fig:dendrogram}, where the three red nodes in the second graph ($\nu=10$, layer $\ell=1$) are children of the red parent in the coarsest graph ($\nu=12$, layer $\ell=0$).
{In this sense, the graph upsampling operator is given by the inverse operation of the clustering algorithm.}
We leverage these parent-children relations to define the membership matrices $\bbP^{(\ell)} \in \{0,1\}^{N^{(\ell)} \times N^{(\ell-1)}}$, where the entry $P_{ij}^{(\ell)}=1$ only if the $i$-th node in layer $\ell$ is the child of the $j$-th node in layer $\ell-1$.
{Moreover, we can further exploit the dendrogram to obtain  coarser-resolution versions of the original graph $\ccalG$.
To that end, note that the clusters at layer $\ell$ can be interpreted as nodes of a graph $\ccalG^{(\ell)}$ with $N^{(\ell)}$ nodes and adjacency matrix $\bbA^{(\ell)}$.}
There are several ways of defining $\bbA^{(\ell)}$ based on the original adjacency matrix $\bbA$. While our architecture does not focus on a particular form, in the simulations we set $A^{(\ell)}_{ij} \neq 0$ only if, in the original graph $\ccalG$, there is at least one edge between nodes belonging to the cluster $i$ and nodes from cluster $j$.
In addition, the weight of the edge depends on the number of existing edges between the two clusters.

With the definition of the membership matrix $\bbP^{(\ell)}$ and the adjacency matrix $\bbA^{(\ell)}$, the upsampling operator of the $\ell$-th layer is given by
\begin{equation}\label{eq:upsampling_operator}
    \bbU^{(\ell)} = \left(\gamma\bbI+\left(1-\gamma\right)\bbA^{(\ell)}\right)\bbP^{(\ell)},
\end{equation}
where $\gamma\in[0, 1]$ is a pre-specified constant. 
Notice that $\bbU^{(\ell)}$ first copies the signal value from the parents to the children by applying the matrix $\bbP^{(\ell)}$, and then every child performs a convex combination between this value and the average signal value of its neighbors.
This design promotes that nodes descending from the same parent have similar (related) values, which conveys a notion (prior) of smoothness on the targeted graph signals.
As we show in \cref{sec:experiments}, the implicit smoothness prior results in a better performance when denoising smooth signals but, on the other hand, makes the architecture more sensitive to model mismatch.
Therefore, when dealing with high-frequency signals, a worth-looking approach left as a future research direction is to rely on algorithms that cluster the nodes considering not only the topology of $\ccalG$ but also the properties of the graph signals.

Because the membership matrices $\bbP^{(\ell)}$ are designed using a clustering algorithm over $\ccalG$, and the matrices $\bbA^{(\ell)}$ capture how strongly connected the clusters of layer $\ell$ are in the original graph, these two matrices are responsible for incorporating the information of $\ccalG$ into the upsampling operators $\bbU^{(\ell)}$.
Furthermore, we remark that the upsampling operator $\bbU^{(\ell)}$ can be reinterpreted as the application of $\bbP^{(\ell)}$ followed by the application of a graph filter
\begin{equation} \label{eq:upsampling_filter_second_step}
\tbH^{(\ell)}=\gamma\bbI+(1-\gamma) \bbA^{(\ell)},    
\end{equation} 
which sets the filter coefficients as $h_0=\gamma$ and $h_1=1-\gamma$.

\subsection{Guaranteed denoising with the GDec}
As we did for the GCG, our goal is to theoretically characterize the denoising performance of the GNN architecture defined by \eqref{eq:gd1}-\eqref{eq:upsampling_operator}.
To achieve that goal, we replicate the approach implemented in \cref{sec:analysis_GCG}.
We first derive the matrix $\sqJacob$ and provide theoretical guarantees when denoising a $K$-bandlimited graph signal with the GDec.
Then, to gain additional insight, we detail the relation between the subspace spanned by the eigenvectors $\bbW$ and the spectral domain of $\bbA$.
This relation is key in deriving the theoretical analysis.

We start by introducing the 2-layer GDec
\begin{equation}
    f_{\bbTheta}(\bbZ|\ccalG) = \relu(\bbU\bbZ\bbTheta^{(1)})\bbtheta^{(2)}.
\end{equation}
Upon following a reasoning similar to that provided after \eqref{eq:2layer_gcg_simp}, instead of employing the previous architecture we can optimize \eqref{eq:nonlinear_denoising} over its simplified version
\begin{equation}\label{eq:2layer_gd_simp}
   f_{\bbTheta}(\bbU) = f_{\bbTheta}(\bbZ|\ccalG) = \relu(\bbU\bbTheta) \bbb.
\end{equation}
%
An important difference with respect to the GCG presented in \eqref{eq:2layer_gcg_simp} is that the matrix $\bbTheta$ has a dimension of $N^{(0)} \times F$, so it spans $\reals^{N^{(0)}}$ instead of $\reals^{N}$.
Since $N^{(0)} < N$, the smaller subspace spanned by the weights of the GDec renders the architecture more robust to fitting noise, but, on the other hand, the number of degrees of freedom to learn the graph signal of interest are reduced. As a result, the alignment between the targeted graph signals and the low-pass vertex-clustering architecture becomes more important.

The expected squared Jacobian $\sqJacob=\mathbb{E}_{\bbTheta}[\ccalJ_{\bbTheta}(\bbU) \ccalJ^\top_{\bbTheta}(\bbU)]$ is obtained following the procedure used to derive~\eqref{eq:expected_jac}, arriving at the expression
\begin{equation}\label{eq:expected_jac_dec}
    \sqJacob = 0.5 \left( \mathbf{1} \mathbf{1}^\top - \frac{1}{\pi} \arccos(\tbC^{-1} \bbU\bbU^\top \tbC^{-1})\right) \circ \bbU\bbU^\top,
\end{equation}
where $\bbu_i$ represents the $i$-th row of $\bbU$, and $\tbC=\diag([\|\bbu_1\|_2,...,\|\bbu_N\|_2])$ is a normalization matrix.

Then, let $\bbx_0$ be a $K$-bandlimited graph signal and let $f_{\bbTheta}(\bbU)$ have a number of features $F$ satisfying \eqref{bound_on_F}.  
If we solve \eqref{eq:nonlinear_denoising} running gradient descent with a step size $\eta\leq\frac{1}{\sigma_1^2}$, the following result holds.

\begin{theorem}\label{theorem_denoising_gd}
    Let $f_{\bbTheta}(\bbU)$ be the network defined in equation~\eqref{eq:2layer_gd_simp}.
    Consider the conditions described in \cref{theorem_denoising_gcg} and let $N^{(0)}$ match the number of communities $K$ (see Ass.~\ref{A:sbm}).
    Then, for any $\epsilon$, $\delta$, there exists some $N_{\epsilon,\delta}$ such that if $N>N_{\epsilon,\delta}$, then the error for each iteration $t$ of gradient descent with stepsize $\eta$ used to fit the architecture is bounded as \eqref{eq_bound_theorem_fitting_eigs_Jacobian_for_t}, with  probability at least $1-e^{-F^2}-\phi-\epsilon$.
\end{theorem}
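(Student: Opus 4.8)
The plan is to follow the same two-stage strategy used to establish \cref{theorem_denoising_gcg}, exploiting the fact that the simplified GDec in \eqref{eq:2layer_gd_simp} has exactly the form $\relu(\bbU\bbTheta)\bbb$, identical to the GCG up to replacing the convolutional operator $\bbH$ by the upsampling operator $\bbU$. The generic denoising bound for overparametrized two-layer $\relu$ networks (cf.\ \cite{heckel2019denoising}) depends on the architecture only through its expected squared Jacobian $\sqJacob$ and the overparametrization condition \eqref{bound_on_F}. Since $\sqJacob$ for the GDec is already computed in \eqref{eq:expected_jac_dec}, the same chain of inequalities that produced \eqref{eq_bound_theorem_fitting_eigs_Jacobian_for_t} applies verbatim, and the probability $1-e^{-F^2}-\phi-\epsilon$ is inherited unchanged. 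Thus the only genuinely new ingredient is an analogue of \cref{lemma_eigs_gcg}: I must show that the $K$ leading eigenvectors $\bbW_K$ of the matrix in \eqref{eq:expected_jac_dec} align, up to an orthonormal rotation $\bbQ$ and an error $\delta$, with the $K$ leading eigenvectors $\bbV_K$ of $\bbA$, so that a $K$-bandlimited $\bbx_0$ lies (approximately) in the span of $\bbW_K$.

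The structural starting point is the factorization $\bbU = (\gamma\bbI + (1-\gamma)\bbA)\bbP$ implied by \eqref{eq:upsampling_operator}, where $\bbP \in \{0,1\}^{N \times K}$ is the membership matrix assigning each of the $N$ nodes to one of the $K$ clusters produced by the hierarchical clustering step. The assumption $N^{(0)} = K$ is essential here: it forces the latent dimension to coincide with the number of communities, so that $\bbU\bbU^\top$ has rank $K$ and its range is the column space of $(\gamma\bbI + (1-\gamma)\bbA)\bbP$. Under \cref{A:sbm}, the minimum-degree growth $\beta_{min} = \omega(\ln(N/\rho))$ ensures (as in \cite{schaub2020blind}) that the clustering recovers the planted communities with high probability, so that the columns of $\bbP$ span, up to vanishing error, the same subspace as the piecewise-constant leading eigenvectors of the expected adjacency $\ccalbA = \mathbb{E}[\bbA]$.

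Granting this, the plan is to argue the alignment in three steps. First, because $\gamma\bbI + (1-\gamma)\bbA$ leaves the community-indicator subspace (asymptotically) invariant and $\bbP\bbP^\top$ projects onto it, the rank-$K$ matrix $\bbU\bbU^\top$ has its dominant eigenspace concentrated on this subspace. Second, exactly as argued for \eqref{eq:expected_jac} in the GCG case, the Hadamard product with $\bbU\bbU^\top$ together with the entrywise $\arccos$ term in \eqref{eq:expected_jac_dec} preserves this dominant low-rank block structure, so the top-$K$ eigenspace of $\sqJacob$ inherits it. Third, a Davis--Kahan ($\sin\Theta$) perturbation bound, combined with the concentration of $\bbA$ around $\ccalbA$, converts the spectral-gap estimate into the stated inequality $\|\bbV_K - \bbW_K\bbQ\|_{\text{F}} \leq \delta$, with $\delta \to 0$ as $N$ grows, yielding the threshold $N_{\epsilon,\delta}$.

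I expect the main obstacle to be the same non-Lipschitz $\arccos$ nonlinearity that obstructed the exact analysis in \cref{lemma_eigs_gcg}, which again forces a stochastic (SBM) treatment rather than a deterministic one. The additional difficulty specific to the GDec is that $\bbU$ is rectangular and non-symmetric, so I cannot reuse the clean ``polynomial-in-$\bbA$'' argument that made $\bbH^2$ share its eigenvectors with $\bbA$; instead the alignment must be routed entirely through $\bbP$ and its agreement with the community subspace. Consequently, the delicate technical point will be controlling the error introduced by the clustering step so that the top-$K$ eigenspace of $\bbU\bbU^\top$ concentrates sharply enough, which I would handle by invoking the same connectivity and concentration results from \cite{schaub2020blind} already used in the proof of \cref{lemma_eigs_gcg}.
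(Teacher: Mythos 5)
Your proposal is correct and follows essentially the same route as the paper: the theorem is obtained by rerunning the proof of \cref{theorem_denoising_gcg} with the expected squared Jacobian \eqref{eq:expected_jac_dec} in place of \eqref{eq:expected_jac}, and the only new ingredient is the eigenvector-alignment result (\cref{lemma_eigs_gd}), which the paper also proves by exploiting the factorization $\bbU=\tbH\bbP$, the fact that $\bbP\bbP^\top$ is a block matrix matching the SBM community pattern when $N^{(0)}=K$, and a Davis--Kahan plus concentration argument identical in structure to the proof of \cref{lemma_eigs_gcg}. Your three-step outline (dominant eigenspace of $\bbU\bbU^\top$ on the community subspace, preservation under the Hadamard/$\arccos$ structure, Davis--Kahan) matches the paper's triangle-inequality decomposition term for term.
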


The proof of the theorem is analogous to the one provided in Appendix~\ref{proof_theorem_denoising_gcg} but exploiting Lemma~\ref{lemma_eigs_gd} instead of Lemma~\ref{lemma_eigs_gcg}.
Lemma~\ref{lemma_eigs_gd} is fundamental in attaining \cref{theorem_denoising_gd} and is presented later in the section.

\cref{theorem_denoising_gd} formally establishes the denoising capability of the GDec when $\bbx_0$ is a $K$-bandlimited graph signal and  $K=N^{(0)}$ matches the number of communities in the SBM graph.
When compared with the GCG, the smaller dimensionality of the input $\bbZ$, and thus the smaller rank of the matrix $\bbTheta$, constrains the learning capacity of the architecture, making it more robust to the presence of noise.
However, this additional robustness also implies that the architecture is more sensitive to model mismatch, since its capacity to learn arbitrary signals is smaller.
Intuitively, the GDec represents an architecture tailored for a more specific family of graph signals than the GCG.
Moreover, employing the GDec instead of the GCG has a significant impact on the relation between the subspaces spanned by $\bbV_K$ and $\bbW_K$.

To establish the new relation between $\bbV_K$ and $\bbW_K$, assume that the adjacency matrix is drawn from an SBM $\ccalM(\ccalbA)$ with $K$ communities such that $\ccalM(\ccalbA)\in\ccalM_N(\beta_{min},\rho)$, so that the SBM follows Ass. \ref{A:sbm}.
In addition, set the size of the latent space to the number of communities so $N^{(0)} = K$.
Under this setting, the counterpart to Lemma~\ref{lemma_eigs_gcg} for the case where $f_{\bbTheta}(\bbU)$ is a GDec architecture follows.
\begin{lemma}\label{lemma_eigs_gd}
    Let the matrix $\sqJacob$ be defined as in \eqref{eq:expected_jac_dec}, set $\epsilon$ and $\delta$ to small positive numbers, and denote by $\bbV_K$ and $\bbW_K$ the $K$ leading eigenvectors in the respective eigendecompositions of $\bbA$ and $\sqJacob$. Under Ass.~\ref{A:sbm}, there exist an orthonormal matrix $\bbQ$ and an integer $N_{\epsilon,\delta}$ such that for $N > N_{\epsilon,\delta}$ the bound
$$\| \bbV_K - \bbW_K \bbQ \|_{\text{F}} \leq \delta,$$
holds with probability at least $1-\epsilon$.
\end{lemma}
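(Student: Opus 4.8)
The plan is to follow the same two-stage strategy used to establish Lemma~\ref{lemma_eigs_gcg}, but with the role played there by the graph filter $\bbH$ now taken over by the upsampling operator $\bbU$. Recall that for the $2$-layer GDec the single operator is $\bbU = (\gamma\bbI + (1-\gamma)\bbA)\bbP$, where $\bbP \in \{0,1\}^{N \times K}$ is the membership matrix assigning each of the $N$ nodes to one of the $K$ communities (here $N^{(0)} = K$). The central structural observation is that the column space of $\bbP$ is exactly the space of vectors that are constant within each community, and for an SBM this subspace coincides, up to lower-order self-loop corrections, with the span of the $K$ leading eigenvectors of the expected adjacency $\ccalbA = \mathbb{E}[\bbA]$. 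Since this community-constant subspace is invariant under $\gamma\bbI + (1-\gamma)\ccalbA$, the idealized operator $\bar{\bbU} := (\gamma\bbI + (1-\gamma)\ccalbA)\bbP$ has column space equal to that of $\bbP$, i.e., the leading eigenspace of $\ccalbA$. This is the key qualitative difference with the GCG case: there $\bbH^2$ shared \emph{all} eigenvectors with $\bbA$, whereas here the rank-$K$ operator $\bbU$ ties the relevant subspace to the community structure through $\bbP$.

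First I would analyze the idealized expected squared Jacobian $\barsqJacob$ obtained by substituting $\ccalbA$ for $\bbA$ in the definition of $\bbU$ and hence in \eqref{eq:expected_jac_dec}. Because the SBM is community-exchangeable, every matrix entering that expression — $\bar{\bbU}\bar{\bbU}^\top$, the normalization $\tbC$, the entrywise $\arccos$ of the correlation $\tbC^{-1}\bar{\bbU}\bar{\bbU}^\top\tbC^{-1}$, and the final Hadamard product — is block-constant up to corrections that vanish as $N$ grows, meaning its $(i,j)$ entry depends on $(i,j)$ only through the community pair $(c(i),c(j))$. A block-constant matrix can be written as $\bbP\bbB\bbP^\top$ for some $K\times K$ matrix $\bbB$, hence it has rank at most $K$ and its nonzero eigenvectors lie in the column space of $\bbP$. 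Consequently the $K$ leading eigenvectors of $\barsqJacob$, call them $\bar{\bbW}_K$, span the community-constant subspace, which is precisely the leading eigenspace of $\ccalbA$, spanned by $\bar{\bbV}_K$; thus there is an orthonormal $\bbQ_0$ with $\bar{\bbV}_K = \bar{\bbW}_K\bbQ_0$.

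Second, I would transfer this equality of idealized subspaces to the random matrices via concentration and a Davis--Kahan argument, exactly as in Appendix~\ref{proof_lemma_eigs_gcg}. Assumption~\ref{A:sbm}, namely that the minimum expected degree grows as $\beta_{min} = \omega(\ln(N/\rho))$, guarantees that $\bbA$ concentrates around $\ccalbA$ with high probability, making $\| \bbV_K - \bar{\bbV}_K\bbQ_1 \|_{\text{F}}$ small, and likewise that $\bbU$, $\bbU\bbU^\top$, and therefore $\sqJacob$ concentrate around their idealized counterparts, making $\| \bbW_K - \bar{\bbW}_K\bbQ_2 \|_{\text{F}}$ small. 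Combining these bounds with $\bar{\bbV}_K = \bar{\bbW}_K\bbQ_0$ through the triangle inequality, and letting $\bbQ$ absorb the composition of the three orthonormal rotations, yields $\| \bbV_K - \bbW_K\bbQ \|_{\text{F}} \leq \delta$ for $N$ larger than some $N_{\epsilon,\delta}$, with probability at least $1-\epsilon$.

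The step I expect to be the main obstacle is the concentration of $\sqJacob$ around $\barsqJacob$ together with the accompanying eigenvector perturbation bound. As already noted after Lemma~\ref{lemma_eigs_gcg}, the entrywise $\arccos$ is not Lipschitz near $\pm 1$, and the correlation matrix $\tbC^{-1}\bbU\bbU^\top\tbC^{-1}$ has unit diagonal precisely where the derivative blows up, so a naive perturbation estimate fails. I would handle this by isolating the diagonal contribution and controlling the off-diagonal entries, where under Ass.~\ref{A:sbm} the correlations stay bounded away from $1$ with high probability, rendering $\arccos$ effectively Lipschitz on the relevant range. Here the reduced latent dimension $N^{(0)}=K$ actually works in our favor relative to the GCG: since $\bbU\bbU^\top$ has rank at most $K$, the eigengap separating the top $K$ eigenvalues of $\barsqJacob$ from the remaining (essentially zero) ones is dictated directly by the block structure of $\bar{\bbU}$, which stabilizes the Davis--Kahan bound on $\bbW_K$.
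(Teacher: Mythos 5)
Your proposal is correct and follows essentially the same route as the paper's proof: decompose via the triangle inequality through the idealized (expected) quantities, use the fact that $\bbP\bbP^\top$ and $\ccaltbH$ share the SBM block pattern so that $\barsqJacob$ is a block matrix whose nonzero eigenvectors span the same community-indicator subspace as those of $\ccalbA$, and then transfer to the sample quantities via concentration of $\bbU\bbU^\top$ and Davis--Kahan. The only cosmetic difference is your treatment of the $\arccos$ term (isolating the diagonal and arguing effective Lipschitzness off it), where the paper gets by more simply with uniform continuity of $\arccos$ on $[-1,1]$.
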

%

Lemma~\ref{lemma_eigs_gd} asserts that the difference between the subspaces spanned by $\bbV_K$ and $\bbW_K$ becomes arbitrarily small as the size of the graph increases. 
The proof is provided in Appendix~\ref{proof_lemma_eigs_gd} and the intuition behind it arises from the fact that the upsampling operator can be understood as $\bbU=\tbH\bbP$, where $\tbH$ is a graph filter of the specific form described in \eqref{eq:upsampling_filter_second_step}.
Remember that $\bbP$ is a binary matrix encoding the cluster in the layer $\ell-1$ to which the nodes in the layer $\ell$ belong.
Since we are only considering two layers, and we have that $N^{(0)}=K$, the matrix $\bbP$ is encoding the node-community membership of the SBM graph and, hence, the product $\bbP\bbP^\top$ is a block matrix with constant entries matching the block pattern of $\ccalbA$.
As shown in the proof, this property can be leveraged to bound the eigendecomposition of $\bbA$ and $\sqJacob$.

\subsection{Analyzing the deep GDec}
The deep GDec composed of $L>2$ layers can be constructed following the recursion presented in \eqref{eq:gd1} and \eqref{eq:gd2}.
In this case, by stacking more layers we perform the upsampling of the input signal in a progressive manner and, at the same time, we add more nonlinearities, which helps alleviating the rank constraint related to the input size $N^{(0)}$.
In the absence of nonlinear functions, the maximum rank of the weights would be $N^{(0)}$, and thus, only signals in a subspace of size $N^{(0)}$ could be learned.
By properly selecting the number of layers and the input size when constructing the network, we can obtain a trade-off between the robustness of the architecture and its learning capability.

In addition, the effect of adding more layers is also reflected on the smoothness assumption inherited from the construction of the upsampling operator.
Adding more layers is related to less smooth signals, since the number of nodes in $\ccalG$ with a common parent, and thus, with similar values, is smaller. 

We note that numerically illustrating that the bound between $\bbV_K$ and $\bbW_K$ holds true for the deep GDec, and that its denoising capability is not limited to signals defined over SBM graphs provide results similar to those in \cref{sec:analyze_deep_gcg}.
Therefore, instead of replicating the previous section, we directly illustrate the performance of the deep GDec under more general settings in the following section, where we present the numerical evaluation of the proposed architectures.

\section{Numerical results}\label{sec:experiments}
This section presents different experiments to numerically validate the theoretical claims introduced in the chapter, and to illustrate the denoising performance of the GCG and the GDec.
The experiments are carried out using synthetic and real-world data, and the proposed architectures are compared to other graph-signal denoising alternatives.
The code for the experiments and the architectures is available on GitHub\footnote{\url{https://github.com/reysam93/Graph_Deep_Decoder}}.
For hyper-parameter settings and implementation details the interested reader is referred to the online available code.

\subsection{Denoising capability of graph untrained architectures}
The goal of the experiment shown in Figures~\ref{fig:experiments1a} and~\ref{fig:experiments1b} is to illustrate that the proposed graph untrained architectures are capable of learning the structured original signal $\bbx_0$ faster than the noise, which is one of the core claims of the chapter.
To that end, we generate an SBM graph with $N=64$ nodes and $K=4$ communities, and define 3 different signals: (i)~``Signal'': a piece-wise constant signal $\bbx_0$ with the value of each node being the label of its community; (ii)~``Noise'': zero-mean white Gaussian noise $\bbn$ with unit variance; and (iii)~``Signal + Noise'': a noisy observation $\bbx=\bbx_0+\bbn$ where the noise has a normalized power of $0.1$.
\cref{fig:experiments1a} and \ref{fig:experiments1b} show the \acrfull{nmse}, with the error for each realization being $\|\bbx_0-\nolinebreak \hbx_0\|_2^2/\|\bbx_0\|_2^2$.
The mean is computed for 100 realizations of the noise as the number of epochs increases when the different signals are fitted by the 2-layer GCG and the 2-layer GDec, respectively.
It can be seen how, in both cases, the error when fitting the noisy signal $\bbx$ decreases for a few epochs until it reaches a minimum, and then starts to increase.
This is because the proposed untrained architectures learn the signal $\bbx_0$ faster than the noise, but if they fit the observation for too many epochs, they start learning the noise as well and, hence, the NMSE increases.
As stated by \cref{theorem_denoising_gcg} and \cref{theorem_denoising_gd}, this result illustrates that, if early stopping is applied, both architectures are capable of denoising the observed graph signals without a training step. 
It can also be noted that, under this setting, the GDec learns the signal $\bbx_0$ faster than the GCG and, at the same time, is more robust to the presence of noise.
This can be seen as a consequence of GDec implicitly making stronger assumptions about the smoothness of the targeted signal.

\begin{figure}[t]
	\centering
	\begin{subfigure}{0.49\textwidth}
		\centering
        \includegraphics[width=\textwidth]{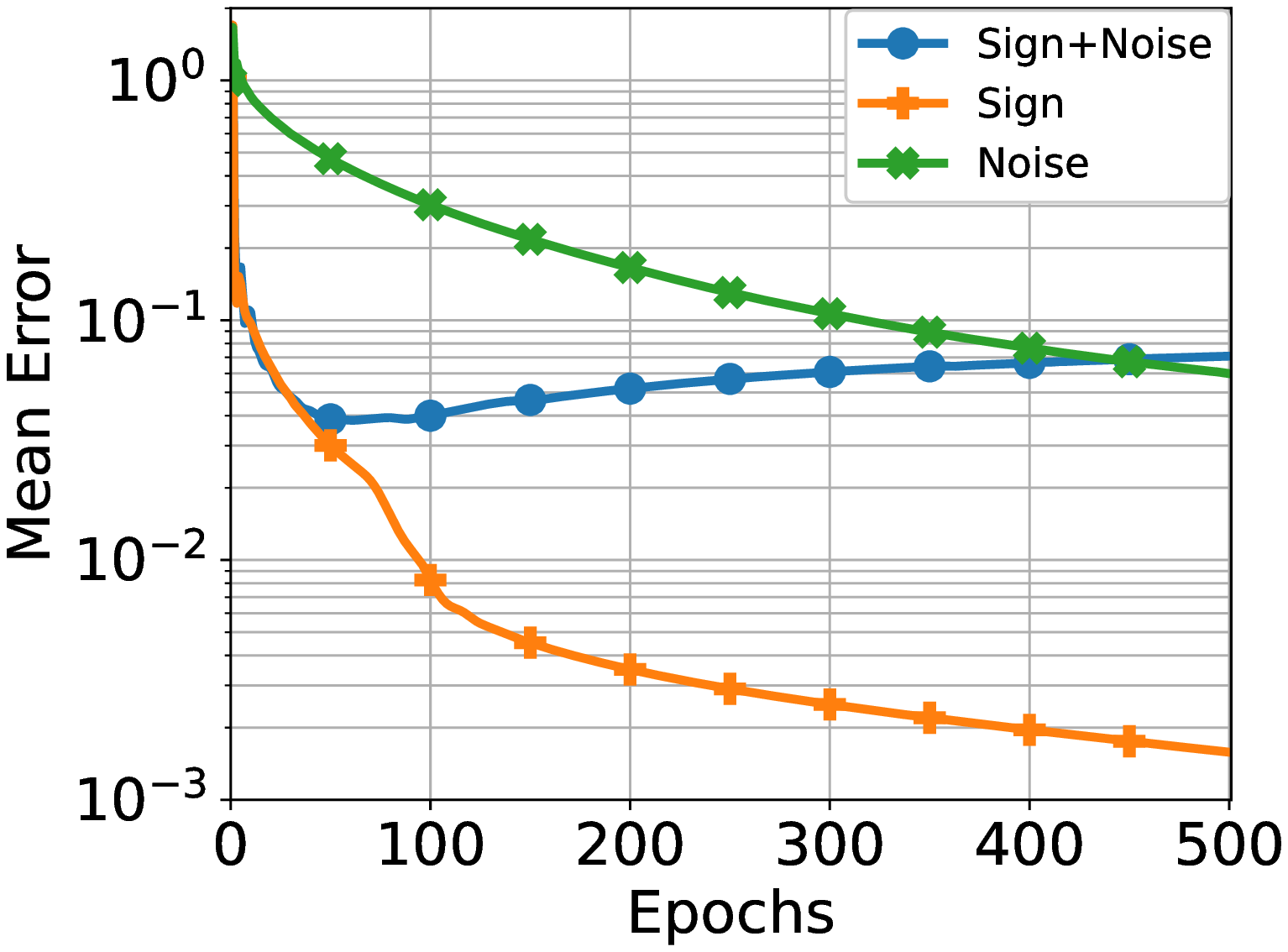}
        \captionsetup{justification=centering,margin=4.2cm}
        \caption{}\label{fig:experiments1a}
	\end{subfigure}
	\begin{subfigure}{0.49\textwidth}
		\centering
		\includegraphics[width=\textwidth]{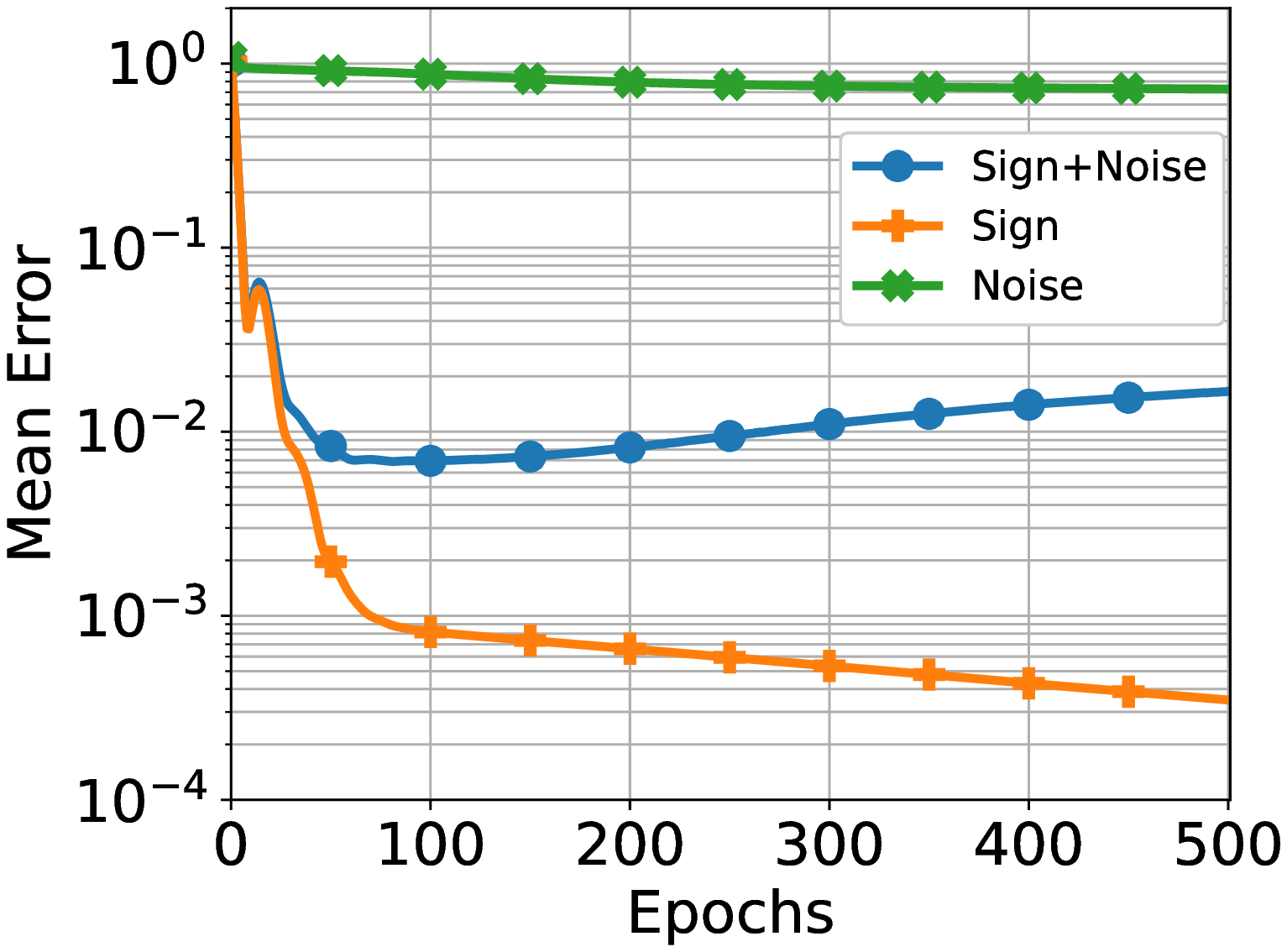}
		\captionsetup{justification=centering,margin=4.2cm}
		\caption{}\label{fig:experiments1b}
	\end{subfigure}
	\caption{a)~Error of the 2-layer GCG when fitting a piece-wise constant signal, noise, and a noisy signal, as a function of the number of epochs. The graph is drawn from an SBM with 64 nodes and 4 communities, and the normalized noise power is $P_n=0.1$.
	b)~Counterpart of a) but for the 2-layer GDec architecture.}
\end{figure}

\begin{figure}[t]
    \centering
    \includegraphics[width=0.55\textwidth]{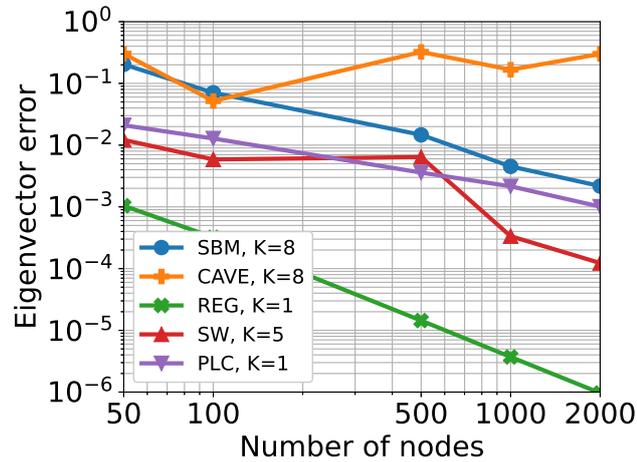}
	\caption{Mean distance between the $K$ leading eigenvectors of the adjacency matrix and $\sqJacob$ as a function of the graph size for several graph models.}\label{fig:experiments1c}
\end{figure}

The goal of the second test case is two-fold.
First, it illustrates that the result presented in Lemma~\ref{lemma_eigs_gcg} is not constrained to the family of SBM (as specified by Ass.~1), but can be generalized to other families of random graphs as well.
In addition, it measures the influence of the number of nodes in the discrepancies between $\bbV_K$ and $\bbW_K$.
To that end, \cref{fig:experiments1c}  contains the mean eigenvector similarity measured as $\frac{1}{K}\|\bbV_K-\bbW_K\bbQ\|_F$ as a function of the number of nodes in the graph.
The eigenvector similarity is computed for 50 realizations of random graphs and the presented error is the median of all the realizations. 
The random graph models considered are: the SBM (``SBM''), the connected caveman graph (``CAVE'')\cite{watts1999networks}, the regular graph whose fixed degree increases with its size (``REG''), the small world graph (``SW'')\cite{watts1998collective}, and the power law cluster graph model (``PLC'')\cite{holme2002growing}.
The second term in the legend denotes the number of leading eigenvectors taken into account in each case, which depends on the number of active frequency components of the specific random graph model.   
We can clearly observe that for most of the random graph models, the eigenvector error goes to 0 as $N$ increases and, furthermore, the error is below $10^{-1}$ even for moderately small graphs.
This illustrates that, although the conditions assumed for Lemma~\ref{lemma_eigs_gcg} and Lemma~\ref{lemma_eigs_gd} focus on the specific setting of the SBM, the results can be applied to a wider class of graphs.
Here, the regular graphs are particularly interesting since most classical signals may be interpreted as signals defined over regular graphs.
As a result, this empirical evidence motivates the extension of the proposed theorems to more general settings as a future line of work.

\subsection{Denoising synthetic data}
We now proceed to comment on the denoising performance of the proposed architectures with synthetic data.
The usage of synthetic signals allows us to study how the properties of the noiseless signal influence the quality of the denoised estimate.


\begin{figure}[t]
    \centering
        \includegraphics[width=.55\textwidth]{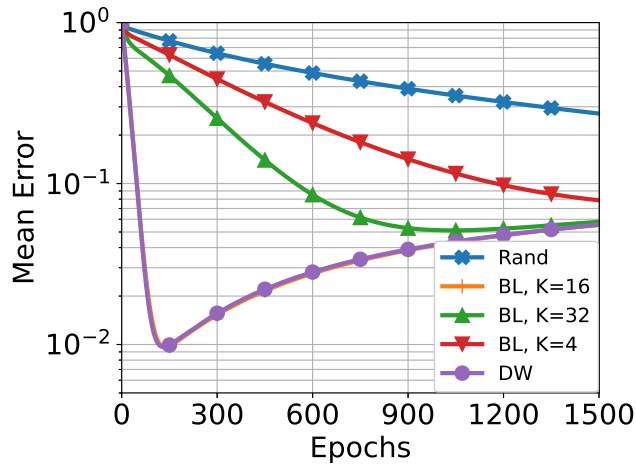}
        \caption{Median NMSE when the 2-layer GCG is used to denoise different families of graph signals.}\label{fig:experiments2a}
\end{figure}
\begin{figure}[t]
	\centering
	\begin{subfigure}{0.49\textwidth}
	    \centering
		\includegraphics[width=\textwidth]{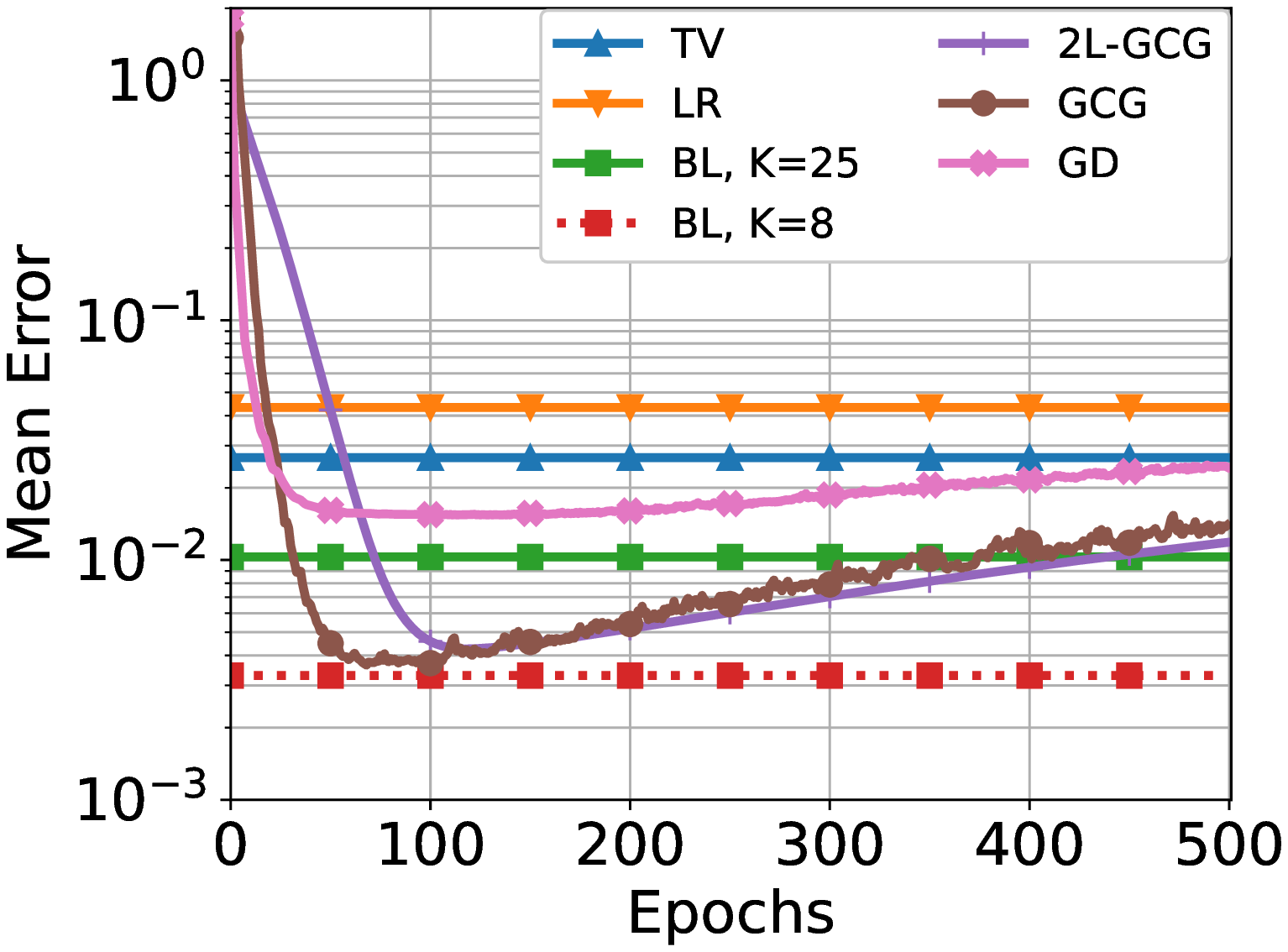}
		\captionsetup{justification=centering,margin=4.2cm}
		\caption{}\label{fig:experiments2b}
	\end{subfigure}
	\begin{subfigure}{0.49\textwidth}
		\centering
		\includegraphics[width=1.05\textwidth]{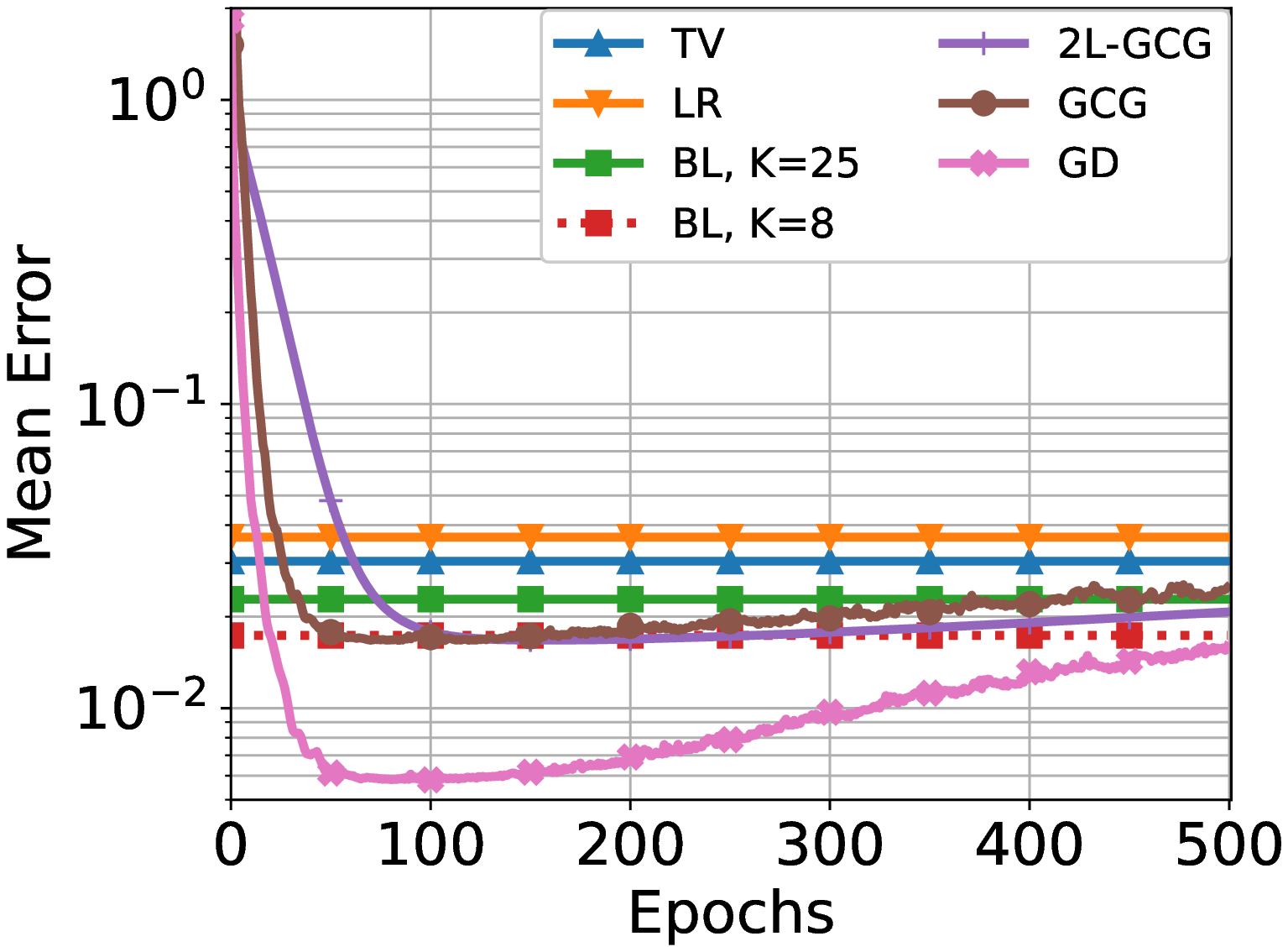}
		\captionsetup{justification=centering,margin=4.2cm}
	    \caption{}\label{fig:experiments2c}
	\end{subfigure}
	\caption{Median MSE when denoising a graph signal as a function of the number of epochs. a)~Performance comparison between total variation, Laplacian regularization, bandlimited models, the 2-layer GCG, the deep GCG, and the deep GDec, when the signals are bandlimited.
	b)~Counterpart of a) for the case where signals are diffused white.}
\end{figure}

The first experiment,  shown in \cref{fig:experiments2a}, studies the error of the denoised estimate obtained with the 2-layer GCG as the number of epochs increases.
The reported error is the NMSE of the estimated signal $\hbx_0$, and the figure shows the mean values of 100 realizations of graphs and graph signals.
The normalized power of the noise present in the data is $0.1$.
Graphs are drawn from an SBM with $N=64$ nodes and 4 communities, and the graph signals are generated as: (i)~a  zero-mean  white Gaussian noise with unit variance (``Rand'');
(ii)~a bandlimited graph signal (cf. \ref{eq:bl_signals}) using the $K$ leading eigenvectors of $\bbA$ as base (``BL''); and (iii)~a diffused white (``DW'') signal created as $\bby = \mathrm{med}(\bbH\bbw|{\ccalG})$, where $\bbw$ is a white vector whose entries are sampled from $\ccalN(0,1)$, $\bbH$ is a low-pass graph filter, and $\mathrm{med}(\cdot|{\ccalG})$ represents the graph-aware median operator such that the value of the node $i$ is the median of its neighborhood~\cite{tay2020time, segarra2017designmedian}.
The results in \cref{fig:experiments2a} show that the best denoising error is obtained when the signal is composed of just a small number of eigenvectors, and the performance deteriorates as the bandwidth (i.e., the number of leading eigenvectors that span the subspace where the signal lives) increases, obtaining the worst result when the signal is generated at random.
This result is aligned with the theoretical claims since it is assumed that the signal $\bbx_0$ is bandlimited.
It is also worth noting that the architecture also achieves a good denoising error with the ``DW'' model, showcasing that the GCG is also capable of denoising other types of smooth graph signals.

Next, \cref{fig:experiments2b} compares the performance of the 2-layer GCG (``2L-GCG''), the deep GCG (``GCG'') and the deep GDec (``GDec'') with the baseline models introduced in \cref{sec:GNN_for_inverse_problems}, which are the total variation (``TV'')\cite{chen2014signal}, Laplacian regularization (``LR'')\cite{pang2017graph}, and bandlimited model (``BL'')\cite{chen2015discrete}.
In this setting, the graphs are SBM with 256 nodes and 8 communities, and the signals are bandlimited with a bandwidth of 8.
Since the ``BL'' model with $K=8$ captures the actual generative model of the signal $\bbx_0$, it achieves the best denoising performance.
However, it is worth noting that the GCG obtains a similar result, outperforming the other alternatives.
On the other hand, the ``LR'' obtains an error noticeably larger than that of ``BL'' and ``GCG'', highlighting that, even though ``BL'' and ``LR'' are related models their different assumptions lead to different performances.
Moreover, the benefits of using the deep GCG instead of the 2-layer architecture are apparent, since it achieves a better performance in fewer epochs.

On the other hand, \cref{fig:experiments2c} illustrates a similar experiment but with the graph signals generated as ``DW''.
Under this setting, it is clear that the GDec outperforms the other alternatives.
These results showcase the benefits of employing a nonlinear architecture relative to classical denoising approaches.
Furthermore, this experiment corroborates that the GDec is more robust to the presence of noise when the signals are aligned with the prior implicitly captured by the architecture.

\subsection{Denoising real-world signals}
Finally, we assess the performance of the proposed architectures in several real-world datasets.
To the baselines considered in the previous experiments, we add the following competitive denoising algorithms: graph trend filtering (``GTF'')~\cite{wang2015trend}, a graph-aware median operator (``MED'')~\cite{tay2020time}, a GCNN (``GCNN'') implemented as in~\cite{kipf2016semi}, a graph attention network (``GAT'')~\cite{velivckovic2017graph}, a Kron reduction-based autoencoder (``K-GAE'')~\cite{dorfler2012kron}, and the graph unrolling sparse coding architecture (``GUSC'') in~\cite{chen2021graph}.
Moreover, we consider the following noise distributions: (i) zero-mean Gaussian distribution, which is the noise model typically assumed for sensor measurements in signal processing; (ii) uniform distribution on some interval $[0, a]$, where $a\in\reals_+$ is chosen accordingly to the desired noise power; and (iii) Bernoulli distribution to model errors in binary signals.
Next, we describe the selected datasets and analyze the achieved results, which are summarized in Table~\ref{T:real_data}.

\vspace{3mm}
\noindent\textbf{Temperature.}
We consider a network of 316 weather stations distributed across the United States~\cite{sandryhaila2014discrete}.
Graph signals represent daily temperature measurements in the first three months of the year 2003.
The graph $\ccalG$ represents the geographical distance between weather stations and is given by the 8-nearest neighbors graph.
The first and second rows of Table~\ref{T:real_data} list the NMSE when the noise is drawn from a Gaussian and a uniform distribution, respectively.
In both cases, the noise has a normalized power of 0.3.
It is clear that the GDec architecture outperforms the alternatives in both scenarios.
Furthermore, we can observe that the GCG achieves a better performance than GCNN, showcasing the benefits of being able to use a more general graph filter.

\vspace{3mm}
\noindent\textbf{S\&P 500.}
In this experiment, we have 189 nodes representing stocks belonging to 6 different sectors of the S\&P 500 with the graph signals representing the prices of those stocks at particular time instants.
We follow \cite{cardoso2020algorithms} to estimate the graph $\ccalG$ assuming that the signals are drawn from a multivariate Gaussian distribution and are smooth on $\ccalG$.
We consider the noise specifications described in the previous dataset and provide the NMSE in the third and fourth rows of Table~\ref{T:real_data}.
It is worth noting that considering Gaussian noise in this dataset constitutes a more challenging denoising problem than using uniform noise.
A plausible explanation is that the graph is estimated assuming that the data follows a Gaussian distribution, and hence, it is harder to separate the Gaussian noise from the true signals. 
In the presence of Gaussian noise, the GCG and the GDec outperform the other 8 alternatives.
However, when the noise follows a uniform distribution, the best performance is obtained by the GCG and the GCNN, with GDec being the third best.
In addition, we observe that traditional methods yield an error that is considerably larger than that incurred by the proposed architectures.
This is aligned with our initial intuition about linear and quadratic methods being more limited when the actual relation between $\bbx_0$ and $\ccalG$ is more intricate, as is the case for financial data.

\vspace{3mm}
\noindent\textbf{Cora.}
Lastly, we consider the Cora citation network dataset~\cite{kipf2016semi}.
Nodes represent different scientific documents and edges capture citations among them.
Like in \cite{chen2021graph}, we consider the 7 class labels as binary graph signals encoding if the particular node belongs to that class.
For each signal, we consider 25 realizations of Bernoulli noise that randomly flips 30\% of the binary values of the signals, resulting in a total of 175 noisy graph signals.
With the error rate denoting the proportion of labels correctly recovered after the denoising process, Table~\ref{T:real_data} shows the error metric averaged over all the signals.
Moreover, since the graph is formed by several connected components, we report two results: the error rate when the whole graph is considered (fifth row) and the error rate when only the largest connected component is considered (sixth row).
It can be seen that the GCG yields the best performance in both cases.

\begin{table*}[t]
\begin{center}
\caption{Denoising error of several datasets with different types of random noise}\label{T:real_data}
\resizebox{1\columnwidth}{!}{
    \begin{tabular}{ c c || c c c c || c c c c c || c c}
    \toprule
    \begin{tabular}{c}
        DATASET  \\
        (METRIC)
    \end{tabular} & METHOD & BL & TV & LR & GTF & MED & GCNN & GAT & K-GAE & GUSC & GCG & GDec \\ \midrule 
    TEMPERATURE & Gaussian & 0.062  & 0.117  & 0.095  & 0.066 & 0.053 & 0.123 & 0.045 & 0.134 & 0.044 & 0.056 & \textbf{0.035} \\
     (NMSE)   & Uniform  & 0.063  & 0.117  & 0.094  & 0.064 & 0.053 & 0.118 & 0.047 & 0.136 & 0.049 & 0.057 & \textbf{0.036} \\ \midrule
    S\&P 500 & Gaussian & 0.350 & 0.238 & 0.231 & 0.239 & 0.319 & 0.252 & 0.199 & 0.354 & 0.203 & \textbf{0.188} & \textbf{0.188} \\
     (NMSE) & Uniform & 0.216 & 0.246 & 0.161 & 0.298 & 0.340 & \textbf{0.091} & 0.222 & 0.273 & 0.127 & \textbf{0.094} & 0.121 \\ \midrule
    CORA & Whole $\ccalG$ & 0.154 & 0.142 & 0.115 & 0.126 & 0.167 & 0.099 & 0.141 & 0.135 & 0.099 & \textbf{0.093} & 0.121 \\
     (ERROR RATE) & Conn. comp. & 0.151 & 0.141 & 0.105 & 0.116 & 0.165 & 0.093 & 0.139 & 0.135 & 0.094 & \textbf{0.088} & 0.125 \\ \bottomrule
    \end{tabular}
}
\end{center}
\end{table*}

\section{Conclusion}\label{sec:conclusion}
In this chapter, we faced the relevant task of graph-signal denoising. 
To approach this problem, we presented two overparametrized and untrained GNNs and provided theoretical guarantees on the denoising performance of both architectures when denoising $K$-bandlimited graph signals under some simplifying assumptions.
Moreover, we numerically illustrated that the proposed architectures are also capable of denoising graph signals in more general settings.  
The key difference between the two architectures resided in the linear transformation that incorporates the information encoded in the graph.
The GCG employs fixed (non-learnable) low-pass graph filters to model convolutions in the vertex domain, promoting smooth estimates.
On the other hand, the GDec relies on a nested collection of graph upsampling operators to progressively increase the input size, limiting the degrees of freedom of the architecture, and providing more robustness to noise.
In addition to the aforementioned analysis, we tested the validity of the proposed theorems and evaluated the performance of both architectures with real and synthetic datasets, showcasing a better performance than other classical and nonlinear methods for graph-signal denoising.
Finally, we consider extending the results from \cref{theorem_denoising_gcg} and \cref{theorem_denoising_gd} to more general scenarios as an interesting future line of work.

\newpage
\section{Appendix: Proof of Theorem~\ref{theorem_denoising_gcg}}\label{proof_theorem_denoising_gcg}
Let $\bbx_0$ be a $K$ bandlimited graph signal as described in \eqref{eq:bl_signals}, which is spanned by the $K$ leading eigenvectors of the graph $\bbV_K$, with $\tbx_0$ denoting its frequency representation.
Let $\bbQ$ be an orthonormal matrix that aligns the subspaces spanned by $\bbV_K$ and $\bbW_K$, and denote as $\barbx_0=\bbW_K\bbQ\tbx_0$ the bandlimited signal using $\bbW_K$ as basis and whose frequency response is also $\tbx_0$.
Note that $\barbx_0$ can be interpreted as recovering $\bbx_0$ from its frequency response using $\bbW_K$ in lieu of $\bbV_K$.
Also, note that $\bbx_0-\barbx_0=(\bbV_K-\bbW_K\bbQ)\tbx_0$ represents the error between the signal $\bbx_0$ and its approximation inside the subspace spanned by $\bbW_K$.
With these definitions in place, in~\cite[Th. 3]{heckel2019denoising} the authors showed that error when denoising a signal $\bbx = \bbx_0 + \bbn$ is bounded with probability at least $1-e^{-F^2}-\phi$ by
\begin{align}\label{eq:bound_theorem_eigs_J}
    & \|\bbx_0-f_{\bbTheta_{(t)}}(\bbZ|\ccalG)\|_2 \leq  \|\bbXi\bbx_0\|_2+\xi\|\bbx\|_2\\ \nonumber
    & +\sqrt{\textstyle \sum_{i=1}^N((1-\eta\sigma_i^2)^t-1)^2(\bbw_i^\top\bbn)^2},
\end{align}
with $\bbXi:=\bbW(\bbI_N-\eta\bbSigma^2)^t\bbW^\top$, and $\bbI_N$ the $N\times N$ identity matrix.
However, note that the bound provided for $\|\bbXi\bbx_0\|_2$ in \cite{heckel2019denoising} requires that $\bbx_0$ lies in the subspace spanned by $\bbW_K$, which is not the case.
As a result, we further bound this term as
\begin{align}\label{eq:bound_basis_mismatch}
    \|\bbXi  \bbx_0 \|_2 &= \|\bbXi(\bbx_0+\barbx_0-\barbx_0)\|_2 \nonumber\\
    & \stackrel{(i)}{=} \|\bbXi_K\barbx_0+\bbXi(\bbV_K-\bbW_K\bbQ)\tbx_0\|_2 \nonumber \\
    &\stackrel{(ii)}{\leq}\|\bbXi_K\barbx_0\|_2+\|\bbXi(\bbV_K-\bbW_K\bbQ)\tbx_0\|_2 \nonumber \\
    &\stackrel{(iii)}{\leq}\|\bbXi_K\|_2\|\barbx_0\|_2+ \|\bbXi\|_2\|\bbV_K-\bbW_K\bbQ\|_F\|\tbx_0\|_2 \nonumber \\
    &\stackrel{(iv)}{\leq}\left(\|\bbXi_K\|_2 + \delta\|\bbXi\|_2\right)\|\bbx_0\|_2 \nonumber \\
    &\stackrel{(v)}{=} \left((1-\eta\sigma_K^2)^t+\delta(1-\eta\sigma_N^2)^t\right)\|\bbx_0\|_2.
\end{align}
Here, $\bbXi_K:=\bbW_K(\bbI_K-\eta\bbSigma_K^2)^t\bbW_K^\top$, and $\bbSigma_K$ represents a diagonal matrix containing the first $K$ leading eigenvalues $\sigma_k$.
We have that $(i)$ follows from $\barbx_0$ being bandlimited in $\bbW_K$, so $\bbXi\barbx_0=\bbXi_K\barbx_0$.
Then, $(ii)$ follows from the triangle inequality, and $(iii)$ from the $\ell_2$ norm being submultiplicative and using the Frobenius norm as an upper bound for the $\ell_2$ norm.
In $(iv)$ we apply the result of Lemma~\ref{lemma_eigs_gcg}, which holds with probability at least $1-\epsilon$ because $N>N_{\epsilon,\delta}$, and the fact that, since both $\bbW_K$ and $\bbV_K$ are orthonormal matrices, we have that $\|\bbx_0\|_2=\|\barbx_0\|_2=\|\tbx_0\|_2$.
We obtain $(v)$ from the largest eigenvalues present in $\bbXi_K$ and $\bbXi$.

Finally, the proof concludes by combining \eqref{eq:bound_basis_mismatch} and \eqref{eq:bound_theorem_eigs_J}.

\section{Appendix: Proof of Lemma \ref{lemma_eigs_gcg}}\label{proof_lemma_eigs_gcg}
Define $\ccaltbA$ as $\ccaltbA:=\mathbb{E}[\tbA]=\mathbb{E}[\bbD]^{-\frac{1}{2}}\ccalbA\mathbb{E}[\bbD]^{-\frac{1}{2}}$ and let $\sqJacob$ be given by \eqref{eq:expected_jac}. 
Denote by $\ccalbH$ a graph filter defined as a polynomial of the expected adjacency matrix $\ccaltbA$, and let $\barsqJacob$ be the expected squared Jacobian using the graph filter $\ccalbH$, i.e.,
\begin{equation}\label{eq:expected_E}
    \barsqJacob = 0.5 \left( \mathbf{1} \mathbf{1}^\top - \frac{1}{\pi} \arccos(\ccalbC^{-1} \ccalbH^2 \ccalbC^{-1})\right) \circ \ccalbH^2,
\end{equation}
where $\ccalbC$ is the counterpart of $\bbC$ in \eqref{eq:expected_jac}, but using $\ccalbH$ instead of $\bbH$.
Given the following eigendecompositions $\tbA=\bbV\bbLambda\bbV^\top$, $\sqJacob=\bbW\bbSigma\bbW^\top$, $\ccaltbA=\barbV\barbLambda\barbV^\top$, and $\barsqJacob=\barbW\barbSigma\barbW^\top$, for arbitrary orthonormal matrices $\bbT$ and $\bbR$, we have that
\begin{equation}\label{eq:main_triangle_ineq}
\| \bbV_K - \bbW_K \bbQ \|_{\text{F}} \leq  \| \bbV_K - \bar{\bbV}_K \bbT \|_{\text{F}} + \| \bar{\bbV}_K \bbT - \bar{\bbW}_K \bbR \|_{\text{F}} + \| \bar{\bbW}_K \bbR - \bbW_K \bbQ \|_{\text{F}}.
\end{equation}
To prove the theorem, we bound the three terms on the right hand side of \eqref{eq:main_triangle_ineq}.

\vspace{1mm}
\noindent \emph{Bounding $\| \barbV_K \bbT - \barbW_K \bbR \|_{\text{F}}$.}
From the definition of an SBM, it follows that $\ccalbA=\mathbb{E}[\bbA]=\bbB\bbOmega\bbB^\top$, where $\bbB \in \{0,1\}^{N\times K}$ is an indicator matrix encoding the community to which each node belongs, and $\bbOmega$ is a $K \times K$ matrix encoding the link probability between the communities of the graph.
Therefore, $\ccaltbA$ and $\barsqJacob$ are both block matrices whose blocks coincide with the communities in the SBM.
This implies that the eigenvectors associated with non-zero eigenvalues must span the columns of $\bbB$. 
Hence, the leading eigenvectors must be related by an orthonormal transformation, from where it follows that, given $\bbT$, we can always find $\bbR$ such that
\begin{equation}\label{eq:first_bound}
\| \bar{\bbV}_K \bbT - \bar{\bbW}_K \bbR \|_{\text{F}} = 0.
\end{equation}

\vspace{1mm}
\noindent \emph{Bounding $\| \bbV_K - \bar{\bbV}_K \bbT \|_{\text{F}}$.} 
Under Ass.~\ref{A:sbm}, as it is shown in~\cite{schaub2020blind}, with probability at least $1-\rho$ we have that
\begin{equation}\label{eq:second_bound_aux}
    \| \tilde{\bbA} - \ccaltbA\| \leq 3 \sqrt{\frac{3 \ln(4N/\rho)}{\beta_{\min}}}.
\end{equation}
Then, we combine the concentration \eqref{eq:second_bound_aux} with the Davis-Kahan results~\cite[Th. 2]{yu2015useful}, which bound the distance between the subspaces spanned by the population eigenvectors ($\barbV_K$) and their sample version ($\bbV_K$).
Denoting as $\bar{\lambda}_i$ the $i$-th eigenvalue collected in $\barbLambda$, i.e. $\bar{\lambda}_i=\bar{\Lambda}_{ii}$, we obtain that there exists an orthonormal matrix $\bbT$ such that
\begin{equation}\label{eq:second_bound}
    \|\bbV_K-\barbV_K\bbT\|_F \leq \frac{\sqrt{8K}}{\bar{\lambda}_{K}-\bar{\lambda}_{K+1}}\|\tbA-\ccaltbA\|_F \leq \frac{3\sqrt{8K}}{\bar{\lambda}_{K}}\sqrt{\frac{3 \ln(4N/\rho)}{\beta_{\min}}},
\end{equation}
where we note that, since $\ccaltbA$ follows an SBM, then $\bar{\lambda}_i=0$ for all $i>K$.

Since $\beta_{\min} = \omega(\ln (N/\rho))$, we obtain that 
\begin{equation}\label{eq:second_bound_lim}
\| \bbV_K - \bar{\bbV}_K \bbT \|_{\text{F}} \to 0, \qquad \text{as } N \to \infty.
\end{equation}

\vspace{1mm}
\noindent \emph{Bounding $\| \bar{\bbW}_K \bbR - \bbW_K \bbQ \|_{\text{F}}$.}
If we show that $\|\sqJacob - \barsqJacob\| \to 0$ as $N \to \infty$, we can then mimic the procedure in~\eqref{eq:second_bound_aux} and~\eqref{eq:second_bound} to show that the difference between the leading $K$ eigenvectors of $\sqJacob$ and $\barsqJacob$ also vanishes. 
Hence, we are left to show that $\|\sqJacob - \barsqJacob\| \to 0$ as $N \to \infty$. From the definitions of $\sqJacob$ and $\barsqJacob$, it follows that
\begin{align}\label{eq:third_bound_aux}
\|\sqJacob - \barsqJacob\| \leq \; &0.5 \| \bbH^2 - \ccalbH^2 \| + \frac{1}{2\pi} \| \arccos(\ccalbC^{-1} \ccalbH^2 \ccalbC^{-1}) \circ \ccalbH^2 \\ \nonumber
&-\arccos(\bbC^{-1} \bbH^2 \bbC^{-1}) \circ \bbH^2 \|.
\end{align}
To bound the difference between the sampled and expected filters, we have that
\begin{align}\label{eq:third_bound_aux_2}
&\| \bbH^2 - \ccalbH^2 \| = \left\| \left(\sum_{\ell=0}^L h_\ell \tilde{\bbA}^\ell \right)^2 - \left(\sum_{\ell=0}^L h_\ell \ccaltbA^\ell\right)^2 \right\| \\ \nonumber
&= \left\| \sum_{\ell=0}^{2L} \alpha_\ell (\tilde{\bbA}^\ell - \ccaltbA^\ell) \right\| \leq \sum_{\ell=0}^{2L} \alpha_\ell \left\| \tilde{\bbA}^\ell - \ccaltbA^\ell\right\|,
\end{align}
for suitable coefficients $\alpha_\ell$ and recalling that $L=2$. Then,
we can then leverage the fact that $\|\tilde{\bbA}\| = \|\ccaltbA\| = 1$ to see that $\left\| \tilde{\bbA}^\ell - \ccaltbA^\ell\right\| \leq \ell \left\| \tilde{\bbA} - \ccaltbA\right\|$. We thus get that
\begin{equation}\label{eq:third_bound_aux_3}
\| \bbH^2 - \ccalbH^2 \| \leq \sum_{\ell=0}^{2L} \ell \alpha_\ell  \left\| \tilde{\bbA} - \ccaltbA\right\| \to 0, \quad \text{as } N \to \infty,
\end{equation}
where the limiting behavior follows from~\eqref{eq:second_bound_aux}. Finally, to bound the second term in~\eqref{eq:third_bound_aux}, we first note that the argument of the norm can be re-written as $\arccos(\bbC^{-1} \bbH^2 \bbC^{-1}) \circ (\ccalbH^2 - \bbH^2) +  (\arccos(\ccalbC^{-1} \ccalbH^2 \ccalbC^{-1}) - \arccos(\bbC^{-1} \bbH^2 \bbC^{-1})) \circ \ccalbH^2$. The limit in~\eqref{eq:third_bound_aux_3} ensures that the first of these two terms vanishes. 
Similarly, it follows that $\| \ccalbC^{-1} \ccalbH^2 \ccalbC^{-1} - \bbC^{-1} \bbH^2 \bbC^{-1} \| \to 0$ which, combined with the fact that $\arccos$ is a uniformly continuous function, we can always find an $N_{\delta'}$ such that $\| \arccos(\ccalbC^{-1} \ccalbH^2 \ccalbC^{-1}) - \arccos(\bbC^{-1} \bbH^2 \bbC^{-1})\| \leq \delta'$ with high probability. Combining this result with~\eqref{eq:third_bound_aux_3} and applying the Davis-Kahan Theorem as done to obtain~\eqref{eq:second_bound} we get that
\begin{equation}\label{eq:third_bound_lim}
\| \bar{\bbW}_K \bbR - \bbW_K \bbQ \|_{\text{F}} \to 0, \qquad \text{as } N \to \infty.
\end{equation}

\vspace{3mm}
Replacing~\eqref{eq:first_bound}, \eqref{eq:second_bound_lim}, and \eqref{eq:third_bound_lim} into~\eqref{eq:main_triangle_ineq} our result follows.

\section{Appendix: Proof of Lemma \ref{lemma_eigs_gd}}\label{proof_lemma_eigs_gd}

Recall that $\ccaltbA=\mathbb{E}[\tbA]$, and define  $\ccaltbH:=\gamma\bbI+(1-\gamma)\ccaltbA$ as the specific graph filter introduced in \cref{sec:upsampling_operator} as a polynomial of $\ccaltbA$.
Let $\sqJacob$ be given by equation \eqref{eq:expected_jac_dec}, and denote by $\barsqJacob$ the expected squared Jacobian using the graph filter $\ccalbH$, i.e.,
\begin{align}\label{eq:expected_E_dec}
    \barsqJacob = 0.5 \left( \mathbf{1} \mathbf{1}^\top - \frac{1}{\pi} \arccos(\ccaltbC^{-1}  \ccalbU\ccalbU^\top\ccaltbC^{-1})\right) \circ \ccalbU\ccalbU^\top
 \end{align}
with $\ccalbU=\ccaltbH\bbP$ and where the matrix $\ccaltbC$ is the counterpart of $\tbC$ in \eqref{eq:expected_jac_dec}, but using $\ccalbU$ in lieu of $\bbU$.
Given the eigendecompositions $\tbA=\bbV\bbLambda\bbV^\top$, $\sqJacob=\bbW\bbSigma\bbW^\top$, $\ccaltbA=\barbV\barbLambda\barbV^\top$, and $\barsqJacob=\barbW\barbSigma\barbW^\top$, analogously to Lemma~\ref{lemma_eigs_gcg}, we bound the difference between $\bbV_K$ and $\bbW_K$ by bounding the three terms in the right hand side of \eqref{eq:main_triangle_ineq}.

\vspace{1mm}
\noindent\textit{Bounding $\|\bar{\bbV}_K\bbT-\bar{\bbW}_K\bbR\|$}. 
We have that $\ccalbU\ccalbU^\top=\ccaltbH\bbP\bbP^\top\ccaltbH^\top$.
Since $\bbP$ is a binary matrix indicating to which community belongs each node, $\bbP\bbP^\top$ is a block diagonal matrix that captures the structure of the communities of the SBM.
Then, because $\ccaltbH$ is also block matrix with the same block pattern that the SBM, it turns out that the matrix $\barsqJacob$ is also a block matrix whose blocks coincide with the communities in the SBM graph.
Therefore, the rest of the bound is analogous to that in Lemma~\ref{lemma_eigs_gcg}.

\vspace{1mm}
\noindent\textit{Bounding $\|\bbV_K-\bar{\bbV}_K\bbT\|$}. The relation between $\bbA$ and $\ccalbA$ is the same as in Lemma~\ref{lemma_eigs_gcg} so the bound provided in \eqref{eq:second_bound_lim} holds.

\vspace{1mm}
\noindent\textit{Bounding $\|\bar{\bbW}_K\bbR-\bbW_K\bbQ\|$}. To derive this bound we show that $\|\bbU\bbU^\top-\ccalbU\ccalbU^\top\|=\|\tbH\bbP\bbP^\top\tbH^\top-\ccaltbH\bbP\bbP^\top\ccaltbH^\top\|$ goes to 0 as $N$ grows.
From \eqref{eq:third_bound_aux_3} we have that $\|\bbH-\ccalbH\|\to0$, as $N\to\infty$, and hence, $\|\tbH-\ccaltbH\|\to0$, as $N\to\infty$.
Therefore, it can be seen that
\begin{equation}
    \|\bbU\bbU^\top-\ccalbU\ccalbU^\top\|  \to 0, \quad \text{as } N \to \infty,
\end{equation}
with $\|\bbU\bbU^\top-\ccalbU\ccalbU^\top\|$ vanishing as $N$ grows. The remainder of the derivation of the bound is analogous to that for \eqref{eq:third_bound_lim}.
\chapter{Signal interpolation of diffused sparse signals}\label{chap:interpolation}
We continue with GSP setups where signals are corrupted by perturbations, looking at a scenario where the (network aggregated) data has missing (limited) values.
A natural and systematic approach in these setups is to model the observed values as a \emph{sampled} signal, and then, propose an \emph{interpolation} algorithm that reconstructs the original signal.
Key to the success of the interpolation is to exploit the structure of both the graph signals and the assumed sampling scheme employed to represent the missing values.
This is precisely the strategy put forth in our work in~\cite{rey2019sampling}, which is also the main subject of this chapter.

The outline of the chapter is as follows.
After briefly highlighting our contributions and remembering some concepts fundamental for this work in \cref{sec:sampling_intro}, \cref{sec:SamplingBandlimited} presents the main results, including the recovery schemes with known and unknown seeds, as well as unknown diffusing filter.
The effect of noise, the design of the sampling matrix, and the consideration of more than one sampling node are also briefly analyzed, and next, a gamut of illustrative numerical results, including some showcasing practical relevance in real datasets, are presented in \cref{sec:NumExper}.

\section{Introduction}\label{sec:sampling_intro}
In this chapter, we are concerned with the recovery of perturbed graph signals with a non-negligible proportion of missing values.
Furthermore, we assume that the observations are gathered using an AGSS scheme, so that the (sampling and reconstruction) results in~\cite{marques2015sampling,chen2016signal} can be leveraged.
In addition, because dealing with a large number of missing values (or, equivalently, having access to only a few sampled values) is a non-trivial problem, we assume that the original (unperturbed) signals studied in this framework are DSGS.
In this sense, recall that DSGS are a class of signals that can be modeled as a sparse graph signal, i.e., a signal that is zero everywhere except in a few seeding nodes, which is then diffused through the network via a GF.
First, we aim at reconstructing the signal assuming that the seeding nodes are known, but ultimately, we consider that the seeds are also unknown and our goal consists in \textit{reconstructing both} the unperturbed signal and the seeds from the available observations.
It is worth noticing that the AGSS is a sampling method for graph signals introduced in \cite{marques2015sampling} where nodes successively aggregate the values of the signal in their neighborhood. 
Under this setting, the recovery can be guaranteed even if observations are gathered at a single node and the sampling collection process can be implemented distributively.

\vspace{3mm}
\noindent
\textbf{Contributions.}
The main contribution of this chapter is the generalization of the AGSS, which was originally proposed for BGS, to DSGS. Additionally, we generalize existing results for support identification and blind deconvolution to setups where observations are collected using an AGSS. The algorithms presented in this chapter are relevant also for distributed estimation and source localization. Sampling and recovery using as input the signal value at a subset of nodes were discussed in \cite{anis2014towards,chen2015discrete,tsitsvero2016signals} for BGS, and in \cite{ramirez2017graph} for DSGS. Aggregation and space-shift sampling (a generalization of the AGSS considering multiple nodes) for BGS were investigated in \cite{marques2015sampling}. Blind deconvolution and filter identification for DSGS in a centralized setup with access to the full signal (no sampling) were investigated in \cite{segarra2016blind}.

\subsection{Successively aggregated graph signals}
Fundamental to the approach put forth in this chapter is the AGSS, which was introduced in \cref{sec:denoising_and_interpolation}, and the definition of successively aggregated graph signals $\bbz_i$.
Here, we briefly refresh these concepts and highlight their local nature.

Consider the signal $\bby = \bbH\bbx$, which is the diffusion of $\bbx$ across the graph $\ccalG$ as modeled by the GF $\bbH$ with coefficients $\bbh$, and let us define the $r$-th shifted signal $\bbz^{(r)}:=\bbS^r\bbx$ and further define the $N\times N$ matrix
\begin{equation}\label{eq:shift_matrix_sampling}
    \bbZ\,:=\,[\bbz^{(0)},\bbz^{(1)},\ldots,\bbz^{(N-1)}]
    \,=\,[\bbx,\bbS\bbx,\ldots,\bbS^{N-1}\bbx],
\end{equation}
Intuitively, $\bbZ$ groups the signal $\bbx$ and the result of the first $N-1$ applications of the GSO.
It is then clear that: a) the output of the graph filter can be found as $\bby = \bbH \bbx = \bbZ \bbh$, with $\bbh$ being zero-padded if $R<N$; b) since $\bbS$ is a local operator, the $r$-th column of $\bbZ$ can be found locally from the $(r-1)$-th diffusion step as $\bbz^{(r)}=\bbS\bbz^{(r-1)}$; and c) as one moves right-wise in \eqref{eq:shift_matrix_sampling}, the columns of $\bbZ$ can be viewed as the evolution of a process which is diffused linearly according to the local structure codified in $\bbS$.

Then, in the AGSS we consider the $i$-th node fixed and sample the signals observed at this node as the GSO $\bbS$ is applied recursively.
In other words, as the signal has been locally diffused according to $\bbS$, as described in \eqref{eq:shift_matrix_sampling}.
Then, with $\bbe_i$ denoting the $i$-th canonical vector and leveraging the definition of the matrix $\bbZ$, the successively aggregated signal observed by node $i$ is the $i$-th row of \textbf{$\bbZ$}, that is $\bbz_i:=(\bbe_i^T\bbZ)^T=\bbZ^T\bbe_i$.
Under this setting, sampling is reduced to the selection of $Q$ out of the $N$ elements of $\bbz_i$, which can be done as follows
\begin{equation}
    \bbz_{\ccalQ,i} \ :=\ \bbPi_{\ccalQ}\bbz_i\ =\ \bbPi_{\ccalQ} \left(\bbZ^T\bbe_i\right).
\end{equation}

\section{Aggregation Sampling of DSGS}\label{sec:SamplingBandlimited}
This section considers the problem of reconstructing DSGS from \textit{local} observations obtained from AGSS. In contrast with BGS, the interest when dealing with DSGS can be either in recovering $\bbx$ (signal reconstruction) or in recovering $\bbs$ (distributed source localization and estimation). For that reason, we start by analyzing the case where $\bbH=\bbI$ and $\bbx=\bbs$. After that, we discuss the more general case where the nodes sample the signal $\bbx=\bbH\bbs$, both for known and unknown $\bbH$. A collaborative setting where more than one node collects the samples closes the section. 
To help readability, a summary of the setups considered is provided in Table \ref{T:setting_summary}.

\begin{table}[htpb]
	\centering
		\begin{tabular}{|c|c|c|c|c|c|}
			\hline
			\textbf{\begin{tabular}[c]{@{}c@{}}Scenario \end{tabular}} & \textbf{\begin{tabular}[c]{@{}c@{}}Sparse \\ support\end{tabular}} & \textbf{$\bbH$} & \textbf{$\bbPi_\ccalQ$} & \textbf{\begin{tabular}[c]{@{}c@{}}Obs.\\ matrix\end{tabular}} & \textbf{Equation} \\ \hline
			Sparse recovery & Known & \begin{tabular}[c]{@{}c@{}}Known\\ $\bbH=\bbI$\end{tabular} & Fixed & $\bbTheta$ & \eqref{eq:optimal_estimator} \\ \hline
			Active sampling & Known & \begin{tabular}[c]{@{}c@{}}Known\\ $\bbH=\bbI$\end{tabular} & Flexible & $\bbTheta$ & \eqref{eq:optimal_sampling} \\ \hline
			\begin{tabular}[c]{@{}c@{}}Blind sparse\\ recovery\end{tabular} & Unknown & \begin{tabular}[c]{@{}c@{}}Known\\ $\bbH=\bbI$\end{tabular} & Fixed & $\bbTheta$ & \eqref{eq:sampled_shifted_signal_unkown} \\ \hline
			Diffused recovery & Known & Known & Fixed & $\bbXi$ & 	\begin{tabular}[c]{@{}c@{}}\eqref{eq:optimal_estimator}, replacing  \\ $\bbTheta$  with $\bbXi$  \end{tabular} \\ \hline
			\begin{tabular}[c]{@{}c@{}}Diffused active\\ recovery\end{tabular} & Known & Known & Flexible & $\bbXi$ & \begin{tabular}[c]{@{}c@{}}\eqref{eq:optimal_sampling}, replacing  \\ $\bbTheta$ with $\bbXi$  \end{tabular}  \\ \hline
			\begin{tabular}[c]{@{}c@{}}Blind diffused\\ recovery\end{tabular} & Unknown & Known & Fixed & $\bbXi$ & \begin{tabular}[c]{@{}c@{}}\eqref{eq:sampled_shifted_signal_unkown}, replacing  \\ $\bbTheta$ with $\bbXi$  \end{tabular}  \\ \hline
			\begin{tabular}[c]{@{}c@{}}Blind\\ deconvolution\end{tabular} & Unknown & Unknown & Fixed & $\bbPhi$ & \eqref{eq:optimal_blind_deconv} \\ \hline
		\end{tabular}
	\caption{Compilation of the settings considered in \cref{sec:SamplingBandlimited}.}
	\label{T:setting_summary}
\end{table}

\subsection{Aggregating the sparse input}
A critical aspect to analyze the recovery $\bbx=\bbs$ from its aggregated samples is to write the relationship between the sampled signal $\bbz_{\ccalQ}$ and the sparse input $\bbs$. For the ease of exposition, we do that in the form of a lemma. 
\begin{lemma} Given the GSO $\bbS\in \reals^{N\times N}$, the sampling matrix $\bbPi_{\ccalQ}$ and a sparse input $\bbx=\bbs$, the shifted signal $\bbz_i$ and its sampled version $\bbz_{\ccalQ,i}$ can be expressed as
	\begin{equation}\label{eq:shifted_signal}
	    \bbz_{\ccalQ,i}=\bbPi_{\ccalQ}\bbz_i\;\;\text{and}\;\;\bbz_i=\boldsymbol{\Psi}^T\diag(\boldsymbol{\upsilon}_i)\bbU \bbs:=\bbTheta_i\bbs.
	\end{equation}
\end{lemma}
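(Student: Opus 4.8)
The plan is to handle the two claimed identities separately. The identity $\bbz_{\ccalQ,i}=\bbPi_{\ccalQ}\bbz_i$ is nothing more than the definition of the sampled aggregated signal: it selects $Q$ of the $N$ entries of $\bbz_i$ through the selection matrix $\bbPi_{\ccalQ}$, exactly as in the construction of $\bbz_{\ccalQ,i}$ given above. Hence all of the content of the lemma lies in the factorization $\bbz_i=\bbPsi^\top\diag(\bbupsilon_i)\bbU\bbs$, which I would establish by an entrywise computation.

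First I would recall that $\bbz_i=\bbZ^\top\bbe_i$ is the $i$-th row of the shift matrix $\bbZ$ in \eqref{eq:shift_matrix_sampling}, so that, using $\bbx=\bbs$, its $(r{+}1)$-th component is the value seen at node $i$ after $r$ diffusion steps,
\[
	[\bbz_i]_{r+1}=\bbe_i^\top\bbS^r\bbs,\qquad r=0,\ldots,N-1.
\]
Substituting the eigendecomposition $\bbS=\bbV\bbLambda\bbV^{-1}$ gives $\bbS^r=\bbV\bbLambda^r\bbV^{-1}$, and therefore $[\bbz_i]_{r+1}=(\bbe_i^\top\bbV)\,\bbLambda^r\,(\bbV^{-1}\bbs)$.

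The key identifications come next. The row vector $\bbe_i^\top\bbV$ is the $i$-th row of the eigenvector matrix, i.e.\ $\bbupsilon_i^\top$ with $\bbupsilon_i:=[V_{i,1},\ldots,V_{i,N}]^\top$; and $\tbs:=\bbV^{-1}\bbs$ is the GFT of the sparse input, so that $\bbU:=\bbV^{-1}$ plays the role of the GFT matrix. With these substitutions the component collapses to the scalar sum $[\bbz_i]_{r+1}=\sum_{k=1}^{N}\lambda_k^{r}\,\upsilon_{i,k}\,\tilde{s}_k$. The last step is to read this sum as a matrix--vector product: the power $\lambda_k^{r}=\Lambda_{kk}^{r}$ is precisely the $(r{+}1,k)$ entry of $\bbPsi^\top$ (recall $\Psi_{kj}=\Lambda_{kk}^{j-1}$ from \eqref{eq:graph_filter}), the weight $\upsilon_{i,k}$ is supplied by $\diag(\bbupsilon_i)$ acting on $\tbs=\bbU\bbs$, and stacking over $r=0,\ldots,N-1$ yields $\bbz_i=\bbPsi^\top\diag(\bbupsilon_i)\,\bbU\bbs=:\bbTheta_i\bbs$.

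Since this is essentially a bookkeeping exercise, I do not expect a genuine obstacle; the only care required is in the index conventions. The main point to keep straight is that \emph{sampling at node $i$ picks out the $i$-th row of $\bbZ$}, and it is this row selection that turns the $i$-th row of $\bbV$ into the diagonal weighting $\diag(\bbupsilon_i)$ rather than an ordinary matrix factor. One must likewise align the diffusion index $r$ with the column index $j=r{+}1$ of the Vandermonde matrix so that $\bbPsi^\top$ genuinely carries the eigenvalue powers. Finally, I would note that the argument uses the full $N\times N$ Vandermonde matrix (equivalently $R=N$) and invokes only the diagonalizability of $\bbS$ (automatic for undirected $\ccalG$ and assumed in the directed case), so that no bandwidth or sparsity structure of $\bbs$ enters the factorization itself.
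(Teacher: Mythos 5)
Your proof is correct and is exactly the derivation the paper alludes to: the paper does not spell out the argument, stating only that \eqref{eq:shifted_signal} follows by substituting \eqref{eq:all_shifts_matrix} into the signal model (or by minor modification of Lemma~2 of \cite{marques2015sampling}), and your entrywise computation $[\bbz_i]_{r+1}=\bbe_i^\top\bbS^r\bbs=\bbupsilon_i^\top\bbLambda^r\bbU\bbs$ followed by the identification of $\lambda_k^r$ with the $(r{+}1,k)$ entry of $\bbPsi^\top$ is precisely that substitution carried out. The index bookkeeping and the remark that only diagonalizability of $\bbS$ (with $R=N$) is needed are both accurate.
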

For the expression above, note that we have defined the observation matrix $\bbTheta_i:=\boldsymbol{\Psi}^T\diag(\boldsymbol{\upsilon}_i)\bbU$, where we recall that $\bbPsi$ is the GFT for filters, $\bbU := \bbV^{-1}$ is the GFT for signals, and $\bbupsilon_i=[V_{i,1},...,V_{i,N}]^T$. While nontrivial, \eqref{eq:shifted_signal} can be derived after substituting \eqref{eq:all_shifts_matrix} into \eqref{eq:diff_sparse_signals_def}, or by a minor modification of the proof in \cite[Lemma 2]{marques2015sampling}. Interestingly, \eqref{eq:shifted_signal} reveals that $\bbz_{\ccalQ,i}$ depends on how strongly the seeds express each of the frequencies (represented by  $\bbU\bbs$), how strongly the sampling node senses each of the frequencies (represented by the frequency pattern $\bbupsilon_i$) and the spectral effect of the diffusion (powers of the GSO) captured in the Vandermode matrix $\bbPsi$.

To proceed with the recovery of the sparse input, let us denote as $\ccalS$ the support of $\bbs$, define $\bbs_\ccalS:=\bbPi_\ccalS\bbs \in \reals^S$ as the vector collecting the non-zero values of $\bbs$, and suppose for now that this support is known. Then, we have that
\begin{equation}\label{eq:aggre_samples_from_sparse_input} \bbz_{\ccalQ,i}=\bbPi_{\ccalQ}\bbTheta_i \bbs= \bbPi_{\ccalQ}\bbTheta_i \bbPi_{\ccalS}^T\bbs_\ccalS.
\end{equation} 
This setup would correspond to the case where the indexes of the seeding nodes (identity of the influencers or location of the sources) is known, but their particular values are not. Clearly, for the recovery problem in \eqref{eq:aggre_samples_from_sparse_input} being identifiable, a necessary condition is $Q$, the number of observations in $\bbz_{\ccalQ,i}$, being no less than $S$, the number of unknowns in $\bbs_\ccalS$. Consider first the extreme case $Q=S$. It is then clear that the sparse signal $\hbs^{(i)}$, with the superscript $(i)$ denoting the index of the sampling node, can be recovered as
\begin{equation}
\hbs^{(i)} = \bbPi_\ccalS^T \hbs_\ccalS^{(i)}\;\;\text{and}\;\;
\hbs_\ccalS^{(i)} = (\bbPi_{\ccalQ}\bbTheta_i \bbPi_{\ccalS}^T)^{-1} \bbz_{\ccalQ,i},
\end{equation} 
provided that the inverse exists, which will depend on the particular set of rows $\ccalQ$ and columns $\ccalS$. For the standard case of $Q>S$ and the observations being  corrupted by additive noise, the \emph{observed} signal $\bby_i$ is given by $\bby_i=\bbz_i+\bbw_i$. Assuming that the noise $\mathbf{w}_i$ is zero-mean, independent of $\bbx$, and with known covariance matrix $\mathbf{R}_{w}^{(i)}:=\EE[\mathbf{w}_i\mathbf{w}_i^H]$, the best linear unbiased estimator of the sparse inputs is \cite{kay1993fundamentals}
\begin{align}
    \!\!\!\!\hbs^{(i)} \!=\! \bbPi_\ccalS^T \hbs_\ccalS^{(i)}\;\;\text{and}\;\;
    \hbs_\ccalS^{(i)} \!=\! (\bbM_\ccalQ)^\dagger (\bbR_{w,\ccalQ}^{(i)})^{-1/2}\bby_{\ccalQ,i},\;\text{with}\label{eq:optimal_estimator}
    \end{align}
    \begin{align}
    \!\!\bbR_{w,\ccalQ}^{(i)}\!:=\!\bbPi_{\ccalQ}\bbR_w^{(i)}\bbPi_{\ccalQ}^T\;\,\text{and}\;\, \bbM_\ccalQ\!:=\!(\bbR_{w,\ccalQ}^{(i)})\!^{-1/2}\bbPi_\ccalQ \bbTheta_i \bbPi_{\ccalS}^T,\nonumber
\end{align} 
provided that $\bbR_{w,\ccalQ}^{(i)}$ is non-singular. Note that if the noise is Gaussian, the estimator in \eqref{eq:optimal_estimator} attains the Cram\'er-Rao bound. 

Using \eqref{eq:optimal_estimator}, the error covariance matrix is \cite{kay1993fundamentals}
\begin{eqnarray}
\bbR_e^{(i)}:=\EE[(\bbs_\ccalS-\hbs_\ccalS^{(i)})(\bbs_\ccalS-\hbs_\ccalS^{(i)})^T]=(\bbM_\ccalQ^T\bbM_\ccalQ )^{\dagger},\label{eq:cov_error_est_time_noise}
\end{eqnarray}
which depends on the noise model, the spectrum of the GSO, the seed nodes, the node taking the observations, and the sample-selection scheme adopted. \eqref{eq:cov_error_est_time_noise}  can then be used to assess the performance of the estimation. The particular error metric depends on the application at hand \cite{pukelsheim1993optimal}. The most commonly used is the \acrfull{mse}, which corresponds to minimizing $\mathrm{trace}(\mathbf{R}_e^{(i)})$, with other popular metrics including the spectral norm $\lambda_{\max}(\mathbf{R}_e^{(i)})$ and the log determinant $\log \det(\mathbf{R}_e^{(i)})$.

\subsubsection{\textbf{Active sampling}} Critical for the error performance is the design of a good sampling matrix. This requires solving first the optimal scheme for a fixed node $i$ (which is a relevant problem by itself) and then selecting the best node (provided that the system operating conditions yield such a possibility). Except for trivial cases, optimizing the sampling set is NP-hard. In particular, when the interest is in the MSE, the optimal sampling matrix $\bbPi_\ccalQ^{(i)*}$ corresponds to
\begin{flalign}
    &\min_{\bbPi_{\ccalQ}} \quad \mathrm{trace}\Big((\bbPi_{\ccalS}\bbTheta_i^T\bbPi_\ccalQ^T(\bbPi_{\ccalQ}\bbR_w^{(i)}\bbPi_{\ccalQ}^T)^{-1}\bbPi_\ccalQ \bbTheta_i \bbPi_{\ccalS}^T)^{\dagger}\Big) \nonumber \\
    &\,\,\mathrm{s. t.}  \;\;\Pi_{\ccalQ,ij}\in \{0,1\},\;\;{\textstyle \sum_j} \Pi_{\ccalQ,ij}=1,\;\; \|\bbPi_\ccalQ\|_0=Q,  \label{eq:optimal_sampling}
\end{flalign}

which is a fractional high-order polynomial minimization over binary variables. While convex relaxations to approximate \eqref{eq:optimal_sampling} are available, greedy schemes for related problems have been shown to work well \cite{chamon2018greedysampling}, and are advocated here.

\subsubsection{\textbf{Blind (sparse) recovery}}
In many relevant applications (e.g., those dealing with inverse problems) the seeds in $\ccalS$ are unknown. In that case, the noiseless recovery requires solving
\begin{align}\label{eq:sampled_shifted_signal_unkown}
    \bbs^*:=\arg\min_{\bbs} \;
    ||\bbs||_0 \;\;\mathrm{s. t.} \;\; \bbz_{\ccalQ,i}= \bbPi_\ccalQ\bbTheta_i \bbs, 
\end{align}
and then setting $\hbx^{(i)}=\hbs^{(i)}=\bbs^*$. Leveraging results from sparse recovery, it can be shown that identifiability needs $Q\geq 2S$ and the observation matrix $\bbTheta_i$ to be full spark \cite{donoho2003spark}. The problem in \eqref{eq:sampled_shifted_signal_unkown}, which does not account for noise,
is NP-hard due to the presence of the $\ell_0$ norm. A standard approach is to relax the equality with a LS cost and relax the $\ell_0$ norm with a (weighted) $\ell_1$ regularizer. This yields
$$\bbs_1^{(i)}:=\arg \min_{\bbs}  \|{(\bbR_{w,\ccalQ}^{(i)})}\!^{-1/2}\big(\bby_{\ccalQ,i} - \bbPi_\ccalQ \bbTheta_i \bbs\big) \|_2^2 + \gamma ||\bbs||_1,$$
where $(\mathbf{R}_{w,\ccalQ}^{(i)})^{-1/2}$ accounts for the colored noise and $\gamma$ is the regularization parameter.

\subsection{Aggregating the diffused sparse input}
We now analyze the recovery of $\bbx=\bbH\bbs$ from its aggregated samples, assuming first that the filter $\bbH$ is known. The main difference with respect to the previous case is that now the relation between $\bbs$ and $\bbz_{\ccalQ,i}$ is 
\begin{equation}\label{eq:shifted_signal_diffused}
\bbz_{\ccalQ,i}=\bbPi_{\ccalQ}\bbz_i\;\;\text{and}\;\;\bbz_i=\boldsymbol{\Psi}^T\diag(\boldsymbol{\upsilon}_i\circ\tbh)\bbU \bbs:=\boldsymbol{\Xi}_i\bbs,
\end{equation}
where $\bbXi_i:=\boldsymbol{\Psi}^T\diag(\boldsymbol{\upsilon}_i\circ\tbh)\bbU$ and $\circ$ denotes the entry-wise product. Compared with \eqref{eq:shifted_signal}, we notice that the observations depend not only on the frequency pattern of the sampling node $\bbupsilon_i$, but also on the frequency response of the diffusing filter $\tbh$. Intuitively, nodes with a frequency pattern more aligned with that of the diffusing filter (so that $|\mathrm{det}(\diag(\boldsymbol{\upsilon}_i\circ\tbh))|$ is large) are more likely to give rise to a better reconstruction in the presence of noise.

Apart from replacing the observation matrix $\bbTheta_i$ with $\bbXi_i$, another important difference stems from the particular error to minimize. Since in this case $\bbx$ and $\bbs$ are different, depending on the application the focus can be on estimating $\bbs$ and minimizing the MSE associated with $\bbR_{e,s}^{(i)}:=\EE[(\bbs_\ccalS-\hbs_\ccalS^{(i)})(\bbs_\ccalS-\hbs_\ccalS^{(i)})^T]$ or on estimating $\bbx$ and minimizing the MSE associated with $\bbR_{e,x}^{(i)}:=\EE[(\bbx-\hbx^{(i)})(\bbx-\hbx^{(i)})^T]$. This requires to premultiply (or not) the error terms in the objectives of the optimizations presented in the previous sections. Such a dichotomy was not an issue for BGS since the frequency coefficients are a byproduct and the ultimate  goal was to recover $\bbx$. By contrast, for DSGS both $\bbx$ (signal reconstruction) and $\bbs$ (source localization) are meaningful on their own.

\subsection{Blind deconvolution} 
There may be scenarios where the diffusing filter $\bbH=\sum_{r=0}^{R-1}h_r\bbS^r$ is unknown. The recovery problem in this case is considerably more challenging, but it can be tackled provided that $R$, the order of the diffusing filter, is sufficiently small. To be specific, after some manipulations, the expression $\bbx=\bbH \bbs$ with $\bbH$ being a graph filter can be written as $\bbx = \bbV (\bbPsi^T \odot \bbU^T )^T \text{vec}(\bbs \bbh^T)$, where $\odot$ stands for the Khatri-Rao product (cf. \cite{segarra2016blind}).
We can then relate the samples $\bbz_{\ccalQ,i}$ with the unknown $\bbs$ and $\bbh$ as 
\begin{equation}
\label{eq:bd_agss_dsgs}
\bbz_{\ccalQ,i} = \bbPi_{\ccalQ}\boldsymbol{\Psi}^T\diag(\boldsymbol{\upsilon}_i) (\bbPsi^T \odot \bbU^T )^T \text{vec}(\bbs \bbh^T),
\end{equation}
which is a system of $Q$ \textit{bilinea}r equations. If the support $\ccalS$ is known, the number of unknowns is $R+S$. If it is not, the number is $R+N$ and the constraint $\|\bbs \|_0\leq S$ must be added, further complicating the problem. The resultant problem can be handled by an alternating scheme that iterates between optimizing $\bbs$ given $\bbh$ and optimizing $\bbh$ for the new $\bbs$. More sophisticated approaches include lifting techniques that define the lifted variable $\bbSigma := \bbs\bbh^T$, the observation matrix $\bbPhi_i:=$ $\boldsymbol{\Psi}^T\diag(\boldsymbol{\upsilon}_i) (\bbPsi^T\! \odot \! \bbU^T )^T\!$, and find an approximation to
\begin{equation}\label{eq:optimal_blind_deconv}
\min_{\bbSigma} \|\bbz_{\ccalQ,i} - \bbPi_{\ccalQ}\bbPhi_i \text{vec}(\bbSigma)\|_2^2 + \gamma_1\text{rank}(\bbSigma) + \gamma_2 \|\bbSigma\|_{2,0},
\end{equation}
where $\gamma_1$ and $\gamma_2$ are regularization parameters and $\|\bbSigma\|_{2,0}$ is defined as the number of non-zero rows of $\bbSigma$. The estimates $\hbs^{(i)}$ and $\hbh^{(i)}$ are then found the main left and right singular vectors of $\bbSigma^*$, which are subject to an inherent scaling ambiguity. See \cite{segarra2016blind} for further justification and suitable relaxations.

In the case of \acrfull{ss}, the expression analogous to \eqref{eq:bd_agss_dsgs} is
\begin{equation}
    \label{eq:bd_ss_dsgs}
    \bbx_{\ccalQ} = \bbPi_{\ccalQ} \bbV (\bbPsi^T \odot \bbU^T )^T \text{vec}(\bbs \bbh^T),
\end{equation}
which is also a bilinear system similar to the previous one. Note that, as in the BGS case (cf. \cref{sec:models_signals}), the role of  $\bbV$ in SS is taken by $\boldsymbol{\Psi}^T\diag(\boldsymbol{\upsilon}_i)$ in AGSS. 

\subsection{Space-shift sampling of diffused sparse signals}\label{Ss:conventional sampling}
In many setups, access to more than one sampling node is available. This is useful to robustify the recovery and reduce the number of required samples per node, which is convenient because the conditioning number of the Vandermonde matrix $\bbPsi$ (one of the factors in $\bbTheta_i$) worsens as the samples per node increase. The resultant sampling scheme is referred to as space-shift sampling \cite{marques2015sampling}. To particularize it to the setup at hand, define the vectorized version of $\mathbf{Z}$ as $\bar{\bbz}$ and then the  $N^2 \times N$ matrix $\bar{\bbUpsilon}:=[\diag(\bbupsilon_1), \ldots, \diag(\boldsymbol{\upsilon}_N)]^T\diag(\tbh)$. With these conventions, $\bar{\bbz}$ can be written as
$\bar{\bbz}=(\mathbf{I}\otimes\bbPsi^T) \bar{\bbUpsilon} \bbU \bbs$, where $\otimes$ stands for the Kronecker product. The sampled version in this case is given as $\bar{\bbz}_{\ccalQ}=\bar{\bbPi}_\ccalQ \bar{\bbz}$, where $\bar{\bbPi}_\ccalQ$ is a selection matrix of size $Q \times N^2$. The results in the previous sections can be applied to this case as well provided that $\bbTheta_i$ and $\bbPi_\ccalQ$ are replaced with $\bar{\bbTheta}:=(\mathbf{I}\otimes\bbPsi^T) \bar{\bbUpsilon} \bbU$ and $\bar{\bbPi}_\ccalQ$.

\begin{figure}[t]
	\centering
	\includegraphics[width=.5\textwidth]{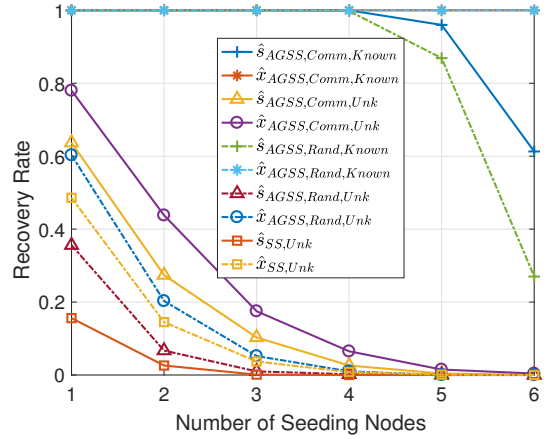}
	\caption{Recovery rate of DSGS in SBM graphs. Signals are recovered via the $\ell_1$-norm relaxation using the Laplacian as the GSO. 500 simulations with different graphs: $N=50,B=5,Q=8,\bbR_{w,\ccalQ}^{(i)}=10^{-5}\bbI$.} 
	\label{fig:node_influence}
\end{figure}

\begin{figure}[t]
	\centering
	\includegraphics[width=.5\textwidth]{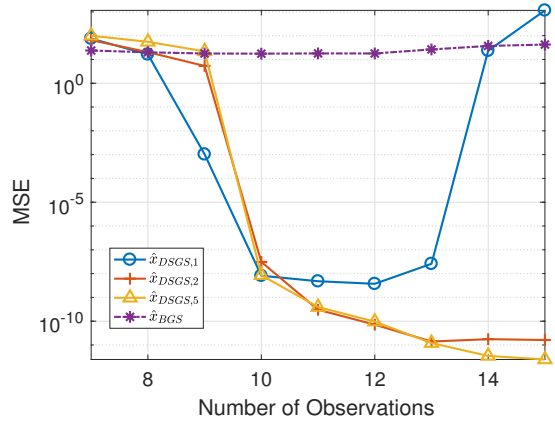}
	\caption{Median MSE of recovered signals defined over 95 real-world graphs using a blind diffused recovery scheme. }
	\label{fig:real_data}
\end{figure}

\begin{figure}[t]
	\centering
	\includegraphics[width=.5\textwidth]{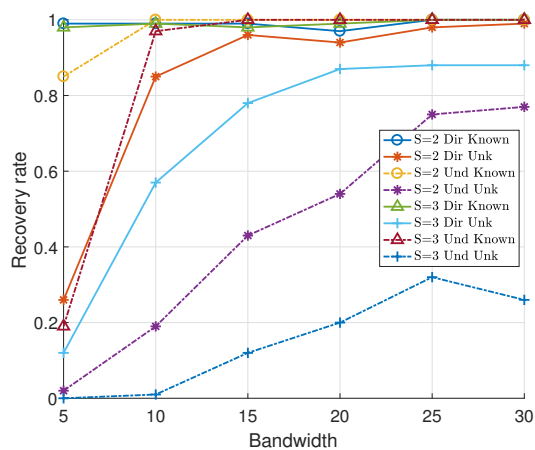}
	\caption{Recovery rate of DSGS in directed and undirected SBM graphs for varying filter bandwidth length. Signals are recovered via the pseudoinverse (known $\bbPi_\mathcal{S}$) and the $\ell_1$-norm relaxation (unknown $\bbPi_\mathcal{S}$) using the adjacency matrix as the GSO. Directed graphs with non-diagonalizable adjacency matrices are discarded. 100 graph-realizations for each type of graph, selecting the sampling node $i$ from all the $N$ nodes as the one leading to the smallest $\ell_2$-norm of $(\bbs-\hat{\bbs})$. The remaining parameters are: $N=30,B=2,p_{b}=0.25,p_{bb'}=0.05,Q=4,\bbR_{w,\ccalQ}^{(i)}=10^{-5}\bbI$.}
	\label{fig:bandwidth_directivity}
\end{figure}

\begin{figure}[t]
	\centering
	\includegraphics[width=.5\textwidth]{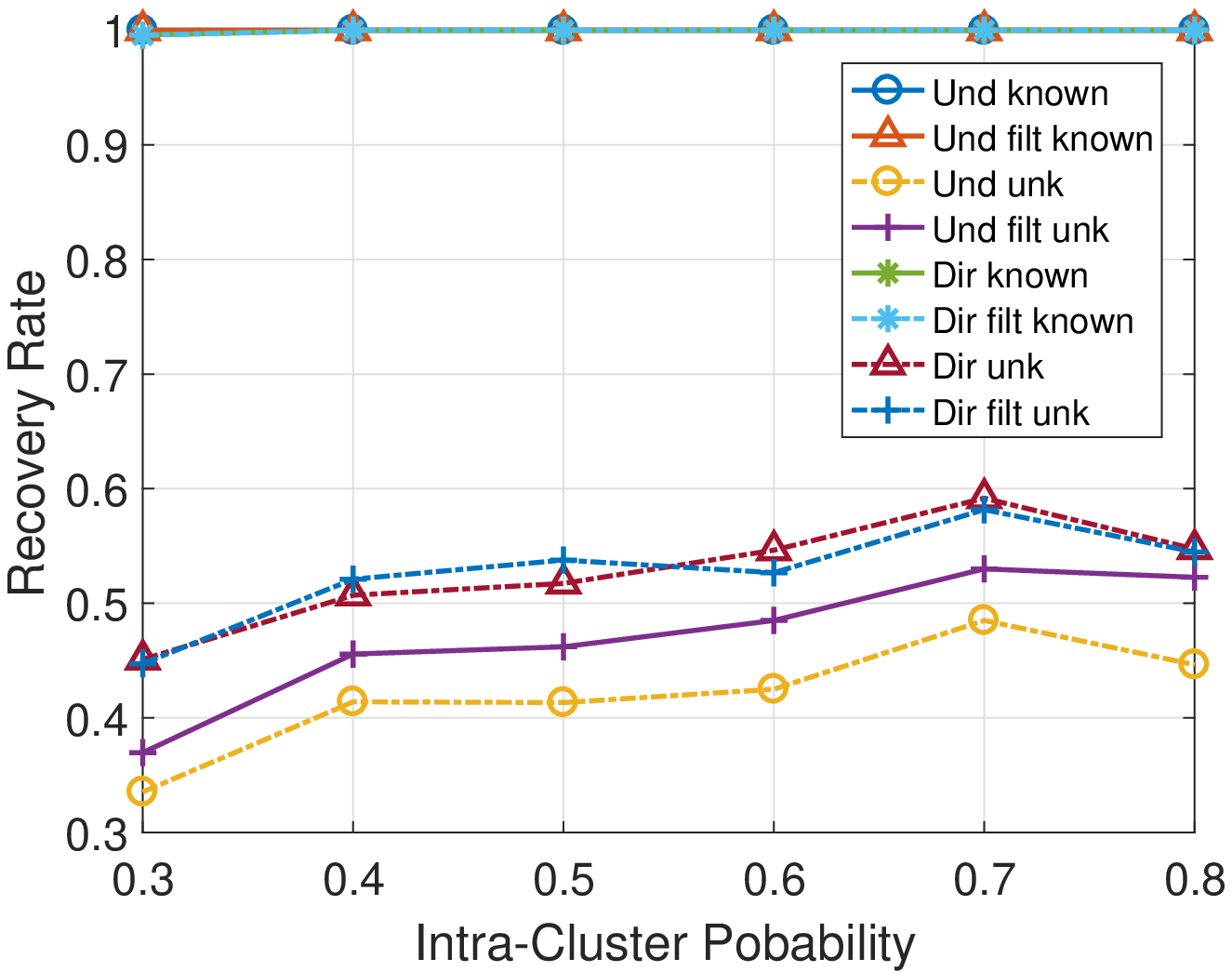}
	\caption{Recovery rate of sparse and DSGS in directed and undirected SBM graphs for different probabilities of inter-cluster and intra-cluster links. Signals are recovered via the pseudoinverse (known $\bbPi_\mathcal{S}$) and the $\ell_1$-norm relaxation (unknown $\bbPi_\mathcal{S}$) using the adjacency matrix as the GSO. The tested intra-cluster probability are $[0.3,0.4,0.5,0.6, 0.7,0.8]$, as shown in the horizontal axis. The corresponding  inter-cluster probabilities are $[0.1,0.15,0.2, 0.25,0.3]$. For each point in the figure, 100 graph-realizations are considered and, for each of those realizations, the $N=30$ nodes are tested, so that the rates shown correspond to averages across $3000$ trials. The remaining parameters are: $B=3,Q=9,\bbR_{w,\ccalQ}^{(i)}=10^{-6}\bbI$.}
	\label{fig:density_influence}
\end{figure}

\section{Numerical experiments}\label{sec:NumExper}

Short simulations to illustrate and gain intuition about some of the results presented are shown here. 

\vspace{3mm}\noindent \textbf{Test case 1.} First, we consider a  SBM graph with $N$ nodes and $B$ communities with $N_b\!=\!N/B$ nodes each \cite{decelle11}. Edges exist with probability $p_{bb}=0.4$ if the incident nodes are in the same community and with probability $p_{bb'}=0.1$ if they are not. The remaining parameters are given in the caption of \cref{fig:node_influence}. The index of the seeding nodes is chosen uniformly at random and the seed value is drawn from a \acrfull{zmuvg}. The filter taps have length $R=6$ and each of them is drawn from a ZMUVG. \cref{fig:node_influence} depicts the recovery rate, defined as the proportion of simulations for which the seeds are correctly identified and the $\ell_2$-norm of the error is less than $0.1$, as the number of seeds $S$ increases. All sampling nodes are considered, and the median error is reported. 
The $10$ scenarios (lines) in the figure consider if: 1) $\bbPi_\mathcal{S}$ is known or not (``Known''/``Unk''); 2) the sampling node $i$ is in the same community than one of the seeds or is a random node (``Comm''/``Rand''); 3) the sampled signal is either $\bbs$ or  $\bbx$ (``$\hat{s}$''/``$\hat{x}$''); and 4) the sampling scheme is AGSS or SS (``AGSS''/``SS'') . The results confirm that recovery is harder as $S$ increases, that blind schemes are not able to recover the signal if $S\!>\!Q/2\!=\!4$, and that knowledge of $\bbPi_\mathcal{S}$ notably facilitates the recovery. We also observe that if the other two criteria are fixed, AGSS always outperform SS, confirming that AGSSs are more robust and less sensitive to the sampling configuration \cite{marques2015sampling}. Similarly, ``Comm''  is always better than ``Rand''. This is not surprising since the (diffused) seed values reach the sampling node faster if the node belongs to the same community. This also explains why recovering $\bbx$ seems to be always easier than recovering (non-diffused) signal $\bbs$.
 
\vspace{.05cm}\noindent \textbf{Test case 2.} \cref{fig:real_data} tests our schemes in the D\&D protein structure database \cite{dobson2003distinguishing}, where nodes account for amino acids, links capture similarity, and signals are the expression level of the amino acids. We assume that the data can be accurately modeled as DSGS and try to recover the full signal following the blind diffused recovery scheme (label ``DSGS,1''), its space-shift counterpart with 2 and 5 nodes (``DSGS,2'', ``DSGS,5''), and AGSS modeling the data not as DSGS but as bandlimited (``BGS''). The median error of all graphs is reported and all sampling nodes are considered, selecting the 25th error percentile. The main observations are: 1) BGS yields the worst performance, pointing out that the DSGS model is a good fit for the information in the D\&D database;  and 2) for the ``DSGS,1'' the median MSE increases when the number of observations is high. This stems from the conditioning number of $\bbPsi$ as explained in \cref{Ss:conventional sampling}. In contrast,  the ``DSGS,2'' and ``DSGS,5'' schemes are more robust. 
 
\vspace{.05cm}\noindent \textbf{Test case 3.} We test our schemes in the ETEX dataset \cite{nodop1998field}, which contains $\{\bby_t\}_{t=0}^{29}$ graph signals whose nodes correspond to different locations and $t$ represents time. We use as GSO the adjacency of the geographical graph \cite{thanou2017learning}, the seed is set as $\bbs = \bby_0$, and the signal to be sampled and recovered is $\bbx=\bby_t$ for all $t>0$. Using $Q=16$ samples and the same approach than in the second test case, we run the experiment for $29$ different signals (one per $t,t>0$), obtaining MSE of \textbf{$3 \cdot 10^{-5}$} and \textbf{$1.5 \cdot10^{-5}$} for ``DSGS,1'' and ``DSGS,2'' respectively.  
 
\vspace{.05cm}\noindent \textbf{Test case 4.} In \cref{fig:bandwidth_directivity} we analyze the impact on the recovery of two factors: a) the bandwidth of the diffusing filter and b) the directivity of the supporting graph. To this end, let us consider a bandpass filter $\tilde{\bbh}$ whose non-zero band consists of $W$ elements randomly drawn from a ZMUVG. Moreover, we consider two types of SBM graphs: one where links are directed (denoted as ``Dir'' in the figure) and antoher one with undirected links (``Und''). For this test case, the adjacency matrix is chosen as GSO. The remaining parameters are detailed in the caption of \cref{fig:bandwidth_directivity}. The plotted results reveal that successful interpolation from DSGS samples is more amenable in directed than undirected graphs. The additional information about the edge direction contained in the GSO, central during both filtering and the AGSS, helps identifying the seeds in $\mathcal{S}$. Furthermore, lower $W$ hinders the diffusion of the seeds, making the recovery harder. Indeed, in the extreme case of $W=0$ the factor $\diag(\boldsymbol{\upsilon}_i\circ\tbh)$ in \eqref{eq:shifted_signal_diffused} renders the observations zero.

\vspace{.05cm}\noindent \textbf{Test case 5.} \cref{fig:density_influence} studies the impact of the density of the graph on the recoverability of the signals. In this experiment, both the intra-cluster and the inter-cluster probability vary in the same direction, as explained in the caption of the figure. The results show that the recovery rate tends to improve when the graphs are denser. With a higher link probability, the chances that any node is close to the seeds increases, so that they can access the information related with the non-zero elements of the sampled signal. As a result, the fraction of nodes able to recover the signal increases. In addition, the plot confirms that directed graphs (``Dir'') have a better recovery rate than undirected graphs (``Und''), which is consistent with the results presented for the test case 3.

\section{Conclusion}\label{sec:conclusion_sampling}
Here, we considered the presence of missing values in the (aggregated) observed data and approached the reconstruction of the original signal in the context of sampling and interpolation of graph signals.
Assuming that the observed values were gathered through an AGSS, first, we proposed a convex optimization problem to interpolate sparse signals with either known or unknown support of the seeding nodes.
Later on, we moved on to the more general case where the signals were DSGS and contemplated the signal interpolation when the diffusing filter $\bbH$ was known, and then, the blind sparse recovery case where $\bbH$ was unknown.
Finally, we studied the case where the aggregated observations were collected at more than one sampling node, and we evaluated the proposed interpolation algorithms over synthetic and real-world datasets. 

\chapter{Robust graph filter identification}\label{chap:robust_filter_id}
In this chapter, we shift our attention from perturbations involving the observed graph signals to perturbations involving the edges of the observed graph, a prevalent type of uncertainty that is especially relevant when the networks are inferred from a set of nodal observations.
Nonetheless, we may also encounter imperfections in the topology when networks are physical entities due to errors in the observation process.
Regardless of their source, the presence of errors in the observed topology is critical for most GSP applications, so it is essential that they are properly accounted for in order to build a robust GSP framework.
In this sense, we note that the polynomial definition of GFs renders them particularly sensitive to the presence of these perturbations, since even errors affecting only a few edges can lead to large discrepancies when high-order polynomials are involved.
As a result, this chapter picks up our work in~\cite{rey2021robust,rey2022robust} and addresses the relevant problem of GF identification from input-output pairs from a robust perspective.
Even though we also consider the presence of noise in the observed graph signals, the proposed analysis is primarily concerned with perturbations involving the edges of the graph.
Next, we provide some highlights about the method presented in this chapter and a brief summary of the resulting contributions.

\section{Introduction}
On top of its theoretical interest, the task of GF identification is practically relevant to, e.g., understanding the dynamics of network diffusion processes \cite{segarra2017optimal,segarra2016blind,djuric2018cooperative}, as well as explaining the structure of real-world datasets~\cite{rey2019sampling,zhu2020estimating,he2022detecting}.
Motivated by this, the work described in this chapter investigates the problem of estimating a GF from input-output signal pairs assuming that both the signals and the supporting graph have errors.
The proposed approach is formulated in the vertex domain, avoiding the numerical instability of computing large polynomials and, at the same time, bypassing the challenges associated with robust \emph{spectral} graph theory.
To that end, we recast the robust estimation as a joint optimization problem where the GF identification objective is augmented with a graph-denoising regularizer, so that, on top of the desired GF, we also obtain an enhanced estimate of the supporting graph.
The joint formulation leads to a non-convex bi-convex optimization problem, for which a provably-convergent efficient (alternating minimization) algorithm able to find an approximate solution is developed. Furthermore, to address scenarios where multiple GFs are present (e.g., when dealing with vector \acrfull{ar} spatio-temporal processes or in setups where nodes collect multi-feature vector measurements), we generalize our framework so that multiple GFs, all defined over the same graph, are jointly identified. 

Despite their theoretical and practical relevance, the number of robust GSP works is limited, due in part to the challenges emanating from the presence of graph perturbations.
Initial works modeling the influence of perturbation in the spectrum of the graph Laplacian~\cite{ceci2020graph}, and proposing a graphon-based perturbation model~\cite{miettinen2019modelling} were previously commented on in \cref{sec:perturbations_gsp}.
More recently, \cite{ceci2020_semtls} combines SEM with TLS to jointly infer the GF and the perturbations when the observed data is explained by a SEM.
A different robust alternative is presented in~\cite{natali2020topology}, where the support of the graph is assumed to be known and the goal is to estimate the weights of the network topology and the coefficients of the GF.
The resultant problem is non-convex and the authors adopt a \acrfull{scp} approach to solve it.
Finally, the presence of perturbations has also been considered in non-linear GSP tasks.
An alternative definition of GFs robust to perturbations is proposed in~\cite{tenorio2021robust}, and the transferability of GFs when employed in graph neural networks is studied in~\cite{levie2019transferability,levie2021transferability,ruiz2021graph}.

\vspace{3mm}
\noindent\textbf{Contributions and outline.}
After analyzing the influence of edge perturbations in polynomial GFs and stating the robust GF identification problem in \cref{sec:perturbed_graph_filters}, our main contributions are:
\begin{enumerate}
    \item We formulate a non-convex optimization problem to jointly estimate the graph and the GF, develop an alternating optimization algorithm to solve it, and prove its convergence to a stationary point (\cref{sec:rfi}).
    \item We consider a generalization where several GFs are jointly estimated by exploiting the fact that they are polynomials of the same GSO (\cref{sec:rfi_joint}).
    \item We propose an efficient implementation of the GF identification algorithm to handle graphs with a large number of nodes (\cref{sec:efficient_rfi}). 
\end{enumerate}
The effectiveness of the proposed algorithms is evaluated numerically in \cref{sec:experiments_filter_id}, and some concluding remarks are provided in \cref{sec:conclusion_filter_id}. Last but not least, while we focus on GF identification from input-output pairs, the approach put forth in this chapter can be generalized to other GSP tasks, which is a research path we plan to pursue in the near future.

\section{GF identification with imperfect graph knowledge}\label{sec:perturbed_graph_filters}
This section introduces and discusses the problem of estimating a GF $\bbH=\sum_{r=0}^{N-1}h_r\bbS^r$ from noisy input-output signal pairs $(\bbX\in \reals^{N \times M}, \bbY \in \reals^{N \times M})$ assuming that we have access to an \emph{imperfect} GSO $\barbS \in \reals^{N \times N}$, which can be modeled as 
\begin{equation}\label{eq:additive_GSO_perturbation_model}
\barbS = \bbS + \bbDelta,
\end{equation}
where $\bbS \in \reals^{N \times N}$ represents the true GSO and $\bbDelta \in \reals^{N \times N}$ is a \emph{perturbation matrix}. 
Before discussing models for the perturbation matrix, we find illustrative to demonstrate the impact of $\bbDelta$ on the GSP problem at hand.

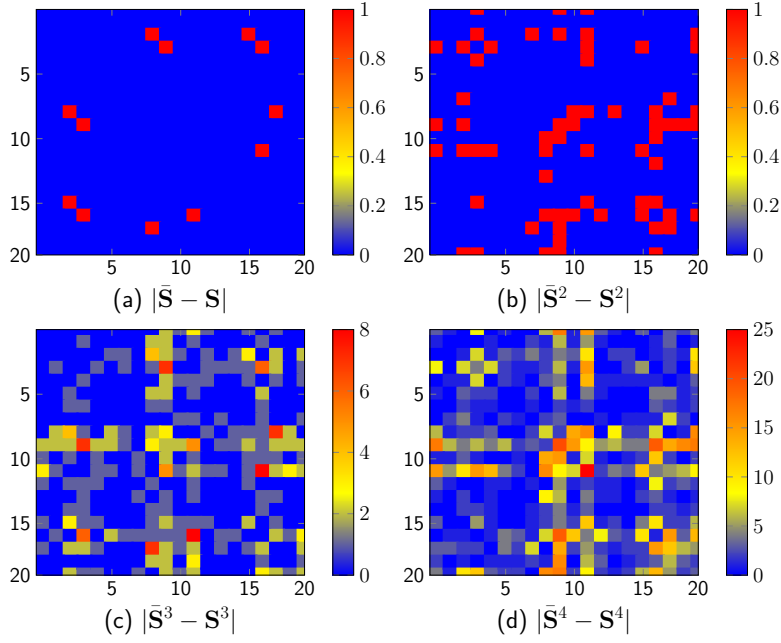
\begin{figure}[!t]
	\centering
	\begin{tikzpicture}[baseline,scale=.55]
\begin{groupplot}[
    table/col sep=space,
    width=8cm,
    height=7.5cm,
    group style={group size=2 by 2,
        horizontal sep=3cm,
        vertical sep=1.8cm,},
    enlargelimits=false,
    colorbar,
    xtick={5,10,15,19},
    xticklabels={5,10,15,20},
    ytick={5,10,15,19},
    yticklabels={5,10,15,20},
    x label style={font=\LARGE},
    tick label style={font=\Large}
    ]

    \pgfplotstableread{data/A-Apert_k1.csv}\matrixA
    \pgfplotstableread{data/A-Apert_k2.csv}\matrixB
    \pgfplotstableread{data/A-Apert_k3.csv}\matrixC
    \pgfplotstableread{data/A-Apert_k4.csv}\matrixD
    
\nextgroupplot[xlabel={(a) $|\barbS - \bbS|$},]
    \addplot [
        matrix plot,
        point meta=explicit,
        mesh/cols=20,  
    ] table [meta=value] {\matrixA};

\nextgroupplot[xlabel={(b) $|\barbS^2 - \bbS^2|$}]
    \addplot [
        matrix plot,
        point meta=explicit,
        mesh/cols=20,  
    ] table [meta=value] {\matrixB};

\nextgroupplot[xlabel={(c) $|\barbS^3 - \bbS^3|$}]
    \addplot [
        matrix plot,
        point meta=explicit,
        mesh/cols=20,  
    ] table [meta=value] {\matrixC};
    
\nextgroupplot[xlabel={(d) $|\barbS^4 - \bbS^4|$},
    colorbar style={ytick={0,5,...,25}}]
    \addplot [
        matrix plot,
        point meta=explicit,
        mesh/cols=20,  
    ] table [meta=value] {\matrixD};

\end{groupplot}
\end{tikzpicture}
	\caption{Absolute error for different powers of the matrix $\bbS$ and its perturbed version $\barbS$. The true GSO is the adjacency matrix of an Erd\H{o}s-Rényi graph with link probability of 0.15, and $\barbS$ is perturbed by creating and destroying links independently with a probability of $0.05$.}\label{fig:pert_err_example}
\end{figure}

As pointed out in the introduction, the presence of uncertainties in the topology of $\ccalG$ is particularly relevant when dealing with GFs.
Indeed, due to the polynomial definition of $\bbH$, 
even small perturbations can lead to significant errors when $\barbS$ (and not $\bbS$) is used as the true GSO.
To see this more clearly, \cref{fig:pert_err_example} provides an example that illustrates how the errors encoded in $\bbDelta$ propagate for different matrix powers, demonstrating that the discrepancies between $\barbS^r$ and $\bbS^r$ increase swiftly as the power $r$ grows.
More rigorously, let $C$ be a positive constant such that $\|\bbS\|\leq C$ and $\|\barbS\| \leq C$, and define 
$\barbH:=\sum_{r=0}^{N-1}h_r\barbS^{r}$. 
Then, the error generated by the perturbations is upper-bounded by
\begin{equation}\label{eq:err_H_pert}
    \|\barbH - \bbH\| \leq \sum_{r=1}^{N-1}|h_r|\|\barbS^r - \bbS^r\| \leq \sum_{r=1}^{N-1}|h_r|rC^{r-1}\|\bbDelta\|,
\end{equation}
where the last inequality follows from~\cite[Lemma 3]{levie2019transferability}.
In words, the maximum difference between the true $\bbH$ and the perturbed $\barbH$ increases \emph{exponentially} with the degree of the GF.

From the previous discussion, it is not surprising that the imperfect knowledge of the graph topology is also relevant when estimating the filter coefficients.
In fact, ignoring the errors in $\bbDelta$ and attempting to estimate $\bbh$ solving \eqref{eq:solving_fi} when $\barbS$ is used in lieu of the true (unknown) $\bbS$ leads to a poor solution, as we illustrate numerically in \cref{sec:experiments_filter_id}.
Motivated by this, we approach the GF identification problem from a robust perspective by taking into account the imperfect knowledge of the GSO.
The resultant robust estimation task is formally stated next.

\begin{problem}\label{P:problem_statement}
    Let $\ccalG$ be a graph with $N$ nodes, let $\bbS \in \reals^{N \times N}$ be the true (unknown) GSO associated with $\ccalG$, and let $\barbS \in \reals^{N \times N}$ be the perturbed (observed) GSO. Moreover, let $\bbX\in \reals^{N \times M}$ and $\bbY \in \reals^{N \times M}$ be a pair of matrices collecting $M$ observed input and output signals defined over $\ccalG$ and related by the model in \eqref{eq:rfi_observ_model}.
    Our goal is to use the triplet $(\bbX,\bbY,\barbS)$ to: i) learn the GF $\bbH$ that best fits the model in \eqref{eq:rfi_observ_model} and ii) recover an enhanced estimation of $\bbS$.
    To that end, we make the following assumptions: \\
    (\textbf{AS1}) $\bbH$ is a polynomial of $\bbS$ [cf. \eqref{eq:graph_filter}]. \\
    (\textbf{AS2}) $\bbS$ and $\barbS$ are close according to some metric $d(\bbS,\barbS)$, i.e., the observed perturbations are ``small'' in some sense.
\end{problem}

On top of the previous two assumptions, we also consider that the norm of the noise observation matrix $\bbW$ in \eqref{eq:rfi_observ_model} is small, which is a workhorse assumption in this type of problems. Similar to standard GF identification approaches, (\textbf{AS1}) limits the degrees of freedom of the linear operator in \eqref{eq:rfi_observ_model}. However, the fact of the true $\bbS$ being unknown adds  uncertainty to the problem and, as a result, additional signal observations are required to achieve an identification performance comparable to the one obtained when $\barbS=\bbS$. Regarding the recovery of the true GSO, (\textbf{AS2}) accounts for the hypothesis that $\barbS$ is a perturbed observation of $\bbS$ and, hence, matrices $\bbS$ and $\barbS$ are not extremely different. Note that this guarantees that ``some'' information about the true GSO is available, so that (\textbf{AS1}) can be effectively leveraged. While not exploited in our formulation, additional assumptions constraining the GSO could also be incorporated into the problem. Finally, the metric $d(\cdot, \cdot)$ employed to quantify the similarity between $\bbS$ and $\barbS$ should depend on the model for the perturbation $\bbDelta$, a subject that is briefly discussed next.  

\subsection{Modeling graph perturbations}\label{sec:graph_pert}
The development and analysis of graph perturbation models that combine practical relevance and analytical tractability constitutes an interesting yet challenging open line of research~\cite{miettinen2018graph,miettinen2019modelling}.
Due to its flexibility and tractability, here we consider an additive perturbation model [cf. \eqref{eq:additive_GSO_perturbation_model}], so that the focus is constrained to understanding the structural (statistical) properties of matrix  $\bbDelta = \barbS - \bbS$.

Consider first the case where perturbations only \emph{create or destroy links} independently.
If $\ccalG$ is an \emph{unweighted graph}, a simple approach is to consider perturbations modeled as independent Bernoulli variables with possibly different creation/destruction probabilities.
In this case, the entries of $\bbDelta$ would be
\begin{equation}
    \Delta_{ij} = \left\{\hspace{-2mm} \begin{array}{rl}
        1 &  \mathrm{if} \; \mathrm{link} \; (i,j) \; \mathrm{is} \; \mathrm{created},\\
        -1 & \mathrm{if} \; \mathrm{link} \; (i,j) \; \mathrm{is} \; \mathrm{destroyed},\\
        0 & \mathrm{otherwise}.
    \end{array} \right.
\end{equation}
Since $\bbDelta$ models the creation and destruction of links, it is worth noting that $\Delta_{ij} = 1$ only if $S_{ij} = 0$ and $\Delta_{ij} = -1$ only if $S_{ij} = 1$.
In the more general case of $\ccalG$ being a \emph{weighted graph}, $\Delta_{ij} = -S_{ij}$ destroys an existing link while $\Delta_{ij} = z$ creates a new link.
Here, $z$ is a random variable sampled from a particular distribution (typically mimicking the weight distribution of the true $\bbS$).
When facing this type of perturbations, a suitable distance function is the $\ell_0$ norm
\begin{equation}
    d(\bbS,\barbS) = \| \bbS - \barbS \|_0,
\end{equation}
with the $\ell_1$ norm $\| \bbS - \barbS \|_1$ being a prudent convex relaxation.

Alternatively, rather than creating or destroying links, perturbations may represent uncertainty over the edge weights.
This entails the support of matrix $\bbDelta$ matching that of $\bbS$ and $\barbS$, and the non-zero entries of $\bbDelta$ being sampled from a distribution that models the observation noise. 
For example, if the noise is zero-mean, Gaussian and white, it holds that $\Delta_{ij} \sim \ccalN(0,\sigma^2)$ when $\bbS_{ij} \neq 0$ and $\Delta_{ij}=0$ when $\bbS_{ij} = 0$.
Under this setting, an appropriate distance metric is given by
\begin{equation}
    d(\bbS, \barbS) = \| \bbS_\ccalE - \barbS_\ccalE\|_2^2,
\end{equation}
where $\bbS_\ccalE$ and $\barbS_\ccalE$ only select the non-zero entries (edges) in $\bbS$ and $\barbS$.
Additionally, one can have setups where the two types of perturbations are simultaneously present.
That is, perturbations may create and destroy links while the actual value of the existing links is also uncertain.
In such a case, a combination of $\ell_1$  and $\ell_2$ norms like in elastic nets~\cite{zou2005regularization} is adequate.  

The models previously described only consider the perturbation of edges in an independent fashion.
However, there may be scenarios where the perturbations are correlated.
Consider for example a communication network.
If the power supply of a node stalls, the signal-to-noise ratio of all its links will be poor, and hence, links involving that node will be more likely to fail.
Perturbations dependent across links can be modeled by means of a multivariate correlated Bernoulli distribution, an Ising model, or more sophisticated random graph models~\cite{dai2013multivariate}.
When prior information about the dependence of the perturbations is available, it can be incorporated into the function $d(\bbS,\barbS)$ to better extract the information encoded in $\barbS$.

\section{Robust GF identification}\label{sec:rfi}
This section presents the optimization problem and the proposed algorithm to estimate $\bbH$ and $\bbS$ under the setting described in \cref{P:problem_statement}.
Given the matrices $\bbX$, $\bbY$, and $\barbS$, we approach the robust GF identification task by means of the following non-convex optimization
\begin{alignat}{2}\label{eq:rfi_nonconvex}
    \!&\! \hbH, \hbS = \argmin_{\bbH, \bbS} && \;\; \|\bbY-\bbH\bbX\|_F^2 + \lambda d(\bbS, \barbS) + \beta \|\bbS\|_0 \nonumber \\
    \!&\! \hspace{1.2cm}\mathrm{\;\;s. \;to: } && \;\; \bbS \in \ccalS, \;\; \bbS\bbH = \bbH\bbS,
\end{alignat}
where $\mathrm{s.\;to}$ stands for $\mathrm{subject\;to}$. The first term in the objective promotes the linear input-output relation in \eqref{eq:rfi_observ_model}, encouraging the norm of $\bbW=\bbY - \bbH\bbX$ to be small. The use of the Frobenius norm is well-justified when the observation noise is Gaussian and white, but other types of noise could be accommodated by using a different norm. The second term incorporates the assumption (\textbf{AS2}) as a regularizer to obtain an estimate $\hbS$ that is related to the given GSO $\barbS$. The $\ell_0$ norm in the third term accounts for the fact of $\bbS$ being sparse. Clearly, if additional information about $\bbS$ is available, it can be incorporated into \eqref{eq:rfi_nonconvex}, either as a regularizer (e.g., a statistical prior quantifying the log-likelihood of a class of GSOs) or as a constraint that \emph{must} be satisfied (e.g., the GSO being symmetric). The latter is indeed the role of $\bbS\in\ccalS$ in \eqref{eq:rfi_observ_model}, with $\ccalS$ representing a (desired) family of GSOs such as the set of adjacency matrices with no self-loops ($\ccalS$ is the set of matrices with non-negative entries whose diagonal entries are zero) or the set of combinatorial graph Laplacians (matrices with non-positive off-diagonal entries and zero row-sum). Finally, the (key) constraint $\bbS\bbH = \bbH\bbS$ captures the fact of $\bbH$ being a polynomial of $\bbS$ and not of $\barbS$ (\textbf{AS1}). Note first that the constraint is pertinent, if $\bbH$ is a polynomial of $\bbS$, then $\bbH$ and $\bbS$ have the same eigenvectors and, as a result, their product commutes \cite{segarra2017optimal}. More importantly for the GF-identification at  hand, when the GSO is perfectly known the model $\bbH=h_0\bbI+h_1\bbS+...+h_{N-1}\bbS^{N-1}$ is linear in the unknown $\bbh$. As a result, a formulation that estimates $\bbh$ directly (as carried out in classical non-robust approaches) is well-motivated. However, when both $\bbh$ and $\bbS$ are unknown, the model $\bbH=h_0\bbI+h_1\bbS+...+h_{N-1}\bbS^{N-1}$ is highly non-linear in $\bbS$, challenging the development of a tractable solution that jointly estimates $\bbh$ and $\bbS$. Our formulation bypasses this problem by recasting the optimization variables as $\bbH$ and $\bbS$, leading to the (more tractable) bilinear constraint in \eqref{eq:rfi_nonconvex}. Nonetheless, if estimating $\bbh$ is the ultimate goal, this can be readily achieved from $\hbH$ and $\hbS$ as
\begin{equation}
    \hbh = \Big(\vvec(\bbI), \vvec(\hbS),..., \vvec(\hbS^{N-1})\Big)^\dagger \vvec(\hbH).
\end{equation}

The approach put forth in \eqref{eq:rfi_nonconvex} has two main advantages. 
First, while most works formulate the recovery of the GF in the spectral domain, our formulation operates in the vertex domain.
Working on the spectral domain would imply finding the Vandermonde GFT matrix $\bbPsi$. Since this matrix involves high-order polynomials of the eigenvalues of the GSO, it is also prone to numerical instability and error accumulation~\cite{djuric2018cooperative}.
Even if approaches that bypass this issue by estimating the graph-frequency response $\tbh = \bbPsi\bbh$ in lieu of $\bbh$ are adopted, the estimation would still be challenging since they require computing the eigenvectors $\bbV$, which are known to be highly sensitive to errors in the GSO (especially those associated with small eigenvalues)\cite{segarra2015stability,ceci2020graph}. On top of this, characterizing the spectral errors and incorporating those to the optimization is not a trivial task. 
The second advantage emanates from casting the true GSO $\bbS$ as an explicit optimization variable. As already explained, this approach is robust to error accumulation and facilitates the incorporation of the (additive) effect of the perturbations into the optimization. An additional benefit is that we obtain a denoised version (enhanced estimation) of the true GSO, which can be practically relevant in most real-world applications. 

In a nutshell, in the context of robust GF identification, choosing a formulation that: i)~works entirely in the vertex domain, ii)~considers $\bbS$ as an explicit optimization variable, and iii)~codifies the GF structure via the constraint $\bbH\bbS=\bbS\bbH$, exhibits multiple advantages. However, it must be noted that the number of optimization variables is larger than in classical approaches (adding computational complexity) and that the bilinear filtering constraint $\bbH\bbS=\bbS\bbH$, while more tractable than its polynomial counterpart, is still non-convex. Alternatives to deal with these issues are discussed in later sections.

\subsection{Alternating minimization for robust GF identification}
This section presents a systematic efficient approach to find an approximate solution to \eqref{eq:rfi_nonconvex}. Since the goal is to design specific algorithms, from this section onwards, we particularize the GSO distance to $d(\bbS,\barbS)=\|\bbS-\barbS\|_0$, so that, according to the discussion in \cref{sec:graph_pert}, the focus is on graph perturbations that create and destroy links. Apart from its practical relevance, the reason for choosing the $\ell_0$ norm as a distance is also motivated by its more intricate (challenging) structure. Indeed, the algorithms presented next can be easily adapted to (more tractable) distances associated with alternative perturbation models. 
Having clarified this, the main obstacle to solving \eqref{eq:rfi_nonconvex} is its lack of convexity, which emanates from two different \emph{sources}: \emph{(s1)} the $\ell_0$ norms in the objective, and \emph{(s2)} the bilinear constraint involving $\bbS$ and $\bbH$.
Next, we explain the strategy adopted to deal with them and find a solution to \eqref{eq:rfi_nonconvex} by solving a succession of convex problems.
\begin{itemize}
\item Regarding the $\ell_0$ norm in \emph{(s1)}, a workhorse approach is to replace it with its convex surrogate, the $\ell_1$ norm. However, it is possible to exploit more sophisticated (non-convex) alternatives that typically lead to sparser solutions.
The one chosen in this work is to approximate the $\ell_0$ norm of a generic matrix $\bbZ \in \reals^{I \times J}$ using the logarithmic penalty
\begin{equation}\label{eq:log_sparsity}
     \| \bbZ \|_0 \approx r_\delta(\bbZ) := \sum_{i=1}^I\sum_{j=1}^J \log(|Z_{ij}| + \delta),
\end{equation}
where $\delta$ is a small positive constant~\cite{candes2008enhancing}.
The non-convexity of the logarithm can be handled efficiently by relying on a \acrfull{mm} approach~\cite{sun2016majorization}, which considers an iterative linear approximation leading to an iterative re-weighted $\ell_1$ norm.
It is worth noting that, since we will consider an iterative algorithm to deal with the bilinearity of \eqref{eq:rfi_nonconvex}, the iterative nature of the re-weighted $\ell_1$ norm will not impose a significant computational burden.
Details on the exact form of this sparse regularizer will be provided soon, when describing the estimation of $\bbS$. 
\item To deal with the bilinear terms in \emph{(s2)}, we adopt an alternating optimization approach \cite{gorski2007biconvex} resulting in an iterative algorithm where the optimization variables $\bbH$ and $\bbS$ are updated in \emph{two separate iterative steps}.
At each step, we optimize over one of the optimization variables with the other remaining fixed, resulting in two simpler problems that can be solved efficiently. The details about the specific steps will be provided shortly.
\end{itemize}

Taking into account these considerations, the first task to implement our approach is to rewrite the problem in \eqref{eq:rfi_nonconvex} as
\begin{equation}\label{eq:rfi_nonconvex_rew}
    \min_{\bbS\in \ccalS, \bbH} \|\bbY \!-\!\bbH\bbX\|_F^2 \!+\! \lambda r_{\delta_1}\!(\bbS \!-\! \barbS) \!+\! \beta r_{\delta_2}\!(\bbS) \!+\! \gamma \|\bbS\bbH \!-\! \bbH\bbS\|_F^2,
\end{equation}
where we recall that $r_{\delta}(\cdot)$ was introduced in \eqref{eq:log_sparsity}. Note that: i) the logarithmic penalty has also been used to promote sparsity in the term $\bbS - \barbS$ since we selected the $\ell_0$ norm as the distance between $\bbS$ and $\barbS$, and ii) the constraint $\bbS\bbH = \bbH\bbS$ was relaxed and rewritten as a regularizer, a formulation more amenable to an alternating optimization approach.

The next task is to solve \eqref{eq:rfi_nonconvex_rew} by means of an iterative algorithm that blends techniques from alternating optimization and MM approaches.
Specifically, for a maximum of $t_{max}$ iterations, we run the following two steps at each iteration $t = 0,...,t_{max}-1$. 

\vspace{2mm}
\noindent \textbf{Step 1: GF Identification.}
We estimate the block of $N^2$ variables collected in $\bbH$ while the current estimate of the GSO, denoted as $\bbS^{(t)}$, remains fixed.
This results in the convex optimization problem
\begin{alignat}{2}\label{eq:step1_filterid}
\!\!&\bbH^{(t+1)} =  \arg \min_{\bbH} && \|\bbY\!-\!\bbH\bbX\|_F^2 \!+\! \gamma \|\bbS^{(t)}\bbH \!-\! \bbH\bbS^{(t)}\|_F^2,
\end{alignat}
an LS minimization whose closed-form solution is 
\begin{equation}\label{eq:step1_closed_form}
    \vvec(\bbH^{(t+1)}) \!= \big(\bbX\bbX^\top \!\!\otimes\! \bbI \!+\! \gamma (\bbS\bbS^\top \!\!\oplus\! \bbS^\top\bbS \!-\! \bbS^\top \!\!\otimes\! \bbS^\top \!\!-\! \bbS \!\otimes\!  \bbS)\big)^{-1} (\bbX \!\otimes \bbI)\vvec(\bbY).
\end{equation} 
Here, $\otimes$ is the Kronecker product, $\oplus$ is the Kronecker sum, and $\bbI$ is the identity matrix of size $N \times N$.
Also note that \eqref{eq:step1_closed_form} omitted the iteration superscript in $\bbS^{(t)}$ to alleviate notation. 

\vspace{2mm}
\noindent \textbf{Step 2: Graph Denoising.}
Following an MM scheme, we optimize an upper bound of \eqref{eq:rfi_nonconvex_rew} where the logarithmic penalties are linearized.
Then, we estimate the block of $N^2$ variables collected in $\bbS$ while the current estimate of the GF $\bbH^{(t+1)}$ remains fixed.
This yields
\begin{equation}\label{eq:step2_graph_denoising}
    \bbS^{(t+1)} =  \arg\min_{\bbS \in \ccalS} \sum_{i=1}^N\sum_{j=1}^N \big(\lambda \bar{\Omega}_{ij}^{(t)}|S_{ij}-\bar{S}_{ij}|+\beta \Omega_{ij}^{(t)}|S_{ij}|\big) + \gamma \|\bbS\bbH^{(t+1)} - \bbH^{(t+1)}\bbS\|_F^2,
\end{equation}
where $\bar{\bbOmega}^{(t)}$ and $\bbOmega^{(t)}$ are computed in an entry-wise fashion based on the GSO estimate from the previous iteration as
\begin{align}\label{eq:log_weights}
    &\bar{\Omega}_{ij}^{(t)} = \frac{1}{|S_{ij}^{(t)} - \bar{S}_{ij}| + \delta_1},
    &\Omega_{ij}^{(t)} = \frac{1}{|S_{ij}^{(t)}| + \delta_2}.
\end{align}

\begin{algorithm}[tb]
\SetKwInput{Input}{Input}
\SetKwInOut{Output}{Output}
\Input{$\bbX$, $\bbY$, $\barbS$}
\Output{$\hbH$, $\hbS$.}
\SetAlgoLined
Initialize $\bbS^{(0)}$ as $\bbS^{(0)} = \barbS$. \\
\For{$t=0$ \KwTo $t_{max}-1$}{
    Compute $\bbH^{(t+1)}$ by solving \eqref{eq:step1_closed_form} fixing $\bbS^{(t)}$. \\
    Update $\bbOmega^{(t)}$ and $\bar{\bbOmega}^{(t)}$ as in \eqref{eq:log_weights}. \\
    Compute $\bbS^{(t+1)}$ by solving \eqref{eq:step2_graph_denoising} using $\bbH^{(t+1)}$, $\bbOmega^{(t)}$, and $\bar{\bbOmega}^{(t)}$.
    \\
}
$\hbH = \bbH^{(t_{max})},\; \hbS = \bbS^{(t_{max})}$.
\caption{Robust GF identification with graph denoising.}
\label{A:rfi_alg}
\end{algorithm}

The overall alternating algorithm is summarized in Algorithm~\ref{A:rfi_alg}, where a fixed number of iterations is considered.
The algorithm starts by initializing the GSO as  $\bbS^{(0)} = \barbS$ (although other options could also be appropriate), and then, it iterates between Steps 1 and 2 for a fixed number of epochs (or until some stopping criterion is met).
In this regard, a key feature of the algorithm is that it is guaranteed to converge to a stationary point, as is formally stated next.

\begin{theorem}\label{thm1}
    Denote as $f(\bbH,\bbS)$ the objective function in \eqref{eq:rfi_nonconvex_rew}, and let $\ccalZ^*$ be the set of stationary points of $f$.
    Let $\bbz^{(t)}=[\vvec(\bbH^{\!(t)})^{\!\top}\!, \vvec(\bbS^{\!(t)})^{\!\top}\!]^{\!\top}$ represent the solution provided by the iterative algorithm \eqref{eq:step1_closed_form}-\eqref{eq:step2_graph_denoising} after $t$ iterations. Assuming that i)~the GSO does not have repeated eigenvalues and ii)~every row of $\tbX\!=\!\bbV^{\!-1}\bbX$ has at least one nonzero entry, then $\bbz^{\!(t)}$ converges to a stationary point of $f$ as $t$ goes to infinity, i.e.,
        \begin{equation}
            \lim_{t\to\infty} \mathsf{d}(\bbz^{(t)}~|\ccalZ^*) = 0, \nonumber
        \end{equation}
    with $\mathsf{d}(\bbz~|\ccalZ^*) := \min_{\bbz^* \in \ccalZ^*} \|\bbz-\bbz^*\|_2$.
\end{theorem}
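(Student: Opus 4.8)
The plan is to recognize Algorithm~\ref{A:rfi_alg} as a two-block majorization–minimization (equivalently, block successive upper-bound minimization) scheme and to invoke the associated descent-and-stationarity machinery, verifying that the structure of $f$ in \eqref{eq:rfi_nonconvex_rew} and assumptions (i)–(ii) supply each required hypothesis. Writing $\bbz=[\vvec(\bbH)^\top,\vvec(\bbS)^\top]^\top$, I would split $f$ into a smooth coupling part (the data-fit term $\|\bbY-\bbH\bbX\|_F^2$ plus the commutator term $\gamma\|\bbS\bbH-\bbH\bbS\|_F^2$) and the two nonsmooth, \emph{concave} logarithmic penalties $\lambda r_{\delta_1}(\bbS-\barbS)+\beta r_{\delta_2}(\bbS)$ from \eqref{eq:log_sparsity}. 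The $\bbH$-block uses $f$ itself as its surrogate (exact minimization, Step~1), while the $\bbS$-block uses the majorizer $g$ obtained by linearizing each concave $\log(|\cdot|+\delta)$ at the current iterate, which is exactly the reweighted-$\ell_1$ surrogate of Step~2 with weights \eqref{eq:log_weights}. The first task is to record the three defining surrogate properties: along each block the surrogate (a) upper-bounds $f$, (b) coincides with $f$ in value at the anchor iterate, and (c) has matching one-sided directional derivatives there. Properties (a)–(b) for the $\bbS$-block follow from concavity of $s\mapsto\log(s+\delta)$ (a concave map lies below its tangent line), and (c) from exactness of the first-order expansion; for the $\bbH$-block all three hold trivially.

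Next I would establish monotone descent and boundedness. Descent is immediate: Step~1 gives $f(\bbH^{(t+1)},\bbS^{(t)})\le f(\bbH^{(t)},\bbS^{(t)})$ by exact minimization, and Step~2 gives $f(\bbH^{(t+1)},\bbS^{(t+1)})\le g(\bbS^{(t+1)})\le g(\bbS^{(t)})=f(\bbH^{(t+1)},\bbS^{(t)})$ using the upper-bound and tangency properties of $g$. Thus $\{f(\bbz^{(t)})\}$ is non-increasing, and since the quadratic terms are nonnegative and the log penalties are bounded below on the (bounded) feasible region, the sequence converges. Boundedness of the iterates is where assumptions (i)–(ii) do the real work. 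The key supporting lemma is that the smooth coupling quadratic is coercive in $\bbH$ whenever $\bbS$ has distinct eigenvalues and every row of $\tbX=\bbV^{-1}\bbX$ is nonzero: distinct eigenvalues make the commutant of $\bbS$ exactly the polynomials in $\bbS$, so off the commuting subspace the commutator term $\gamma\|\bbS\bbH-\bbH\bbS\|_F^2$ is coercive, while on the commuting subspace $\bbH=\bbV\diag(\tbh)\bbV^{-1}$ yields $\|\bbH\bbX\|_F^2=\sum_k \tilde h_k^2\,\|\tbx_k\|_2^2$ (with $\tbx_k$ the $k$-th row of $\tbX$), which is coercive in $\tbh$ precisely because no row of $\tbX$ vanishes. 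The same computation shows the Step-1 Hessian in \eqref{eq:step1_closed_form} is positive definite, so the $\bbH$-update is unique; the constraint $\bbS\in\ccalS$ together with the sparsity terms confines $\bbS$ to a bounded set. Consequently the iterates remain in a compact set.

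Finally I would prove stationarity of limit points and conclude. Take any convergent subsequence $\bbz^{(t_k)}\to\bbz^\star$. From the summable per-step decrease one shows successive iterate differences vanish, and by continuity of the surrogates together with exactness and uniqueness of the $\bbH$-update, $\bbz^\star$ is a coordinatewise minimizer of $f$. Because the nonsmooth part is separable and convex in each block and the surrogates are first-order tight (property (c)), coordinatewise optimality upgrades to genuine stationarity: the directional derivative satisfies $f'(\bbz^\star;\bbd)\ge 0$ for every feasible direction $\bbd$, i.e.\ $\bbz^\star\in\ccalZ^*$. To pass from ``every limit point lies in $\ccalZ^*$'' to the stated $\lim_{t\to\infty}\mathsf{d}(\bbz^{(t)}\,|\,\ccalZ^*)=0$, I would argue by contradiction: were $\mathsf{d}(\bbz^{(t)}\,|\,\ccalZ^*)\not\to 0$, a subsequence bounded away from $\ccalZ^*$ would, by compactness, have a limit point outside $\ccalZ^*$, contradicting the previous step. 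The hard part will be the stationarity transfer through the nonsmooth, non-convex log penalties: one must work with directional derivatives (or Clarke subdifferentials) and carefully verify the first-order tangency of the reweighted-$\ell_1$ majorizer so that optimality for the surrogate implies optimality for $f$. The bilinear commutator coupling is exactly what blocks a direct joint-convexity argument and forces this alternating/MM route, while the coercivity-and-uniqueness lemma driven by assumptions (i)–(ii) is the technical linchpin that makes the $\bbH$-block well-behaved.
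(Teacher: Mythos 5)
Your proposal is correct and follows essentially the same route as the paper: the paper verifies exactly the same four ingredients (global upper-bound surrogates with matching first-order behavior, regularity via separability of the nonsmooth log penalties in the $\bbS$-block, compactness of the initial level set, and uniqueness of the $\bbH$-update) and then invokes the block successive upper-bound minimization convergence theorem of Hong et al.\ as a black box, whereas you re-derive the descent/limit-point machinery directly. Your coercivity argument for the $\bbH$-block — distinct eigenvalues reduce the commutant to polynomials in $\bbS$, and nonvanishing rows of $\tbX$ make the data-fit term coercive on that subspace — is precisely the content of the paper's supporting proposition, which it phrases equivalently as full column rank of the stacked matrix $[\gamma\bbD;\,\bbE]$ with $\bbD=\bbI\otimes\bbS-\bbS^\top\otimes\bbI$ and $\bbE=\bbX^\top\otimes\bbI$.
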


The proof relies on the convergence results shown in~\cite[Th. 1b]{hong2015unified} and the details are provided in Appendix~\ref{A:proof_thm1}.
Note that the convergence of the algorithm was not self-evident since the original optimization problem in \eqref{eq:rfi_nonconvex_rew} is non-convex and Step 2 is minimizing an upper-bound of the original objective function. The sufficient conditions in i) and ii) guarantee that every graph frequency is excited so that the GF is identifiable and \eqref{eq:step1_filterid} has a unique solution, which is a requirement for convergence (see \cref{thm3} in Appendix~\ref{A:proof_thm1} for details). Clearly, condition ii) is fulfilled even for $M=1$ if all the entries of the vector $\tbx=\bbV^{-1}\bbx$ are nonzero. Alternatively, when $M>1$ and ii) is satisfied, condition i) can be relaxed.

Another relevant element in the proposed algorithm is the weight $\gamma$.
If $\gamma$ is set to a value that is too large, the GF estimated in the first iteration $\bbH^{(1)}$ will be an (almost exact) polynomial of $\bar{\bbS}$ so that the algorithm will converge quickly to the same solution as that of the non-robust design [cf. \eqref{eq:rfi_nonconvex} with $\bbS=\bar{\bbS}$].
On the other hand, if $\gamma$ is too close to zero the two problems decouple and the solution converges quickly to that of the two separated problems [cf. \eqref{eq:step1_filterid} and \eqref{eq:step2_graph_denoising} with $\gamma=0$].
As a result, the value of the parameter must be chosen carefully. In this context, schemes that start with a small $\gamma$ to encourage the exploration during the warm-up phase, and then increase $\gamma$ as the iteration index grows to guarantee that the final $\hbH$ is a polynomial of $\hbS$ are a suitable alternative for the setup at hand.

Finally, one drawback of the proposed robust GF identification algorithm is that the optimization problems in \eqref{eq:step1_filterid} and \eqref{eq:step2_graph_denoising} may be slow when dealing with large graphs.
However, we will mitigate this issue by introducing an efficient implementation that reduces the computational complexity of the overall algorithm (see \cref{sec:efficient_rfi}).

\subsection{Leveraging stationary observations}\label{sec:stationary_observations}
The alternating convex approximation in Algorithm~\ref{A:rfi_alg} exploits the fact that $\bbX$ and $\bbY$ are linearly related via $\bbH$, which is a polynomial of $\bbS$.
However, in setups where the perturbations in $\barbS$ are very large, obtaining accurate estimates of $\bbS$ and $\bbh$ from $\hbH$ may still be challenging. One alternative to overcome this issue is to leverage the additional structure potentially present in our data.
Indeed, as detailed in the introduction, it is common to consider setups where the signals exhibit additional properties depending on the supporting graph, with notable examples including graph-bandlimited signals~\cite{shuman2013emerging,sandryhaila2014discrete}, diffused sparse graph signals~\cite{segarra2016blind,rey2019sampling}, or graph stationary signals~\cite{marques2017stationary,shafipour2020online,buciulea2022learning}. Clearly, incorporating such additional information into the optimization problem would enhance its estimation performance.

This section explores this path, restricting our attention to the case where the observed signals are stationary on $\ccalG$. The motivation for this decision is that, due to the tight connection between graph-stationary signals and GFs (see \cref{chap:preliminaries}), the formulation in \eqref{eq:rfi_nonconvex_rew} and Algorithm~\ref{A:rfi_alg} require relatively minor modifications to incorporate the assumption of $\bbX$ and $\bbY$ being stationary on $\bbS$, leaving the incorporation of additional signal models as future work. To formulate the updated problem, recall that the covariance matrix of a stationary graph signal can be expressed as a polynomial of the GSO. Therefore, incorporating stationarity calls for modifying \eqref{eq:rfi_nonconvex_rew} as
\begin{alignat}{2}\label{eq:rfi_nonconvex_st}
    \!&\!\min_{\bbS \in \ccalS, \bbH} && \;\!\|\bbY\!-\!\bbH\bbX\|_F^2 + \lambda r_{\delta_1}\!(\bbS \!-\! \barbS) + \beta r_{\delta_2}\!(\bbS)+ \gamma \|\bbS\bbH \!-\! \bbH\bbS\|_F^2   \nonumber     \\
    \!&\! \mathrm{\;\;s. \;to: } && \; \|\bbC_\bby\bbS\!-\!\bbS\bbC_\bby\!\|_F^2\!\leq\! \epsilon_\bby,\,\!\|\bbC_\bbx\bbS\!-\!\bbS\bbC_\bbx\!\|_F^2\!\leq\!\epsilon_\bbx,
\end{alignat}
where $\bbC_\bby$ and $\bbC_\bbx$ denote the covariance matrices of $\bbY$ and $\bbX$, respectively.
If the covariances are perfectly known, then the corresponding parameters $\epsilon_\bby$ and $\epsilon_\bbx$ are set to zero.
Alternatively, if the $\bbC_\bby$ and $\bbC_\bbx$ are the sample estimates of the true covariances, then the values of $\epsilon_\bby$ and $\epsilon_\bbx$ must be selected based on the quality of the estimators (accounting, e.g., for the number of available observations $M$).

The constraints in \eqref{eq:rfi_nonconvex_st} capture the graph-stationarity assumption by promoting the commutativity with the true GSO.
Therefore, such constraints are considered in the graph denoising step [cf. \eqref{eq:step2_graph_denoising}].
In addition, since $\bbC_\bby$, $\bbC_\bbx$ and $\bbH$ are all polynomials of $\bbS$, the equalities $\bbC_\bby \bbH=\bbH\bbC_\bby$ and $\bbC_\bbx \bbH=\bbH\bbC_\bbx$ must hold as well, so it is also possible to augment the GF identification step [cf. \eqref{eq:step1_filterid}] with the corresponding constraints.
While in the interest of brevity, we do not spell out all the possible formulations here, the impact of several of these alternatives is numerically analyzed in \cref{sec:experiments_filter_id}. Finally, it is important to note that, since the stationarity constraints are quadratic and convex, the convergence described in \cref{thm1} also holds true for the iterative algorithm associated with \eqref{eq:rfi_nonconvex_st}.

\section{Joint robust identification of multiple GFs}\label{sec:rfi_joint}
In \cref{sec:rfi}, we approached the problem of identifying a single GF $\bbH$ defined over a single graph $\ccalG$.
However, in a variety of situations we encounter multiple processes (signals) over the same graph $\ccalG$.
Consider for example a network of weather stations measuring the temperature, humidity, and wind speed.
Each of these measurements corresponds to observations of a different process, all of them taking place over a common graph.
Intuitively, since all the GFs are related by the underlying graph $\ccalG$, we propose a \emph{joint} GF identification approach that exploits this relationship to enhance the quality of the estimation.
We focus first on the case where the input-output signals associated with each GF (graph process) are observed separately. Later in the section, we address a slightly more involved case where the GFs model the (AR) dynamics of a time-varying graph signal and, as a result, the observed signals are intertwined.

Consider a set of $K$ unknown GFs $\{\bbH_k\}_{k=1}^K$, all represented by $N \times N$ matrices and defined over the  graph $\ccalG$.
To be consistent with \cref{P:problem_statement}, we assume that: i) the true $\bbS$ is unknown and only the perturbed version $\barbS$ is available; ii) all $\bbH_k$ are polynomials of the \emph{same} GSO $\bbS$; and iii) for each $k$, matrices $\bbX_k \in \reals^{N \times M_k}$ and $\bbY_k \in \reals^{N \times M_k}$ collect the observed input and output graph signals and are related via
\begin{equation}\label{eq:rfi_observ_model_multi}
    \bbY_k = \bbH_k\bbX_k + \bbW_k,
\end{equation}
with $\bbH_k=\sum_{r=0}^{N-1}h_{r,k}\bbS^r$ and $\bbW_k$ being a white random matrix capturing observation noise and model inaccuracies.
Then, we aim at estimating the GFs $\{\bbH_k\}_{k=1}^K$ in a joint fashion while taking into account the inaccuracies in the topology of $\ccalG$.
This is summarized in the following problem statement.
\begin{problem}\label{P:multiple_filter_ir}
    Let $\ccalG$ be a graph with $N$ nodes, let $\bbS \in \reals^{N \times N}$ be the true (unknown) GSO associated with $\ccalG$, and let $\barbS \in \reals^{N \times N}$ be the perturbed (observed) GSO.
    Moreover, let $\bbX_k\in \reals^{N \times M_k}$ and $\bbY_k \in \reals^{N \times M_k}$ be the matrices collecting the $M_k$ observed input and output graph signals associated with $k=1,...,K$ network processes, all defined over $\ccalG$ and adhering to the model in \eqref{eq:rfi_observ_model_multi}.
    Our goal is to use $\{\bbX_k\!\}_{k=1}^K$, $\{\bbY_k\!\}_{k=1}^K$, and $\barbS$ to learn the $K$ GFs $\{\bbH_k\}_{k=1}^K$ that best fit the data, along with an enhanced estimation of $\bbS$.
    To that end, we make the following assumptions: \\
    (\textbf{AS2}) $\bbS$ and $\barbS$ are close according to some metric $d(\bbS,\barbS)$, i.e., the observed perturbations are ``small'' in some sense. \\
    (\textbf{AS3}) Every $\bbH_k$ is a polynomial of $\bbS$. \\
\end{problem}
Assumption (\textbf{AS2}), which was also considered in \cref{P:problem_statement}, promotes the tractability of the problem by ensuring that $\bbS$ and $\barbS$ are related. As discussed in \cref{sec:graph_pert}, the distance function $d(\cdot, \cdot)$ must be selected depending on the perturbation model at hand. 
(\textbf{AS3}) captures the key fact that all the matrices $\bbH_k$ are GFs of the \emph{same} GSO, establishing a link that can be leveraged via a joint estimation (optimization) of the $K$ GFs. 
Implementing an approach similar to that in \cref{sec:rfi} (i.e., working on the vertex domain, considering the true GSO as an explicit optimization variable, accounting for the GF structure via a commutativity constraint, and assuming that the graph perturbations create and destroy links), the multi-filter counterpart to \eqref{eq:rfi_nonconvex_rew} that codifies \cref{P:multiple_filter_ir} is
\begin{alignat}{2}\label{eq:joint_rfi_noncvx_rew}
    \!&\! \min_{\bbS\in \ccalS, \{\bbH_k\}_{k=1}^K} && \sum_{k=1}^K  \alpha_k\|\bbY_k-\!\bbH_k\bbX_k\|_F^2 \!+\! \lambda r_{\delta_1}(\bbS - \barbS) \nonumber \\
    \!&\! &&  +\beta r_{\delta_2}(\bbS) + \sum_{k=1}^K \gamma \|\bbS\bbH_k \!\!-\! \bbH_k\bbS\|_F^2. 
\end{alignat}
Ideally, the value of the positive weight $\alpha_k$ must be selected based on the norm of $\bbW_k$ (e.g., prior information on the noise level and the number of signal pairs $M_k$). If none is available, then $\alpha_k=1$ for all $k$.
Equally important, the fact of pursuing a joint optimization implies that each $\bbH_k$ contributes with a regularization term $\|\bbS\bbH_k - \bbH_k\bbS\|_F^2$ promoting the commutativity of the $k$-th GF with the \emph{single} $\bbS$. Intuitively, having the same $\bbS$ in all these terms couples the optimization across $k$ and contributes to reduce the uncertainty over $\bbS$, leading to enhanced estimates of both $\bbS$ and $\{\bbH_k\}_{k=1}^K$.
As a result, the joint GF identification approach is expected to provide better results than estimating each $\bbH_k$ separately by solving $K$ instances of \eqref{eq:rfi_nonconvex_rew}.
We validate this hypothesis numerically via the experiments in \cref{sec:experiments_filter_id}.

Following a motivation similar to that in the previous section, we deal with the non-convex minimization in \eqref{eq:joint_rfi_noncvx_rew} designing an alternating optimization algorithm that breaks the bilinear terms $\bbS\bbH_k$ and $\bbH_k\bbS$, and approximates the logarithmic terms with a linear upper-bound. The resulting algorithm solves iteratively the following two subproblems for $t=1,...,t_{\max}$ iterations.

\vspace{2mm}
\noindent \textbf{Step 1: Multiple GF Identification.}
Given the current estimate $\bbS^{(t)}$, we solve the optimization problem in \eqref{eq:joint_rfi_noncvx_rew} with respect to each $\bbH^{(k)}$.
This yields
\begin{alignat}{2}\label{eq:joint_filterid}
    \!\!\bbH_k^{(t+1)\!} \!\!=\!\argmin_{\bbH_k} \alpha_k \|\! \bbY_k \!\!-\!\! \bbH_k\bbX_k\!\|_{\!F}^{\!2}  \!\!+\! \gamma \|\! \bbS^{\!(t)\!}\bbH_k \!\!-\!\! \bbH_k\bbS^{\!(t)} \!\|_{\!F}^{\!2},\!
\end{alignat}
whose closed-form solution can be found using \eqref{eq:step1_closed_form} replacing $\gamma$ with $\gamma/\alpha_k$, $\bbX$ with $\bbX_k$, and $\bbY$ with $\bbY_k$.
Note that since the only coupling across GFs is via the GSO, \eqref{eq:joint_filterid} estimates each $\bbH_k^{(t+1)}$ separately from the other GFs, solving $K$ LS problems (each with $N^2$ unknowns). Furthermore, if multiple processors are available, \eqref{eq:joint_filterid} can be run in parallel across $k$.

\vspace{2mm}
\noindent \textbf{Step 2: Graph Denoising.}
Given the current estimates of the GFs $\{\bbH_k^{(t+1)}\}_{k=1}^K$, we follow an MM scheme that, minimizing a linear upper-bound of the logarithmic penalties, yields the estimate of the GSO via
\begin{align}\label{eq:joint_graph_denoising}
    \bbS^{(t+1)} =  \argmin_{\bbS \in \ccalS} &\sum_{ij=1}^N \big( \lambda \bar{\Omega}_{ij}^{(t)}|S_{ij}-\bar{S}_{ij}|+\beta \Omega_{ij}^{(t)}|S_{ij}| \big) \nonumber \\
     &+ \sum_{k=1}^K\gamma \|\bbS\bbH_k^{(t+1)} - \bbH_k^{(t+1)}\bbS\|_F^2,
\end{align}
where $\bbOmega$ and $\bar{\bbOmega}$ are obtained as in \eqref{eq:log_weights}.

The solution to \cref{P:multiple_filter_ir} is simply given by $\hbS = \bbS^{(t_{max})}$ and $\hbH_k = \bbH_k^{(t_{max})}$ for every $k$. Similar to \eqref{eq:rfi_nonconvex_rew}, convergence to a stationary point of \eqref{eq:joint_rfi_noncvx_rew} is guaranteed, as formally stated next.

\begin{corollary}\label{thm2}
    Denote as $f(\{\bbH_k\}_{k=1}^K, \bbS)$ the objective function in \eqref{eq:joint_rfi_noncvx_rew}.
    If the vector $\bbz^{(t)} = [\vvec(\bbH_1^{(t)})^\top,...,\vvec(\bbH_K^{(t)})^\top, \vvec(\bbS)^\top]^\top$ represents the solution provided by the iterative algorithm  \eqref{eq:joint_filterid}-\eqref{eq:joint_graph_denoising} after $t$ iterations and every $\bbX_k$ excites all graph frequencies, then $\bbz^{(t)}$ converges to a stationary point of $f$ as the number of iterations $t$ goes to infinity.
\end{corollary}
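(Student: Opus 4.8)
The plan is to obtain \cref{thm2} as a direct extension of the argument behind \cref{thm1}, exploiting the fact that the multi-filter problem in \eqref{eq:joint_rfi_noncvx_rew} shares the same block structure as its single-filter counterpart \eqref{eq:rfi_nonconvex_rew}. First I would recast the iteration \eqref{eq:joint_filterid}--\eqref{eq:joint_graph_denoising} as a two-block successive upper-bound minimization (BSUM), identifying the first block with the stacked filter variables $[\vvec(\bbH_1)^\top,\ldots,\vvec(\bbH_K)^\top]^\top$ and the second block with $\vvec(\bbS)$. The key observation is that, although Step 1 updates $K$ filters, for a fixed $\bbS^{(t)}$ the objective in \eqref{eq:joint_rfi_noncvx_rew} is \emph{separable} across the $\bbH_k$, since the only coupling among filters is through the shared GSO. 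Hence minimizing jointly over $\{\bbH_k\}_{k=1}^K$ is equivalent to solving the $K$ decoupled least-squares problems in \eqref{eq:joint_filterid}, and the joint filter block is minimized exactly. This lets me treat the scheme as a genuine two-block alternating minimization and invoke the same machinery from \cite[Th.~1b]{hong2015unified} used in \cref{thm1}.

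Second, I would verify that the structural hypotheses of the convergence theorem carry over unchanged. The constraint set $\ccalS$ is closed and convex; the smooth part of the objective (the fidelity terms $\sum_k \alpha_k\|\bbY_k-\bbH_k\bbX_k\|_F^2$ and the commutativity penalties $\sum_k \gamma\|\bbS\bbH_k-\bbH_k\bbS\|_F^2$) is continuously differentiable; and the only nonsmooth, nonconvex component is the logarithmic surrogate $r_\delta(\cdot)$, which in Step 2 is replaced by its tangent majorizer, i.e. the reweighted $\ell_1$ term with weights \eqref{eq:log_weights}. Exactly as in the single-filter case, this majorizer is a global upper bound that is tight at the current iterate and has matching gradient there, so the per-block surrogates satisfy the assumptions required by the BSUM framework. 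The additional sum over $k$ in \eqref{eq:joint_graph_denoising} does not alter convexity of the $\bbS$-subproblem, which remains a weighted-$\ell_1$-plus-convex-quadratic program.

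The hard part will be guaranteeing that the filter block admits a \emph{unique} minimizer, which is precisely the condition \cite[Th.~1b]{hong2015unified} requires for blocks whose subproblems are solved exactly rather than through a strongly convex surrogate. In \cref{thm1} this uniqueness was ensured by assuming that the GSO has no repeated eigenvalues and that every row of $\tbX=\bbV^{-1}\bbX$ has a nonzero entry, which together guarantee that every graph frequency is excited and that the system matrix in \eqref{eq:step1_closed_form} is nonsingular (cf. \cref{thm3} in \cref{A:proof_thm1}). Here the analogous requirement is exactly the hypothesis that every $\bbX_k$ excites all graph frequencies. I would show that, since the $K$ filter problems decouple for fixed $\bbS$, the joint minimizer is unique if and only if each individual problem \eqref{eq:joint_filterid} has a unique solution; and each of those is a regularized least-squares problem of the same form as \eqref{eq:step1_filterid}, whose system matrix (the analogue of the Kronecker expression in \eqref{eq:step1_closed_form} with $\bbX$ replaced by $\bbX_k$ and $\gamma$ by $\gamma/\alpha_k$) is invertible precisely under the per-process frequency-excitation condition. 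Once uniqueness of the exactly-minimized block is established, the remaining hypotheses follow verbatim from the single-filter proof, and applying \cite[Th.~1b]{hong2015unified} yields that $\bbz^{(t)}$ converges to the set of stationary points of $f$, completing the argument.
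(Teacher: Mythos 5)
Your proposal is correct and follows essentially the same route as the paper: the paper also obtains the corollary by observing that the multi-filter formulation mirrors the single-filter one and re-verifying the conditions of \cite[Th.~1b]{hong2015unified}, with the filter updates decoupling across $k$ because the only coupling is through the shared $\bbS$. Your additional detail on treating the stacked filters as a single exactly-minimized block whose uniqueness reduces to the $K$ per-filter excitation conditions is exactly the argument the paper leaves implicit.
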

The key to prove \cref{thm1}, which established the convergence to a stationary point for the robust estimation of a single GF, was to show that the optimization problem in \eqref{eq:rfi_nonconvex_rew} and the proposed algorithm satisfied the conditions in \cite[Th. 1b]{hong2015unified}. The formulation we put forth for the multi-filter case resembles closely that of the single-filter case, and, as a result, it is not difficult to show that those conditions also hold true for the problem in \eqref{eq:joint_rfi_noncvx_rew} (see Appendix~\ref{A:proof_thm1} for details). 

The discussion and formulations in \cref{sec:stationary_observations} dealing with incorporating additional information about the input-output signals into the optimization are also pertinent for the setup in this section. The details of such a formulation are omitted for brevity, but it will be explored in the experimental section.

\subsection{Joint GF identification for time series}
A slightly different, practically relevant, setup where multiple GFs need to be estimated is that of graph-based multivariate time series.
In that setup, each variable is associated with a node of the graph and the multiple graph-signal observations correspond to different instants of a time-varying graph signal. AR and \acrfull{ma} modeling of time series has a long tradition, with common approaches to decrease the degrees of freedom including limiting the memory of the series and assuming that matrices of coefficients relating different time instants are low rank \cite{reinsel2003elements}.
In the context of graph signals and network processes, a natural approach is to constrain the matrices of coefficients to be GFs, all defined over the same graph \cite{mei2017causal,isufi2019forecasting}.
This section introduces a variation of the problem in \eqref{eq:joint_rfi_noncvx_rew} tailored to this setup. 

To introduce the multiple-graph identification problem formally, let $\bbX_\kappa$ and $\bbY_\kappa$ denote a collection of $M_\kappa$ graph signals corresponding to measurements of a network process for $\kappa=1,...,\kappa_{max}$ time instants.
Suppose now that $\bbY_\kappa$ can be accurately modeled by an AR dynamics with memory $K$ so, at every instant $\kappa$, the observations $\bbY_\kappa$ satisfy the equation
\begin{equation}\label{eq:ar_observation}
    \bbY_\kappa = \sum_{k=1}^K \bbH_k\bbY_{\kappa-k} + \bbX_\kappa,\;\mathrm{with}\;\bbH_k=\sum_{r=0}^{N-1}h_{r,k}\bbS^r, 
\end{equation}
where $\bbX_\kappa$ is the exogenous input, and the GF $\bbH_k$ models the influence that the signal observations from the time instant $\kappa-k$ exert on the (current) signal at time $\kappa$.

Suppose now that: i) we have access to an estimated (imperfect) graph $\barbS$, ii) the value of the graph signals at different time instants is available, and iii) our goal is to estimate the set of matrices (GFs) $\{\bbH_k\}_{k=1}^K$ in \eqref{eq:ar_observation} that describe the dynamics of the multivariate time series. This can be accomplished as
\begin{alignat}{2}\label{eq:joint_rfi_noncvx_rew_ar}
    \!&\! \min_{\bbS\in \ccalS, \{\bbH_k\}_{k=1}^K}  \sum_{\kappa=K+1}^{\kappa_{max}}\Big\|\bbY_\kappa-\bbX_\kappa -\!\sum_{k=1}^K\bbH_k\bbY_{\kappa-k}\Big\|_F^2   \nonumber \\
    \!&\! \hspace{0.6cm} +\! \lambda r_{\delta_1}(\bbS - \barbS) \!+\! \beta r_{\delta_2}(\bbS) \!+\! \sum_{k=1}^K \gamma \|\bbS\bbH_k \!\!-\! \bbH_k\bbS\|_F^2. \!
\end{alignat}
The main difference relative to \eqref{eq:joint_rfi_noncvx_rew} is in the first term, which accounts for the new observation model [cf. \eqref{eq:rfi_observ_model_multi} vs. \eqref{eq:ar_observation}]. Note that we assume that the exogenous input $\bbX_\kappa$ is observed. If that were not the case, it would suffice to remove $\bbX_\kappa$ from the objective (possibly updating the Frobenius norm in case statistical knowledge about $\bbX_\kappa$ were available). Albeit the differences, the problem in \eqref{eq:joint_rfi_noncvx_rew_ar} is closely related to \eqref{eq:joint_rfi_noncvx_rew}, with the sources of non-convexities being the same. As a result, we approach its solution with a modified version of Algorithm~\ref{A:rfi_alg} which, at each iteration $t$, runs two steps. In the first one, we estimate each of the $K$ GFs by solving
\begin{alignat}{2}\label{eq:joint_filterid_time}
    \!& \bbH_k^{(t+1)} \!&&\!=\!\argmin_{\bbH_k}  \!\!\!\sum_{\kappa=K+1}^{\kappa_{max}}\!\!     \Big\|\bbY_\kappa\!-\!\bbX_\kappa\!-\!\bbH_k\bbY_{\kappa-k}\!-\!\!\!\sum_{k'<k}\!\!\bbH_{k'}^{(t+1)}\bbY_{\kappa-k'} \nonumber\\
            \!\!& \! &&\! \!-\!\! \sum_{K\geq k'>k}\!\!\bbH_{k'}^{(t)}\bbY_{\kappa-k'} \Big\|_{\!F}^{\!2}  +\!\! \sum_{k=1}^K\!\gamma \Big\| \bbS^{(t)}\bbH_k - \bbH_k\bbS^{(t)} \Big\|_{\!F}^{\!2},\!\!
\end{alignat}
which is different from the previous GF identification step [cf. \eqref{eq:joint_filterid}]. In contrast, the graph-denoising step in \eqref{eq:joint_graph_denoising} remains the same. Note that \eqref{eq:joint_filterid_time} updates each GF separately in a cyclic way by solving an LS problem with $N^2$ unknowns. Alternative implementations include using $\bbH_{k'}^{(t)}$ in lieu of $\bbH_{k'}^{(t+1)}$ for all $k'< k$ (so that a parallel implementation is enabled) as well as considering a single LS problem with $KN^2$ unknowns. 

Finally, it is worth emphasizing that the formulation introduced in this section can be used as a starting point to design more general robust schemes for multivariate time series defined over a graph. Dealing with both AR and MA matrices, assuming that the memory of the system is not known, having only partial/statistical information on the exogenous input, and observing the signals at only a subset of nodes are all examples of setups of interest. Since our goal in this section was to demonstrate the relevance of a robust multiple GF formulation in the context of multivariate time series, to facilitate exposition we restricted our discussion to the relatively simple case in \eqref{eq:ar_observation},  but many other setups (including those previously listed) will be subject of our future work.

\section{Efficient implementation of the robust GF identification algorithm}\label{sec:efficient_rfi}
The algorithms proposed up to this point are able to find a solution to the robust GF identification problem in polynomial time. However, their computational complexity scales with the number of nodes as $N^7$. To facilitate the deployment in setups where $N$ is large, this section puts forth an efficient implementation that reduces the number of operations.

The new algorithm 
(summarized in Algorithm~\ref{A:efficient_rfi_alg}) preserves the core structure of Algorithm~\ref{A:rfi_alg}, with an outer loop that, at each iteration, runs two steps: one involving the estimation of the GF(s) and another one dealing with the estimation of the GSO.
The main difference is that now, instead of finding the exact solution to those two problems, we obtain an approximate solution. While the details, which are step-dependent, will be specified in the next paragraphs, the overall idea is that for each of the steps we run a few simple (gradient/proximal) iterations. Although Algorithm~\ref{A:efficient_rfi_alg} involves two nested loops, the complexity of the problems in the inner loop is cut down significantly, so that the overall computational overhead is reduced. 

To be specific, we describe next the two steps that, at each iteration of the outer loop $t \!=\! 0,...,t_{max}\!-\!1$, Algorithm~\ref{A:efficient_rfi_alg} runs.

\vspace{2mm}
\noindent \textbf{Step 1: Efficient GF Identification.}
Solving the GF-identification step with the closed-form solution presented in \eqref{eq:step1_closed_form} involves inverting a matrix of size $N^2 \times N^2$, which requires $\ccalO(N^6)$ operations.
To explain our alternative implementation, let $f_1(\bbH|\bbS^{(t)})$ denote the objective function in \eqref{eq:step1_filterid}. Since $f_1$ is strictly convex and smooth, it can be efficiently optimized using a gradient descent approach~\cite{boyd2004convex}.

To that end, for each iteration $t$ of the outer loop, we define the inner iteration index  $\tau$ as well as the sequence of variables $\bchkH^{(\tau)}$ with $\tau = 0,...,\tau_{max_1}$, which is initialized as $\bchkH^{(0)} = \bbH^{(t)}$.
With this notation at hand, at each iteration $\tau=0,...,\tau_{max_1}-1$ of the inner loop, we update $\bchkH^{(\tau+1)}$ via
\begin{equation}
    \bchkH^{(\tau+1)} = \bchkH^{(\tau)} - \mu \nabla f_1(\bchkH^{(\tau)}|\bbS^{(t)}).
\end{equation}
Here, $\mu > 0$ is the step size and $\nabla f_1$ denotes the gradient of $f_1$ with respect to $\bbH$, which is given by
\begin{equation}
 \nabla f_1\!(\bbH|\bbS^{(t)}) \!= \!2\Big(\bbH\bbX\bbX^\top \!\!-\!\! \bbY\bbX^\top \Big)\!\! 
    +\!2\gamma\Big( \!\bbS^{(t)^{ \!\top}}  \!\!(\bbS^{(t)}\bbH \!-\! \bbH\bbS^{(t)}) \!-\! (\bbS^{(t)}\bbH \!-\! \bbH\bbS^{(t)})\bbS^{(t)^\top} \!\Big).
\end{equation}
When the $\tau_{max_1}$ gradient updates are computed, we conclude the GF-identification step by setting $\bbH^{(t+1)} = \bchkH^{(\tau_{max_1})}$.

Since each gradient calculation involves the multiplication of $N \times N$ matrices, the resultant computational complexity is $\ccalO(\tau_{max_1}N^3)$, which may go down to $\ccalO(\tau_{max_1}N^{2.4})$ if an efficient multiplication algorithm is employed~\cite{coppersmith1987matrix}.
For large values of $N$, this complexity is substantially smaller than that required to find the inverse of an $N^2 \times N^2$ matrix.

\vspace{2mm}
\noindent \textbf{Step 2: Efficient graph denoising.}
Since the optimization in \eqref{eq:step2_graph_denoising} involves $N^2$ variables (the entries in $\bbS$), using an off-the-shelf convex solver incurs a computational complexity of $\ccalO(N^7)$ \cite{boyd2004convex}.
Inspired by the Lasso regression algorithm~\cite{hastie2015statistical}, we optimize individually over each entry $S_{ij}$ in an iterative manner.
The main idea is running multiple rounds of $N^2$ efficient scalar optimizations rather than dealing with a single but demanding $N^2$-dimensional problem. To provide the details of the scheme developed to estimate $\bbS$, we need to specify the set of constraints $\ccalS$ and introduce some definitions.
Let us focus on the set of adjacency matrices $\ccalS_\ccalA := \{ \bbS | S_{ij} \geq 0,\; S_{ii} = 0 \}$ and define the vectors $\bbs := \vvec(\bbS)$, vector $\barbs := \vvec(\barbS)$, and the matrix $\bbSigma^{(t)} := \bbH^{(t+1)^\top} \oplus -\bbH^{(t+1)}$.
With these definitions in place, the minimization in \eqref{eq:step2_graph_denoising} is equivalent to solving
\begin{alignat}{2}\label{eq:efficient_step2}
    \!&\!\min_{\bbs} && \sum_{i=1}^{N^2} \left(  \lambda \bar{\omega}^{(t)}_i|s_i - \bar{s}_i| + \beta \omega^{(t)}_i s_i \right) + \gamma \|\bbSigma^{(t)}\bbs\|_2^2, \nonumber \\
    \!&\! \mathrm{\;\;s. \;to: } && \;\; \bbs \geq 0, \;\; \bbs_\ccalD = 0,
\end{alignat}
where $\bbs_\ccalD$ collects the elements in the diagonal of $\bbS$, and the vectors $\bar{\bbomega}^{(t)}$ and $\bbomega^{(t)}$ are computed according to \eqref{eq:log_weights} but with $\barbs^{(t)}$ and $\bbs^{(t)}$ in lieu of $\barbS^{(t)}$ and $\bbS^{(t)}$. The constraint $\bbs_\ccalD = 0$, implies that only the $N^2-N$ elements of $\bbs$ representing the off-diagonal entries of $\bbS$ need to be optimized. The key point to find those $N^2-N$ values is to leverage that the non-differentiable part of the cost in \eqref{eq:efficient_step2} is separable across $s_i$, postulate $N^2-N$ scalar optimization problems (coupled via the $\ell_2$ term in the cost), and address the optimization  following a projected cyclic coordinate descent scheme.

To define clearly the operation of Step 2 at each iteration $t$ of the outer loop, we need to introduce some notation. First, let us denote as $\tau$ the iteration index for the inner loop, define the sequence of variables  $\bchks^{(\tau)}$ where $\tau=0,...,\tau_{max_2}$, and initialize the sequence as $\bchks^{(0)} = \bbs^{(t)}$. Moreover, with $\ell \not\in \ccalD$ denoting an index of the off-diagonal elements of the GSO, let $\bbsigma_\ell \in \reals^{N^2}$ denote the associated $\ell$-th column of $\bbSigma^{(t)}$, $\omega_\ell\geq 0$ and $\bar{\omega}_\ell\geq 0$ the associated entries of $\bbomega^{(t)}$ and $\bar{\bbomega}^{(t)}$, and $\check{s}_\ell^{(\tau)}\in\reals$ the associated entry of $\bchks^{(\tau)}$ (note that dependence on $t$ was omitted to facilitate readability). Then, at every iteration $\tau=0,...,\tau_{max_2}-1$ of the inner loop, Algorithm~\ref{A:efficient_rfi_alg} optimizes over each $\check{s}_\ell$ separately in a cyclic (successive) way.
The advantage of this approach is that the solution to the \emph{scalar} optimization over $\check{s}_\ell$ is given in closed form by the following projected soft-thresholding operation
\begin{equation}\label{eq:soft-thresholding}
    \check{s}_\ell^{(\tau+1)} = \left\{\hspace{-2mm} \begin{array}{cl}
        \left( - \bar{\lambda}_\ell + u^{(\tau)}_\ell  \right)^+ 
        &  \mathrm{if} \; \bar{s}_\ell <  - \bar{\lambda}_\ell+u^{(\tau)}_\ell,\\
        \left(\bar{\lambda}_\ell + u^{(\tau)}_\ell  \right)^+
        & \mathrm{if} \; \bar{s}_\ell >   \bar{\lambda}_\ell+u^{(\tau)}_\ell, \\
        \bar{s}_\ell & \mathrm{otherwise},
    \end{array} \right.
\end{equation}
\begin{equation}
    \mathrm{with} \;\;\; \bar{\lambda}_\ell = \frac{\lambda \bar{\omega}_\ell}{\gamma \bbsigma_\ell^\top\bbsigma_\ell}\;\; 
    \mathrm{and} \;\;\; u^{(\tau)}_\ell = \frac{-\beta \omega_\ell - \gamma\bbsigma_\ell^\top \bbr_\ell^{(\tau)} }{\gamma \bbsigma_\ell^\top\bbsigma_\ell} . \nonumber
\end{equation}
Here, $(\cdot)^+$ denotes the operation $(x)^+ = \max(0,x)$, and
\begin{equation}\label{eq:r_l}
    \bbr_\ell^{(\tau)} := \sum_{j < \ell} \bbsigma_j \check{s}_j^{(\tau+1)} + \sum_{j > \ell} \bbsigma_j \check{s}_j^{(\tau)}.
\end{equation}
Note that \eqref{eq:soft-thresholding} is a soft-thresholding operation with respect to the term $|s_i - \bar{s}_i|$.
Also, the constraints in $\ccalS_\ccalA$ are satisfied due to the projection operator $(\cdot)^+\!:=\!\max\{\cdot,\!0\}$, and because we do not optimize over the elements of the diagonal of $\bbS$.

At first sight, computing each $\check{s}_\ell$ requires roughly $N^2$ operations, so estimating the whole vector $\bbs$ would entail a computational complexity of $\ccalO(N^4)$.
However, a closer inspection of the vectors $\bbsigma_\ell$ reveals that no more than $2N$ of their entries are non-zero because $\bbsigma_\ell$ are the columns of the Kronecker sum of two $N \times N$ matrices.
We exploit this sparsity and reduce the number of operations required to compute each $s_\ell$ to approximately $2N$, rendering the final computational complexity of the graph denoising step to $\ccalO(2\tau_{max_2}N^3)$.

\begin{algorithm}[tb]
\SetKwInput{Input}{Input}
\SetKwInOut{Output}{Output}
\Input{$\bbX$, $\bbY$, $\barbS$}
\Output{$\hbH$, $\hbS$.}
\SetAlgoLined
Initialize $\bbH^{(0)}$ and $\bbS^{(0)}$ \\
$\barbs = \vvec(\barbS)$ \\
\For{$t=0$ \KwTo $t_{max}-1$}{
    \tcp{GF-identification step}
    $\bchkH^{(0)} = \bbH^{(t)}$ \\
    \For{$\tau=0$ \KwTo $\tau_{max_1}-1$}{
        $\bchkH^{(\tau+1)} = \bchkH^{(\tau)} + \mu \nabla f_1(\bchkH^{(\tau)}|\bbS^{(t)})$ \\
    }
    $\bbH^{(t+1)} = \bchkH^{(\tau_{max_1})}$ \\
    
    \vspace{.2cm}
    \tcp{Graph denoising step}
    $[\bbsigma_1,...,\bbsigma_{N^2}] =\bbH^{(t+1)^\top} \oplus \bbH^{(t+1)}$ \\ 
    $\bchks^{(0)} = \vvec(\bbS^{(t)})$ \\
    Update $\bar{\bbomega}^{(t)}$, $\bbomega^{(t)}$ via \eqref{eq:log_weights} using $\barbs$ and $\bchks^{(0)}$ \\
    
    \For{$i=0$ \KwTo $\tau_{max_2}-1$}{
    \For{$\ell \not\in \ccalD$}{
    Obtain $\bbr_\ell^{(\tau)}$ via \eqref{eq:r_l} \\
    Obtain $\check{s}_\ell^{(\tau+1)}$ via \eqref{eq:soft-thresholding} using $\bbsigma_\ell$, $\bbr_\ell^{(\tau)}$, $\omega_\ell$, $\bar{\omega}_\ell$ \\
    }
    }
    $\bbS^{(t+1)} = \mathrm{unvec}(\bchks^{(\tau_{max_2})})$
}
$\hbH = \bbH^{(t_{max})},\; \hbS = \bbS^{(t_{max})}$.
\caption{Reduced-complexity robust GF identification.}
\label{A:efficient_rfi_alg}
\end{algorithm}

The pseudocode describing the efficient implementation of Steps~1 and~2 is provided in Algorithm~\ref{A:efficient_rfi_alg}.
The summary is as follows. We postulate a nested algorithm with two loops. The outer loop runs $t_{\max}$ iterations. The inner loop runs two steps: Step~1, with $\tau_{max_1}$ iterations, and Step 2, with $\tau_{\max_2}$ iterations. While the complexity for Algorithm~\ref{A:rfi_alg} scaled as $\ccalO(t_{\max} N^7)$, with $t_{\max}$ being typically small, the overall computational complexity of Algorithm~\ref{A:efficient_rfi_alg} is roughly $\ccalO(t_{\max}(\tau_{max_1}+\tau_{max_2})N^3)$, which is encouraging, since $2N^2$ variables are optimized and it scales with $N$ significantly better than Algorithm~\ref{A:rfi_alg}. Solving Steps 1 and 2 optimally requires setting large values for $\tau_{\max_1}$ and $\tau_{\max_2}$. Nonetheless, we observe that in most tested setups the approach of setting small values for $\tau_{\max_1}$ and $\tau_{\max_2}$ (at the cost of setting a slightly higher value for $t_{\max}$) typically yields a faster convergence. Finally, implementations where the number of iterations is not fixed but selected based on some convergence criterion are also sensible alternatives. 

We close the section noting that we developed Algorithm~\ref{A:efficient_rfi_alg} for the setting described in \cref{P:problem_statement} because the notation was simpler and facilitated the discussion. Nonetheless, an analogous approach may be followed for the joint estimation of $K$ GFs (cf. \cref{sec:rfi_joint}), resulting in an algorithm with complexity per GF similar to that for Algorithm~\ref{A:efficient_rfi_alg}.

\section{Numerical results}\label{sec:experiments_filter_id}
This section discusses several numerical experiments to gain insights and assess the performance of the robust GF identification algorithms.
Unless specified otherwise, for a variable of interest $\bbTheta$, we report its normalized estimation error defined as
\begin{equation}\label{eq:rel_err}
    nerr(\hbTheta, \bbTheta) := \frac{\| \hbTheta - \bbTheta \|_F^2}{\|\bbTheta\|_F^2},    
\end{equation}
where $\hbTheta$ and $\bbTheta$ denote the estimated and the true value, respectively.
The code implementing our algorithms and the experiments presented next is available on GitHub\footnote{\url{https://github.com/reysam93/graph_denoising}}.
The interested reader is referred there for additional details and tests.

\subsection{Synthetic experiments}
We start by evaluating our algorithms with synthetic data, which is key to gain intuition.
Unless otherwise stated, graphs are sampled from ER random graph model with a link probability of $p=0.2$ and $N=20$ nodes; $\barbS$ is obtained by randomly creating and destroying 10\% of the links in $\bbS$; $M=50$ signals $\bbX$ and $\bbY$ are generated according to \eqref{eq:rfi_observ_model},  with the columns of $\bbX$ being drawn from a multivariate Gaussian distribution $\ccalN(\mathbf{0},\bbI)$, so the signals $\bbY$ are stationary on $\bbS$; signals in $\bbY$ are corrupted with white Gaussian noise with a normalized power of $\eta_\bbW=0.05$; and the reported error corresponds to the median of $nerr$ across 64 realizations of graphs and graph signals.

\vspace{2mm}
\noindent\textbf{Test case 1.} The first experiment evaluates the influence of perturbations as the order of the GF $R$ increases.
The number of observed pairs of signals considered is $M=100$ and 10\% of the edges in $\bbS$ are perturbed.
Results are reported in \cref{fig:exps1a}, where the x-axis shows $R$ and the y-axis $nerr(\hbh,\bbh)$.
The algorithms considered are: (i) the GF identification algorithm that ignores perturbations [see \eqref{eq:solving_fi}], denoted as ``FI''; (ii) the robust GF identification algorithm from Algorithm~\ref{A:rfi_alg} (``RFI''); (iii) a variation of ``RFI'' where the reweighted $\ell_1$ norm is replaced by the standard $\ell_1$ norm (``RFI-$\ell_1$''); and (iv) the robust GF identification algorithm accounting for the stationarity of $\bbY$ (``RFI-ST'').
First, we observe that the error of the ``FI'' algorithm, while small for low values of $R$, increases rapidly as $R$ grows. 
This is aligned with the discussion of high-order polynomials in \cref{sec:perturbed_graph_filters} and illustrates the merits of the robust algorithms.
Moreover, ``RFI-ST'' presents the best performance illustrating the importance of exploiting additional structure when it is available.
Finally, comparing the error of ``RFI'' and ``RFI-$\ell_1$'' showcases the benefits of replacing the $\ell_1$ norm with its reweighted version.

\begin{figure}[tb]
    \centering
    \begin{tikzpicture}[baseline,scale=1]
\begin{semilogyaxis}[
    width=.5\textwidth,
    xlabel={Filter order},
    xmin={2},
    xmax={6},
    ylabel={$nerr(\hbh,\bbh)$},
    ymin={1e-5},
    ymax={5},
    ytick={1e-5,1e-4,1e-3,1e-2,1e-1,1},
    grid=major,
    legend style={
        at={(1,0)},
        anchor=south east}
    ]
    \pgfplotstableread{data/1-FilterOrder_decay05.csv}\timetable
    
    \addplot[orange, dotted, mark=Mercedes star] table [x={Filter Order}, y=FI] {\timetable};
    \addplot[blue, solid, mark=*] table [x={Filter Order}, y=ITER-REW-NONST] {\timetable};
    \addplot[red, solid, mark=+] table [x={Filter Order}, y=ITER-NONST] {\timetable};
    \addplot[green!80!black, solid, mark=x] table [x={Filter Order}, y=ITER-REW-ST-REAL] {\timetable};
    
    \legend{FI, RFI, RFI-$\ell_1$, RFI-st}
    
\end{semilogyaxis}
\end{tikzpicture}
    \caption{Comparison of the error when estimating $\hbh$ via robust and non-robust algorithms in the presence of perturbations as the order of the GF increases.}
    \label{fig:exps1a}
\end{figure}
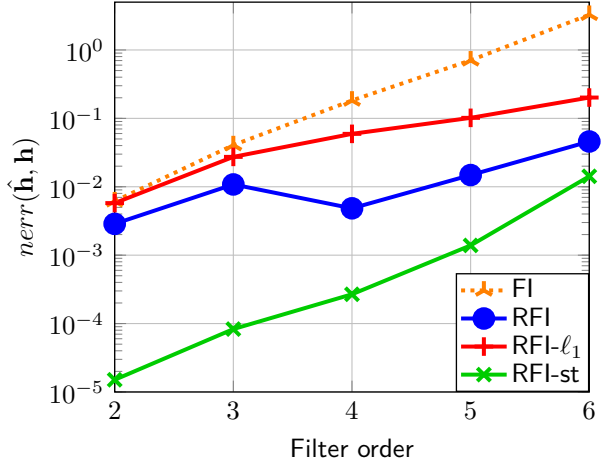

\vspace{2mm}
\noindent\textbf{Test case 2.}
The next experiment tests the influence of different types of perturbations in the robust and non-robust GF identification algorithms.
Figs.~\ref{fig:exps1b_c}a and~\ref{fig:exps1b_c}b illustrate the error of the estimated GF $\hbH$ and the denoised GSO $\hbS$ 
as the ratio of perturbed links in $\barbS$ increases.
Graphs are sampled from the SW~\cite{watts1998collective} random graph model and $\barbS$ is obtained by creating new links, destroying existing links, or simultaneously creating and destroying links, which are respectively denoted as ``C'', ``D'', and ``C/D'' in the legend.
Since the non-robust ``FI'' algorithm does not perform graph denoising we show the error $nerr(\barbS, \bbS)$, denoted as ``$\barbS$'' in \cref{fig:exps1b_c}b.
Furthermore, because the number of perturbed links is fixed, the error of $\barbS$ is the same for the considered perturbations and it is only plotted once.
From the figures, we observe that destroying links is the most harmful perturbation, especially when the focus is on $\hbS$.
This may be explained because destroying links is prone to produce non-connected graphs.
Nevertheless, the results show the resilience of the ``RFI'' algorithm, which provides low-error estimates $\hbH$ and $\hbS$ even when more than 20\% of the links are perturbed.

\begin{figure}[tb]
    \begin{subfigure}{0.47\textwidth}
		\centering
		\begin{tikzpicture}[baseline,scale=1]

\begin{semilogyaxis}[
    xlabel={(a) Proportion of perturbed links},
    xmin={0.05},
    xmax={0.25},
    xtick={.05, .1, .15, .2, .25},
    xticklabels={0.05, 0.1, 0.15, 0.2, 0.25},
    ylabel={$nerr(\hbH,\bbH)$},
    ymin={1e-3},
    ymax={0.4},
    grid=major,
    legend style={
        at={(0,0.5)},
        anchor=west},
    legend columns=2,
    ]
    
    \pgfplotstableread{data/pert_type_rf_errH.csv}\errRFtable
    \pgfplotstableread{data/pert_type_rfi_errH.csv}\errRFItable
    
    \addplot[green!80!black, dotted, mark=+] table [x=perts, y=RF-creat] {\errRFtable};
    \addplot[red, dotted, mark=x] table [x=perts, y=RF-dest] {\errRFtable};
    \addplot[blue, dotted, mark=*] table [x=perts, y=RF-creat-dest] {\errRFtable};
    \addplot[green!80!black, mark=+] table [x=perts, y=RFI-creat] {\errRFItable};
    \addplot[red, mark=x] table [x=perts, y=RFI-dest] {\errRFItable};
    \addplot[blue, mark=*] table [x=perts, y=RFI-creat-dest] {\errRFItable};
    
    \legend{FI-C, FI-D, FI-C/D, RFI-C, RFI-D, RFI-C/D}
    
\end{semilogyaxis}
\end{tikzpicture}
	\end{subfigure}
	\begin{subfigure}{0.47\textwidth}
		\centering
		\begin{tikzpicture}[baseline,scale=1]

\begin{semilogyaxis}[
    xlabel={(b) Proportion of perturbed links},
    xmin={0.05},
    xmax={0.25},
    xtick={.05, .1, .15, .2, .25},
    xticklabels={0.05, 0.1, 0.15, 0.2, 0.25},
    ylabel={$nerr(\hbS,\bbS)$},
    ymin={7e-5},
    ymax={1},
    grid=major,
    legend style={
        at={(0,0)},
        anchor=south west},
    legend columns=2,
    ]
    
    \pgfplotstableread{data/pert_type_rf_errS.csv}\errRFtable
    \pgfplotstableread{data/pert_type_rfi_errS.csv}\errRFItable
    
    \addplot[orange, dotted, mark=triangle*] table [x=perts, y=RF-creat] {\errRFtable};
    \addplot[green!80!black, mark=+] table [x=perts, y=RFI-creat] {\errRFItable};
    \addplot[red, mark=x] table [x=perts, y=RFI-dest] {\errRFItable};
    \addplot[blue, mark=*] table [x=perts, y=RFI-creat-dest] {\errRFItable};
    
    \legend{$\barbS$, RFI-C, RFI-D, RFI-C/D}
    
\end{semilogyaxis}
\end{tikzpicture}
	\end{subfigure}
	\caption{Assessing the performance of the robust GF identification algorithm and the impact of perturbations in the topology. (a) and (b) respectively show the error of estimating $\hbH$ and $\hbS$ using a robust or a non-robust approach for several types of perturbations.}\label{fig:exps1b_c}
\end{figure}
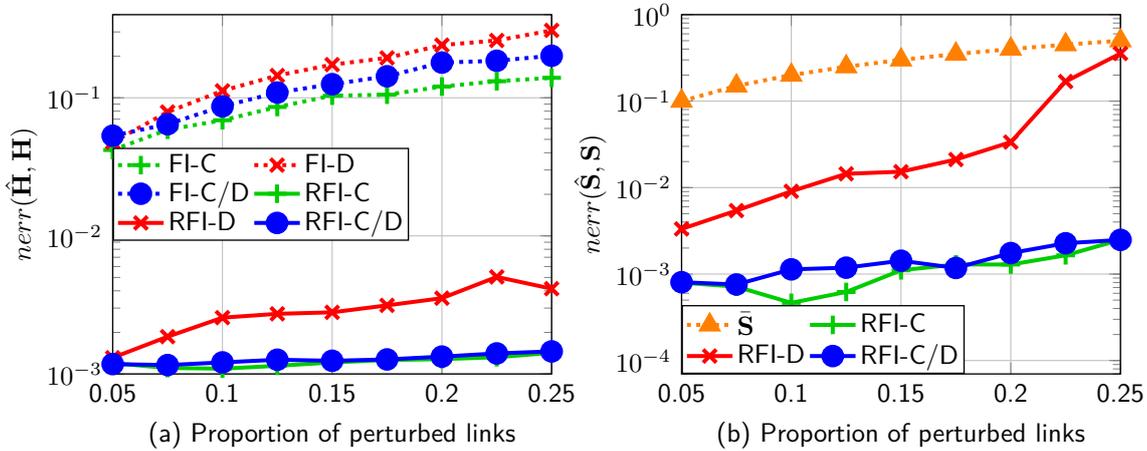

\vspace{2mm}
\noindent\textbf{Test case 3.}
Next, we compare the performance of our algorithms with other robust alternatives.
\cref{fig:exps2a} reports, for each algorithm, $nerr(\hbH,\bbH)$ as the ratio of perturbed links increases.
The baselines considered are the TLS-SEM algorithm from \cite{ceci2020_semtls}, and LLS-SCP from \cite{natali2020topology}.
We note that the TLS-SEM algorithm is tailored to graph signals following a SEM of the form 
\begin{equation}\label{eq:sem_model}
    \bbY=\bbA\bbY+\bbX = (\bbI-\bbA)^{-1}\bbX,
\end{equation}
where the observations at the $i$-th node are represented by the values of the neighbors of $i$ and an exogenous input.
As a result, the TLS-SEM algorithm may not be well suited to deal with signals generated according to the more general model in \eqref{eq:rfi_observ_model}.
Taking this into account, to offer a more favorable comparison we consider two types of graph signals: (i) signals generated according to \eqref{eq:sem_model}, denoted as ``SEM''; and (ii) signals generated according to \eqref{eq:rfi_observ_model}, denoted as ``H''.
It is worth noting that the ``SEM'' can be considered as a particular case of the model ``H'' when the GF $\bbH_{SEM} = (\bbI-\bbA)^{-1}$ is employed.

Looking at the results in \cref{fig:exps2a} we observe the following.
When the ``SEM'' model is considered, TLS-SEM (denoted as ``TLS'') obtains the best performance when the perturbation probability is small, and then, the performance of ``TLS'' and that of the ``RFI'' algorithm become comparable.
This illustrates that our algorithm is especially suitable to deal with a large number of perturbed links.
On the other hand, when the ``H'' model is considered, we observe that the ``RFI'' algorithm consistently outperforms the baselines in the presence of perturbations.
The good performance of the ``RFI'' algorithm on both signal models highlights the flexibility of the proposed formulation since it considers more lenient assumptions than the other alternatives.


\begin{figure}[tb]
    \centering
     \begin{tikzpicture}[baseline,scale=1]
\begin{semilogyaxis}[
    width=.5\textwidth,
    xlabel={Proportion of perturbed links},
    xmin={0},
    xmax={0.3},
    ylabel={$nerr(\hbH,\bbH)$},
    ymin={1e-5},
    ymax={1},
    ytick={1e-5,1e-4,1e-3,1e-2,1e-1,1},
    grid=major,
    legend style={
        at={(1,0)},
        anchor=south east},
    legend columns=2]
    
    \pgfplotstableread{data/4-ComparisonTLS-LLS.csv}\timetable
    
    \addplot[blue, solid, mark=*] table [x={Adjacency perturbation}, y=RFI-iter-H] {\timetable};
    \addplot[blue, dotted, mark=*] table [x={Adjacency perturbation}, y=RFI-iter-SEM] {\timetable};
    \addplot[red, solid, mark=x] table [x={Adjacency perturbation}, y=TLS-SEM-H] {\timetable};
    \addplot[red, dotted, mark=x] table [x={Adjacency perturbation}, y=TLS-SEM-SEM] {\timetable};
    \addplot[green!80!black, solid, mark=+] table [x={Adjacency perturbation}, y={LLS-SCP (Natali)-H}] {\timetable};
    \addplot[green!80!black, dotted, mark=+] table [x={Adjacency perturbation}, y={LLS-SCP (Natali)-SEM}] {\timetable};
    
    \legend{{RFI, H}, {RFI, SEM}, {TLS, H}, {TLS, SEM}, {SCP, H}, {SCP, SEM}}

\end{semilogyaxis}
\end{tikzpicture}
    \caption{Normalized error of $\hbH$ when estimated with the proposed algorithm and with other baselines as the ratio of perturbed links increases.
	Different graph-signal models are considered.}
    \label{fig:exps2a}
\end{figure}
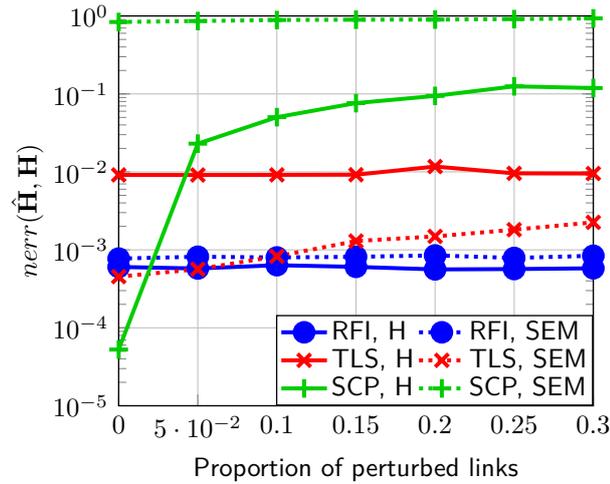
\begin{figure}[tb]
    \begin{subfigure}{0.47\textwidth}
		\centering
		\begin{tikzpicture}[baseline,scale=1]

\begin{semilogyaxis}[
    table/col sep=comma,
    xlabel={(a) Number of nodes},
    xmin={20},
    xmax={100},
    ylabel={time (s)},
    ymax={1e5},
    ytick={1,1e1,1e2,1e3,1e4,1e5},
    grid=major,
    legend style={
        at={(0,1)},
        anchor=north west,
        },
    ]
    
    \pgfplotstableread{data/eff_alg_mean_times.csv}\timetable
    
    \addplot[blue, solid, mark=*] table [x=x, y=alg1] {\timetable};
    \addplot[red, loosely dotted, mark=x] table [x=x, y=alg2] {\timetable};
    \addplot[orange, dotted, mark=x] table [x=x, y=alg3] {\timetable};
    \addplot[green!80!black, densely dotted, mark=x] table [x=x, y=alg4] {\timetable};
    
    \legend{Stand-5, Eff-5-10, Eff-5-25, Eff-5-50}
    
\end{semilogyaxis}
\end{tikzpicture}
	\end{subfigure}
	\begin{subfigure}{0.47\textwidth}
		\centering
		\begin{tikzpicture}[baseline,scale=1]

\begin{axis}[
    table/col sep=comma,
    xlabel={(b) Number of nodes},
    xmin={20},
    xmax={100},
    ylabel={$nerr(\hbH,\bbH)$},
    ymin={0},
    ymax={0.4},
    grid=major,
    legend style={
        at={(0,1)},
        anchor=north west}
    ]
    
    \pgfplotstableread{data/eff_alg_mean_err_H.csv}\timetable
    
    \addplot[blue, solid, mark=*] table [x=x, y=alg1] {\timetable};
    \addplot[red, loosely dotted, mark=x] table [x=x, y=alg2] {\timetable};
    \addplot[orange, dotted, mark=x] table [x=x, y=alg3] {\timetable};
    \addplot[green!80!black, densely dotted, mark=x] table [x=x, y=alg4] {\timetable};
    
    \legend{Stand-5, Eff-5-10, Eff-5-25, Eff-5-50}
    
\end{axis}
\end{tikzpicture}
	\end{subfigure}
	\caption{Comparing the performance of several robust GF identification algorithms.
	(a) and (b) respectively show the running time and error of $\hbH$ using Algorithm~\ref{A:rfi_alg} and Algorithm~\ref{A:efficient_rfi_alg} as the number of nodes increases.
	Different values for the maximum number of iterations of the inner loops are considered.}\label{fig:exps2b_c}
\end{figure}
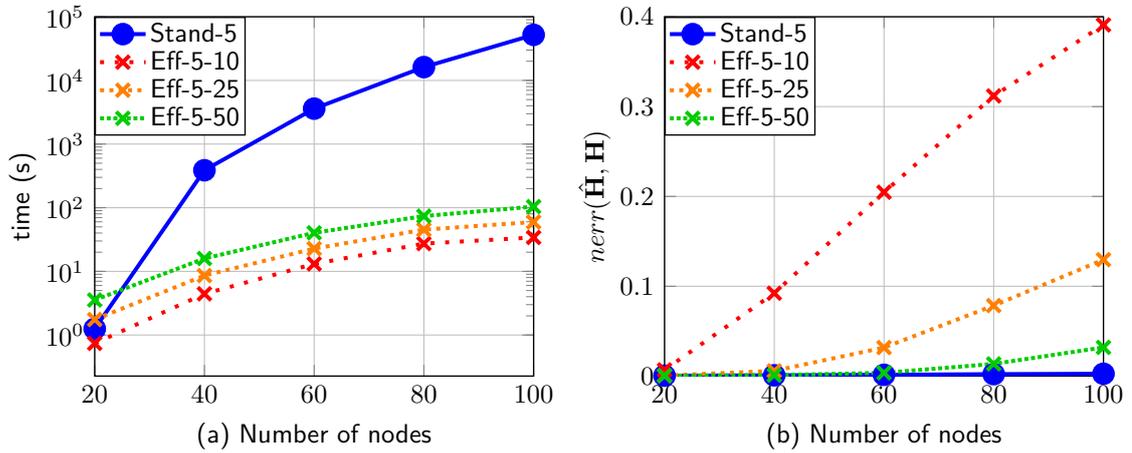

\vspace{2mm}
\noindent\textbf{Test case 4.}
Now, we compare the performance of the standard and the efficient implementation of the robust identification algorithm, as described in Algorithms~\ref{A:rfi_alg} and~\ref{A:efficient_rfi_alg}.
The results are shown in Figs.~\ref{fig:exps2b_c}a and~\ref{fig:exps2b_c}b, where the figures depict the running time measured in seconds and $nerr(\hbH,\bbH)$ as $N$ increases.
The legend identifies first the algorithm employed, then the number of iterations of the outer loop ($t_{max}$), and finally the iterations of the inner loops (with $\tau_{{max_1}}=\tau_{{max_2}}$).
As expected, \cref{fig:exps2b_c}a shows that Algorithm~\ref{A:efficient_rfi_alg} is remarkably faster than Algorithm~\ref{A:rfi_alg} even with medium-sized graphs, achieving a running time $10^3$ times smaller when $N=100$.
On the other hand, in \cref{fig:exps2b_c}b we observe that ``Eff-5-50'' has an error that is close to the standard implementation (``Stand-5'') even though it is considerably faster.
Furthermore, the trade-off between speed and estimation accuracy is also evident.
``Eff-5-10'' is the fastest implementation but the quality of its estimated GF may not be enough for graphs with more than 40 nodes.

\vspace{2mm}
\noindent\textbf{Test case 5.} 
The last experiment with synthetic data studies the benefits of the joint GF estimation.
All the GFs are polynomials of the same $\bbS$, and for each $\bbH_k$ we consider $M_k = 15$ noisy observations with $\eta_\bbw = 0.01$.
\cref{fig:exp_jointfi} shows the results, with the y-axis being the normalized error averaged across the $K$ graphs, i.e., $\frac{1}{K}\sum_{k=1}^K nerr(\hbH_k, \bbH_k)$, and the x-axis representing $K$.
We compare the performance of estimating the GFs jointly (marked as ``J'' in the legend) or separately for the three algorithms (``RFI-$\ell_1$'', ``RFI'', and ``RFI-st'') described in Test case~1.
Note that ``RFI-J'' corresponds to the formulation in \eqref{eq:joint_rfi_noncvx_rew}.
The first thing we observe from the results in \cref{fig:exp_jointfi} is that the error decreases as $K$ increases when a joint algorithm is employed.
This is aligned with the discussion in \cref{sec:rfi_joint} and illustrates the benefit of exploiting the common structure.
In addition, algorithms accounting for the stationary of $\bbY$ outperform the non-stationary alternatives even though we only have $M=15$ signals to estimate the covariance $\hbC_\bby$.

\begin{figure}[tb]
	\centering
	\centering
		\begin{tikzpicture}[baseline,scale=1]

\begin{semilogyaxis}[
    width=0.5\textwidth,
    xlabel={Number of filters},
    xmin={1},
    xmax={5},
    ylabel={$\sum_{k=1}^K nerr(\hbH^{(k)},\bbH^{(k)})/K$},
    ymin={10^-2.5},
    ymax={1e-1},
    ytick={10^-2.5, 1e-2, 10^-1.5, .1},
    grid=major,
    legend style={
        at={(0,0)},
        anchor=south west},
    ]
    
    \pgfplotstableread{data/5-Joint-RFI.csv}\timetable
    
    \addplot[red, dashed, mark=+] table [x={Number of filters}, y=RFI-iter] {\timetable};
    \addplot[red, mark=+] table [x={Number of filters}, y=RFI-iter-J] {\timetable};
    \addplot[blue, dashed, mark=*] table [x={Number of filters}, y=RFI-iter-rew] {\timetable};
    \addplot[blue, mark=*] table [x={Number of filters}, y=RFI-iter-rew-J] {\timetable};
    \addplot[green!80!black, dashed, mark=x] table [x={Number of filters}, y=RFI-iter-Cyest] {\timetable};
    \addplot[green!80!black, mark=x] table [x={Number of filters}, y=RFI-iter-Cyest-J] {\timetable};

    
    \legend{RFI-$\ell_1$, RFI-$\ell_1$-J, RFI, RFI-J, RFI-st, RFI-st-J}
    
\end{semilogyaxis}
\end{tikzpicture}
	\caption{Error performance when estimating $K$ GFs using the separate and joint approach for different values of $K$. 
	}\label{fig:exp_jointfi}
\end{figure}
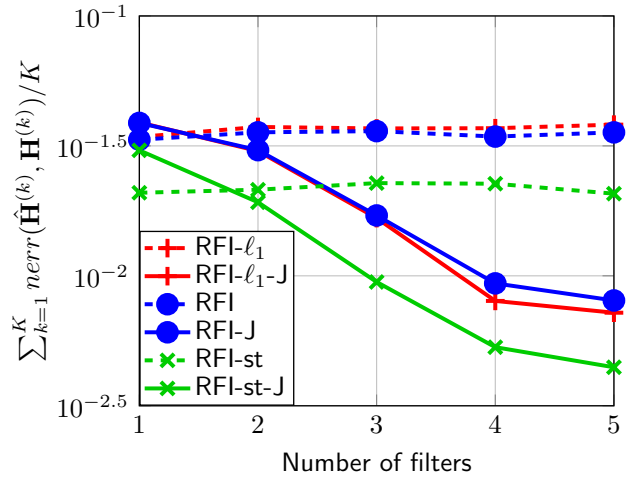

\subsection{Real-world datasets}
To close the numerical evaluation, we test our robust GF identification algorithms over two real-world datasets.

\vspace{3mm}
\noindent\textbf{Weather station network.} This test case evaluates the ability of our algorithms to predict the temperature measured by a network of stations using the data from previous days. The data comes from the ``Global Summary of the Day'' dataset of the National Centers for Environmental Information\footnote{\url{https://www.ncei.noaa.gov/data/global-summary-of-the-day/archive/}} and we used daily temperature measurements from $N=17$ stations in California during 2017 \& 2018. Specifically, with $\bby_\kappa \in \reals^N$ collecting the measurements of the 17 stations at day $\kappa$, we consider an AR model without exogenous inputs, so that $\bby_\kappa \approx \sum_{k=1}^K \bbH_k \bby_{\kappa-k}$. The data samples were divided into two subsets, the first one (training) was used to obtain the GFs $\bbH_k$ and the second one (evaluation) was used to assess the performance and the generalization power of the GFs obtained. Also, the data is normalized so that the signal at each station for all time samples has unitary norm.

The underlying $\ccalG$ was constructed as the unweighted 5-nearest neighbors graph, using the geographical distance between stations. Since temperature relations across stations are likely to be due to a range of factors (including, e.g., altitude), the considered adjacency (based only on geographical positions) may be imperfect, rendering our robust algorithms better suited for this task.

The estimation performance of the different algorithms is shown in Table~\ref{T:tempData}. Since in this case the ground-truth GF is not known, we use the signal denoising error $nerr(\bby_{\kappa},\hby_{\kappa})$ to assess the quality of the schemes. In this specific experiment, the error is measured over all samples (both training and test subsets), to see a clear downward trend when increasing the number of training samples, or equivalently, the train-test split (TTS) value.
The algorithms evaluated are ``LS'', ``LS-GF'' (which postulates a GF with coefficients $\hbh = \argmin_\bbh \| \bbY - \sum_r h_r \bbS^r \bbX \|_F^2$), ``TLS'', ``RFI'' (which assumes an AR(1) process) and ``AR(3)-RFI''. Two values of TTS (0.25 and 0.50) and two prediction horizons (1 and 3) are considered. 
The main observation is that ``AR(3)-RFI'' yields the best performance in all settings. Additionally, the results for TTS=0.25 demonstrate the benefits of considering the underlying graph in the low-sample regime, since even ``LS-GF'', which relies on the  imperfect $\bar{\bbS}$, outperforms ``LS''. On the other hand, ``LS-GF'' does not seem to improve its prediction as TTS increases, while our two algorithms yield a lower prediction error.

\begin{table}[!t]
	\centering

\begin{tabular}{c||c|c||c|c|}
\multirow{2}{*}{Models} & \multicolumn{2}{c}{1-Step} & \multicolumn{2}{c}{3-Step} \\
& TTS=0.25 & TTS = 0.5 & TTS=0.25 & TTS = 0.5 \\ \hline
\!\!LS & $6.9 \cdot 10^{-3}$ & $3.1 \cdot 10^{-3}$ & $2.1 \cdot 10^{-2}$ & $9.1 \cdot 10^{-3}$ \\
\!\!LS-GF & $3.3 \cdot 10^{-3}$ & $3.3 \cdot 10^{-3}$ & $8.4 \cdot 10^{-3}$ & $8.5 \cdot 10^{-3}$ \\
\!\!TLS-SEM & $4.0 \cdot 10^{1}$ & $3.7 \cdot 10^{-2}$ & $6.8 \cdot 10^{-1}$ & $5.5 \cdot 10^{-2}$ \\
\!\!RFI & $3.4 \cdot 10^{-3}$ & $3.1 \cdot 10^{-3}$ & $8.5 \cdot 10^{-3}$ & $7.5 \cdot 10^{-3}$ \\
\!\!AR(3)-RFI & $\mathbf{3.2 \cdot 10^{-3}}$ & $\mathbf{2.8 \cdot 10^{-3}}$ & $\mathbf{7.8 \cdot 10^{-3}}$ & $\mathbf{6.9 \cdot 10^{-3}}$ \\
\end{tabular}
	\caption{Performance of the algorithms in predicting the temperature for 2 prediction horizons (1 and 3) and 2 values (25\% and 50\%) of train-test split (TTS). The metrics shown are the average of the normalized error at each timestep $\frac{1}{M} \sum_{\kappa=1}^M nerr(\hby_{\kappa},\bby_{\kappa})$ for all samples.} \label{T:tempData}
    \vspace{0.1cm}
\end{table}

\vspace{3mm}
\noindent\textbf{Air quality station network.}
We consider an experimental setup (AR model, graph creation method...) similar to that for the weather station data but, in this case, we use 2018 \& 2019 data from the United States Environmental Protection Agency\footnote{\url{https://www.epa.gov/outdoor-air-quality-data}} to predict the ozone levels in a network of 17 outdoor stations in California.
The stations chosen were those with at least 330 measurements each year for a selection of pollutants, and missing data was filled via first-order interpolation.

The goal here is to analyze how the prediction horizon affects the prediction error.
The value of TTS chosen was 0.5, i.e. evaluation data represented 50\% of the samples.
\cref{fig:exp_airQual} shows the performance of the algorithms when predicting ozone levels.
As a baseline, ``LS-Eval (LB)'' shows the error measured on the evaluation data when obtaining the GF also using evaluation data, therefore representing a lower bound for the LS error using AR models of order 1. Also, ``Copy-Prev-Day'' represents the error obtained by the ``identity GF'', which copies the previous day's measurement.
As in the previous example, the best performing algorithm is ``AR(3)-RFI'', whose performance is close to the baseline, followed by ``RFI''.

\begin{figure}[t]
	\centering
	\centering
	\begin{tikzpicture}[baseline,scale=1]

\begin{semilogyaxis}[
    width=0.5\textwidth,
    xlabel={Time horizon used for prediction},
    xmin={1},
    xmax={5},
    xtick={1, 2, 3, 4, 5},
    xticklabels={1, 2, 3, 4, 5},
    ylabel={$\sum_{\kappa=1}^M nerr(\hby_\kappa,\bby_\kappa) / M$},
    ymin={1e-2},
    ymax={0.1},
    ytick={1e-2, 10^-1.5, .1},
    grid=major,
    legend style={
        at={(1,0)},
        anchor=south east},
    legend columns=2,
    ]
    
    \pgfplotstableread{data/airQuality-NSteps.csv}\errNSteps
    
    \addplot[green!80!black, dotted, mark=+] table [x=N-Steps, y=LS-Perfect-(LB)] {\errNSteps};
    \addplot[blue, mark=*] table [x=N-Steps, y=Least-Squares] {\errNSteps};
    \addplot[red, mark=+] table [x=N-Steps, y=Copy-Prev-Day] {\errNSteps};
    \addplot[orange, mark=+] table [x=N-Steps, y=TLS-SEM] {\errNSteps};
    \addplot[purple, mark=*] table [x=N-Steps, y=Least-Squares-GF] {\errNSteps};
    \addplot[cyan, mark=x] table [x=N-Steps, y=RGFI] {\errNSteps};
    \addplot[black, mark=x] table [x=N-Steps, y=VAR-RGFI] {\errNSteps};
    
    \legend{{LS-Eval (LB)},LS,Copy-Prev-Day,TLS-SEM,LS-GF,RFI,AR(3)-RFI}
    
\end{semilogyaxis}
\end{tikzpicture}
	\caption{Performance of the algorithms predicting ozone levels in the AirData station network, as the time horizon of the prediction increases.} \label{fig:exp_airQual}
\end{figure}
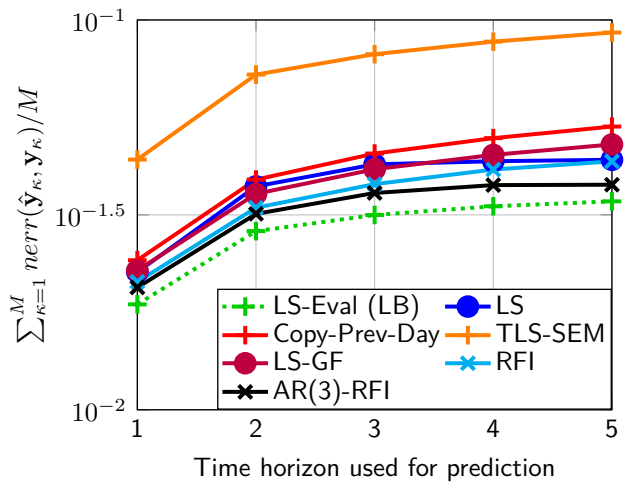

%
\section{Concluding remarks}\label{sec:conclusion_filter_id}
This chapter put forth a framework dealing with estimation problems in GSP where the information about (the links of) the supporting graph is uncertain. 
Specifically, we addressed the problem of estimating a GF (i.e., a polynomial of the GSO) from input and output graph signals under the key assumption that only a perturbed version of the true GSO was available. 
In contrast to the majority of existing approaches that operate on the spectral domain, we recast the true graph as an additional estimation variable and formulated an optimization problem that \emph{jointly} estimated the GF and the true (unknown) GSO. We focused first on the case where only one GF needed to be estimated and, then, shifted to (multi-feature and AR graph signal) setups where multiple GFs have to be jointly identified. 
The formulated optimizations operated completely in the vertex domain and bypassed the problem of computing high-order polynomials, avoiding the challenges of dealing with the influence of perturbations in the graph spectrum as well as the numerical instability and error propagation associated with  high-order matrix polynomials. While non-convex, upon blending techniques from alternating optimization and MM, the proposed algorithm was shown to be capable to find a stationary point in polynomial time. This algorithm was later modified so that the scaling of the computational complexity with respect to the number of nodes in the graph is reduced. Future work includes delving into the robust estimation of ARMA time-varying graph signals, consideration of additional graph perturbation models, and application of our robust estimation framework to other GSP problems, to name a few.

\newpage
\section{Appendix: Proof of Theorem~\ref{thm1}}\label{A:proof_thm1}
The proof relies on the results presented in \cite[Th. 1b]{hong2015unified}, so it suffices to show that our formulation and algorithm fulfill the required conditions in \cite{hong2015unified}. To that end, recall that $f(\bbz)$ is the objective function in \eqref{eq:rfi_nonconvex_rew}, and let $\bbz_1:=\vvec(\bbH)$ and $\bbz_2:=\vvec(\bbS)$ denote the $B=2$ blocks of variables considered in our algorithm.
Moreover, at each step, the function $f(\bbz)$ is approximated by $u_1(\bbz_1)$ and $u_2(\bbz_2)$, corresponding to the objective functions in \eqref{eq:step1_filterid} and \eqref{eq:step2_graph_denoising}.
Then, to ensure the convergence of our iterative algorithm the following conditions are required.

\noindent\textit{(\textbf{C1}) Each function $u_b(\bbz_b)$ must be a global upper bound of $f(\bbz)$ and the first-order behavior of $u_b(\bbz_b)$ and $f(\bbz)$ must be the same.}

\noindent\textit{(\textbf{C2}) $f(\bbz)$ must be regular (cf. \cite{hong2015unified}) at every point in $\ccalZ^*$.} 

\noindent \textit{(\textbf{C3}) The level set $\ccalZ^{(0)} = \{\bbz \; | \; f(\bbz) \leq f(\bbz^{(0)}) \}$ is compact.}

\noindent\textit{(\textbf{C4}) At least one of the problems in \eqref{eq:step1_filterid} and \eqref{eq:step2_graph_denoising} must have a unique solution.}

\noindent
Next, we address each of the four conditions separately, proving that our approach satisfies all of them.

Condition (\textbf{\textit{C1}}) requires the surrogate functions $u_b(\bbz_b)$ to be global upper bounds of $f(\bbz)$.
For the first block ($b=1$), it is easy to see that $u_1(\bbz_1)=f(\bbz)$ when the block $\bbz_2$ remains constant, so it satisfies the requirements.
Regarding $u_2(\bbz_2)$, we approximate $f(\bbz)$ with the first-order Taylor series of the logarithmic penalty.
Because the $\log$ is a concave differentiable function, it follows that its Taylor series of order one constitutes a global upper bound. 
Moreover, because $u_2(\bbz_2)$ is a first-order Taylor series approximation of $f(\bbz)$, it also follows that the first-order behavior of $f(\bbz)$ and $u_2(\bbz_2)$ is the same.
Therefore, $u_2$ also satisfies the requirement, and hence, (\textbf{\textit{C1}}) is fulfilled.

To prove (\textit{\textbf{C2}}), according to \cite{hong2015unified}, a function $f(\bbz)$ is regular if its non-smooth components are separable across the different blocks of variables.
To show this, we decompose $f$ as $f = g_A+g_B$, with functions $g_A$ and $g_B$ being defined as

\begin{align}
    &g_A(\bbH, \bbS) = \|\bbY - \bbH \bbX\|_F^2 + \gamma \|\bbH \bbS - \bbS \bbH\|_F^2, \nonumber \\
    &g_B(\bbS) = \lambda \sum_{i,j=1}^N\log(|S_{ij}\!-\!\bar{S}_{ij}|+\delta_2) + \beta\sum_{i,j=1}^N\log(|S_{ij}|+\delta_1). \nonumber
\end{align}
Since $g_A$ is a smooth function and the non-smooth function $g_B$ only depends on the variables on the second block, $\bbz_2=\vvec(\bbS)$, it follows that $f(\bbz)$ is a regular function for all feasible points.

Next, we show that the level set $\ccalZ^{(0)} = \{\bbz \; | \; f(\bbz) \leq f(\bbz^{(0)}) \}$ is compact as required by (\textit{\textbf{C3}}).
We start by noting that the entries of $\bbS$ are continuous subsets of $\reals$, (e.g., $S_{ij}\in\reals_+$ when $\bbS=\bbA$), and that $\bbH\in\reals^{N\times N}$, so $f(\bbz)$ is continuous.
Moreover, $f(\bbz)\leq f(\bbz^{(0)})$ implies that the functions $\|\bbY-\bbH\bbX\|_F^2$ and $\log(|S_{ij}|+\delta_1)$ are all bounded, rendering the domain of $f(\bbz)$ bounded. 
It follows then that the level set $\ccalZ^{(0)}$ is compact.

Finally, we need to prove that either \eqref{eq:step1_filterid} or \eqref{eq:step2_graph_denoising} has a unique solution, so that (\textit{\textbf{C4}}) is fulfilled. \cref{thm3} (see below) states that, under the two conditions required by \cref{thm1} (i.e., $\bbS$ does not have repeated eigenvalues, and the graph signals $\bbX$ excite every graph frequency), the solution to \eqref{eq:step1_filterid} is unique. This confirms that (\textit{\textbf{C4}}) is satisfied, concluding the proof. 

\begin{proposition}\label{thm3}
Let $\bbH\in\reals^{N\times N}$, $\bbS=\bbV\diag(\bblambda)\bbV^{-1}\in\reals^{N\times N}$, and $\bbX\in\reals^{N\times M}$ be the GF, the GSO, and the input signals in \eqref{eq:step1_filterid}. Then, \eqref{eq:step1_filterid} has a unique solution w.r.t. $\bbH$ if the following conditions are satisfied:
\begin{enumerate}
	\item $\lambda_i \neq \lambda_{i'}$, for all $i\neq i'$ and $(i,i')\in\{1,...,N\}^2$.
	\item Every row of $\tbX = \bbV^{-1} \bbX$ has at least one non-zero entry. 
\end{enumerate}
\end{proposition}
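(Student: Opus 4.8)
The plan is to exploit the fact that the objective in \eqref{eq:step1_filterid} is a convex quadratic in the matrix variable $\bbH$, so that uniqueness of the minimizer is equivalent to strict convexity, which in turn reduces to showing that the associated homogeneous quadratic part is positive definite. First I would write $f_1(\bbH) = \|\bbY - \bbH\bbX\|_F^2 + \gamma\|\bbS\bbH - \bbH\bbS\|_F^2$ and observe that, after expanding the first squared norm, the purely second-order part in $\bbH$ is
\[
    Q(\bbH) := \|\bbH\bbX\|_F^2 + \gamma\,\|\bbS\bbH - \bbH\bbS\|_F^2 .
\]
Since $f_1$ is a convex quadratic (a sum of squared Frobenius norms of linear maps of $\bbH$ plus lower-order terms), its Hessian is constant and positive semidefinite, and the minimizer is unique if and only if the Hessian is positive definite, i.e., if and only if $Q(\bbH) > 0$ for every $\bbH \neq \bbzero$. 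Thus the whole proposition reduces to showing that $Q(\bbH)=0$ forces $\bbH=\bbzero$.

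Next I would note that, because $\gamma > 0$ and both summands of $Q$ are nonnegative, $Q(\bbH) = 0$ holds precisely when $\bbH\bbX = \bbzero$ \emph{and} $\bbS\bbH = \bbH\bbS$ simultaneously. Here condition~1 enters: when $\bbS = \bbV\diag(\bblambda)\bbV^{-1}$ has $N$ distinct eigenvalues, its commutant consists exactly of the matrices that are simultaneously diagonalized by $\bbV$ (equivalently, the polynomials in $\bbS$), so $\bbS\bbH = \bbH\bbS$ forces $\bbH = \bbV\diag(\tbh)\bbV^{-1}$ for some $\tbh \in \mathbb{C}^N$. Substituting this into $\bbH\bbX = \bbzero$ and recalling $\tbX = \bbV^{-1}\bbX$ gives $\bbV\diag(\tbh)\tbX = \bbzero$; since $\bbV$ is invertible this is equivalent to $\diag(\tbh)\tbX = \bbzero$, whose $(i,j)$ entry is $\tilde{h}_i \tilde{X}_{ij}$.

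Finally I would invoke condition~2. For each row index $i$ it guarantees some column $j$ with $\tilde{X}_{ij} \neq 0$, so the requirement $\tilde{h}_i \tilde{X}_{ij} = 0$ forces $\tilde{h}_i = 0$; as this holds for every $i$, we obtain $\tbh = \bbzero$ and hence $\bbH = \bbzero$. This establishes that $Q$ is positive definite, so $f_1$ is strictly convex and the minimizer of \eqref{eq:step1_filterid} is unique, which is exactly the claim.

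The step I expect to require the most care is the commutant characterization in the middle paragraph: I must state cleanly the standard linear-algebra fact that, for a diagonalizable matrix with simple spectrum, commuting with $\bbS$ is equivalent to sharing the eigenbasis $\bbV$ (being a polynomial in $\bbS$). I would also make sure the argument goes through in the directed case, where $\bblambda$ and hence $\tbh$ may be complex; this causes no difficulty, since the scalar implication $\tilde{h}_i \tilde{X}_{ij} = 0$ with $\tilde{X}_{ij} \neq 0$ yields $\tilde{h}_i = 0$ over $\mathbb{C}$ as well, and $\bbH = \bbV\diag(\tbh)\bbV^{-1}$ remains the real minimizer regardless of the intermediate complex representation.
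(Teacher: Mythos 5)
Your proof is correct and is essentially the same argument as the paper's: the paper vectorizes $\bbH$ and shows the stacked matrix $\bbF=[\gamma\bbD;\bbE]$ has full column rank by intersecting $\ccalN(\bbD)$ (the commutant of $\bbS$, characterized via the Kronecker eigenstructure under the simple-spectrum assumption) with $\ccalN(\bbE)$, which is exactly your positive-definiteness argument for $Q(\bbH)$ written in Kronecker/Khatri--Rao notation. Your matrix-level presentation of the commutant step and the entrywise use of the nonzero rows of $\tbX$ match the paper's computation $\bbE\bbn=\bbV(\bbtheta\circ\tbx)$ step for step.
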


\begin{proof}
To simplify exposition, we focus first on the (most restrictive) setup of having only $M=1$ input-output pair. Defining $\hbh:=\vvec(\bbH)$, we can reformulate \eqref{eq:step1_filterid} as
\begin{equation}
	\text{min}_{\hbh \in \reals^{N^2}} \gamma\| (\bbI \otimes \bbS - \bbS^\top  \otimes \bbI) \hbh \|_2^2 + \| \bby - (\bbx^\top \otimes \bbI) \hbh \|_2^2,
	\label{eq:step1_filterid_vec}
\end{equation}
where lowercase symbols $\bby$ and $\bbx$ are used to emphasize that the output and input signals are a single $N$-dimensional vector. %
Upon defining $\bbD:= \bbI \otimes \bbS - \bbS^\top \otimes \bbI$, and $\bbE := \bbx^\top \otimes \bbI$, solving \eqref{eq:step1_filterid_vec} is equivalent to solving
\begin{equation}\label{eq:step1_filterid_single_ls_reformulation}
	\text{min}_{\hbh \in \reals^{N^2}} \Big\| \begin{bmatrix}\bbzero_{N^2} \\ \bby~\end{bmatrix} - \bbF \hbh \Big\|_2^2~\text{with}~\bbF:=\begin{bmatrix}\gamma \bbD \\ ~\bbE\end{bmatrix}
 \end{equation}
To prove that~\eqref{eq:step1_filterid_single_ls_reformulation} has a unique solution, it suffices to show that $\bbF$ is full column rank, i.e. $\nexists \; \bbn \in \reals^{N^2} $ such that $\bbF \bbn = \bbzero_{N+N^2}$. To show this, we first identify $\ccalN(\bbD)$, the null space of $\bbD$, and then show that $\bbE \bbn \neq \bbzero_N \; \forall \; \bbn \in \ccalN(\bbD)\setminus \{\bbzero_{N^2} \}$.

We start with the characterization of $\ccalN(\bbD)$. Given the Kronecker structure of $\bbD$, each of its $N^2$ eigenvalues has the form $\lambda_k - \lambda_{k'}$, with $(\bbV^{-1})^\top \otimes \bbV$ being the associated eigenvectors. Leveraging that $\lambda_i\neq \lambda_{i'}$ for $i\neq i'$, it follows that only when $i=i'$ the eigenvalue of $\bbD$ is zero. As a result, $\rank (\bbD) = N^2 - N$ and $\text{dim} (\ccalN(\bbD)) = N$. Equally important, the $N$ eigenvectors associated with the $N$ zero eigenvalues are given by $(\bbV^{-1})^\top \odot \bbV$, which, as a result, constitutes a basis spanning $\ccalN(\bbD)$.
More formally, we concluded that $\ccalN(\bbD) = \{ ((\bbV^{-1})^\top \odot \bbV) \bbtheta \;| \forallsymb \; \bbtheta \in \reals^{N}\}$.
Note that $\odot$ denotes the Khatri-Rao product.

Thus, to show that $\bbF$ in~\eqref{eq:step1_filterid_single_ls_reformulation} is full column rank we just need to prove that the only element $\bbn\in\ccalN(\bbD)$ that renders $\bbE \bbn = \bbzero_N$ is the all-zero vector $\bbzero_{N^2}$. To do so, we leverage the characterization of $\ccalN(\bbD)$ and write $\bbE \bbn$ as
\begin{align}
	\nonumber \bbE \bbn &= (\bbx^\top \otimes \bbI) ((\bbV^{-1})^\top \odot \bbV) \bbtheta 
	= (\bbx^\top (\bbV^{-1})^\top \odot \bbV) \bbtheta \\
	\nonumber &= \bbV \text{diag}(\bbtheta) (\bbx^\top (\bbV^{-1})^\top)^\top 
	\nonumber = \bbV \text{diag}(\bbtheta)  \bbV^{-1} \bbx \\
	&= \bbV \text{diag}(\bbtheta) \tbx
	= \bbV (\bbtheta \circ \tbx), \label{eq:Dtimesn_cannotbezero}
\end{align}
where we used the property $(a \otimes b) (c \odot d) = a c \odot b d$, and $\circ$ denotes the entry-wise product.
Since $\bbV$ is invertible, the first and last terms in \eqref{eq:Dtimesn_cannotbezero} demonstrate that $\bbE \bbn=\bbzero_N$ requires $\bbtheta \circ \tbx=\bbzero_N$. However, condition 2) in \cref{thm3} states that $\tilde{x}_i \neq 0 ~\forall~ i$; hence, $\bbtheta \circ \tbx=\bbzero_N$ requires $\bbtheta=\bbzero_N$.  This implies that the only element in $\ccalN(\bbD)$ that renders $\bbE \bbn = \bbzero_N$ is $\bbn=(\bbV^{-1})^\top \odot \bbV) \bbzero_N =\bbzero_{N^2}$, concluding the proof.

The proof can be generalized for $M > 1$. In that case, the matrix $\bbE$ has size $MN\times N^2$ and the counterpart to \eqref{eq:Dtimesn_cannotbezero} establishes that having $\bbE \bbn=\bbzero$ requires $\vvec(\text{diag}(\bbtheta) \tbX)= \bbzero_{MN}$. Since \cref{thm3} assumes that each row of $\tbX$ has at least one nonzero entry, it follows that $\bbtheta=\bbzero_{MN}$, concluding the proof. 
\end{proof}
\chapter{Robust network topology inference}\label{chap:nti_hidden}
The last perturbation considered in our robust framework is the presence of hidden nodes, a perturbation particularly relevant (and common) in the context of network topology inference.
Even though most of the existing works approach the problem of identifying the network topology based on the assumption that the whole vertex set is observed, oftentimes this scenario is not realistic.
In fact, in many relevant settings, the \emph{observed} data may correspond only to a subset of the nodes from the original graph while the rest of them remains unobserved or \emph{hidden}.
The existence of these hidden nodes entails a challenge for most of the existing methods, which require important adjustments to develop a robust alternative.
Although noticeable less than its ``complete network'' counterpart, topology inference of networks with hidden variables has attracted some attention, with examples including Gaussian graphical model selection \cite{chandrasekaran2012latent}, inference of linear Bayesian networks \cite{anandkumar2013learning}, nonlinear regression \cite{mei2018silvar}, or brain connectivity \cite{chang2019graphical}, to name a few.

Motivated by this, the primary goal of this chapter is to develop a joint network topology algorithm that is robust to the presence of hidden nodes and harnesses the similarity between the graphs being estimated to enhance the quality of the estimation.
In this chapter, the key assumption that enables learning the graph topology from a set of nodes is that the observed signals are observations from a random network process stationary on the unknown graphs.
Therefore, we first introduce how the presence of hidden nodes influences the graph stationarity assumption, and then, we analyze how to leverage the graph similarity between nodes that are not observed.
This work has been published in~\cite{buciulea2019network,rey2022joint}.

After a brief introduction presented in \cref{sec:introduction_nti_hidden}, the remaining chapter is organized as follows.
First, \cref{sec:hidden_variables_inference} provides a general overview of the structure resulting from the presence of hidden nodes.
Then, \cref{sec:stationary_inf} introduces the structure and problem formulation when a single graph is being learned, and \cref{sec:joint_stationary_inf} addresses the joint network topology inference form stationary observations.
Finally, \cref{sec:experiments_nti_hidden} evaluates the performance of the proposed algorithms, and \cref{sec:conclusion_nti_hidden} provides some concluding remarks.

\section{Introduction}\label{sec:introduction_nti_hidden}
When reviewing the literature addressing the problem of network topology inference, it holds that the standard approach entails learning the topology of a single graph when observations (measurements) from all the nodes in the network are available.
Nonetheless, in many relevant settings we only have access to observations from a subset of nodes, with the remaining nodes being unobserved or hidden.
The existence of these hidden nodes constitutes a relevant and challenging problem since closely related values from two observed nodes may be explained not only by an edge between the two nodes but by a third latent node connected to both of them.
Furthermore, many contemporary setups involve \emph{multiple related networks}, each of them with a subset of available signals.
This is the case, for example, in multi-hop communication networks in dynamic environments, in social networks where the same set of users may present different types of interactions, or in brain analytics where observations from different patients are available and the goal is to estimate their brain functional networks. 
When there exist several closely related networks, we can boost the performance of network topology inference by approaching the problem in a joint fashion that allows us to capture the relationship between the different graphs~\cite{murase2014multilayer,danaher2014joint,navarro2020joint,arroyo2021inference}.

Based on the previous discussion, in this chapter we approach the problem of joint network topology inference with hidden nodes from stationary observations.
Although the assumption of graph stationarity has been successfully adopted in the context of the network-topology inference problem, a formulation robust to the presence of hidden variables is still missing.
To fill this gap, first we detail how hidden nodes impact the classical definition of graph stationarity and introduce the formulation of the single-network-recovery problem as a constrained optimization that accounts explicitly for the modified definitions.
Then, we propose a topology inference method that simultaneously performs joint estimation of \emph{multiple} graphs and accounts for the presence of hidden variables.
Then, to fully benefit from the joint inference formulation, a critical aspect is to capture the similarity among graphs not only accounting for the observed nodes but also for the hidden ones.
This is achieved by carefully exploiting the structure inherent to the presence of latent variables with a regularization inspired by group Lasso~\cite{simon2013sparse}.
The proposed method is evaluated using synthetic and real-world graphs and compared with other related approaches.

\vspace{3mm}
\noindent\textbf{Contributions.}
To summarize, the main contributions of this chapter are the following.
\begin{itemize}
    \item[(i)] We develop a joint network topology inference optimization-based framework that accounts for the presence of hidden variables and exploit the graph similarity on the whole graph, not only on the observed nodes.
    \item[(ii)] We quantify the performance of the proposed algorithms using numerical experiments. 
\end{itemize}

To introduce notation and facilitate exposition, we start by reviewing topology inference methods in the presence of hidden variables for partial correlation networks as well as their generalization for graph stationary signals. This review (including the generalization to stationary signals) was published in \cite{buciulea2022learning}. While being one of the authors of that paper, we emphasize that the contents of \cite{buciulea2022learning} are not included as a contribution of this Ph.D. thesis and, that, the contributions of this chapter are limited to those listed above as (i) and (ii).

\section{Topology inference model in the presence of hidden variables}\label{sec:hidden_variables_inference}
The first step towards formally stating the network topology inference from a robust perspective is to properly describe the structure resulting from the presence of hidden variables.
To that end, let us assume that the unknown graph $\ccalG$ is an undirected graph with $N$ nodes, consider that there is a random process associated with $\ccalG$, and denote by $\bbX := [\bbx_1,...,\bbx_M]\in\reals^{N \times M}$ a collection of $M$ independent realizations of such a process.
In this chapter, to model the presence of hidden nodes, we assume that we only have access to the subset of nodes $\ccalO \subseteq \ccalV$ with cardinality $O\leq N$.
Meanwhile, the remaining $H=N-O$ nodes collected in the subset $\ccalH=\ccalV\setminus\ccalO$ stay unobserved, and thus, only the entries of $\bbX$ associated with the subset $\ccalO$ are observed.
For the sake of simplicity and without loss of generality, we assume that the observed nodes correspond to the first $O$ nodes of the graph.
Hence, defining $\bbXo\in\reals^{O\times M}$ as the submatrix formed by the first $O$ rows of $\bbX$, it is clear that $\bbXo$ collects the values of the $M$ available signals at the $O$ observed nodes.
Lastly, if the $N\times N$ matrices $\bbS$ and $\bbC$ represent, respectively, the full GSO and covariance matrix of the random graph process, these matrices, as well as the signals $\bbX$, present the following block structure
\begin{equation}\label{eq:block_def_S_C}
    \bbX =
    \begin{bmatrix}
    \bbXo  \\
    \bbXh
    \end{bmatrix}, \quad
    \bbS \!=\! 
    \begin{bmatrix}
        \bbSo & \bbSoh  \\
        \bbSho & \bbSh
        \end{bmatrix}, \quad
        \bbC \!=\! 
        \begin{bmatrix}
        \bbCo & \bbCoh  \\
        \bbCho & \bbC_H
    \end{bmatrix}.
\end{equation}
Here, the submatrices $\bbSo$ and $\bbCo$, both of dimension $O \times O$, are associated with the observed variables.
The observed GSO $\bbSo$ describes the connections between the observed nodes while the remaining blocks collect connections involving hidden nodes.
Similarly, the observed covariance $\bbCo$ corresponds to the covariance of the random variables defined at the observed nodes.
Since the graph is undirected, both $\bbS$ and $\bbC$ are symmetric, and thus, $\bbSoh = \bbSho^\top$ and $\bbCoh = \bbCho^\top$.
Clearly, the sample covariance matrix $\hbC = \frac{1}{M}\bbX\bbX^\top$ present the same block structure as $\bbC$.

With the previous definitions in place, the problem of network topology inference in the presence of hidden nodes aims at estimating the submatrix $\bbSo$ from the observed graph signals $\bbXo$ while accounting for the presence of hidden nodes.
This is depicted in \cref{fig:nti_hidden}, where the detrimental effects of ignoring the hidden nodes is also illustrated.
However, despite having observations from $O$ nodes, there are still $H=N-O$ nodes that remain unseen and influence the observed signals $\bbXo$, rendering the inference problem challenging and severely ill-conditioned.
As a result, to ensure the tractability of the problem, we assume that the number of hidden nodes is substantially smaller than the number of observed nodes ($ O \lessapprox N$) and, more importantly, we consider that there exists a known property relating the full graph signals $\bbX$ to the full GSO $\bbS$.
The particular relationship is further developed in subsequent sections, where we assume that $\bbX$ is stationary (\cref{sec:stationary_inf} and \cref{sec:joint_stationary_inf}) on $\bbS$, giving rise to different network topology inference problem that are formally introduced later in the chapter.
Then, the key issue to address is how the relation between $\bbX$ and $\bbS$, which involves the full signals and GSO, translates to the submatrices $\bbXo$, $\bbSo$, and $\bbCo$ in \eqref{eq:block_def_S_C}.
Nonetheless, before discussing our specific solution, a relevant question that is addressed next is how classical topology-inference approaches handle the problem of latent nodal variables.

\begin{figure}[tb]
    \centering
    \includegraphics[width=.9\textwidth]{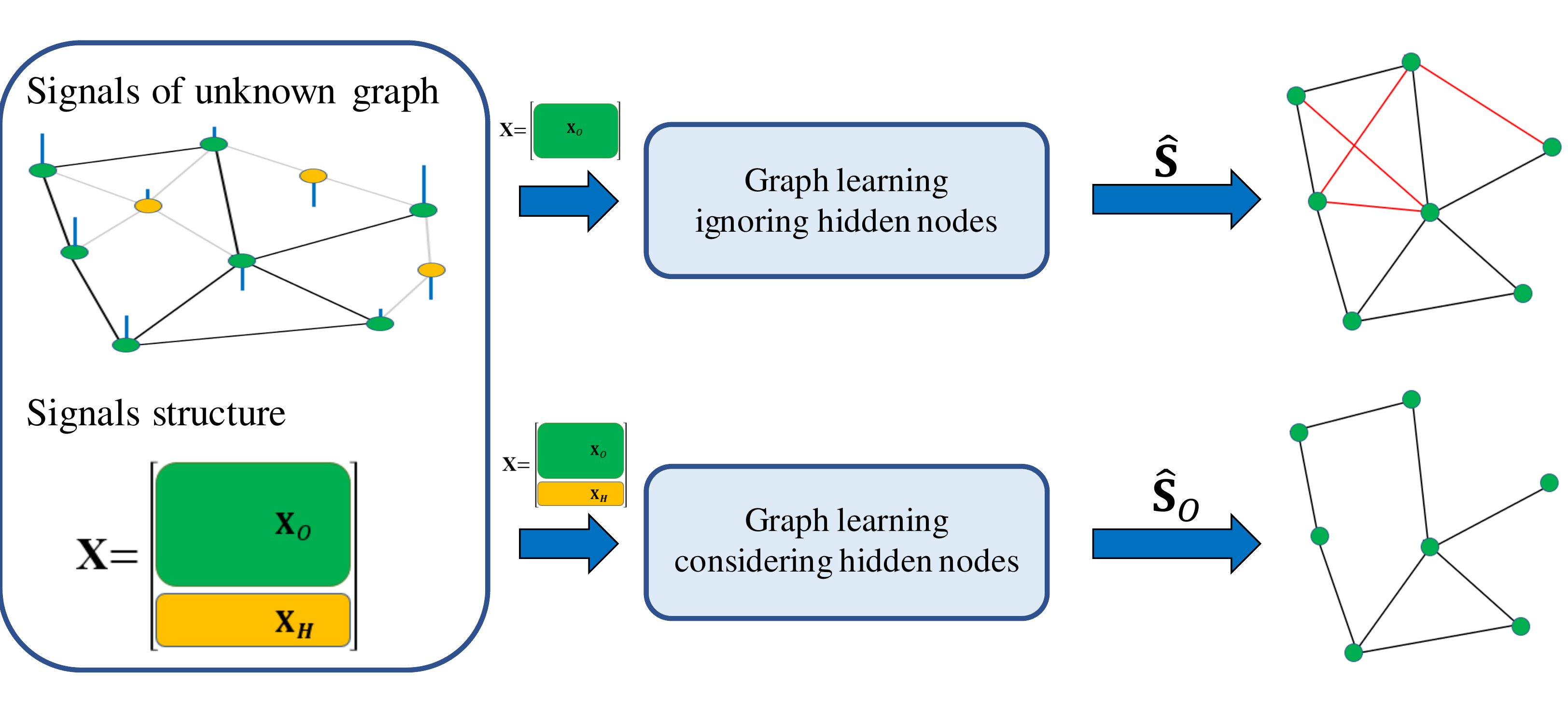}
    \caption{Depiction of the importance of modeling the influence of hidden variables. 
    The left box contains the true (unknown) graph with the nodes in $\ccalH$ represented in yellow, and the graph signals collected in $\bbX$, which is a block matrix as in \eqref{eq:block_def_S_C}.
    Then, the diagram on the top infers the graph assuming that observations from all nodes are available, i.e., assuming $\bbX = \bbXo$, and hence it wrongly estimates some edges (represented in red).
    On the other hand, the diagram below takes into account the whole matrix $\bbX$ even though only $\bbXo$ is observed, and thus, it takes into account the presence of hidden variables and provides an accurate estimation.}
    \label{fig:nti_hidden}
\end{figure}

\subsection{Correlation and partial correlation networks with hidden variables}
A key question when addressing the network topology inference problem in the (more general) scenario of  hidden nodes is how to modify the existing formulations that deal with the full observable case (i.e. $\ccalO=\ccalV$).
The answer to that question is very different for the so-called direct methods (which consider that a link between $i$ and $j$ exists based only on correlation/similarity metrics between the signals observed at $i$ and $j$) and indirect methods (which consider that the link $(i,j)$ exists based on the global relations/dependencies among all nodes and, hence, depends on the full observation matrices).
Correlation networks, which basically assume that $\hbS$ corresponds to (a thresholded version) of $\bbC$, fall into the first category.
The generalization to setups with hidden nodes is indeed trivial and given by $\hbSo = \bbCo$, where only the direct influence between each pair of observed nodes is considered.
A relevant example within the second category are partial correlation methods (including, those for GMRF) which, in their simplest form, assume that $\bbS = \bbC^{-1}$.
Under this setting, the key to include hidden variables resides in noticing that, using the expression for recovering a block of the inverse of a matrix, we can write $\bbCo^{-1} = \bbSo - \bbQ$, with $\bbQ = \bbSoh\bbSh^{-1}\bbSoh$ being a low-rank matrix since $H \ll O$. Leveraging this, the authors in~\cite{chandrasekaran2012latent} modified the celebrated \acrfull{gl} algorithm to deal with hidden variables via an augmented maximum-likelihood estimator.
The resulting algorithm is known as \acrfull{lvgl} and is given by
\begin{alignat}{2}\label{eq:graphical_lasso}
    \!\!&\! \max_{\bbSo, \bbQ} && \;\;\;\; \log\det(\bbSo - \bbQ) -\tr(\hbCo (\bbSo - \bbQ)) - \lambda\|\bbSo\|_1  -\gamma\|\bbQ\|_*,     \\ 
    \!\!&\! \mathrm{\;\;s. \;t. } && \;\;\;\; \bbSo - \bbQ \succeq \bbzero,\;\; \bbQ \succeq \bbzero,  \;\; \nonumber
\end{alignat}
where $\hbCo$ represents the sample covariance estimate that can be obtained from the samples in $\bbXo$, the nuclear norm $\|\cdot\|_*$ promotes low-rank solutions, and $\lambda$ and $\gamma$ are (tunable) regularization constants. Clearly, if all the nodes are observed, we have that $\hbCo=\hbC$ and $\bbQ=\bb0$ and, if those are replaced in \eqref{eq:graphical_lasso}, the classical GL formulation is recovered.

Rather than assuming that the relation between $\bbX$ and $\bbS$ is given by either correlations or partial-correlations, this chapter looks at setups where the operating assumption is that the observed signals are stationary on the graph.
\cref{sec:stationary_inf} first introduces this setup, and then, \cref{sec:joint_stationary_inf} present a joint network topology inference approach to leverage the similarity of several related graphs. \cref{sec:experiments_nti_hidden} evaluates numerically the performance of the developed algorithms and compares it with that of classical correlation and LVGL schemes.

\section{Topology inference from stationary signals}\label{sec:stationary_inf}
Now, we describe a setup in which the signals $\bbX$ are assumed to be independent samples of a random network process stationary in $\bbS$. 
The resulting inference problem is formally stated next.
\begin{problem}\label{prob:stationary_problem}
    Given the matrix $\bbXo$ collecting the signal values at the observed nodes of the graph $\ccalG$, find the sparsest matrix $\bbSo$ encoding the structure of $\ccalG$ under the assumptions that: \\
    \textbf{(AS1)} The number of hidden nodes is substantially smaller than the number of observed nodes, i.e., $ H \ll O$. \\
    \textbf{(AS2)} The graph signals in $\bbX\in\reals^{N \times M}$ are independent realizations of a process that is stationary in $\bbS$.
\end{problem}

Intuitively, \textbf{(AS1)} promotes the feasibility of the inference problem by ensuring that a considerable amount of the nodes are observed.
More interestingly, assumption \textbf{(AS2)} implies that the GSO and the covariance of the graph process share the eigenvectors, which is tantamount to assuming that the mapping between $\bbS$ and $\bbC$ can be accurately represented by a matrix polynomial.
However, because the stationarity assumption involves the whole matrices $\bbS$ and $\bbC$ but only the observed submatrix $\hbCo$ is available, we need to generalize this mapping to include the effect of the hidden variables over the observed ones.
Key to that end is to note that the assumption of stationarity \textbf{(AS1)} implies that $\bbS$ and $\bbC$ are simultaneously diagonalizable. The formulation in~\cite{segarra2017network}, which is valid only if $\ccalO=\ccalV$, approached this by extracting the eigenvectors of the (sample) covariance and, then, formulated an optimization problem guaranteeing that the extracted eigenvectors and those of the GSO were the same.
Since obtaining the submatrix of the eigenvectors corresponding to the observed nodes is challenging, here we impose the graph stationarity by requiring the equality $\bbC\bbS=\bbS\bbC$ to hold \cite{segarra2017joint}.
Indeed, suppose that we focus on the upper left $O \times O$ block in both sides of the equality. Then, leveraging the expressions introduced in \eqref{eq:block_def_S_C}, we have that
\begin{equation}\label{eq:block_commutation}
\bbCo\bbSo+\bbCoh\bbSho=\bbSo\bbCo+\bbSoh\bbCho.
\end{equation}
Equation \eqref{eq:block_commutation} shows that, when hidden variables are present, we cannot simply ask $\bbSo$ and $\bbCo$ to commute, but we need also to account for the terms $\bbCoh\bbSho$ and $\bbSoh\bbCho$.
Since \textbf{(AS1)} states that $H \ll O$, a reasonable approach is to \emph{lift} the problem by defining the matrix $\bbP:=\bbCoh\bbSho \in\reals^{O\times O}$ and, then, exploit \textbf{(AS1)} to impose that $\rank(\bbP)\leq H \ll O$.
Moreover, since both $\bbS$ and $\bbC$ are symmetric matrices, it follows that $(\bbCoh\bbSho)^\top =\bbSoh\bbCho$ and, hence, $\bbP^\top=\bbSoh\bbCho$. Then, by making the general assumption that graphs are typically sparse, we can formulate \cref{prob:stationary_problem} as an optimization framework in which we attempt to find a sparse matrix $\bbSo$ and a low-rank matrix $\bbP$ that satisfy \textbf{(AS1)} and \textbf{(AS2)} by solving 
\begin{alignat}{2}\label{eq:eqn_zero_norm}
    \!\!&\!\min_{\bbSo, \bbP} \
    &&\|\bbSo\|_0                                    \\ 
    \!\!&\!\mathrm{\;\;s. \;t. } && \;\;\;\;\bbCo\bbSo+\bbP = \bbSo\bbCo+\bbP^\top, \;\; \nonumber
    \\
    \!\!&\! && 
    \;\;\;\; \rank(\bbP)\leq H, \;\; \nonumber
    \\
    \!\!&\! && 
    \;\;\;\; \bbSo \in \ccalS, \;\; \nonumber
\end{alignat}
where the equality constraint accounts for \textbf{(AS2)} and the second constraint for \textbf{(AS1)}, as explained before. Notice that the first constraint assumes perfect knowledge of the observed covariance matrix $\bbCo$, but this might not be attainable in practice, motivating the need for robust versions of this formulation as discussed in \cref{subsec:robust_inference}.
Finally, recall that the set $\ccalS$ specify additional properties that $\bbSo$ must satisfy.
In the remainder of this section we will focus on the case where the GSO is an adjacency matrix, so we consider the set $\ccalS = \ccalS_\bbA$. 

Even though \eqref{eq:eqn_zero_norm_joint} indeed solves \cref{prob:stationary_problem}, it is a non-convex optimization problem, and hence, challenging to solve, motivating the development of convex approximations.

\subsection{Topology inference with stationary observations as a convex optimization}\label{subsec:convex_formulation}
It is important to notice that the presence of the rank constraint and the  $\ell_0$ norm in \eqref{eq:eqn_zero_norm} render the problem non-convex and computationally hard to solve, motivating the need for convex relaxations. 
To achieve this, instead of enforcing a rank constraint, we will augment the objective with the \emph{nuclear norm} as its convex surrogate. Similarly, the $\ell_0$ norm can be replaced with the $\ell_1$ norm, its closest convex approximation.
After applying these relaxations to \eqref{eq:eqn_zero_norm} we obtain the following \emph{convex} optimization problem
\begin{alignat}{2}\label{eq:eqn_one_norm} 
    \!\!&\!\min_{\bbSo, \bbP} \
    &&\|\bbSo\|_1 + \gamma\|\bbP\|_*   \\ 
    \!\!&\!\mathrm{\;\;s. \;t. } && \bbCo\bbSo+\bbP = \bbSo\bbCo+\bbP^\top,  \;\;
    \bbSo \in \ccalS_\bbA, \nonumber
\end{alignat} 
where $\gamma$ is a regularization constant encoding the relative importance of the low-rank vs the sparsity promoting term.

Even though replacing the original $\ell_0$ norm with the convex $\ell_1$ norm constitutes a common approach, it is well-known that non-convex surrogates can lead to sparser solutions.
Indeed, a more sophisticated alternative in the context of sparse recovery is to define $\delta$ as a small positive number and replace the $\ell_0$ norm with a (non-convex) logarithmic penalty~\cite{candes2008enhancing} as
\begin{equation}
\|\bbSo\|_0 \approx \sum_{i,j=0}^{O}\log(|S_{ij}|+\epsilon_0)
\end{equation}
with $\epsilon_0$ being a small positive constant. 
However, because the logarithm is a concave function it renders the optimization non-convex.
An efficient way to handle this issue consists on relying on a MM approach \cite{sun2016majorization}, which considers an iterative linear approximation to the concave objective and leads to an \emph{iterative} re-weighted $\ell_1$ minimization.
For the problem at hand, and with $t=1,...,T$ being the iteration index, the resulting optimization is 
\begin{alignat}{2}\label{eq:eqn_one_norm_rew}
    \!\!&\! \bbSo^{(t+1)} := \argmin_{\bbSo, \bbP} && \sum_{i,j=1}^O W^{(t)}_{ij}[\bbSo]_{ij} + \gamma\|\bbP\|_*                                    \\ 
    \!\!&\!\mathrm{\;\;s. \;t. } && \bbCo\bbSo+\bbP = \bbSo\bbCo+\bbP^\top, \nonumber \\
     \!\!&\!  && \bbSo \in \ccalS_\bbA, \nonumber
\end{alignat}
with $W_{ij}^{(t)}$ being defined as $W_{ij}^{(t)} = (S_{ij}^{(t-1)}+\epsilon_0)^{-1}$.
Since the iterative algorithm penalizes (assigns a larger weight to) entries of $\bbSo$ that are close to zero, the obtained solution is typically sparser at the expense of a higher computational cost. Finally, note that the absolute values can be removed whenever the constraint $[\bbSo ]_{ij}\geq 0$ is enforced.

\subsection{Robust network inference}\label{subsec:robust_inference}
In most scenarios the covariance matrix  $\bbCo$ is not known perfectly but, instead, only an estimate $\hbCo$ is available. 
This is indeed the case for the setting described in \cref{prob:stationary_problem}, where we only have access to a finite number $M$ of graph signals and we estimate the covariance as $\hbCo=M^{-1}(\bbXo\bbXo\top)$. While different choices to accommodate the discrepancies between the true covariance and its sampled version exist, here we simply chose to relax the commutative constraint in \eqref{eq:eqn_one_norm}, rewriting it using a Frobenius norm
\begin{alignat}{2}\label{eq:eqn_robust_norm} 
\!\!&\!\min_{\bbSo, \bbP} \
&&\|\bbSo\|_1 + \gamma\|\bbP\|_*                                    \\ 
\!\!&\!\mathrm{\;\;s. \;t. } && \|\hbCo\bbSo+\bbP-\bbSo\hbCo-\bbP^\top\|_F\leq\epsilon,    \;\;
\bbSo \in \ccalS_\bbA. \nonumber
\end{alignat} 
The value of the non-negative constant $\epsilon$ should be selected based on prior knowledge on the noise level present in the observations and, more importantly, the number of samples $M$ used to estimate the covariance.
It must be large enough to ensure the problem is feasible but not too large so the constraint is not active.
In the limit, when an infinite number of realizations is available, then $\bbC = \hbC$ so we can set $\epsilon=0$ and \eqref{eq:eqn_robust_norm} boils down to (\ref{eq:eqn_one_norm}).
The same update on the constraint can be applied to \eqref{eq:eqn_one_norm_rew} to account for imperfect knowledge of the covariance.

\section{Joint inference from stationary signals in the presence of hidden variables}\label{sec:joint_stationary_inf}
The last network topology inference task that we will be analyzing from a robust perspective deals with the estimation of multiple related graphs.
The driving idea is to exploit the existing relation between the different graphs in a joint optimization framework to enhance the quality of the estimated networks. 

To formally introduce the problem of joint graph topology inference in the presence of hidden variables, let us assume that $K$ undirected graphs $\{\ccalG^{(k)}\}_{k=1}^K$ are defined over the same set of nodes $\ccalV$, and denote as $\bbX^{(k)}=[\bbx_1^{(k)},...,\bbx_{M_k}^{(k)}]\in\reals^{N \times M_k}$ the collection of (zero-mean) $M_k$ signals defined on top of each unknown graph $\ccalG^{(k)}$.
Furthermore, in analogy to previous sections, consider that for each graph only $O$ nodes contained in the subset $\ccalO$ are observed while the remaining $H$ node in the set $\ccalH$ remain hidden.
Without loss of generality, let the signals associated with the observed nodes be collected in the first $O$ rows of $\bbX^{(k)}$ and denote them as $\bbXo^{(k)} \in \reals^{O \times M_k}$.
Then, the distinction between observed and hidden nodes renders each matrix $\bbS^{(k)}$ and $\hbC^{(k)}$ with the characteristic block structure introduced in \eqref{eq:block_def_S_C}.
Also note that, since the $K$ graphs are undirected, the matrices $\bbS^{(k)}$ and $\hbC^{(k)}$ are symmetric.

With these considerations in place, the problem of joint topology inference in the presence of hidden variables is described next.
\begin{problem}\label{prob:joint_stationary_inf}
    Given the $O \times M_k$ matrices $\{\bbXo^{(k)}\}_{k=1}^K$ collecting the signal values at the observed nodes for each graph $\ccalG^{(k)}$, find the sparsest matrices $\{\bbSo^{(k)}\}_{k=1}^K$ encoding the structure of the $K$ graphs under the assumptions that: \\
    \textbf{(AS1)} The number of hidden nodes is much smaller than the number of observed nodes, i.e., $ H \ll O$. \\
    \textbf{(AS2)} The signals $\bbX^{(k)}$ are realizations of a random process that is stationary in $\bbS^{(k)}$. \\
    \textbf{(AS3)} The distance between the $K$ graphs is small according to a particular metric $d(\bbS^{(k)},\bbS^{(k')})$.
\end{problem}

Accounting for the hidden variables implies modeling their influence over the observed nodes without any additional observation, thus rendering the inference problem a challenge.
To ensure the tractability of the problem, \textbf{(AS1)} guarantees that most of the nodes are observed while \textbf{(AS2)} establishes a relation between the graph signals and the whole unknown graph via graph stationarity.
Then, \textbf{(AS3)} guarantees that the $K$ graphs are similar so we can benefit from inferring them in a joint setting.
The key question now is how to exploit the graph similarity from edges involving observed nodes but also from edges involving hidden nodes.  

In the following section, we exploit the aforementioned assumptions and the block structure resulting from the presence of hidden variables to approach \cref{prob:joint_stationary_inf} by solving a convex optimization problem.

\subsection{Modeling hidden variables in the joint inference problem}
The first step towards formulating an optimization problem that solves \cref{prob:joint_stationary_inf} is to account for the presence of the hidden nodes in the stationary assumption \textbf{(AS2)}.
This can be achieved by following the same reasoning as in \cref{sec:stationary_inf}.
Then, for each graph $k$, we define the associated $O \times O$ matrix $\bbP^{(k)}:=\bbCoh^{(k)}(\bbSoh^{(k)})^\top$, which combined with the commutativity of $\bbC^{(k)}$ and $\bbS^{(k)}$ that stems from the stationarity assumption results in
\begin{equation}\label{eq:block_commutativity}
    \bbCo^{(k)}\bbSo^{(k)}+\bbP^{(k)}=\bbSo^{(k)}\bbCo^{(k)}+(\bbP^{(k)})^\top.
\end{equation}
Also note that the matrices $\bbP^{(k)}$ are the product of two matrices of sizes $O \times H$ and $H \times O$ so due to \textbf{(AS1)} it follows that the rank of $\bbP^{(k)}$ is upper bounded by $H$.

With the previous considerations in place, we approach the sparse joint topology inference problem in the presence of hidden nodes by means of the following non-convex optimization problem
\begin{alignat}{2}\label{eq:eqn_zero_norm_joint}
    \!\!&\!\min_{\{\bbSo^{(k)},\bbP^{(k)}\}_{k=1}^K} \
    &&\sum_k \alpha_k \|\bbSo^{(k)}\|_0 + \sum_{k<k'}\beta_{k,k'}d_S(\bbSo^{(k)},\bbSo^{(k')})                                   \\ 
    \!\!&\! && + \sum_{k<k'}\eta_{k,k'}d_P(\bbP^{(k)},\bbP^{(k')}) \nonumber \\
    \!\!&\!\mathrm{\;\;s. \;t. } && \;\;\;\;\rank(\bbP^{(k)})\leq H, \;\; \nonumber
    \\
    \!\!&\! && 
    \;\;\;\; \|\hbCo^{(k)}\bbSo^{(k)}\!+\!\bbP^{(k)} \!-\!\bbSo^{(k)}\hbCo^{(k)}\!-\!(\bbP^{(k)})^{T}\|_F^2 \leq  \epsilon, \;\; \nonumber
    \\
    \!\!&\! && 
    \;\;\;\; \bbSo^{(k)} \in \ccalS. \;\; \nonumber
\end{alignat}
The first and second constraints capture assumptions \textbf{(AS1)} and \textbf{(AS2)}, with $\epsilon$ being a small positive number capturing the fidelity of the sample covariance. The set $\ccalS$ encodes the properties of the desired GSOs.
In this section we will focus on the case where the GSO is given by the adjacency matrix of the underlying undirected graph with non-negative weights and no self-loops.
Thus, from now onwards we set the feasibility set $\ccalS = \ccalS_\bbA$, with $\ccalS_\bbA$ as defined in \eqref{eq:A_set}.
Other GSOs such as the normalized Laplacian can be accommodated via minor adaptations to $\ccalS$; see~\cite{segarra2017network}.

Similar to standard joint inference approaches~\cite{navarro2020joint}, the objective function of \eqref{eq:eqn_zero_norm_joint} captures the similarity of the $K$ graphs with the function $d_S(\cdot,\cdot)$.
Nevertheless, when accounting for the presence of hidden variables, assumption \textbf{(AS3)} is also reflected in the unobserved blocks of the GSOs.
This important observation, captured by the function $d_P(\cdot,\cdot)$, allows us to incorporate additional structure reducing the degrees of freedom and rendering the problem more manageable.
More specifically, note that the matrix $\bbP^{(k)}$ is given by the product of $\bbCoh^{(k)}$ and $(\bbSoh^{(k)})^\top$ with the latter being a submatrix of a sparse  GSO, so it can be seen that the matrices $\bbP^{(k)}$ present a column-sparse structure.
Furthermore, since the $K$ graphs are similar, the submatrices $\bbSoh^{(k)}$ are also similar, which implies that the matrices $\bbP^{(k)}$ present a similar column-sparsity pattern.
In other words, the columns with non-zero entries are likely to be placed in the same positions for the different matrices $\bbP^{(k)}$.
By designing a distance function $d_P(\cdot,\cdot)$ that exploits this additional structure we improve the estimation of the matrices $\bbP^{(k)}$, resulting in a better estimation of the matrices $\bbSo^{(k)}$. 

The non-convexity of \eqref{eq:eqn_zero_norm_joint}, which arises from the presence of the rank constraint and the $\ell_0$ norm, renders the optimization problem computationally hard to solve, leading us to implement some convex relaxations that are detailed next.

\subsection{Convex relaxations for the joint topology inference}
The rank constraints are commonly avoided by augmenting the objective function with a nuclear norm penalty, which promotes low-rank solutions by seeking matrices with sparse singular values. 
However, this penalty does not preserve the characteristic column sparsity of the matrices $\bbP^{(k)}$.
To circumvent this issue, in contrast to~\cite{buciulea2022learning}, we employ the group Lasso regularization~\cite{simon2013sparse} and rely on the fact that, in this particular setting, we can promote low rankness by reducing the number of non-zero columns while still achieving a reliable estimate.
Then, we replace the $\ell_0$ norm by a reweighted $\ell_1$ minimization~\cite{candes2008enhancing}, an iterative algorithm rooted on a logarithmic penalty that: i)~converges to a stationary point~\cite{hong2015unified}; and ii)~usually outperforms the widely used $\ell_1$ norm.

By leveraging the aforementioned relaxations we address the joint topology inference problem in the presence of hidden variables by solving an iterative method.
Under this approach, for each iteration, we solve the following convex problem
\begin{alignat}{2}\label{eq:eqn_convex}
    \!\!&\!  \min_{\scriptscriptstyle\{\bbSo^{(k)},\bbP^{(k)}\}_{k=1}^K} &&
    \;\sum_k \!\alpha_k \mathrm{vec}(\bbW^{(k)})^\top\!\mathrm{vec}(\bbSo^{(k)})
    \\
    \!\!&\!  && \!\!\!+\! \sum_{k<k'}\!\beta_{k,k'}\|\bbSo^{(k)}-\bbSo^{(k')}\|_1 \! \nonumber
    \\ 
    \!\!&\!  && \!\!\!+\!  \sum_k\!\gamma_k\|\bbP^{(k)}\|_{2,1} +\! \sum_{k<k'}\eta_{k,k'} \norm{\begin{bmatrix}
        \bbP^{(k)} \\
        \bbP^{(k')}
    \end{bmatrix}}_{2,1} \; \nonumber
    \\
    \!\!&\!  && \!\!\!+\! \sum_k\!\mu_k \|\! \hbCo^{(k)}\bbSo^{(k)} \!+\! \bbP^{(k)} \!-\! \bbSo^{(k)}\hbCo^{(k)} \!-\! (\bbP^{(k)})^{\top} \!\|_F^2 \nonumber
    \\
    \!\!&\!\mathrm{\;\;s. \;t. } && \bbSo^{(k)} \in \ccalS. \;\; \nonumber
\end{alignat}
To compute the weight matrices $\bbW^{(k)}$, let $t=1...T$ denote the iteration index (omitted in the expression above to alleviate the notation), and compute the $k$-th weight matrix for the $t$-th iteration as $W^{(k,t)}_{ij}=(S_{\scriptscriptstyle \ccalO_{ij}}^{(k,t-1)}+\delta)^{-1}$ with $S_{\scriptscriptstyle \ccalO_{ij}}^{(k,t-1)}$ being the solution obtained during the $(t-1)$-th iteration and $\delta$ a small positive constant. 
Hence, for each iteration $t$ we first compute the weight matrices $\bbW^{(k,t)}$ and, then, employ those to estimate the matrices $\bbSo^{(k,t)}$ and $\bbP^{(k,t)}$.
Coming back to the formulation in \eqref{eq:eqn_convex}, note that the distance $d_S(\cdot,\cdot)$ is set to the $\ell_1$ norm to promote similar edges on the $K$ graphs.
The norm $\|\cdot\|_{2,1}$ represents the group Lasso penalty by first computing the $\ell_2$ norm of the columns of the input matrix and then the $\ell_1$ norm of the resulting vector.
To capture the similar column-sparsity pattern of the matrices $\bbP^{(k)}$ resulting from the similarity of the $K$ graphs, we design the function $d_P(\cdot,\cdot)$ relying on the group Lasso penalty.
More specifically, we concatenate each pair of matrices $\bbP^{(k)}$ and $\bbP^{(k')}$ to create a tall matrix and then promote column sparsity on the tall matrix with the $\ell_{2,1}$ norm.
Note that a column of all zeros in the tall matrix implies that the same column in $\bbP^{(k)}$ and $\bbP^{(k')}$ will only contain zeros, thus promoting the desired structure.
Finally, it is worth noting that we moved the commutativity constraint to the objective function.
Due to the iterative nature of the proposed method, the estimation of the observed GSO during the first iteration might be far from the true GSO, and hence, a more restrictive constraint as the one employed in \eqref{eq:eqn_zero_norm_joint} might be misleading.
Augmenting the objective function with the commutativity penalty is more amenable to an iterative approach.

\section{Numerical experiments}\label{sec:experiments_nti_hidden}
In this section we asses numerically the performance of the proposed algorithms, comparing them with different baselines, including the LVGL counterpart presented in \eqref{eq:graphical_lasso}.
To compare the different scenarios raised in this chapter, first we consider the setting where a single graph is learned from stationary observations (\cref{sec:stationary_inf}), and then, we evaluate the performance of the proposed joint network topology algorithm (\cref{sec:joint_stationary_inf}).

\subsection{Numerical experiments based on joint inference}
Lastly, we evaluate the performance of the joint network topology inference algorithms developed in \cref{sec:joint_stationary_inf}, which are evaluated over synthetic and real-world graphs.
The error of the recovered graphs is computed as
\begin{equation}\label{eq:err_joint_inf}
    \sum_{k=1}^K\|\bbSo^{(k)}-\hbSo^{(k)}\|_F^2/K\|\bbSo^{(k)}\|_F^2,    
\end{equation}
and when the graphs are randomly generated, they are sampled from an ER model with $N=20$ nodes and edge probability $p=0.2$.
The code for the following experiments is available on
GitHub\footnote{{\scriptsize \url{https://github.com/reysam93/hidden_joint_inference/tree/ICASSP2022}}} and the interested reader is referred there for specific implementation details.

\vspace{3mm}
\noindent\textbf{Test case 1.}
We evaluate the influence of the hidden variables and its detrimental effect on the topology inference task
when the true covariance matrix is known.
The results are depicted in \cref{fig:main_fig}a, where we report the error of the recovered graphs, computed according to \eqref{eq:err_joint_inf}, for several models as the number of hidden variables increases on the x-axis.
The error is averaged over 64 realizations with ER graphs.
The considered models are: (i)~``PGL'', which stands for the method introduced in \eqref{eq:eqn_convex}; (ii)~``PNN'', which denotes the reweighed algorithm proposed in~\cite{buciulea2019network} augmented with the joint penalty $d_S(\cdot,\cdot)$ to perform the joint inference; and (iii)~``No hidden'', which is a joint inference method unaware of the presence of hidden variables similar to the work in~\cite{navarro2020joint}.
In addition, for each model we let $K$ take the values in $\{3,6\}$.
Looking at the results, we can observe that ``PGL'' and ``PNN'', which take into account the presence of hidden variables, outperform the method ``No hidden'', showcasing the benefit of a robust formulation.
Also, the method proposed in \eqref{eq:eqn_convex} outperforms ``PNN'', the other alternative accounting for hidden variables.
This reflects the advantage of employing the group Lasso regularization and incorporating the graph similarity through the careful design of the function $d_P(\cdot,\cdot)$. 
Lastly, it is worth noting that  the performance improves for higher values of $K$, achieving better results when more related graphs are available.

\begin{figure*}[!t]
	\centering
	\begin{subfigure}{0.49\textwidth}
		\centering
		\includegraphics[width=1.1\textwidth]{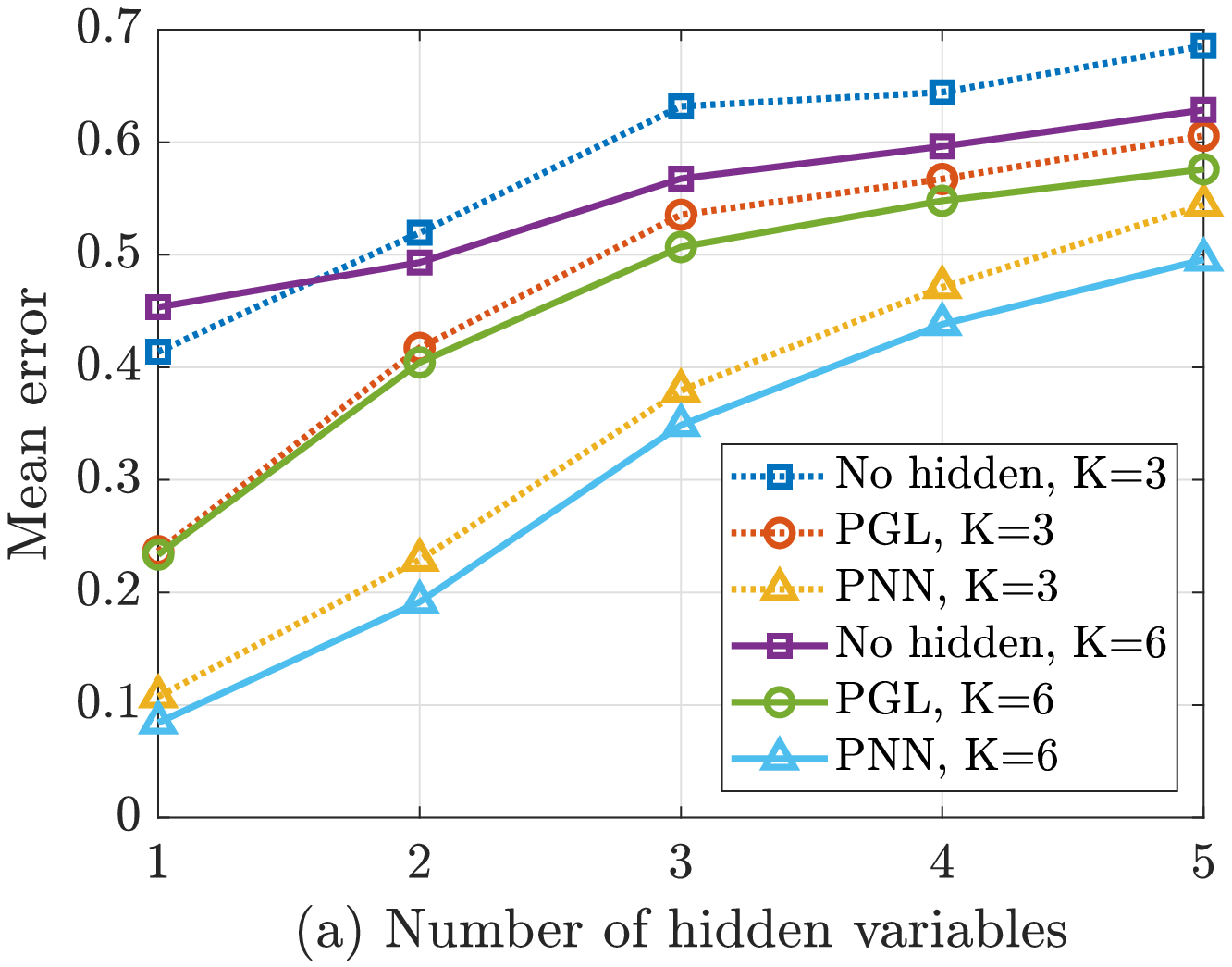}
		\label{fig:main_fig_a}
	\end{subfigure}
	\begin{subfigure}{0.49\textwidth}
		\centering
		\includegraphics[width=1.1\textwidth]{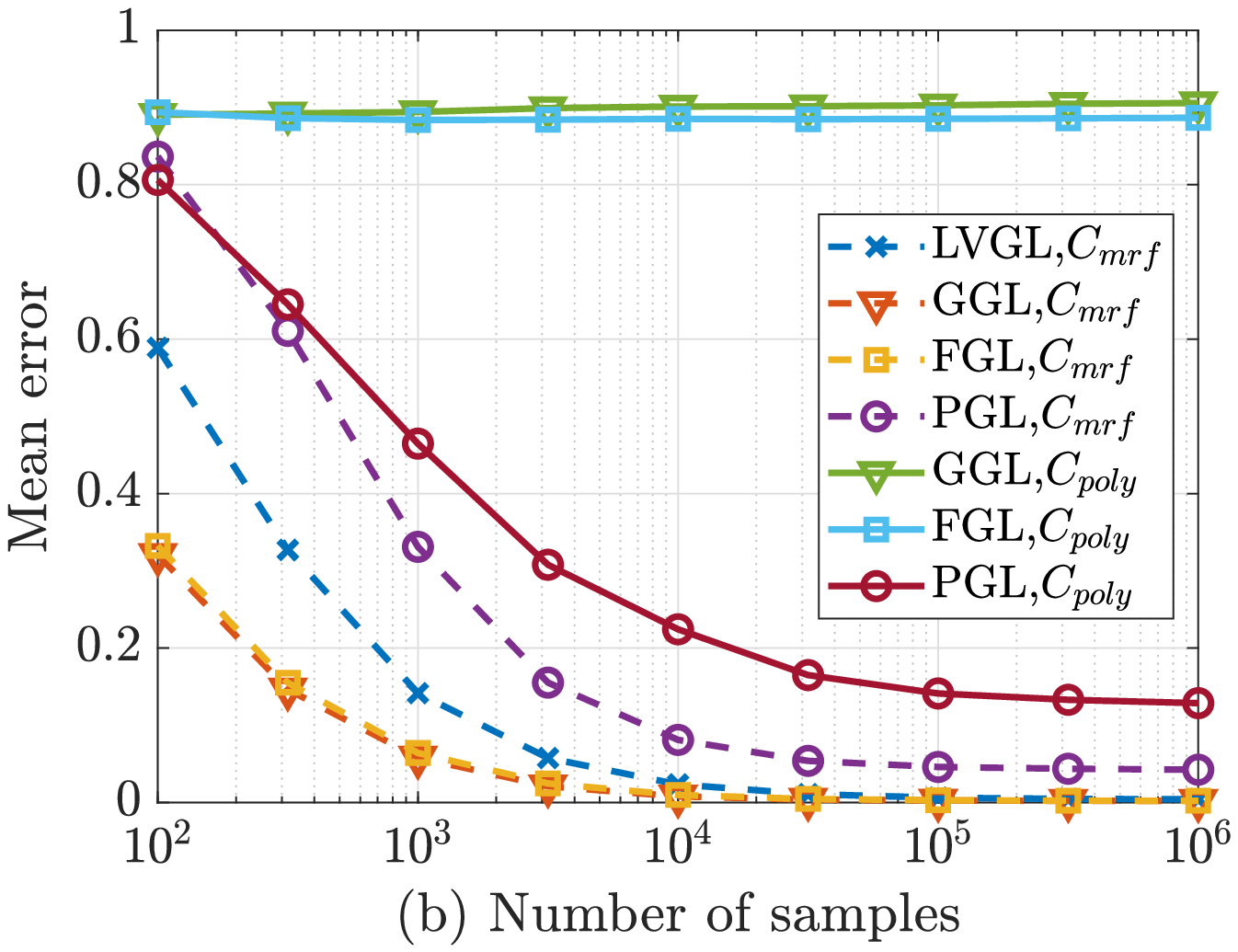}
	    \label{fig:main_fig_b}
	\end{subfigure}
	\begin{subfigure}{0.49\textwidth}
		\centering
		\includegraphics[width=1.1\textwidth]{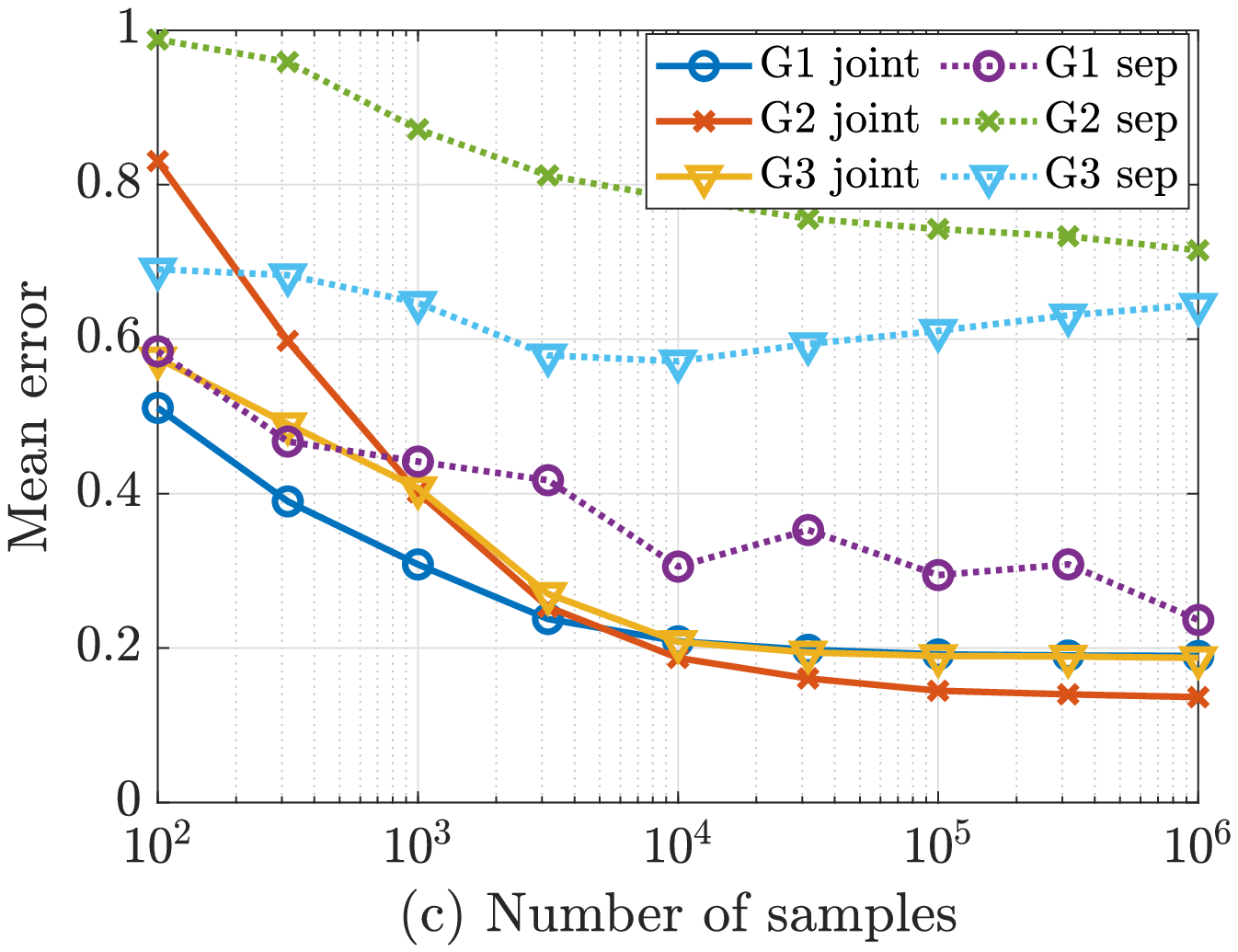}
	    \label{fig:main_fig_c}
	\end{subfigure}
	\caption{Numerical validation of the proposed algorithm. 
	a)~Mean error of 100 realizations as the number of hidden variables increases for different models and values of $K$.
	b)~Mean error of the recovered graphs for several algorithms as the number of samples increases. 
	c)~Mean error of the recovered graphs for joint and separate approaches as the number of samples increases. 
	The first two experiments use ER graphs with $N=20$ and $p=0.2$ and the third one employs real-world graphs.} \label{fig:main_fig}
\end{figure*}

\vspace{3mm}
\noindent\textbf{Test case 2.}
Next, we evaluate the influence of the number of observed signals and compare the performance of the proposed approach with other related alternatives.
In this experiment, only a single hidden node is considered.
To that end, in \cref{fig:main_fig}b we show the mean normalized error of the recovered graphs on the y-axis as the number of samples increases on the x-axis.
The error is computed as in the previous experiment and the mean is considered over 30 realizations of $K=3$ ER graphs with 10 realizations of random covariance matrices for each, resulting in a total of 300 realizations.
We compare the proposed model (``PGL'') with latent variable graphical Lasso (``LVGL'')\cite{chandrasekaran2012latent}, and with group and fusion graphical Lasso (``GGL'' and ``FGL''), both from \cite{danaher2014joint}.
For each model, signals are generated using two different types of covariance matrices: (i) $\bbC_{MRF}=(\sigma\bbI+\phi\bbS)^{-1}$ where $\phi$ is a positive random number and $\sigma$ is a positive number so that  $\bbC_{MRF}^{-1}$ is positive semi-definite; and $\bbC_{poly}=\bbH^2$ where the matrix $\bbH$ is a graph filter with random coefficients $\bbh$.
By looking at the \cref{fig:main_fig}b, it can be observed that, when $\bbC_{MRF}$ is employed, the graphical Lasso models slightly outperform the proposed approach. 
This is expected since they are tailored for this specific type of covariance matrices.
However, we can also see that the performance of the proposed algorithm is close to that of the alternatives, illustrating the benefits of considering both the joint optimization and the presence of hidden variables.
On the other hand, when we focus on the covariance matrices $\bbC_{poly}$, it is evident that the proposed method ``PGL'' clearly outperforms the alternatives, demonstrating that the proposed method is based on more lenient assumptions.
Note that the results for ``LVGL'' for the polynomial covariance are not included since the error was too high.

\vspace{3mm}
\noindent\textbf{Test case 3.}
Finally, we test the proposed algorithm and the impact of performing the topology inference in a joint fashion using real-world graphs.
We employ three graphs defined on a common set of 32 nodes.
Nodes represent students from the University of Ljubljana and the different networks encode different types of interactions among the students\footnote{\label{fn:students_dataset}The original data can be found at {\scriptsize \url{http://vladowiki.fmf.uni-lj.si/doku.php?id=pajek:data:pajek:students}}}.
The error is computed as detailed in  and one hidden variable is considered.
The results, illustrated in \cref{fig:main_fig}c, show the error of the recovered graphs as the number of samples increases.
The displayed error is the mean of 30 realizations of random stationary graph signals and only one hidden variable is considered.
Also, for each of the three graphs we include the performance of both the joint and the separate estimation.
It can be observed that the recovery of the three graphs improves when a joint approach is followed, showcasing the benefits of exploiting the existing relationship between the different networks.
Furthermore, this experiment confirms that the developed method is also suitable for real applications.

\section{Conclusion}\label{sec:conclusion_nti_hidden}
In this chapter, we faced the challenging problem of joint graph topology inference in the presence of hidden nodes.
First, we detailed the simpler case where the goal was to learn a single graph from stationary observations, and then, we presented a new method for approaching the joint graph topology inference in the presence of hidden nodes.
To tackle this ill-posed inference problem, we assume that (i) the number of hidden nodes $H$ is much smaller than the number of observed nodes $O$; (ii) the observed signals are realizations from a random process stationary in $\bbS^{(k)}$; and (iii) the $K$ graphs are closely related.
Furthermore, we exploit the inherent block structure of the matrices $\bbC^{(k)}$ and $\bbS^{(k)}$ to solve the joint topology inference problem by solving an optimization framework.
A reweighted $\ell_1$ norm to promote sparse solutions is employed, and the stationarity assumption is adapted to the presence of hidden nodes by defining the (unknown) low-rank lifting matrices $\bbP^{(k)}$.
Instead of relying on the nuclear norm, low-rank matrices $\bbP^{(k)}$ are achieved by promoting column-sparsity with the group Lasso penalty.
Moreover, the similarity of the $K$ graphs is leveraged in two ways.
First, for each pair of graphs, we look for matrices $\bbSo^{(k)}$ with a similar edge pattern by minimizing the $\ell_1$ penalty, and second, we look for matrices $\bbP^{(k)}$ with a similar column sparsity pattern.
The proposed method is evaluated using synthetic and real-world graphs, and a comparison with other baseline methods based on graph stationarity and on graphical Lasso is provided.

\chapter{Concluding remarks}\label{chap:conclusions}
The encompassing objective of this thesis was to contribute to building the foundations of a robust paradigm to address classical problems within GSP while modeling the detrimental influence of perturbations in the observed data.
To that end, we considered several types of perturbations that were classified into two broad classes: (i) perturbations in the graph signals; and (ii) perturbations in the topology of the graph.
First, we dealt with imperfections in the observed signals, which often lead to tractable problems and have been more studied in related literature.
Since incorporating the influence of imperfections in the signals in GSP tasks is well studied, when the magnitude of the perturbations is small with respect to the original signal (and especially when the perturbations capture the presence of noise), we addressed settings where the magnitude of the perturbations was large.
Furthermore, the focus in \cref{chap:denoising} and \cref{chap:interpolation} was on using processing schemes that had not received too much attention in the literature, trying to characterize their estimation performance (including those based on GNN architectures).
In the second part of the thesis,  we dealt with imperfections in the topology of the graph, which give rise to more challenging problems and have been significantly less studied in the literature.
In \cref{chap:robust_filter_id}, the focus was on estimating a graph filter but we also recovered an enhanced estimate of the unperturbed topology as a byproduct. Finally, \cref{chap:nti_hidden} tackled directly the problem of learning a graph, but for the challenging setup of network tomography, where hidden (unobservable) nodes are present. Finally, in each chapter we run an extensive battery of experiments to gain additional insights, test the robustness of the methods, compare them to existing state-of-the-art alternatives, and assess their potential applicability in real-world problems.  

The remainder of the chapter is devoted to describing the degree of fulfillment of the objectives listed in \cref{sec:objectives} (\cref{sec:conclusions_objectives}), and propose future lines of work (\cref{sec:future_work}).

\section{Revisiting the proposed goals}\label{sec:conclusions_objectives}
First, the objective \textbf{(O1)} considered that the observed graph signals were corrupted with noise and the goal was to develop non-linear algorithms to separate the noise from the signal.
This was addressed in \cref{chap:denoising}. Since linear denoising schemes had been investigated extensively, we designed two (untrained) graph-aware NNs that incorporated the information encoded in the GSO through different strategies.
The GCG relied on fixed (non-learnable) GFs to model convolutions in the vertex domain while the GDec employed a nested collection of graph upsampling operators to progressively increase the input size, limiting the degrees of freedom of the architecture, and providing more robustness to noise.
Furthermore, we provided theoretical guarantees on the denoising performance of both architectures when denoising $K$-bandlimited graph signals under some simplifying assumptions, and then, we numerically illustrated that the proposed architectures were also capable of denoising graph signals in more general settings.
Interestingly, the proposed GDec is not limited to the graph signal denoising task.
In fact, we successfully leveraged the GDec to learn a mapping from an input graph signal $\bbx$ defined on some graph $\ccalG_1$ to a graph signal $\bby$ defined on some graph $\ccalG_2$ by learning a latent space common to $\bbx$ and $\bby$, a problem intimately related with canonical correlation analysis. 

Objective \textbf{(O2)}, approached in \cref{chap:interpolation}, considered signals with missing values and the goal was to interpolate the original signals under the assumption that they were DSGS.
Interpreting the observed (non-missing) values as samples gathered via an AGSS, first we studied the recovery of the original signal from the local observations when the seeding nodes were known.
Then, we considered more challenging sampling configurations where the seeding nodes, the diffusing filter, or both, were unknown.

In the objective \textbf{(O3)}, the perturbations represented uncertainty in the edges of the observed graph and the aim was to develop a robust GF identification scheme from input-output observations.
To that end, \cref{chap:robust_filter_id} recast the true graph as an additional estimation variable and formulated an optimization problem that jointly estimated the GF and the true (unknown) GSO.
First, we focused on the case where only one GF needed to be estimated and, then, shifted to setups where multiple GFs have to be jointly identified.
The optimization problem was formulated completely in the vertex domain bypassing the error propagation associated with high-order matrix polynomials as well as the challenges of dealing with the influence of perturbations in the graph spectrum.
Since the optimization problem was non-convex, we blended techniques from alternating optimization and MM to obtain an iterative convex algorithm capable to find a stationary point in polynomial time.
This algorithm was later modified so that the scaling of the computational complexity with respect to the number of nodes in the graph is reduced.

Finally, \cref{chap:nti_hidden} deals with objective \textbf{(O4)}, which focused on developing a joint network topology inference algorithm robust to the presence of hidden nodes from stationary observations.
To ensure the tractability of the problems, we assumed that the number of observed nodes was substantially larger than the number of hidden nodes, and formulated constrained optimization problems that accounted for the topological and signal constraints.
The cornerstone of the proposed algorithm was to exploit the block-matrix structure resulting from the presence of hidden variables. This structure allowed us to reformulate the classical definition of stationarity to account for the presence of hidden variables and, moreover, it revealed a column sparsity pattern on the matrices $\bbP^{(k)}$ (associated with hidden nodes) that we exploited to promote graph similarity between edges involved with hidden nodes via a group Lasso regularization.

\section{Future lines of research}\label{sec:future_work}
To conclude the document, we present several research directions to strengthen and grow the robust GSP methodology implemented in this thesis.
The suggested lines range from considering tractable generalizations of the schemes discussed in the previous chapters to more general and ambitious research directions. The lines in the first group are in general well-defined and likely to be successfully addressed in the short/medium term, while the ones in the second group can be understood as a research plan for the medium/long term.

\vspace{3mm}\noindent
\textbf{Generalizing the theoretical characterization of the denoising with GNNs to more general settings.}
Enhancing our current understanding of NNs (including GNNs) is a relevant ongoing research problem.
In this sense, generalizing the theoretical analysis from \cref{chap:denoising} to hold with more lenient assumptions would be beneficial.
To be precise, our analysis considered the graph signals being bandlimited, the noise to be white, and the graphs being drawn from an SBM. Considering more general noise distributions is a tractable task. Regarding the assumptions about the signals, dealing with other generative models (such as diffused or smooth signals) requires being able to establish a link between either the spectrum or the vertex distribution of the SBM and the signals at hand. Lastly, considering graphs beyond SBMs requires finding random graph models such that (i) the distance between $\bbA$ and $\ccalA$ goes to 0 as $N$ grows; and (ii) the sparsity pattern of $\ccalA$ is retained after the operation $\arccos(\ccalbH\ccalbH)$. This is highly non-trivial, but it may be doable in some cases (e.g., graphon models more general than SBMs, or non-graphon graphs with a strong clustered structure). Last but not least, application of our GNN architectures (along with the associated theoretical analysis) to problems other than graph denoising are also worth investigating. 
 
\vspace{3mm}\noindent
\textbf{Network topology inference with hidden nodes for multiple graphs.}
The growing adoption of graph-based approaches has revealed that in many datasets one needs more than one graph to describe the data. When the set of nodes does not change, multi-layer graphs are the preferred way to address this problem. However, the theoretical research in this area has been a bit slow and only recently, generalizations of graphical Lasso to the multi-layer graph case have been proposed. As a result, network topology inference algorithms for multi-layer graphs in the presence of perturbations (including hidden nodes) are mostly missing.
To develop such an algorithm in the graphical Lasso case, one should: (i) consider the effect of the hidden nodes in each of the graphs as well as (ii) incorporate a term that promotes similarity among the different graphs (according to the prior information and the application at hand). The challenge to achieve the latter is how to split the similarity metric among observed nodes and hidden nodes. Moreover, the goal would be not only designing an effective algorithm, but also characterizing analytically its performance. In this sense, the works in~\cite{ravikumar2011high,navarro2020joint} provide two promising starting points for the case where the observations are either Gaussian or stationary, respectively. Another equally interesting but more challenging line of research is to look at setups where not all the nodes participate in all relations (layers of the graph).

\vspace{3mm}\noindent
\textbf{Accounting for perturbations in the observed topology in other GSP problems.}
The presence of perturbations in the observed topology is critical in most GSP approaches, and hence, generalizing the ideas from \cref{chap:robust_filter_id} to other GSP problems like (blind) deconvolution, denoising, or sampling and reconstruction constitutes an interesting line of research.
In this sense, while the general approach in \cref{chap:robust_filter_id} can be preserved (i.e., defining the true GSO as an explicit estimation variable, formulate a problem that solves jointly over the variables of interest and the true GSO, and link the true GSO and the variables of interest via tractable constraints), the specific formulation and algorithmic approach will depend on the problem at hand.
On top of this, even for the GF identification task investigated in \cref{chap:robust_filter_id} (as well as for the generalizations suggested above), one can enhance the problem formulation by incorporating additional information of the GSO and/or its perturbations. A first avenue to pursue is considering alternative types of link perturbations as well as the presence of hidden nodes. A second avenue is to consider additional information about the true GSO (either in the form of a statistical prior or, e.g., having access to other graphs that we know to belong to the same class/family than the true one). The previous discussion illustrates that, by strengthening the estimation/denoising of GSO in the formulation, the GSP paradigm shifts from a two-step approach where first one observes/estimates a graph and then uses the graph to solve SP tasks to an encompassing methodology where the estimation/denoising of graph and the GSP task of interest are solved jointly. 

\vspace{3mm}\noindent
\textbf{Exploiting prior information about the graph topology.}
When dealing with a perturbed GSO $\barbS$, we have assumed that the magnitude of the perturbations was small, so that we could augment our formulation with a term that promotes similarity between $\bbS$ and $\barbS$.
However, there may be practical settings where leveraging only this information may not be enough. Examples include setups where the magnitude of the perturbations in $\barbS$ is large, or when $\barbS$ corresponds to a (possibly perturbed) subgraph of the whole $\ccalG$, so there is no information about the topology of the remaining graph.
In these cases, having access to prior information about the underlying graph is key to obtaining an accurate estimate of $\bbS$ that can be exploited. As briefly pointed out in this chapter, two potential avenues to achieve this are: (i) viewing the graph as a realization of a random graph model and incorporating some (tractable) statistical prior to the formulation; and (ii) assuming that we have access to other graphs that are similar / related / belong to the same family as the graph at hand. Regarding (i), postulation of meaningful and tractable probabilistic graphs is an entire area of research, so that the efforts there should be on identifying models that are particularly suited for the (vertex based, spectral based, polynomial based...) approach exploited by the GSP task at hand. Regarding (ii), our work in~\cite{rey2022enhanced} (which finds a graph with a density of motifs similar to that of another given graph and is able to deal with graphs with different numbers of nodes) provides an early example of how to achieve that. 
Intuitively, prior information of this sort could be included in our robust GSP approaches to enhance the resilience of the algorithms to perturbations and, moreover, the proposed similarity metric can be used in joint (multi-layer) network topology inference problems even when the sought graphs have different sizes.
Although interesting, employing this prior information about the density of motifs is a challenging task, because the similarity metric from~\cite{rey2022enhanced} is formulated in the spectral domain. Since most of the algorithms presented in this thesis were formulated in the node domain, they will need to be carefully redesigned to obtain convex solutions exploiting the similarity of motifs.

The lines above represent a few examples of open problems that can be addressed using the results of this thesis as starting point. There are a number of emerging areas in GSP (tensor models, time-varying graph signals, link-based signals and GSOs, categorical graph signals, Montecarlo graph and graph-signal based estimation schemes...) that are at their infancy and, as a result, are currently ignoring the effect of perturbations. Carefully updating and coming up with new formulations to render those novel schemes robust to perturbations in the data and the supporting graph will certainly be a problem of interest that will receive substantial attention in future years.

\backmatter
\bibliographystyle{IEEEtran}
\bibliography{biblio}
\clearpage


\end{document}